\documentclass[11pt]{article}

\usepackage{fullpage} 
\usepackage[parfill]{parskip}

\usepackage{mathtools, amsthm, bm}
\usepackage{fix-cm}
\usepackage{algorithm}
\usepackage{algorithmic}
\usepackage{tikz}
\usepackage{longtable}
\usetikzlibrary{arrows.meta}
\usepackage{pgfplots}
\usepackage{adjustbox}
\usepackage{bigints}
\usepackage{amsmath,amssymb}
\usepackage{xcolor}
\usepackage{tcolorbox}
\usepackage{pgfplots}
\usepackage{subcaption}

\usepackage{color-edits}
\addauthor{wt}{red}

\usepackage{thm-restate}

\usepackage{natbib}
\bibliographystyle{plainnat}
\bibpunct{(}{)}{;}{a}{,}{,}
 
\usepackage[utf8]{inputenc} 
\usepackage[T1]{fontenc}    
\usepackage{microtype} 
\usepackage[colorlinks=true, linkcolor= blue!70!black, citecolor=blue!70!black,urlcolor=blue!70!black,breaklinks=true]{hyperref}

\usepackage{hyperref}

\usepackage{comment}
\newcommand{\xhdr}[1]{\vspace{8pt}\noindent{\bf {#1.}\ }}

\usepackage{url}            
\usepackage{booktabs}       
\usepackage{amsfonts}       
\usepackage{nicefrac}       

\usepackage[lined,boxed,ruled,norelsize,algo2e,linesnumbered,noend]{algorithm2e}
\usepackage{setspace}

\SetCommentSty{mycommfont}

\usepackage{subcaption}
\usepackage{multirow}
\usepackage{float}
\usepackage{xspace}
\usepackage{enumitem}
\usepackage{xfrac}
\usepackage{cleveref}

\theoremstyle{plain}
\newtheorem{theorem}{Theorem}[section]
\newtheorem{lemma}[theorem]{Lemma}

\newtheorem{proposition}[theorem]{Proposition}
\newtheorem{corollary}[theorem]{Corollary}

\theoremstyle{plain}
\newtheorem{definition}{Definition}[section] 
\newtheorem{example}[definition]{Example}

\theoremstyle{plain}

\newtheorem{assumption}{Assumption}

\newcounter{relctr} 
\everydisplay\expandafter{\the\everydisplay\setcounter{relctr}{0}} 

\AtBeginDocument{} 

\newcommand{\squishlist}{
\begin{list}{{{\small{$\bullet$}}}}
{\setlength{\itemsep}{3pt}      \setlength{\parsep}{1pt}
\setlength{\topsep}{1pt}       \setlength{\partopsep}{0pt}
\setlength{\leftmargin}{1em} \setlength{\labelwidth}{1em}
\setlength{\labelsep}{0.5em} } }
\newcommand{\squishend}{  \end{list}}

\newcommand{\Payoff}[2][]{\text{\bf Payoff}\ifthenelse{\not\equal{}{#1}}{_{#1}}{}\!\left[{\def\givenn{\middle|}#2}\right]}

\newcommand{\Rev}[2][]{\textsc{Rev}\ifthenelse{\not\equal{}{#1}}{_{#1}}{}\!\left[{\def\givenn{\middle|}#2}\right]}
\newcommand{\Wel}[2][]{\textsc{Wel}\ifthenelse{\not\equal{}{#1}}{_{#1}}{}\!\left[{\def\givenn{\middle|}#2}\right]}
\newcommand{\RevCorr}[2][]{\textsc{Rev}^{\textsc{Corr}}\ifthenelse{\not\equal{}{#1}}{_{#1}}{}\!\left[{\def\givenn{\middle|}#2}\right]}

\newcommand{\symmetricSignalProb}{\bar{\signalProb}}
\newcommand{\pas}{\text{permu-anony symm}}
\newcommand{\Pas}{\text{Permu-anony symm}}
\newcommand{\as}{\text{anony symm}}

\newcommand{\transPlan}{\mathcal{M}}
\newcommand{\correlaSignalProb}{\mu}

\newcommand{\bidderSet}{\mathcal{I}}
\newcommand{\feaAnonyMarginals}{\mathcal{F}}

\newcommand{\optsecmaxProb}{\secmaxProb^\star}
\newcommand{\optminsecmax}{\minsecmax^\star}

\newcommand{\bidderNum}{n}
\newcommand{\setwZero}{[\bidderNum]_0}

\newcommand{\anonyDen}{\lambda}
\newcommand{\valprofileDen}{\lambda}

\newcommand{\val}{o}

\newcommand{\valprofileSpace}{\mathcal{O}}
\newcommand{\signal}{x}

\newcommand{\inforStructure}{I}

\newcommand{\signalSpace}{\mathcal{X}}

\newcommand{\signalProb}{\pi}

\newcommand{\expectedVal}{x}

\newcommand{\expValProfileSpace}{\mathcal{X}}
\newcommand{\secmax}{\cc{secmax}}
\newcommand{\mymax}{\cc{max}}

\newcommand{\anonySymmMarginal}{f}
\newcommand{\permutation}{\sigma}
\newcommand{\permutationSet}{\mathrm{Perm}}
\newcommand{\supp}{\mathrm{supp}}
\newcommand{\vkProb}{p}
\newcommand{\secmaxProb}{\phi}

\newcommand{\minsecmax}{t}
\newcommand{\setofAvaliableX}{\mathcal{S}}

\newcommand{\transfunction}{m}

\newcommand{\optPriInforStructure}{\signalProb^\star}

\newcommand{\bid}{b}

\newcommand{\bidProfile}{\vec{\bid}}
\newcommand{\winBidderSet}{\mathcal{W}}
\newcommand{\winProb}{\omega}

\newcommand{\UIR}{\textsc{IR}}
\newcommand{\IR}{\textsc{IR}}

\newcommand{\optminsecmaxUIRone}{t_{1,\UIR}}
\newcommand{\optminsecmaxUIRzero}{t_{0,\UIR}}
\newcommand{\largeNum}{M}
\newcommand{\optsignalprobUIR}{\signalProb^\star_{\UIR}}
\newcommand{\eps}{\varepsilon}

\newcommand{\optMarginal}{\anonySymmMarginal^\star}

\definecolor{deepyellow}{RGB}{255,191,0}
\definecolor{myorange}{HTML}{BE4D25}
\newcommand{\highestBidColor}{myorange}

%
%

\newcommand{\prob}[2][]{\text{Pr}\ifthenelse{\not\equal{}{#1}}{_{#1}}{}\!\left[{\def\givenn{\middle|}#2}\right]}
\newcommand{\indicator}[2][]{\mathbf{1}\ifthenelse{\not\equal{}{#1}}{_{#1}}{}\!\left\{{\def\givenn{\middle|}#2}\right\}}
\newcommand{\expect}[2][]{\mathbb{E}\ifthenelse{\not\equal{}{#1}}{_{#1}}{}\!\left[{\def\givenn{\middle|}#2}\right]}

\newcommand{\cc}[1]{\ensuremath{\mathsf{#1}}} 

\newcommand{\R}{\mathbb{R}}

\newcommand{\zerobf}{\boldsymbol{0}}

\newcommand{\selffunction}{h}
\newcommand{\xType}{T}
\newcommand{\effectiveCorrelation}{E}
\newcommand{\quantile}{\Delta}
\newcommand{\probabilityForth}{a}
\newcommand{\probabilityFortl}{b}
\newcommand{\probForth}{\probabilityForth}
\newcommand{\probFortl}{\probabilityFortl}
\newcommand{\optprobForth}{\probabilityForth^\star}
\newcommand{\optprobFortl}{\probabilityFortl^\star}

\newcommand{\optsignalProb}{\signalProb^\star}

\newcommand{\takeback}{b}

\title{Optimal Calibrated Signaling in Digital Auctions}

\author{
Zhicheng Du\thanks{Renmin University of China. Email: {\tt duzhicheng@ruc.edu.cn}}
\and
Wei Tang\thanks{Chinese University of Hong Kong. Email: {\tt weitang@cuhk.edu.hk}}
\and
Zihe Wang\thanks{Renmin University of China. Email: {\tt wang.zihe@ruc.edu.cn}}
\and
Shuo Zhang\thanks{Renmin University of China. Email: {\tt zhangshuo1422@ruc.edu.cn}}
}
\date{}

\begin{document}

\maketitle
\begin{abstract}

In digital advertising, online platforms allocate ad impressions through real-time auctions, where advertisers typically rely on autobidding agents to optimize bids on their behalf. 
Unlike traditional auctions for physical goods, the value of an ad impression is uncertain and depends on the unknown click-through rate (CTR).
While platforms can estimate CTRs more accurately using proprietary machine learning algorithms, these estimates/algorithms remain opaque to advertisers. 
This information asymmetry naturally raises the following questions: how can platforms disclose information in a way that is both credible and revenue-optimal?
We address these questions through {\em calibrated signaling}, where each prior-free bidder receives a private signal that truthfully reflects the conditional expected CTR of the ad impression. 
Such signals are trustworthy and allow bidders to form unbiased value estimates, even without access to the platform's internal algorithms.

We study the design of platform-optimal calibrated signaling  in the context of second-price auction. 
Our first main result fully characterizes the structure of the optimal calibrated signaling, which can also be computed efficiently. 
We show that this signaling can extract the full surplus -- or even exceed it -- depending on a specific market condition.
Our second main result is an FPTAS for computing an approximately optimal calibrated signaling that satisfies an IR condition.
Our main technical contributions are: a reformulation of the platform's problem as a two-stage optimization problem that involves optimal transport subject to calibration feasibility constraints on the bidders' marginal bid distributions; and a novel correlation plan that constructs the optimal distribution over second-highest bids.

\end{abstract}

\setcounter{page}{0}
\thispagestyle{empty}
 \clearpage

{\hypersetup{linkcolor=black}\tableofcontents}
\setcounter{page}{0}
\thispagestyle{empty}
 \clearpage
 
\section{Introduction}

\newcommand{\ctr}{\text{click-through rate}}

Digital auctions have been playing a pivotal role in today's economy, especially in digital advertising, where online platforms sell ad impressions through real-time auctions to competing advertisers \citep{EOS-07}.
A key feature of this environment is that advertisers typically delegate bidding to autobidding agents (or autobidders), which automatically optimize bids on their behalf \citep{A-24}. 
Rather than specifying explicit per-impression values, advertisers provide their autobidders with high-level objectives -- such as a budget constraint or a target return-on-investment (ROI). 
The autobidders then participate in auctions using these instructions, together with their internally computed value of winning each impression, to determine optimal bids.

Unlike traditional auctions for physical goods with known and well-defined valuations, the value of an ad impression is often uncertain, and it 
may depend on the uncertain likelihood of a user clicking on the ad (i.e., the \ctr), a quantity that is typically unknown or difficult for advertisers to estimate accurately.
As a result, autobidders face inherent uncertainty in valuing each impression. 
On the other hand, by processing large amounts of users' and contextual data, the online platform typically has access to richer information about ad impression, which allows it to estimate the \ctr\ far more accurately.
This informational advantage allows the platform to choose how much of this information to reveal to autobidders to help them better form valuations when bidding in the auction.

However, in practice, these types of information (e.g., the \ctr\ estimates) provided by the platform are typically generated through platform's complex internal machine learning algorithms, which are usually considered as trade secrets and thus remain unobserved by advertisers \citep{pasquale2015black}.\footnote{For example, \cite{googleCTR} provides an ``expected click-through rate'' that estimates how likely an ad is to be clicked, and \cite{amazonCTR} offers autobidders access to \ctr\ metrics for their ads.
However, the underlying models used to compute these \ctr\ estimates are based on platform's proprietary algorithms and typically hidden to advertisers/autobidders.}
This raises the following natural questions: How can bidders be assured that these information are trustworthy?
And if the platform can indeed disclose credible information, what is the optimal way to do so?

To address these questions, we adopt a natural solution inspired by the literature on calibration \citep{D-82,FV-97,FV-98}.
In our framework, each autobidder receives a private signal that is calibrated to reflect the conditional expected \ctr\ of the ad impression.
For example, if a platform tells an autobidder that their predicted \ctr\ is 0.25, then the realized click frequency for such impressions should indeed converge to 25\%.
This property ensures that signals are trustworthy even when their underlying generation process remains hidden. Moreover, calibrated signals allow autobidders to treat them as unbiased estimates of the true \ctr\ and compute reliable valuations accordingly.
Indeed, the importance of calibration has been widely observed in the context of online advertising, where well-calibrated predictions are critical for the effectiveness of automated bidding systems \citep{MHS-13,ZZF-18,BDPZ-22,CLXZ-24}. 

While calibration provides a principled approach for the platform to disclose trustworthy information,
there is a notable lack of theoretical understanding of how to disclose these types of information optimally. 
To answer this question, in this work, we study the design of calibrated signaling in a classical second-price auction setting.
In our setting, autobidders do not know the \ctr\ of the ad impression, nor are they aware of the specific details of the information policy chosen by the platform.
Instead, each autobidder receives a private, calibrated signal to help inform their valuation.
Knowing that the signal is generated from a calibrated signaling, autobidders trust the signal and naturally base their bids on it.
Our goal is to study and characterize the optimal design of calibrated signaling that maximizes the platform's revenue.

\subsection{Problem Formulation}

\newcommand{\csSpace}{\Pi}
\newcommand{\csSpaceUIR}{\csSpace_{\UIR}}

We consider an online advertising auction where a seller is allocating an ad impression to a finite set of $\bidderNum$ autobidders (henceforth referred to as bidders) via a second-price auction.
Each ad impression, when shown to bidder $i\in[\bidderNum]$, results in a binary click outcome $\val_i\in\{0, 1\}$, where $\val_i = 1$ indicates that a click is realized.
Let $\valprofileDen(\vec{\val}) \in [0, 1]$ denote the prior probability of an outcome profile $\vec{\val} = (\val_1, \val_2, \cdots, \val_\bidderNum)\in\valprofileSpace$ where $\valprofileSpace \triangleq\{0, 1\}^\bidderNum$ denotes the space of all possible click outcome profiles.\footnote{One can view $\sum_{\vec{\val}_{-i}} \anonyDen(\vec{\val})$ as the prior average \ctr\ of the ad impression to the bidder $i$.}
In this work, we allow this distribution to be symmetric and arbitrarily correlated across bidders.\footnote{
A distribution $\valprofileDen \in \Delta(\valprofileSpace)$ is symmetric if it assigns equal probability to all permutations of every $\vec{\val}\in\valprofileSpace$.
}

\xhdr{Calibrated signaling}
In practice, bidders typically neither know the distribution of click outcomes, nor the realized outcomes in advance.
The seller (such as the online platform running the auction), on the other hand, knows the outcome\footnote{We sometimes directly refer to click outcome as outcome.} distribution and can decide how much information (e.g., estimated click-through rates) to reveal to each bidder.\footnote{Notice that the seller does not necessarily observe the realized click outcomes.} 
We model this using a general information signaling framework (henceforth referred to as {\em signaling}) that allows each bidder $i\in [\bidderNum]$ to privately learn about his outcome $\val_i$.
Formally, a signaling structure is given by $\inforStructure = ((\signalSpace_i)_{i \in [\bidderNum]}, \signalProb(\cdot \mid \vec{\val})_{\vec{\val}\in\valprofileSpace})$, where $\signalSpace_i$ is the set of signals for bidder $i$, and $\signalProb(\cdot \mid \vec{\val}): \valprofileSpace \rightarrow \Delta(\signalSpace_1 \times \signalSpace_2 \times \cdots \times \signalSpace_\bidderNum)$ is a distribution over signal profiles, conditional on the outcome profile
$\vec{\val} = (\val_1, \val_2, \cdots, \val_\bidderNum)$ across bidders.\footnote{When the signal space is clear from the context, we directly refer to $\signalProb$ as the signaling structure.}
Each bidder $i$ privately observes his realized signal $\signal_i \in \signalSpace_i$ and then decides on a bid.
We allow full generality in the signaling design: the signals sent to different bidders may be arbitrarily correlated.\footnote{Previous works have also studied this general signaling structure 
in a first-price auction format \citep{BBM-17}, 
in a second-price auction format \citep{badanidiyuru2018targeting},
in a generalized second-price auction format \citep{BDPZ-22,CLXZ-24}. See related work for more discussions.}

As we mentioned earlier, in practice, the information signals are typically generated by the online platform using proprietary algorithms. 
Such algorithms are usually opaque to bidders and treated as trade secrets, which limits bidders' ability to verify whether the signals they receive are credible.
To mitigate this concern, we adopt the framework of calibrated signaling, which ensures that the signal a bidder receives {\em matches} the expected outcome of that ad impression given that signal:
\begin{definition}[Calibrated signaling]
A signaling structure $\inforStructure = ((\signalSpace_i)_{i \in [\bidderNum]}, \signalProb(\cdot \mid \vec{\val})_{\vec{\val}\in\valprofileSpace})$ is {\em calibrated} if it satisfies that: for each bidder $i\in[\bidderNum]$ and every $\signal \in \signalSpace_i$,
\begin{align*}
    \expect{\val_i \mid \signal} = \signal~.
\end{align*}
where the expectation is over the randomness of the bidders' outcomes and the signaling structure.
We denote by $\csSpace$ the space of all calibrated signaling structures.
\end{definition}
As $\val_i\in\{0, 1\}$, the above condition also implies that $\signalSpace_i \subseteq [0, 1]$ for each bidder $i$.
Importantly, bidders do not need to know the internal details of the signaling scheme -- only that it is calibrated.
Under a calibrated signaling, the signals are trustworthy: 
they accurately reflect the \ctr\ and they can be reliably taken at face value.
Each bidder $i$ has a fixed and known value per click $v_i$ (representing their willingness-to-pay for a click).\footnote{See work, e.g., \cite{BDPZ-22,FLS-24} with also assuming a fixed and known per click value.}
Knowing that the information signals are generated through calibrated signaling, each bidder $i$, upon observing their signal $\signal \in\signalSpace_i$, interprets this signal literally and submits a bid  $\bid_i = v_i \cdot \signal$.\footnote{In practice, autobidding agents may face budget or return-on-investment (ROI) constraints, which affect the aggressiveness of their bids. It has been shown that under such constraints, the optimal strategy is to uniformly bid $\bid = c \cdot v$ where the multiplier $c$ reflects the bidder's budget/ROI tradeoff  (see, e.g., \citealp{F-23,A-24}). In this work, to focus exclusively on the role of signaling, we abstract away such constraints and normalize $c = 1$ for all bidders.}
Throughout the rest of the paper, similar to \cite{BDPZ-22}, we normalize all bidders' value-per-click to one (i.e., $v_i  =1$), and submit all bids as direct reflections of the calibrated signal received.

\xhdr{Seller's objective}
Let $\secmax(\vec{x})$ denote the second-highest value for a vector $\vec{x}\in [0, 1]^\bidderNum$.
Anticipating that bidders submit bids equal to their received signals, seller's goal is to design a calibrated signaling that maximizes the expected second-highest bid.\footnote{Here we consider a cost per impression model, where the winning bidder pays for each impression regardless of clicks and this more aligns with the classic single-item auction scenario where multiple bidders compete for one advertising slot \citep{BBW-13,FN-18}.
Other payment models have also been considered in the literature, e.g., \cite{BDPZ-22} considered cost per click model -- the winning bidder pays when there is a click.}
Formally, the seller aims to solve the following problem:
\begin{align}
    \label{eq:opt}
    \tag{$\mathcal{P}$}
    \sup\nolimits_{\signalProb\in \csSpace} ~ \expect[\vec{\val}\sim\valprofileDen, \vec{x}\sim\signalProb(\cdot\mid \vec{\val})]{\secmax(\vec{x})}~.
\end{align}
Let $\optsignalProb$ denote an optimal calibrated signaling that solves program \ref{eq:opt}.

In addition, we are interested in the optimal calibrated signaling under a participation (individual rationality) constraint. 
To formalize this, we must first specify a tie-breaking rule for the auction when multiple bidders submit the same highest bid. 
We adopt the standard uniform tie-breaking rule, where the winner is chosen uniformly at random from the set of highest bidders.\footnote{This tie-breaking rule does not require the seller know the realized outcomes as it only depends on the submitted bids.}
Let $\bidProfile = (\bid_i)_{i\in[\bidderNum]}\in[0, 1]^\bidderNum$ denote the vector of submitted bids.
Define $\winBidderSet(\bidProfile)\subseteq[\bidderNum]$ as the set of bidders who submit the maximum bid. 
Then, the probability that bidder $i$ wins the auction is given by\footnote{Throughout the paper, we use $\indicator{\mathcal{E}}$ denote the indicator function for an event $\mathcal{E}$.}: 
$ \winProb_i(\bidProfile) 
\triangleq \frac{1}{|\winBidderSet(\bidProfile)|}\cdot \indicator{i\in \winBidderSet(\bidProfile)}$.
A calibrated signaling $\signalProb$ is {\em ex ante individual rational} (henceforth \IR) if each bidder $i$ has a non-negative expected utility
when participating in the auction, that is
\begin{align} 
    \label{IR condi uniform}
    \expect[\vec{\val}\sim\valprofileDen, \vec{x}\sim \signalProb(\cdot\mid\vec{\val})]{\winProb_i(\vec{x}) \cdot \left(\expect{\val_i \mid \signal_i} - \secmax(\vec{x})
    \right)} \ge 0~.
\end{align}
We use $\csSpaceUIR$ to denote the space of all calibrated signaling that satisfy this \UIR\ condition.
The seller's problem under the IR constraint is then to solve:
\begin{align}
    \label{eq:opt IR}
    \tag{$\mathcal{P}_{\UIR}$}
    \sup\nolimits_{\signalProb\in \csSpaceUIR} ~ \expect[\vec{\val}\sim\valprofileDen, \vec{x}\sim\signalProb(\cdot\mid \vec{\val})]{\secmax(\vec{x})}~.
\end{align}
We denote by $\optsignalprobUIR$ the optimal {\UIR} calibrated signaling, i.e., the solution to the program \ref{eq:opt IR}.
Given any calibrated signaling $\signalProb$, we use $\Rev{\signalProb} \triangleq \expect[\valprofileDen, \signalProb]{\secmax(\vec{x})}$ to denote its expected second-highest bid, which represents the seller's expected revenue under second-price auction.

\xhdr{Discussions about bidder's bidding behavior}
In our setting, each bidder submits a bid equal to the received signal. This behavior is motivated by two key considerations:
\begin{itemize}
    \item 
    In practice, it is often difficult for bidders to implement and optimize complex, non-uniform bidding strategies. Instead, advertisers frequently adopt uniform bidding, where all bids are scaled by a uniform multiplier of bidder's value.
    Notably, uniform bidding has been shown to perform well compared to fully optimized non-uniform strategies in ad auctions  (see, e.g., \citealp{FMPS-07,BFMW-14,BG-19,DLMZ-20,DMMZ-21}).
    To isolate the role of signaling in platform-bidders interaction, we abstract away budget/ROI constraints and normalize the bid multiplier to $1$.
    \item 
    We consider prior-free bidders: 
    they know only that the signals are calibrated, but do not know the underlying click outcome distribution 
    $\anonyDen$ or the specific structure of the signaling. 
    While fully informed bidders -- those who know both $\anonyDen$ and the signaling details -- might behave strategically and give rise to a signaling game with bidder-side equilibria, we argue that this is largely impractical.
    In practice, platforms rarely disclose the internal mechanics of their signal generation algorithms,
    and it is also difficult for bidders to exactly know the outcome distribution $\anonyDen$ in advance. 
    In addition, even if considering fully-informed bidders, which equilibrium solution concept to choose is also highly non-trivial as there are usually many equilibria under private signaling \citep{badanidiyuru2018targeting}.
    Indeed, it is still necessary to make some bidder behavior assumptions as the seller's revenue in the worst equilibrium of a second-price auction could be $0$ when bidders' values are interdependent.
\end{itemize}

\xhdr{Additional notations} 
Let $\setwZero \triangleq [\bidderNum] \cup\{0\}$.
Given any vector $\vec{x}\in\R^\bidderNum$, we use $\mymax(\vec{x}) = \|\vec{x}\|_{\infty}$ to denote its $\ell_\infty$-norm.
We use $\valprofileSpace_k \triangleq \{\vec{\val}: \|\vec{\val}\|_1 = k\}$ for $k \in \setwZero$ to denote the set of all outcome profiles that have $\ell_1$-norm of $k$.
Slightly abusing the notation, we use $\anonyDen_k = \prob[\vec{\val}\sim \valprofileDen]{\|\vec{\val}\|_1 = k} = \sum_{\vec{\val}\in\valprofileSpace_k} \valprofileDen(\vec{\val})$ to denote the total probability of outcome profiles in $\valprofileSpace_k$.

\subsection{Main Results}
\label{subsec:main results}

\newcommand{\uhIR}{\textsc{UHIR}}
\newcommand{\optsignalprobuhIR}{\optsignalProb_{\uhIR}}
\newcommand{\optminsecmaxuhIRone}{\optminsecmax_{1, \uhIR}}
\newcommand{\optminsecmaxuhIRzero}{\optminsecmax_{0, \uhIR}}

\newcommand{\dd}{\mathrm{d}}
\newcommand{\bitComplexity}{c}

We now present our main results.
For the seller's problem \ref{eq:opt},
we fully characterize an optimal calibrated signaling (see \Cref{thm:opt private without IR}) and show it can be efficiently computed.\footnote{As we show later, our algorithm takes as input only the $\ell_1$-norm {\em marginalized} prior outcome profile distribution $(\anonyDen_k)_{k\in \setwZero}$, and the number $\bidderNum$ of bidders. 
We say an algorithm is efficient if its running time is in polynomial in $\bidderNum, \sum\nolimits_{k\in\setwZero}\bitComplexity_k$ where $\bitComplexity_k$ denotes the bit complexity for $\anonyDen_k$.
We say that a rational number has bit complexity $\bitComplexity$ if it can be written with a binary numerator
and denominator that each have at most $\bitComplexity$ bits \citep{CDW-12}.}
This characterization forms the foundation for our second main result for seller's problem \ref{eq:opt IR} (see \Cref{thm:opt private IR w FPTAS}).
Specifically, we show that when the outcome distribution $\anonyDen$ satisfies a certain structural condition, there exists an efficient algorithm that computes the optimal calibrated signaling satisfying the IR condition. 
When $\anonyDen$ may not satisfy this condition, we develop a Fully Polynomial-Time Approximation Scheme (FPTAS) that computes an approximately optimal solution.\footnote{
Our FPTAS algorithm takes as input one additional parameter $\eps > 0$ and outputs a signaling (or succinct description thereof) whose revenue is at least $\Rev{\optsignalProb_{\UIR}} - \eps$, in time polynomial in $\bidderNum, \sum\nolimits_{k\in\setwZero}\bitComplexity_k, \sfrac{1}{\eps}$.}

To better illustrate these results, we introduce the following notation.
For any calibrated signaling $\signalProb$, we define
$\secmaxProb(\cdot\mid \vec{\val}) \in \Delta([0, 1])$ as the conditional distribution of the second-highest bid given an outcome profile $\vec{\val}$, namely, for any $x\in[0, 1]$, we have
\begin{align*}
    \secmaxProb(x \mid \vec{\val}) = \prob[\vec{\expectedVal}\sim \signalProb(\cdot\mid \vec{\val})]{\secmax(\vec{\expectedVal}) = x}~.
\end{align*}
Slightly abusing the notation, we also let $\secmaxProb(\cdot)\in\Delta([0, 1])$ be the corresponding second-highest bid distribution (i.e., marginalize $\secmaxProb(x \mid \vec{\val})$ over all $\vec{\val}$): for $x\in[0, 1]$,
$\secmaxProb(x) = \sum_{\vec{\val}\in\valprofileSpace} \valprofileDen(\vec{\val}) \cdot \secmaxProb(x\mid \vec{\val})$.

\subsubsection{Optimal calibrated signaling}
Our first main result concerns the seller's problem \ref{eq:opt}.
\begin{theorem}[Optimal calibrated signaling without IR condition]
\label{thm:opt private without IR}
An optimal calibrated signaling $\optsignalProb$ can be efficiently computed such that 
(i) every bid profile $\vec{\expectedVal}\in\supp(\optsignalProb)$ is a multi-maximal bid profile;\footnote{We say a bid profile is multi-maximal if there are at least two bidders that share the same highest bid.} and (ii)
its conditional second-highest bid distribution 
$\optsecmaxProb(\cdot\mid \vec{\val})$ satisfies:\footnote{Throughout this work, we refer to $\delta_{(A)}$ as Dirac Delta function on the set $A$.}
\begin{align}
\renewcommand{\arraystretch}{4}
\label{eq:opt secmax}
\optsecmaxProb(\cdot\mid \vec{\val}) & = 
\begin{cases}
      \displaystyle 
      \delta_{(1)}~, & \quad \|\vec{\val}\|_1 \ge 2~;\\[8pt]
      \displaystyle 
      \delta_{(\optminsecmax_1)}~, & \quad \|\vec{\val}\|_1 = 1~;\\[8pt]
      \displaystyle 
      \delta_{(\optminsecmax_0)}~, & \quad \|\vec{\val}\|_1 = 0~,
\end{cases} 
\end{align}
where $\optminsecmax_{1}\in[0.5,1)$ and $\optminsecmax_{0}\in(0,\optminsecmax_{1}]$ have closed-form expressions depending on only $\bidderNum, (\anonyDen_k)_{k\in \setwZero}$ (see \Cref{def:Minimum second-highest bid}).
\end{theorem}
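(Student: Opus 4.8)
The plan is to reformulate problem \ref{eq:opt} as a two-stage optimization: an inner problem that fixes the $\ell_1$-marginalized structure of the signals and pushes mass as high as possible, and an outer problem that optimizes over the remaining free parameters. The first observation is that by symmetry of $\valprofileDen$ we may restrict attention to signaling schemes that are themselves symmetric (anonymous) across bidders; a standard averaging/symmetrization argument shows this is without loss for the objective, since $\secmax(\cdot)$ is permutation-invariant. Under symmetry, the conditional second-highest bid distribution $\secmaxProb(\cdot\mid\vec\val)$ depends on $\vec\val$ only through $\|\vec\val\|_1=k$, so we can work with the three cases $k\ge 2$, $k=1$, $k=0$ separately and reason about $\secmaxProb_k := \secmaxProb(\cdot\mid \vec\val)$ for $\vec\val\in\valprofileSpace_k$.

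Next I would establish the ``bids are multi-maximal'' claim (part (i)) and the $k\ge 2$ case simultaneously. The key point is that calibration constrains only each bidder's \emph{own} marginal signal distribution: $\expect{\val_i\mid \signal_i}=\signal_i$. This means any post-processing of the joint signal profile that preserves each bidder's marginal distribution is still calibrated. I would argue that, given a target collection of calibrated marginals, we are free to choose the \emph{joint coupling} of these marginals to maximize $\expect{\secmax(\vec x)}$, and that the coupling which maximizes the expected second-highest order statistic always makes at least two coordinates tie at the maximum (otherwise one can comonotonically shift mass to raise $\secmax$ without touching any marginal). When $\|\vec\val\|_1=k\ge 2$, we have $k\ge 2$ clicks, so two bidders can be assigned signal $1$ (this is calibrated: conditioning on $\signal_i=1$ forces $\val_i=1$, which is consistent since those bidders did click), driving $\secmax=1$ — the maximum possible. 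This gives $\optsecmaxProb(\cdot\mid\vec\val)=\delta_{(1)}$ for $k\ge 2$, and confirms the multi-maximal structure.

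For $k=1$ and $k=0$ we cannot reach $\secmax=1$: conditioning the event $\signal_i=1$ on itself would require $\val_i=1$ with probability $1$, but there is at most one (resp.\ zero) click to ``spend.'' Here the heart of the argument is the novel correlation plan mentioned in the abstract: I would show that the optimal scheme concentrates the second-highest bid on a single value $\optminsecmax_k$ (a Dirac mass, part (ii)) rather than spreading it out. Intuitively, the objective is linear in $\secmaxProb_k$ but the calibration feasibility region for the induced marginals is convex with the relevant extreme points corresponding to point masses of the second-highest bid; concentrating on one atom is optimal because any spread can be mean-preservingly collapsed without violating calibration of the individual marginals while weakly increasing $\expect{\secmax}$ via the structure of the coupling. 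Then $\optminsecmax_1,\optminsecmax_0$ are pinned down by the binding calibration constraints: roughly, the total ``click budget'' $\sum_k k\,\anonyDen_k$ must be allocated across bidders' marginals so that the marginal expectation equals the marginal prior \ctr, and writing out this accounting in the concentrated-mass regime yields the closed forms in \Cref{def:Minimum second-highest bid}, along with the ranges $\optminsecmax_1\in[0.5,1)$ (at least two bidders tie, so the top two bids each contribute, forcing the shared value to be at least $1/2$ of the available mass in the single-click case) and $0<\optminsecmax_0\le \optminsecmax_1$.

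The main obstacle I anticipate is justifying that a \emph{single} Dirac mass is optimal for the $k=1$ and $k=0$ cases — i.e., proving the correlation plan is optimal and not merely feasible. This requires showing that the two-stage relaxation (optimize marginals, then optimally couple) is tight, and that among feasible marginal profiles the optimum is attained at the ``concentrated'' configuration. I would handle this by an exchange/ironing argument on the second-highest bid distribution: given any calibrated scheme, transport mass in $\secmaxProb_k$ toward its conditional mean while simultaneously adjusting the coupling so each bidder's marginal is preserved, verifying the objective does not decrease; the delicate part is checking feasibility of this transport — that enough probability mass of the \emph{highest} bid can always be arranged above the target atom without breaking calibration — which is exactly where the explicit construction of the correlation plan (and the constraint that bids be multi-maximal) does the work. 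Efficiency then follows because the whole reduction leaves only the scalars $\optminsecmax_0,\optminsecmax_1$ (and the mixing weights across the three $k$-regimes) to determine, each given by a closed-form rational function of $\bidderNum$ and $(\anonyDen_k)_{k\in\setwZero}$, computable in the stated bit-complexity bound.
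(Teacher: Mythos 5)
Your high-level plan matches the paper's: symmetrize using the permutation-invariance of $\valprofileDen$ and $\secmax(\cdot)$, reduce calibration to a constraint on bidders' marginal bid distributions indexed by $k=\|\vec{\val}\|_1$, and then split the problem into an inner optimal-coupling stage and an outer optimal-marginals stage. The $k\ge 2$ argument (two clicked bidders can both be sent signal $1$, so $\secmax=1$ is attainable and clearly optimal) is exactly the paper's, and your ``merge atoms without decreasing revenue'' heuristic for concentrating the second-highest bid is the right idea (the paper implements it as an explicit exchange argument, with the inequality $\frac{b^2}{a+2b}+\frac{b'^2}{a'+2b'}\ge\frac{(b+b')^2}{a+2b+a'+2b'}$ doing the work).

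There is, however, a genuine gap in your treatment of $k=1$ (and symmetrically $k=\bidderNum-1$). You assert that the optimal coupling of any fixed marginals ``always makes at least two coordinates tie at the maximum, otherwise one can comonotonically shift mass to raise $\secmax$ without touching any marginal.'' This is true for $k\notin\{1,\bidderNum-1\}$, but it is false for $k=1$ with arbitrary marginals: there is only one bidder with outcome $1$, so a high value in $\supp(\anonySymmMarginal_{1,1})$ can be matched into a tie only if $\anonySymmMarginal_{1,0}$ places enough mass at that same value, i.e.\ only if $\anonySymmMarginal_{1,1}(x)\le(\bidderNum-1)\anonySymmMarginal_{1,0}(x)$ there (the paper's \Cref{ex:opt cor special k} is a counterexample where no coupling produces a tie at the top). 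Multi-maximality for $k=1$ is therefore not a property of optimal couplings alone; it must be derived from optimality of the \emph{marginals} (the paper does this via the LP \ref{eq:opt cor k = 1}, the monotone-correlation lemma, and \Cref{prop:equal secmax special k}), and without it your two-stage relaxation is not obviously tight for $k=1$. Relatedly, your accounting for $\optminsecmax_0,\optminsecmax_1$ is underspecified: the signal $\optminsecmax_1$ sent to a zero-outcome bidder in a $k=1$ profile must be ``financed'' by mass at $\optminsecmax_1$ from clicked bidders in profiles with $k\ge 2$, which is precisely the cross-$k$ linear system \ref{eq:linear system}; the closed forms (including the $\sqrt{2}$ term in the optimal split between the two atoms) come from solving the resulting outer optimization, not merely from a single budget-balance identity.
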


\begin{figure}[htbp]
  \centering
    \begin{tikzpicture}[scale=1, baseline=(current bounding box.south)]
        \begin{axis}[
            axis lines = left,
            clip=false,  
            xmin=0, xmax=1.1,
            ymin=0, ymax=1.1,
            xtick={0,0.3,0.5,0.7,1},
            xticklabels={$0$,$\optminsecmax_0$,$0.5$, $\optminsecmax_1$,$1$},
            ytick={0,0.15,0.6,1},
            yticklabels={$0$,$\lambda_0$,$\lambda_0+\lambda_1$,$1$},
            width=10cm,
            height=6cm,
            thick,
            legend pos=south east,
        ]

        \addplot[blue, line width=2pt] coordinates {(0,0) (0.3,0)};
        \addplot[blue, line width=2pt] coordinates {(0.3,0.15) (0.7,0.15)};
        \addplot[blue, line width=2pt] coordinates {(0.7,0.6) (1,0.6)};

        \addplot[mark=*, blue, only marks] coordinates {(0.3,0.15)};

        \addplot[mark=*, blue, only marks] coordinates {(0.7,0.6)};

        \addplot[mark=*, blue, only marks] coordinates {(1,1)};

        \addplot[dashed, gray, thin] coordinates {(0,0.15) (0.3,0.15)};

        \addplot[dashed, gray, thin] coordinates {(0,0.6) (0.7,0.6)};

        \addplot[dashed, gray, thin] coordinates {(0,1) (1,1)};

        \addplot[dashed, gray, thin] coordinates {(1,0) (1,1)};

        \addplot[dashed, gray, thin] coordinates {(0.7,0) (0.7,0.6)};

        \addplot[dashed, gray, thin] coordinates {(0.3,0) (0.3,0.15)};

        \end{axis}
    \end{tikzpicture}
  \caption{The CDF of the second-highest bid distribution $\secmaxProb^\star(\cdot)$ under $\optsignalProb$.} 
  \label{fig:no IR}
\end{figure}

The statement (i) in \Cref{thm:opt private without IR} shows that under optimal calibrated signaling $\optsignalProb$, every realized bid profile has at least two bidders who have the same highest bid, and statement (ii) gives the exact characterization of these second-highest bids (see \Cref{fig:no IR} for an illustration of the CDF of these second-highest bids).

The statement (ii) in \Cref{thm:opt private without IR} characterizes the structure of the conditional second-highest bid distribution $\optsecmaxProb(\cdot\mid \vec{\val})$ under the optimal  $\optsignalProb$, and it shows that the second-highest bid is a deterministic value conditional on every outcome profile. 
The intuitions behind the structure of $\optsignalProb$ are as follows:  Given an outcome profile $\vec{\val}$ with $\|\vec{\val}\|_1\ge 2$, the optimal calibrated signaling $\optsignalProb$ ``fully reveals'' to only two bidders who have outcomes of $1$, while correlates the remaining bidders who have outcomes of $1$ with other outcome profile $\vec{\val}'$ satisfying $\|\vec{\val}'\|_1\le 1$.
In doing so, the signaling scheme $\optsignalProb$ can not only generate  the second-highest bid of $1$ for outcome profiles $\vec{\val}$ with $\|\vec{\val}\|_1\ge 2$,  but also generate the second-highest bid $\optminsecmax_1$ for $\|\vec{\val}\|_1 = 1$ and the second-highest bid $\optminsecmax_0$ for $\|\vec{\val}\|_1 = 0$.
Indeed, $\optminsecmax_0, \optminsecmax_1$ are the {\em minimum second-highest bid} that can be generated for the outcome profiles in $\valprofileSpace_0, \valprofileSpace_1$ respectively, and their value can be computed via a linear system. 

The exact structure of $\optsignalProb$ is provided in \Cref{subsec:opt marginal}.
Below example illustrates  (see \Cref{ex:three bidder}) its structure when there are three bidders.
\begin{example}[$\optsignalProb$ for $3$ bidders] 
\label{ex:three bidder}
For 3 bidders with $\anonyDen_0 = 0.1$, $\anonyDen_1 = 0.4$, $\anonyDen_2 = 0.4$, $\anonyDen_3 = 0.1$, it follows that $\optsignalProb(\cdot \mid (0, 0, 0)) = \frac{1}{3} \cdot \left(\delta_{(\optminsecmax_0, \optminsecmax_0, 0)} + \delta_{(\optminsecmax_0, 0, \optminsecmax_0)} + \delta_{(0, \optminsecmax_0, \optminsecmax_0)}\right)$; $\optsignalProb(\cdot \mid (1, 0, 0)) = \frac{1}{2} \cdot \left(\delta_{(\optminsecmax_1, \optminsecmax_1, 0)} + \delta_{(\optminsecmax_1, 0, \optminsecmax_1)} \right)$; $\optsignalProb(\cdot \mid (0, 1, 0)) = \frac{1}{2} \cdot \left(\delta_{(\optminsecmax_{1}, \optminsecmax_1, 0)} + \delta_{(0, \optminsecmax_1, \optminsecmax_1)} \right)$; $\optsignalProb(\cdot \mid (0, 0, 1)) = \frac{1}{2} \cdot \left(\delta_{(\optminsecmax_1, 0, \optminsecmax_1)} + \delta_{(0, \optminsecmax_1, \optminsecmax_1)} \right)$; $\optsignalProb(\cdot \mid (1,1,0)) = \delta_{(1,1,0)}$; $\optsignalProb(\cdot \mid (1, 0, 1)) = \delta_{(1, 0, 1)}$; $\optsignalProb(\cdot \mid (0, 1, 1)) = \delta_{(0, 1, 1)}$; $\optsignalProb(\cdot \mid (1,1,1))  = 0.1264 \cdot \left(\delta_{(1,1,\optminsecmax_1)} + \delta_{(1, \optminsecmax_1, 1)} + \delta_{(\optminsecmax_1, 1, 1)}\right) + 0.8736 \cdot \left(\delta_{(1,1, \optminsecmax_0)} + \delta_{(1, \optminsecmax_0, 1)} + \delta_{(\optminsecmax_0, 1, 1)}\right)$, where $\optminsecmax_0 = 0.383$, $\optminsecmax_1 = 0.5226$.
\end{example}
At a very high level, we show that it is always without loss to consider ``symmetric'' calibrated signaling $\signalProb$ in which
all outcome profiles with the same $\ell_1$-norm share a common conditional second-highest bid distribution $\secmaxProb(\cdot\mid\vec{\val})$,
and moreover, conditional any outcome profile in $\valprofileSpace_k$, the bidder's marginal bid distribution (marginalized over other bidders' bids) is identity-invariant, and it only depends on their realized click outcome in this outcome profile.
For example, in \Cref{ex:three bidder}, when we look at each outcome profile in $\valprofileSpace_1$, the bidder with realized outcome $1$ always puts a bid $1$, and the bidder with realized outcome $0$ always randomizes between two bids $\{0, \optminsecmax_1\}$ uniformly.
Then we show that we can reduce the design of the optimal calibrated signaling to the design of bidders' optimal bid marginals (i.e., conditional on the outcome profile, the induced bid distribution of each bidder) and optimal correlations for these marginals.

Our characterization of optimal $\optsignalProb$ also implies the following revenue guarantee. 
Specifically, the revenue $\Rev{\optsignalProb}$ under $\optsignalProb$ depends on a condition on the minimum second-highest bids $\optminsecmax_0, \optminsecmax_1$:
\begin{itemize}
    \item 
    When $\optminsecmax_0 \le \frac{\anonyDen_1(1-\optminsecmax_1)}{\anonyDen_0}$,
    the seller's revenue is strictly bounded above by the maximum achievable welfare, namely, $\Rev{\optsignalProb} \le \Wel{\anonyDen}$ where $\Wel{\anonyDen} = \sum_{k\in[\bidderNum]} \anonyDen_k$.
    \item 
    When $\optminsecmax_0 > \frac{\anonyDen_1(1-\optminsecmax_1)}{\anonyDen_0}$,
    the signaling $\optsignalProb$ can extract more revenue than the maximum welfare, namely, $\Rev{\optsignalProb} >  \Wel{\anonyDen}$.
\end{itemize}

\begin{corollary}
\label{coro: no IR pi}
When $\optminsecmax_0 \le \frac{\anonyDen_1(1-\optminsecmax_1)}{\anonyDen_0}$, we have that $\Rev{\optsignalProb} \le \Wel{\anonyDen}$ and it attains equality when $\optminsecmax_0 = \frac{\anonyDen_1(1-\optminsecmax_1)}{\anonyDen_0}$; when 
$\optminsecmax_0 > \frac{\anonyDen_1(1-\optminsecmax_1)}{\anonyDen_0}$, we have that $\Rev{\optsignalProb} >  \Wel{\anonyDen}$.
\end{corollary}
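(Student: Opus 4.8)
The plan is to obtain the corollary as an immediate consequence of the closed-form description of $\optsignalProb$ in \Cref{thm:opt private without IR}; no new idea is needed beyond careful bookkeeping. First I would write down the revenue. Since $\optsecmaxProb(\cdot\mid\vec{\val})$ is by definition the distribution of $\secmax(\vec{x})$ when $\vec{x}\sim\optsignalProb(\cdot\mid\vec{\val})$, we have $\Rev{\optsignalProb} = \expect[\vec{\val}\sim\valprofileDen]{\expect[x\sim\optsecmaxProb(\cdot\mid\vec{\val})]{x}}$; and because each $\optsecmaxProb(\cdot\mid\vec{\val})$ in \eqref{eq:opt secmax} is a Dirac point mass — at $1$ when $\|\vec{\val}\|_1\ge 2$, at $\optminsecmax_1$ when $\|\vec{\val}\|_1 = 1$, and at $\optminsecmax_0$ when $\|\vec{\val}\|_1 = 0$ — the inner expectation equals that point-mass location, which depends on $\vec{\val}$ only through $\|\vec{\val}\|_1$. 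Summing over $\vec{\val}$ and grouping profiles by $\ell_1$-norm via $\sum_{\vec{\val}\in\valprofileSpace_k}\valprofileDen(\vec{\val}) = \anonyDen_k$ gives
\[
\Rev{\optsignalProb} \;=\; \sum\nolimits_{k\ge 2}\anonyDen_k \;+\; \anonyDen_1\,\optminsecmax_1 \;+\; \anonyDen_0\,\optminsecmax_0 .
\]

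Next I would subtract the welfare benchmark. By definition $\Wel{\anonyDen} = \sum_{k\in[\bidderNum]}\anonyDen_k = \sum_{k\ge 1}\anonyDen_k = 1-\anonyDen_0$, which is exactly the probability that at least one click is realized and hence the largest social welfare any allocation can achieve (the item is worth $1$ to a clicking bidder and $0$ when no bidder clicks). Therefore
\[
\Rev{\optsignalProb} - \Wel{\anonyDen} \;=\; \anonyDen_0\,\optminsecmax_0 \;-\; \anonyDen_1\,(1-\optminsecmax_1) .
\]
Assuming $\anonyDen_0>0$, dividing by $\anonyDen_0$ shows this difference has the same sign as $\optminsecmax_0 - \tfrac{\anonyDen_1(1-\optminsecmax_1)}{\anonyDen_0}$: it is nonpositive precisely when $\optminsecmax_0\le\tfrac{\anonyDen_1(1-\optminsecmax_1)}{\anonyDen_0}$, with equality exactly at the boundary $\optminsecmax_0=\tfrac{\anonyDen_1(1-\optminsecmax_1)}{\anonyDen_0}$, and strictly positive when $\optminsecmax_0>\tfrac{\anonyDen_1(1-\optminsecmax_1)}{\anonyDen_0}$. (In the degenerate case $\anonyDen_0=0$ the difference equals $-\anonyDen_1(1-\optminsecmax_1)\le 0$, using $\optminsecmax_1<1$ from \Cref{thm:opt private without IR}, which is consistent with the first case.) This establishes both claims.

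I expect essentially no obstacle here: all the technical content is packed into \Cref{thm:opt private without IR}, and the corollary reduces to arithmetic once that characterization — in particular the fact that each conditional second-highest bid is deterministic with the stated value, and the bounds $\optminsecmax_1\in[0.5,1)$, $\optminsecmax_0\in(0,\optminsecmax_1]$ — is in hand. The only points worth stating carefully are the identification of $\Wel{\anonyDen}$ with $\sum_{k\ge1}\anonyDen_k$ as the achievable welfare upper bound, and the handling of the $\anonyDen_0=0$ edge case.
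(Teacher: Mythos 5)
Your proposal is correct and follows essentially the same route as the paper's proof: compute $\Rev{\optsignalProb} = \anonyDen_0\optminsecmax_0 + \anonyDen_1\optminsecmax_1 + \sum_{k\in[2:\bidderNum]}\anonyDen_k$ from the point-mass characterization in \Cref{thm:opt private without IR} and compare term-by-term with $\Wel{\anonyDen}=\sum_{k\in[\bidderNum]}\anonyDen_k$, reducing the claim to the sign of $\anonyDen_0\optminsecmax_0 - \anonyDen_1(1-\optminsecmax_1)$. Your explicit treatment of the $\anonyDen_0=0$ edge case is a small extra care the paper omits, but the argument is otherwise identical.
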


\subsubsection{An FPTAS for computing an approximately optimal \texorpdfstring{$\signalProb_{\UIR}$}{pi_UIR}}
\label{subsub:fptas main results}
Before proceeding to present our second main result, we first elaborate how the signaling $\optsignalProb$ that we obtained in \Cref{thm:opt private without IR} fails to satisfy the IR condition.
We then outline
how to construct an FPTAS for computing an approximately optimal calibrated signaling $\signalProb_{\UIR}\in\csSpaceUIR$.

\xhdr{$\optsignalProb$ is not IR even under a stronger tie-breaking rule}
To understand the IR violation, we consider a stronger tie-breaking rule where the seller always allocates the item to the bidder with the highest bid and the highest realized outcome.
Even under this favorable rule, the calibrated signaling $\optsignalProb$ may violate IR.
We show this by computing the ex ante surplus of a bidder under $\optsignalProb$:
This surplus depends entirely on $\optminsecmax_0$ and $\optminsecmax_1$, since for $k \geq 2$, the minimum second-highest bid is always 1. 
Specifically,
\begin{itemize}
    \item 
    when $k=0$, the ex post surplus of the winning bidder is $-\optminsecmax_0$ (since his realized outcome is $0$);
    \item 
    when $k=1$, the ex post surplus of the winning bidder is $1-\optminsecmax_1$ (note under this stronger tie-breaking rule, the bidders with realized outcome $0$ cannot win when $k=1$).
\end{itemize}
Thus, by the symmetry of the outcome distribution $\anonyDen$, a bidder's ex ante surplus under $\optsignalProb$ is:
\begin{align*}
    \frac{1}{\bidderNum} \cdot \anonyDen_0 \cdot (-\optminsecmax_0) + \frac{1}{\bidderNum} \cdot \anonyDen_1 \cdot (1-\optminsecmax_1)~.
\end{align*}
As we can see, if $\optminsecmax_0$ is too large, this would make the bidders' ex ante surplus become negative and violate the IR condition when the seller adopts this stronger tie-breaking rule. 
To resolve this, a simple fix could be as follows: whenever $\optsignalProb$ violates the IR condition under this stronger tie-breaking rule, we hold $\optminsecmax_1$ fixed and reduce $\optminsecmax_0$ 
until the IR constraint is exactly met.
That is, we reduce $\optminsecmax_0$ to $\frac{\anonyDen_1}{\anonyDen_0}\cdot(1-\optminsecmax_1)$ at which the bidders' ex ante surplus exactly equals to zero (i.e., the IR condition in Eqn.~\eqref{IR condi uniform} is binding). 
One can show that this adjustment is indeed optimal among all calibrated signaling structures that is IR under this stronger tie-breaking rule, as it extracts full surplus.

\xhdr{A two-step adjustment to ensure IR under the uniform tie-breaking rule}
As discussed earlier, the calibrated 
signaling $\optsignalProb$ can be modified to satisfy IR under a {\em stronger} tie-breaking rule, where the seller allocates the item to the bidder with both the highest bid and the highest realized outcome.
The key insight is that under this stronger rule, we can ensure IR by (1) ensuring that, for any outcome profile in $\valprofileSpace_1$, the bidder with realized outcome $1$ always wins; (2) reducing $\optminsecmax_0$ when it is initially too large so let the bidder's ex ante surplus be $0$. 
To obtain an approximately optimal signaling $\signalProb_{\UIR}\in\csSpaceUIR$ under {\em uniform} tie-breaking rule, 
we introduce a two-step adjustment to transform $\optsignalProb$ to $\signalProb_{\UIR}$
in a way that it  mimics the behavior of the stronger tie-breaking rule while remaining compatible with the uniform rule.

In the first step, we ensure that for outcome profiles in $\valprofileSpace_1$ (i.e., when $k=1$), the bidder with outcome 1 always wins.
This condition is crucial for satisfying the IR constraint under the uniform tie-breaking rule. Without this, it is easy to see that  $\optsignalProb$ cannot satisfy the IR condition. 
Specifically, when $k = 1$, if bidder $i$ bids $\optminsecmax_1$, he has a $1/2$ chance of winning with profit $1 - \optminsecmax_1$, and $1/2$ chance of incurring a loss of $\optminsecmax_1$. Thus, the expected payoff is $1/2 -  \optminsecmax_1$. Since $\optminsecmax_1 \ge 1/2$, this expected payoff is non-positive. 
Moreover, when $k \neq 1$, the winning bidders' ex post surplus are all less than or equal to 0. 
Hence, the IR constraint cannot be satisfied under these conditions. 
To address this, we construct a serrated sequence of calibrated bids $(\minsecmax_{1, \UIR, l})_{l \in [-\largeNum: \largeNum-1]}$ within an $\eps^2$-neighborhood of $\optminsecmax_1$, satisfying (see \Cref{fig: CDF of 2nd bid under UIR} for a figure illustration), 
\begin{align*}
    \minsecmax_{1, \UIR, l} < \minsecmax_{1, \UIR, l+1} < \minsecmax_{1, \UIR, l} + \Theta(\eps^2) 
\end{align*}
and each of them is assigned with an equally small probability mass. 
By using the optimal correlation plan which can be found in the proof of \Cref{thm:opt private IR w FPTAS}, we ensure that in a bid profile, whenever the second-highest bid is $\minsecmax_{1, \UIR, l}$, the highest bid is always $\minsecmax_{1, \UIR, l+1}$ (i.e., just slightly above $\minsecmax_{1, \UIR, l}$), and the bidder with the highest bid necessarily has a realized outcome of 1. 

This adjustment guarantees that for the outcome profile satisfying $\|\vec{\val}\|_1=1$, even under the uniform tie-breaking rule, we can always ensure that a bidder with realized outcome 1 wins. 
As a result, we no longer need to rely on the stronger tie-breaking assumption. Moreover, the revenue loss (compared to $\Rev{\optsignalProb}$) incurred by the seller due to this adjustment remains bounded by $\eps$.

The second step adjustment is optional. 
In particular, when we have $\optminsecmax_0 > \frac{\anonyDen_1(1-\optminsecmax_1)}{\anonyDen_0}$, we apply an adjustment similar to the one used under the stronger tie-breaking rule. Specifically, we reduce $\optminsecmax_0$ to a value $\minsecmax_{0, IR}$ (it is the minimum second-highest bid that one can generate under the outcome profile in $\valprofileSpace_0$)
which ensures that the IR constraint binds and in this case,  we can guarantee efficiency where the constructed $\signalProb_{\UIR}$ can attain maximum welfare (and thus the constructed $\signalProb_{\UIR}$ is indeed optimal).

On the other hand, when $\optminsecmax_0 \le \frac{\anonyDen_1(1-\optminsecmax_1)}{\anonyDen_0}$, 
the IR condition is naturally satisfied (and thus the first step adjustment on $\optsignalProb$ suffices). 
Since $\Rev{\optsignalProb_{\UIR}} \leq \Rev{\optsignalProb}$ and by our construction of $\signalProb_{\UIR}$, we have 
$\Rev{\optsignalProb} - \Theta(\eps) \leq \Rev{\signalProb_{\UIR}} \leq \Rev{\optsignalProb_{\UIR}}$, consequently $\Rev{\optsignalProb_{\UIR}} - \Theta(\eps) \leq \Rev{\signalProb_{\UIR}}$. Thus, $\signalProb_{\UIR}$ is an $\eps$-approximation to the optimal $\optsignalProb_{\UIR}$ under the uniform tie-breaking rule.

With the above adjustments, we are able to present our main results as below.
\begin{theorem}
\label{thm:opt private IR w FPTAS}
For the seller's problem \ref{eq:opt IR}, let $\optminsecmax_1, \optminsecmax_0$ be defined as in \Cref{thm:opt private without IR}:
\begin{itemize}
    \item 
    When $\optminsecmax_0 \leq \frac{\anonyDen_1(1-\optminsecmax_1)}{\anonyDen_0}$, 
    given any non-negative $\eps \le \sqrt{\anonyDen_1} \wedge \frac{4\anonyDen_\bidderNum}{\anonyDen_1}$,
    there exists an algorithm (see \Cref{alg: FPTAS algorithm}) with running time polynomial in
    $\bidderNum, \sum\nolimits_{k\in\setwZero}\bitComplexity_k, \sfrac{1}{\eps}$ where $\bitComplexity_k$ is the bit complexity of $\anonyDen_k$
    that can compute a calibrated signaling $\signalProb_{\UIR}\in\csSpaceUIR$ that is an $\eps$-approximation to the optimal calibrated signaling $\optsignalprobUIR$:
    $\Rev{\signalProb_{\UIR}} \ge \Rev{\optsignalprobUIR} - \eps$. 
    \item 
    When $\optminsecmax_0 > \frac{\anonyDen_1(1-\optminsecmax_1)}{\anonyDen_0}$, the calibrated signaling $\signalProb_{\UIR}$ returned from \Cref{alg: FPTAS algorithm} is indeed optimal and moreover it extracts full surplus, namely, $\Rev{\signalProb_{\UIR}} = \Rev{\optsignalProb_{\UIR}} 
    = \Wel{\valprofileDen}$.
\end{itemize}
\end{theorem}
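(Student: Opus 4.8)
The plan is to start from the optimal non-IR signaling $\optsignalProb$ of \Cref{thm:opt private without IR} and repair it in two stages, always working in the anonymous/symmetric normal form in which a signaling is specified by, for each $k\in\setwZero$, a conditional second-highest bid law together with each bidder's bid marginal as a function of its realized click outcome; the statement that any calibration-feasible collection of such anonymous marginals can be coupled by a correlation plan realizing a prescribed second-highest-bid law is exactly the transport ingredient behind \Cref{thm:opt private without IR}, which I would reuse. Recall that $\optsignalProb$ puts the second-highest bid at $1$ on $\valprofileSpace_k$ for every $k\ge 2$, at $\optminsecmax_1\in[\tfrac12,1)$ on $\valprofileSpace_1$, and at $\optminsecmax_0\in(0,\optminsecmax_1]$ on $\valprofileSpace_0$, and that the representative bidder's ex ante surplus under $\optsignalProb$ has two defects: (a) even with the stronger ``highest bid and highest outcome'' tie-break it equals $\tfrac1\bidderNum\big(\anonyDen_1(1-\optminsecmax_1)-\anonyDen_0\optminsecmax_0\big)$, which is negative precisely when $\optminsecmax_0>\tfrac{\anonyDen_1(1-\optminsecmax_1)}{\anonyDen_0}$; and (b) under uniform tie-breaking a bidder sharing the tied bid $\optminsecmax_1$ on $\valprofileSpace_1$ wins only half the time, with ex post payoff $\tfrac12-\optminsecmax_1\le0$, so IR also fails in regime (a)'s positive case.

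\textbf{Stage 1 (repairing $\valprofileSpace_1$).} I would replace the single atom $\optminsecmax_1$ of the second-highest bid law by a serrated ladder $\minsecmax_{1,\UIR,-\largeNum}<\dots<\minsecmax_{1,\UIR,\largeNum-1}$ of $2\largeNum=\Theta(1/\eps)$ values in $[\tfrac12,1)$ with consecutive gaps $\Theta(\eps^2)$ (so the whole ladder sits in an $O(\eps)$-window around $\optminsecmax_1$), each carrying roughly equal mass $\anonyDen_1/(2\largeNum)$, and I would build the correlation plan so that whenever the second-highest bid is a rung $\minsecmax_{1,\UIR,l}$ the highest bid is the next rung $\minsecmax_{1,\UIR,l+1}$ and is held by a bidder with realized outcome $1$, while outcome-$0$ bidders only ever bid in $\{0\}\cup\{\minsecmax_{1,\UIR,l}\}_l$. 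Then I would check three things. \emph{Calibration feasibility}: every rung $s\in[\tfrac12,1)$ must be realized as an $s:(1-s)$ mixture of outcome-$1$ and outcome-$0$ bidders receiving signal $s$; the outcome-$0$ contributions come from $\valprofileSpace_1$ and the outcome-$1$ contributions from $\valprofileSpace_1$ together with the ``excess'' outcome-$1$ mass available in $\valprofileSpace_\bidderNum$ (the only profile with spare outcome-$1$ mass once all $k\ge2$ profiles are fully revealed), and this supply is exactly what the hypothesis $\eps\le 4\anonyDen_\bidderNum/\anonyDen_1$ bounds, while $\eps\le\sqrt{\anonyDen_1}$ keeps the per-rung mass $\anonyDen_1/(2\largeNum)$ feasible. \emph{Revenue}: only mass at most $\anonyDen_1$, plus an $O(\eps)$ sliver of $\valprofileSpace_\bidderNum$ mass, is moved, each by at most $O(\eps)$ in value, so $\Rev{\optsignalProb}-O(\eps)\le\Rev{\signalProb_{\UIR}}\le\Rev{\optsignalProb}$. \emph{The $\valprofileSpace_1$ surplus term}: the winner on $\valprofileSpace_1$ now always has outcome $1$ and pays a rung value strictly below $1$, so its ex post surplus is $\ge 1-\optminsecmax_1>0$, and the representative bidder's ex ante surplus becomes $\tfrac1\bidderNum\big(\anonyDen_1(1-\optminsecmax_1)-\anonyDen_0\optminsecmax_0\big)+O(\eps)$.

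\textbf{Stage 2 (repairing $\valprofileSpace_0$, only when $\optminsecmax_0>\tfrac{\anonyDen_1(1-\optminsecmax_1)}{\anonyDen_0}$).} Here the post-Stage-1 surplus is still negative, so I would additionally lower the second-highest bid on $\valprofileSpace_0$ from $\optminsecmax_0$ to the unique value $\minsecmax_{0,\IR}$ (equal to $\tfrac{\anonyDen_1(1-\optminsecmax_1)}{\anonyDen_0}$ up to the $O(\eps)$ Stage-1 gain) at which the representative bidder's ex ante surplus is exactly $0$; one checks $\minsecmax_{0,\IR}$ still exceeds the minimum second-highest bid generable on $\valprofileSpace_0$ (this uses that in the present regime $\optminsecmax_0$ was strictly above that floor), so a calibrated correlation plan achieving it exists. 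Since $\signalProb_{\UIR}$ now has binding IR, the identity ``total ex ante bidder surplus $=\Wel{\valprofileDen}-\Rev{\signalProb_{\UIR}}$'' (valid because each winner pays $\secmax$ and collects realized value $\val_i$) together with the fact that the construction always awards the item to a clicking bidder whenever one exists forces $\Rev{\signalProb_{\UIR}}=\Wel{\valprofileDen}=\sum_{k\in[\bidderNum]}\anonyDen_k$; since no IR signaling can exceed $\Wel{\valprofileDen}$ (its revenue is at most realized welfare, at most $\sum_{k\in[\bidderNum]}\anonyDen_k$), $\signalProb_{\UIR}$ is optimal. In the complementary regime $\optminsecmax_0\le\tfrac{\anonyDen_1(1-\optminsecmax_1)}{\anonyDen_0}$, Stage 1 alone already yields $\signalProb_{\UIR}\in\csSpaceUIR$ with $\Rev{\signalProb_{\UIR}}\ge\Rev{\optsignalProb}-O(\eps)$; combining with $\Rev{\optsignalprobUIR}\le\Rev{\optsignalProb}$ (as $\csSpaceUIR\subseteq\csSpace$) gives $\Rev{\signalProb_{\UIR}}\ge\Rev{\optsignalprobUIR}-O(\eps)$, and re-running the construction with $\eps$ shrunk by the explicit hidden constant delivers the stated $\Rev{\signalProb_{\UIR}}\ge\Rev{\optsignalprobUIR}-\eps$. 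The running time is polynomial in $\bidderNum,\sum_k\bitComplexity_k,\sfrac{1}{\eps}$ because $\optminsecmax_0,\optminsecmax_1,\minsecmax_{0,\IR}$ and the $\Theta(1/\eps)$ rung values each solve a small linear system over $\mathrm{poly}(\sum_k\bitComplexity_k,\log\tfrac1\eps)$-bit rationals, and the output is in the same succinct anonymous form as $\optsignalProb$.

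\textbf{Main obstacle.} The crux is constructing the Stage-1 correlation plan explicitly: coupling consecutive rungs as (second-highest, highest) with the top bidder pinned to outcome $1$, while keeping every per-rung anonymous marginal calibration-feasible using only the outcome-$1$ supply in $\valprofileSpace_1$ and $\valprofileSpace_\bidderNum$ -- this is where the paper's new correlation-plan idea is genuinely needed, and where the thresholds $\eps\le\sqrt{\anonyDen_1}\wedge 4\anonyDen_\bidderNum/\anonyDen_1$ originate; the $O(\eps)$ revenue bookkeeping, the two-sided approximation bound, the full-surplus computation, and the bit-complexity analysis are all comparatively routine afterwards.
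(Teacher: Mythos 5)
Your proposal follows essentially the same route as the paper: the same two-step adjustment (a serrated ladder of $\Theta(1/\eps)$ calibrated rungs replacing the atom at $\optminsecmax_1$ so that the top bidder on $\valprofileSpace_1$ always has outcome $1$, then lowering the $\valprofileSpace_0$ second-highest bid to make IR bind and extract full surplus), the same $O(\eps)$ revenue bookkeeping and chain $\Rev{\optsignalprobUIR}\le\Rev{\optsignalProb}$, and the same welfare upper bound for optimality in the second regime. The only technical content you defer -- the explicit calibration-feasible correlation plan pairing consecutive rungs using the spare outcome-$1$ mass from $\valprofileSpace_\bidderNum$, which is where the thresholds on $\eps$ arise -- is exactly what the paper's Propositions 5.1 and 5.2 supply, and you identify its required properties correctly.
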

\begin{figure}[htbp]
\centering
    \definecolor{mgreen}{HTML}{9ECA8D}
\definecolor{mgray}{HTML}{ABB3B8}
\definecolor{blueGrotto}{HTML}{059DC0}
\definecolor{myellow}{rgb}{0.88,0.61,0.14}
\definecolor{red2}{HTML}{1F462C}
\definecolor{orange2}{HTML}{FF8000}

\begin{tikzpicture}[scale=1, baseline=(current bounding box.south)]
\begin{axis}[
        axis lines = left,
        xmin=0, xmax=0.55,
        ymin=0, ymax=1.1,
        clip=false,  
        width=10cm,
        height=6cm,
        label style={font=\small}, 
        tick label style={font=\small}, 
        legend style={font=\small, nodes={scale=0.8, transform shape}},  
        xlabel={Bernoulli mean $p$},
        xlabel style={at={(axis cs:0.47,0)}, anchor=north},
        ylabel={Revenue},
        legend style={anchor=north,legend pos=south east, align=left},
        xtick=\empty,
    ]
    \addplot[gray!70!white, line width=1.5mm] table [x=p, y=r1, col sep=comma] {Paper/plots/plot_data.csv};
    \addlegendentry{$\Wel{\anonyDen}$}
    
    \addplot[blue, opacity=0.6, line width=0.9mm] table [x=p, y=r2, col sep=comma] {Paper/plots/plot_data.csv};
    \addlegendentry{$\Rev{\optsignalProb}$}

    \addplot[red, opacity=0.8, dashed, line width=0.6mm, dash pattern=on 6pt off 3pt] table [x=p, y=r4, col sep=comma] {Paper/plots/plot_data.csv};
    \addlegendentry{$\Rev{\signalProb_{\UIR}}$}

    \addplot[gray!90!white,  dashed, line width=0.7pt] table [x=p, y=r3, col sep=comma] {Paper/plots/plot_data.csv};
    \addlegendentry{$\Rev{\signalProb_{\textsc{full}}}$}
    
    \addplot[dashed, gray, thin] coordinates {(0.105,0) (0.105,0.882)};

    \draw[decorate,decoration={brace,mirror,amplitude=8pt}, thick]
      (axis cs:0, -0.02) -- (axis cs:0.105, -0.02) node[midway, yshift=-16pt, font = \scriptsize] {$\optminsecmax_0 \le \anonyDen_1(1-\optminsecmax_1)/\anonyDen_0$};
    \draw[decorate,decoration={brace,mirror,amplitude=8pt}, thick]
      (axis cs:0.105, -0.02) -- (axis cs:0.5, -0.02) node[midway, yshift=-16pt, font = \scriptsize] {$\optminsecmax_0 > \anonyDen_1(1-\optminsecmax_1)/\anonyDen_0$};
    \end{axis}
\end{tikzpicture}
    \caption{Revenue comparison between $\optsignalProb$ (characterized in \Cref{thm:opt private without IR}), $\signalProb_{\UIR}$ (characterized in \Cref{thm:opt private IR w FPTAS}) and $\signalProb_{\textsc{FULL}}$ which denotes the full information signaling. 
    In this example, we let $\bidderNum = 20$, $\eps = 1e-5$, and we let $\anonyDen$ be a product distribution, i.e., its each coordinate $\val_i\sim \cc{Bern}(p)$ is i.i.d.\ realized according to a Bernoulli distribution with $p\in[0, 1]$.
    When $\optminsecmax_0 \le \anonyDen_1(1-\optminsecmax_1)/\anonyDen_0$, we have $\Rev{\signalProb_{\UIR}} \ge \Rev{\optsignalProb} - \eps \ge \Rev{\optsignalProb_{\UIR}} - \eps$ and $\Rev{\signalProb_{\UIR}} < \Rev{\optsignalProb} \le \Wel{\anonyDen}$; 
    when $\optminsecmax_0 > \anonyDen_1(1-\optminsecmax_1)/\anonyDen_0$, we have $\Rev{\signalProb_{\UIR}} = \Rev{\optsignalProb_{\UIR}} = \Wel{\anonyDen}<\Rev{\optsignalProb}$.}
\label{fig:revenue compare}
\end{figure}
In \Cref{fig:revenue compare}, we compare the revenue under $\optsignalProb, \signalProb_{\UIR}$, and also another benchmark $\signalProb_{\textsc{Full}}$\footnote{Under full information signaling $\signalProb_{\textsc{FULL}}$, we have that $\signalProb_{\textsc{FULL}}(\vec{\val}\mid \vec{\val}) = 1$ for all $\vec{\val}\in\valprofileSpace$.} that fully reveals the click outcomes to the bidders, in a setting where the prior outcome profile distribution $\anonyDen$ is a product distribution over $\bidderNum$ identical Bernoulli distributions with mean $p\in[0,1]$.
Under i.i.d.\ Bernoulli click outcome distribution, it can be shown that when the Bernoulli mean $p$ is large,
the condition $\optminsecmax_0 > \anonyDen_1(1-\optminsecmax_1)/\anonyDen_0$ is more likely to hold.

\subsection{Our Techniques and Proof Overview}
\label{sec:techniques}
 In this section, we first discuss the technical challenges.
We then overview the proofs for our main results and discuss the associated techniques. 
Our analysis unfolds in four main pieces.
At a high-level, in Steps 1 to 3, we first study the optimal calibrated signaling problem  \ref{eq:opt} without the \UIR\ condition.
To this end, we develop a methodology that involves optimal transport to solve \ref{eq:opt}. 
Lastly, in Step 4, we show that we can modify the optimal solution to the problem \ref{eq:opt} to obtain the algorithmic solution for computing the (approximately) optimal calibrated signaling to \ref{eq:opt IR} with \UIR\ condition.

\xhdr{Technical challenges and our high-level solutions}
We begin by observing that a signaling $\signalProb$ is calibrated if and only if it satisfies the following condition:
\begin{align}
    \label{eq:BC private}
    \tag{\textsc{Cali}}
    \expectedVal 
    = \frac{\sum\nolimits_{\vec{\val}: \val_i = 1} \valprofileDen(\vec{\val})\int\nolimits_{x_{-i}} \signalProb((\expectedVal, x_{-i})\mid \vec{\val}) ~\mathrm{d}x_{-i}}{\sum\nolimits_{\vec{\val}}\valprofileDen(\vec{\val})\int\nolimits_{x_{-i}} \signalProb((\expectedVal, x_{-i})\mid \vec{\val}) ~\mathrm{d}x_{-i} }~,
    \quad
    i \in [\bidderNum]~, ~ \expectedVal \in [0, 1]~.
\end{align}
As we can see, the calibration constraint \ref{eq:BC private},
together with the decision variables $(\signalProb(\vec{x}\mid\vec{\val}))_{\vec{\val}\in\valprofileSpace, \vec{x}\in[0, 1]^\bidderNum}$, makes the seller's problem \ref{eq:opt} infinite-dimensional and analytically intractable.
The constraint enforces that the signal sent to each bidder must match the conditional expected \ctr\ of the ad impression.\footnote{For readers familiar with the information design literature, this condition corresponds to the Bayes-consistency condition: each bidder's received signal must equal the expected value of their posterior belief induced by that signal.}
With interpretation, the seller's problem can be viewed as a private information design problem \citep{AB-19}.

Previous works have also studied private information design  in various economic and auction settings (see, e.g., \citealp{BP-07,CCDE-15,BBM-17,badanidiyuru2018targeting,BDPZ-22,BHMSW-22,CLW-24,CLXZ-24}).
However, most of these studies assume either independent signaling (where each bidder's signal is conditionally independent of others') or a common-state model -- the underlying state is shared among all receivers and drawn from a common prior.
These modeling assumptions often lead to tractable formulations. 
In contrast, in our setting, each individual bidder has an independently realized outcome, and the seller's payoff-relevant state is the full joint outcome profile $\vec{\val}\in\{0, 1\}^\bidderNum$, which leads to an exponentially large state space. 
As a result, techniques used in these works do not apply to our setting.

To overcome these challenges, we develop a new methodology based on optimal transport. Specifically, we reformulate the seller’s calibrated signaling problem as a two-stage optimization program:
\begin{itemize}
    \item The inner stage solves an optimal correlation problem -- determining how a set of bid marginals are coupled across bidders to generate optimal second-highest bids.
    \item The outer stage optimizes a set of bid marginals subject to some feasibility constraint induced by condition \ref{eq:BC private}. 
\end{itemize}
This approach decouples the structural complexity of the feasible signaling space induced from the calibration constraint.
In doing so, our method contributes to a growing line of work applying optimal transportation theory to  information design  \citep{ABS-24,KCW-24,DK-24,JKMRS-25}.
Among these works, \cite{ABS-24} is particularly relevant -- 
they consider a multi-receiver private persuasion problem with a shared common state and show that,   
when {\em conditioned on the state}, the feasibility structure of posterior beliefs simplifies. Specifically, feasibility constraints apply only to the marginal distributions of individual receivers across states, with no joint constraints within a state.

Our setting is significantly more complex: there is no shared common state among bidders, and the state space spans all outcome profiles.
Despite this, we establish an analogous simplification. We show that the feasibility of calibrated signaling reduces to the feasibility of a set of marginal bid distributions, which may be arbitrarily correlated but do not require direct specification over the exponentially large outcome space.
Crucially, we exploit the structure of the second-price auction, and we define our marginals over the $\ell_1$-norm of the outcome profiles.
This reformulation significantly reduces the size of the marginals, from an exponentially large number when defined over the outcome profiles to a linear size,
and thus, greatly simplifies the seller's problem in \ref{eq:opt} and also \ref{eq:opt IR}.
Along the way, we also develop an efficient algorithmic procedure (see \Cref{alg: Optimal Correlation}) that optimally correlates a given set of bid marginals to induce an optimal distribution over second-highest bids, which may be independent of interest.

We now summarize our analysis steps as follows. 

\xhdr{Step 1 (Optimal calibrated signaling as optimal transportation, see \Cref{subsec:transportation formulation})}
Our first step aims to reduce the program \ref{eq:opt} with the complex constraints \ref{eq:BC private} to a tractable program that involves much fewer variables and constraints. 
To achieve this, we utilize the symmetry property of the outcome profile distribution $\valprofileDen$, 
and we show that any calibrated signaling can be ``symmetrized'' to make each bidder's marginal bid distribution (defined in Eqn.~\eqref{defn:bidder marg}) invariant to bidder identities 
(see \Cref{prop:pas without loss}).
In doing so, we can without loss consider a calibrated signaling that induces the same bid profile distribution for any outcome profile that has the same $\ell_1$-norm.
In particular, for any outcome profile with $\|\vec{\val}\|_1 = k$ and for any $i\in[\bidderNum]$, the bidder $i$'s marginal bid distribution induced from this calibrated signaling is either $\anonySymmMarginal_{k, 1}\in\Delta([0, 1])$ or $\anonySymmMarginal_{k, 0}\in\Delta([0, 1])$, depending on the bidder's true outcome $\val_i$ in this outcome profile $\vec{\val}$.
Another benefit from this symmetrization is that we can decouple the calibration constraint \ref{eq:BC private} over the whole outcome profile space, and consider imposing this constraint only on the bidder marginals.\footnote{Symmetrization technique has also appeared in previous works, see, e.g., \cite{CDW-11,CDW-12,BBM-17}.
\cite{BBM-17} demonstrate that it is sufficient to consider only the symmetric equilibrium through symmetrization in a first-price auction. 
In the work of \cite{CDW-11,CDW-12}, outcome profiles that can be transformed into each other via permutation are treated as equivalence classes. When formulating the linear program, each equivalence class contributes only one variable, which significantly reduces the number of variables in the LP formulation. 
In our approach, we exploit symmetry to achieve that each $\bidderNum$-dimensional conditional distribution $\signalProb(\cdot\mid\vec{\val})$ for $\|\vec{\val}\|_1 = k$ can be simplified as two marginal bid distributions: $\anonySymmMarginal_{k, 1}, \anonySymmMarginal_{k, 0}$.
}

This allows us to reformulate the seller's problem \ref{eq:opt} as a two-stage optimization problem with optimal transportation, which is summarized in \Cref{prop:new formulation opt}.
In the first stage (see \ref{eq:opt correlation}), we fix a pair of feasible marginals (here the feasibility means that marginals need to satisfy the calibration constraint \ref{eq:BC private}, see \Cref{lem:feasible marginals}), the seller optimizes the feasible correlation of these marginals (here the feasibility of correlation is the standard feasibility constraint in transportation problem, see \Cref{def: feasible transportation plan}), which is formulated as an optimal transportation problem. 
In the second stage (see \ref{eq:opt via secmax marginal}), with the optimal correlation characterized in the first stage, the seller then optimizes a set of feasible marginals.

\xhdr{Step 2 (Solving the optimal correlation plan, see \Cref{subsec:opt correlation})}
In this step, given a set of feasible marginals,  we focus on solving the seller's optimal correlation problem \ref{eq:opt correlation}.
Given a pair of marginals $\anonySymmMarginal_{k, 1}, \anonySymmMarginal_{k, 0}$, a core concept in our optimal correlation characterization is the {\em minimum second-highest bid $\minsecmax_k$} -- a threshold  such that any outcomes in $\supp(\anonySymmMarginal_{k, 1})\cup\supp(\anonySymmMarginal_{k, 0})$ no smaller than $\minsecmax_k$ are ``eligible'' to appear as the second-highest bids.
The high-level idea of optimal correlation is that, whenever there exists a high value $x$ in the set $\supp(\anonySymmMarginal_{k, 1})\cup\supp(\anonySymmMarginal_{k, 0})$, we try to pair two bidders together to share a same highest bid that exactly equals to $x$, so that the seller can collect the revenue at $x$. 
To be able to pair two bidders to share a same highest bid $x$, we need either $k \ge 2$ if such $x\in\supp(\anonySymmMarginal_{k, 1})$ or $k\le\bidderNum - 2$  if such $x\in\supp(\anonySymmMarginal_{k, 0})$.
This distinction highlights a fundamental dichotomy in the characterization of optimal correlations between the general case of $k\neq 1, \bidderNum-1$ and the special case of $k= 1, \bidderNum-1$ where such pairing may not always be feasible.
We further illustrate this dichotomy in \Cref{ex:opt cor general k} and \Cref{ex:opt cor special k}.

For the general case of $k\neq 1, \bidderNum-1$, there exists an optimal correlation (see \Cref{prop:opt cor general k}) such that for every high value (larger than $\minsecmax_k$) $x\in \supp(\anonySymmMarginal_{k, 1})\cup\supp(\anonySymmMarginal_{k, 0})$, we can find a pair of bidders to generate a same bid at $x$ as the highest bid.
Along the way, we also provide an efficient algorithm to achieve such optimal correlation (see \Cref{alg: Optimal Correlation}).

For the special case of $k \in \{1, \bidderNum-1\}$, the analysis of optimal correlation is much more involved. 
Taking $k=1$ as an example, the key difficulty here is that, for every high value $x\in \supp(\anonySymmMarginal_{k, 1})$, it is not always feasible to pair two bidders to ensure that they can share the same bid $x$ as the highest bid. 
Albeit with this difficulty, we show that there exists a linear program (LP) (see program \ref{eq:opt cor k = 1}) that can solve the optimal correlation (see \Cref{prop:opt cor k=1}).


\xhdr{Step 3 (Solving the optimal feasible marginals, see \Cref{subsec:opt marginal})}
In this step, we characterize the optimal marginals among all feasible marginals (see \Cref{prop:opt marginals}).
With the optimal correlation characterized in previous step, we show that we can reformulate the optimal marginal problem in \ref{eq:opt via secmax marginal} as a more succinct form (see \ref{eq:opt marginal new}) without concerning the correlation among the marginals.
By analyzing this program \ref{eq:opt marginal new}, we can fully characterize the bid marginals $(\optMarginal_{k, 1}, \optMarginal_{k, 0})_{k\in\setwZero}$ under the optimal calibrated signaling $\optsignalProb$ (see the characterization in \Cref{prop:opt marginals}).


\xhdr{Step 4 (Adjusting $\optsignalProb$ to satisfy the IR condition, see \Cref{sec:IR})} 
In this step, we demonstrate how to appropriately adjust the calibrated signaling $\optsignalProb$ to satisfy the IR condition. 
In particular, we show that by making some adjustments on the optimal bid marginals $(\optMarginal_{k, 1}, \optMarginal_{k, 0})_{k\in\setwZero}$ of the optimal calibrated signaling $\optsignalProb$, we can construct the bid marginals $(\anonySymmMarginal_{k, 1, \UIR}, \anonySymmMarginal_{k, 0, \UIR})_{k\in\setwZero}$ such that by correlating them optimally will yield a calibrated signaling scheme $\signalProb_{\UIR}$ that is IR.
Depending on a condition on $\optminsecmax_0$ that we characterize in $\optsignalProb$, this constructed $\signalProb_{\UIR}$ is either optimal or approximately optimal. 
We also show that the whole procedure is efficient and provide an algorithm for this procedure (see \Cref{alg: FPTAS algorithm}).

\subsection{Additional Related Work}
\xhdr{Signaling in auctions}
Our work relates to the growing literature on signaling (a.k.a. information design) in auction settings.
In particular, calibrated signaling can be viewed as a form of private information structure, where each bidder receives a private signal that may depend on the entire profile of bidders' values.
Two closely related works are by \cite{BHMSW-22,badanidiyuru2018targeting}, who both explore optimal signaling structure in second-price auction -- the same auction format we consider.
However, \cite{BHMSW-22} exclusively focus on {\em independent and symmetric} signaling.
Our work can be viewed as a generalization to theirs as we allow  more general correlated signaling among bidders.
\cite{badanidiyuru2018targeting} consider general signaling with assuming that the signaling scheme is fully disclosed to bidders. 
As a result, bidders engage in a strategic game based on their private signals.
They assume that bidders truthfully report their expected values when doing so is a dominant strategy. 
Under this bidder behavior, they show that there exists a signaling scheme that can achieve nearly full surplus extraction even in the worst-case Nash equilibrium.
However, in realistic digital auction settings,
such strategic behavior may not be practical as it is often infeasible that all autobidders are fully informed about the click outcome distribution of the ad impression and also informed by the signaling details. 
Other two related works include \cite{BDPZ-22,CLXZ-24} who also study the calibrated signaling in a generalized second-price auction.
In their setting, the signaling does not affect the bidders' bidding behavior, while in our setting, the bidders directly bids equal to the received signals. 
\cite{BBM-17} also consider the optimal design of general signaling (allowing correlation among signals across bidders) in first-price auction.
In particular, they characterize the lowest winning-bid distribution that can arise across all information structures and equilibria.
In contrast, our focus is on second-price auctions, where we characterize revenue-maximizing calibrated signaling schemes.
Similar to ours, \cite{BS-12,EFGPT-14} also focus on second-price auction but they study public singaling where all bidders receive the same signal. 

In addition to studies that focus on a fixed selling mechanism, there are also works that study the joint optimal design of both the auction mechanism and the signaling scheme.
(see, e.g., \citealp{BP-07,CLW-24}).
Notably, both of these works consider independent signaling while we allow arbitrarily correlated signaling. 

\xhdr{Multi-agent information design and feasible joint posterior belief}
Viewing the seller's problem as one of designing a signaling scheme for multiple receivers,
our work also contributes to the recent literature on multi-agent information design, which is a natural generalization of the classic one-receiver persuasion problem studied in \cite{KG-11}.
Specifically, the results of our work on private signaling schemes connect to the private persuasion problem (see, e.g., \citealp{AB-19,MPT-20}).
These works typically assume that there exists a common state for all receivers and the designer can either publicly or privately signal this state to each receiver. 
In contrast, our setting is more complex, as each bidder (or receiver) has a respective private realized value. As a result, the seller's payoff-relevant state is a value profile that includes the realized values of all bidders. 
This makes the seller's state space high-dimensional and potentially exponentially large.
Moreover, the seller's objective is non-linear. 
These complexities make previous linear programming (LP)-based approaches, which are commonly used in these settings, inapplicable to our problem.

More recently, there has been a growing line of research dedicated to studying the feasible belief distributions in a multi-receiver private persuasion setting (see, e.g., \citealp{M-20,BFK-22,AB-22,ABST-21,ABS-22,ABS-24}).
Specifically, \cite{ABST-21} has analyzed a binary-state setting and has established an elegant connection between the feasibility condition and the no-trade theorem under common prior assumption. \cite{M-20} subsequently extends the connection beyond the binary-state setting.
However, due to the complex structure of the feasible joint beliefs, explicit solutions to these problems have been available only for specific settings, e.g., quadratic objectives \citep{ABST-21}, binary actions \citep{AB-19}.
Our work complements  this line of literature as we fully characterize the optimal joint posterior belief in the seller's private signaling problem when the state space is high-dimensional.

\section{Optimal Calibrated Signaling as Optimal Transportation}
\label{subsec:transportation formulation}

In this section, we describe how to use a ``symmetrization'' technique to symmetrize the calibrated signaling, which helps significantly reduce the size of the seller's problem.
Then we describe how to reformulate the seller's problem as a two-stage optimization problem involving optimal transportation.

To facilitate the analysis, we begin by introducing some useful notation.
Given any calibrated signaling $\signalProb(\cdot \mid \vec{\val}) \in\Delta(\expValProfileSpace)$, let $\signalProb_i(\cdot \mid \vec{\val}) \in \Delta([0, 1])$ denote the expected value marginal distribution of bidder $i$, defined as
\begin{align}
    \label{defn:bidder marg}
    \signalProb_i(\expectedVal_i \mid \vec{\val}) = \int_{x_{-i}} \signalProb\left((\expectedVal_i, \expectedVal_{-i}\right) \mid \vec{\val}) ~\mathrm{d}\expectedVal_{-i}~.
\end{align}

\subsection{Symmetrizing the Signaling}
A key idea behind our reformulation of the seller's problem is that if we had a feasible calibrated signaling that was asymmetric for each bidder, namely, $\signalProb_i( \cdot \mid \vec{\val})$ varies for different $i\in[\bidderNum]$, it is possible to ``symmetrize'' the calibrated signaling by constructing a new calibrated signaling $\symmetricSignalProb(\cdot \mid \vec{\val})$ such that $\symmetricSignalProb_i(\cdot \mid \vec{\val})$ and $\symmetricSignalProb_j(\cdot \mid \vec{\val})$ will be symmetric as long as bidder $i$ and $j$ have $\val_i = \val_j$. 
This new structure is obtained by averaging $\{\signalProb( \cdot \mid \vec{\val})\}$ over all outcome profiles by permuting the bidders' identities.
To describe this idea formally, we define the permutation as follows.
Given any subset $Q \subseteq [\bidderNum]$, we define a permutation $\permutation: Q \to Q$ as a bijection.  
Let $\permutationSet(Q)$ be the set of all such permutations.
Given any vector $\vec{z} \in \mathbb{R}^n$ and $\permutation \in \permutationSet(Q)$, we define the permuted vector  $\vec{z}_{\permutation}$, where for all $i \in [\bidderNum]$
\begin{equation*}
    (\vec{z}_{\permutation})_i = \left\{ 
    \begin{aligned}
        &z_{\permutation(i)}~, \quad \quad \quad i \in Q~,\\
        &z_i~, \quad \qquad \quad i \notin Q~.
    \end{aligned}
    \right.
\end{equation*}

\begin{definition}[Permutation-anonymous symmetric calibrated signaling] 
\label{def: k anonymous symmetric marginals}
We say a calibrated signaling $\symmetricSignalProb$ is  {\em permutation-anonymous symmetric} (henceforth {\pas}) if it satisfies that 
for any $k \in \setwZero$, these two properties below hold
\begin{itemize}
    \item {\em \textbf{[Permutation-anonymity]}:} 
    for any $\vec{\val} \in \valprofileSpace_k$, $\permutation \in \permutationSet([\bidderNum])$, we have $\symmetricSignalProb(\vec{\expectedVal}\mid\vec{\val}) = \symmetricSignalProb(\vec{\expectedVal}_{\permutation}\mid\vec{\val}_{\permutation})$; 
    \item 
    {\em \textbf{[Symmetry]}:}
    for any $\vec{\val} \in \valprofileSpace_k$, there exist two distributions $f_{k, 1}\in\Delta([0, 1]), f_{k, 0}\in\Delta([0, 1])$ such that for $\symmetricSignalProb$, its marginal distribution of each bidder $i\in [\bidderNum]$ satisfies $\symmetricSignalProb_i(\cdot \mid \vec{\val}) = f_{k, \val_i}$.
\end{itemize}
\end{definition}

{\Pas} calibrated signaling $\symmetricSignalProb$ groups outcome profiles according to their $\ell_1$-norm: for any outcome profile $\vec{\val}$ that has the $\ell_1$-norm of $k$ (namely, $\|\vec{\val}\|_1 = k$), the corresponding bidder marginal distribution $\symmetricSignalProb_i$ for any $i\in[\bidderNum]$ induced from this calibrated signaling $\symmetricSignalProb$  is either $\anonySymmMarginal_{k, 1}$ or $\anonySymmMarginal_{k, 0}$, depending on the value of $\val_i$. 
Below we show that given any feasible calibrated signaling $\signalProb$, there exists a straightforward way to construct a corresponding {\pas} calibrated signaling $\symmetricSignalProb$ that is feasible, namely, it satisfies the Bayes consistency condition.
Moreover, the calibrated signaling scheme $\symmetricSignalProb$ yields a revenue equal to that of $\signalProb$.
\begin{lemma}
    \label{prop:pas without loss}
    For any calibrated signaling $\signalProb$, construct another calibrated signaling $\symmetricSignalProb$ as follows:
    \begin{equation}
    \label{eq:pas transformation}
        \symmetricSignalProb(\vec{\expectedVal} \mid \vec{\val}) 
        = \frac{1}{|\permutationSet([\bidderNum])|} \sum\nolimits_{\permutation \in \permutationSet([\bidderNum])} \signalProb(\vec{\expectedVal}_{\permutation}\mid \vec{\val}_{\permutation})~, \quad 
        \vec{\val}\in\valprofileSpace~, ~
        \vec{\expectedVal} \in \expValProfileSpace~.
    \end{equation}
    The calibrated signaling $\symmetricSignalProb$ is {\pas}, feasible, and satisfies that $\Rev{\signalProb} = \Rev{\symmetricSignalProb}$.
\end{lemma}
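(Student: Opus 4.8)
The plan is to verify the three claimed properties of $\symmetricSignalProb$ in sequence: that it is \pas, that it is feasible (i.e.\ a valid calibrated signaling), and that it preserves revenue. Each follows from a symmetry/averaging argument, so the work is mostly bookkeeping over permutations, and the only subtle point is checking that the symmetry property (the existence of the common marginals $\anonySymmMarginal_{k,1}, \anonySymmMarginal_{k,0}$) actually holds.

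\textbf{Permutation-anonymity.} First I would fix $k \in \setwZero$, $\vec{\val}\in\valprofileSpace_k$, and $\tau \in \permutationSet([\bidderNum])$, and directly compute $\symmetricSignalProb(\vec{\expectedVal}_\tau \mid \vec{\val}_\tau)$ from the definition \eqref{eq:pas transformation}. Substituting and using the fact that $(\vec{z}_\tau)_\sigma = \vec{z}_{\sigma\circ\tau}$ (composition of permutations acts the right way on coordinates), the sum over $\sigma\in\permutationSet([\bidderNum])$ becomes, after the re-indexing $\sigma' = \sigma\circ\tau$, exactly the sum defining $\symmetricSignalProb(\vec{\expectedVal}\mid\vec{\val})$ — since $\sigma\mapsto\sigma\circ\tau$ is a bijection of $\permutationSet([\bidderNum])$. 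This gives permutation-anonymity.

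\textbf{Symmetry.} Here I would compute the bidder-$i$ marginal $\symmetricSignalProb_i(\cdot\mid\vec{\val})$ by integrating \eqref{eq:pas transformation} over $\expectedVal_{-i}$, swapping the (finite) sum and the integral, and recognizing each integrated term as $\signalProb_{\sigma^{-1}(i)}(\cdot \mid \vec{\val}_\sigma)$ (a change of variables in the $\expectedVal_{-i}$ integral matches the permuted coordinates). So $\symmetricSignalProb_i(\cdot\mid\vec{\val})$ is an average, over $\sigma$, of marginals of the original signaling evaluated at permuted outcome profiles. The key observation is that $\vec{\val}_\sigma$ ranges over all outcome profiles in $\valprofileSpace_k$ (with multiplicity), and whether bidder $\sigma^{-1}(i)$ has outcome $1$ or $0$ in $\vec{\val}_\sigma$ is determined solely by $\val_i$: indeed $(\vec{\val}_\sigma)_{\sigma^{-1}(i)} = \val_i$. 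A counting argument then shows that the multiset of pairs $(\sigma^{-1}(i), \vec{\val}_\sigma)$ appearing in the sum depends only on $\val_i$ and not on $i$ itself — formally, if $\val_i = \val_j$ there is a bijection on $\permutationSet([\bidderNum])$ matching the two sums term by term (compose with a transposition swapping $i$ and $j$ when convenient). Hence we may define $\anonySymmMarginal_{k,1}$ and $\anonySymmMarginal_{k,0}$ as the resulting common averages, and $\symmetricSignalProb_i(\cdot\mid\vec{\val}) = \anonySymmMarginal_{k,\val_i}$.

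\textbf{Feasibility and revenue.} For feasibility I would check the calibration condition \ref{eq:BC private} for $\symmetricSignalProb$: plug the averaged marginals into the ratio defining the posterior mean, use linearity to pull the average over $\sigma$ outside both numerator and denominator, and reduce to the fact that each original term satisfies \ref{eq:BC private} — though a cleaner route is to note that calibration of $\signalProb$ means $\expect{\val_i\mid\signal} = \signal$ for the joint law $(\vec\val,\vec x)$ with $\vec x\sim\signalProb(\cdot\mid\vec\val)$, and that the law induced by $\symmetricSignalProb$ is the mixture over $\sigma$ of the law of $(\vec\val_\sigma, \vec x_\sigma)$; since applying a permutation to both the outcome profile and the bid profile preserves the conditional expectation of each coordinate given that coordinate's bid (the prior $\valprofileDen$ is symmetric), each mixture component is calibrated, hence so is the mixture. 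For revenue, $\secmax$ is permutation-invariant, so $\secmax(\vec{\expectedVal}_\sigma) = \secmax(\vec{\expectedVal})$; combined with symmetry of $\valprofileDen$ (so that $\vec\val\mapsto\vec\val_\sigma$ preserves the prior) and Fubini to exchange the expectation with the average over $\sigma$, one gets $\Rev{\symmetricSignalProb} = \frac{1}{|\permutationSet([\bidderNum])|}\sum_\sigma \Rev{\signalProb} = \Rev{\signalProb}$.

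\textbf{Main obstacle.} I expect the genuinely delicate step to be the symmetry claim — pinning down precisely why the averaged marginal depends on $i$ only through $\val_i$. It requires being careful about which permutation index controls the "role" of bidder $i$ (it is $\sigma^{-1}(i)$, not $\sigma(i)$) and setting up the right re-indexing bijection on $\permutationSet([\bidderNum])$; everything else is a routine application of linearity, Fubini, and invariance of $\secmax$ and $\valprofileDen$ under relabeling.
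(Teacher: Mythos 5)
Your proposal is correct and follows essentially the same route as the paper: permutation-anonymity via re-indexing the sum over $\permutationSet([\bidderNum])$, symmetry by observing that $(\vec{\val}_{\permutation})_{\permutation^{-1}(i)} = \val_i$ so the averaged marginal depends on $i$ only through $\val_i$, feasibility by pushing the average through the calibration identity, and revenue preservation from permutation-invariance of $\secmax$ together with symmetry of $\valprofileDen$. Your alternative "mixture of calibrated components" view of the feasibility step is a slightly cleaner packaging of the same linearity computation the paper carries out explicitly.
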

With \Cref{prop:pas without loss}, it is without loss to consider the {\pas} calibrated signalings for the seller's revenue maximizing problem. 
Below we provide an example demonstrating \Cref{prop:pas without loss}:
\begin{example}
There are $2$ bidders with $\lambda = 0.5$. 
Consider a calibrated signaling $\signalProb$ where $\signalProb(\cdot \mid (0, 0)) = \delta_{(0, 0)}$; $\signalProb(\cdot \mid (1, 0)) = \frac{1}{2}\delta_{(\frac{3}{5}, 0)} + \frac{1}{2} \delta_{(1, \frac{5}{7})}$; $\signalProb(\cdot \mid (0, 1)) = \frac{1}{2}\delta_{(\frac{3}{5}, 1)} + \frac{1}{2} \delta_{(0, \frac{5}{7})}$; $\signalProb(\cdot \mid (1, 1)) = \frac{1}{4}\delta_{(\frac{3}{5}, 1)} + \frac{3}{4}\delta_{(1, \frac{5}{7})}$. 
We can construct another calibrated signaling $\symmetricSignalProb$ based on \eqref{eq:pas transformation}:
\begin{alignat*}{3}
    \symmetricSignalProb(\cdot \mid (0, 0)) &= \delta_{(0, 0)}~, 
    && \symmetricSignalProb(\cdot \mid (1, 0)) = \frac{1}{4}\left(\delta_{(\frac{3}{5}, 0)} + \delta_{(1, \frac{5}{7})} + \delta_{(1, \frac{3}{5})} + \delta_{(\frac{5}{7}, 0)}\right)~;\\
    \symmetricSignalProb(\cdot \mid (0, 1)) &= \frac{1}{4}\left(\delta_{(0, \frac{3}{5})} + \delta_{(\frac{5}{7}, 1)} + \delta_{(\frac{3}{5}, 1)} + \delta_{(0, \frac{5}{7})}\right),~
    && \symmetricSignalProb(\cdot \mid (1,1)) = \frac{1}{8}\delta_{(\frac{3}{5}, 1)} + \frac{1}{8}\delta_{(1, \frac{3}{5})} + \frac{3}{8}\delta_{(1, \frac{5}{7})} + \frac{3}{8}\delta_{(\frac{5}{7}, 1)}~.
\end{alignat*}
Obviously, $\symmetricSignalProb$ satisfies all properties in \Cref{def: k anonymous symmetric marginals}, and its corresponding marginals are as follows: 
\begin{alignat*}{3}
    \anonySymmMarginal_{0, 0}(\cdot) 
    & = \delta_{(0)}~, \quad
    && \anonySymmMarginal_{1, 1}(\cdot) 
    = \frac{1}{4}\delta_{(\frac{3}{5})} + \frac{1}{4}\delta_{(\frac{5}{7})} + \frac{1}{2}\delta_{(1)}~;\\
    \anonySymmMarginal_{1, 0}(\cdot) 
    & = \frac{1}{2}\delta_{(0)} + \frac{1}{4}\delta_{(\frac{3}{5})} + \frac{1}{4}\delta_{(\frac{5}{7})}~, \quad
    &&\anonySymmMarginal_{2, 1}(\cdot)
    = \frac{1}{8}\delta_{(\frac{3}{5})} + \frac{3}{8} \delta_{(\frac{5}{7})} + \frac{1}{2} \delta_{(1)}~.
\end{alignat*}
It is easy to see that $\Rev{\symmetricSignalProb} = \Rev{\signalProb}$.
\end{example}

{\Pas} calibrated signaling $\symmetricSignalProb$ has some nice properties for analyzing the seller's optimal calibrated signaling. 
In particular, we have the following observations.
First, we know that once we have the conditional distribution $\symmetricSignalProb(\cdot\mid \vec{\val})$, then the other conditional distributions $\symmetricSignalProb(\cdot\mid \vec{\val}')$ for any $\vec{\val}'$ with $\|\vec{\val}'\|_1 = \|\vec{\val}\|_1$ can be constructed directly (using the construction \eqref{eq:pas transformation} here). 
Second,  
for any two vectors $\vec{\val}$ and $\vec{\val}'$ with the same $\ell_1$-norm, their corresponding conditional distributions $\symmetricSignalProb(\cdot\mid\vec{\val})$ and $\symmetricSignalProb(\cdot\mid\vec{\val}')$ induce identical second-highest bid marginals.
\begin{corollary}[Symmetric second-highest bid marginal]
\label{cor:equal sexmax marginal}
For any {\pas} calibrated signaling $\symmetricSignalProb$ and any $k\in\setwZero$, there exists a distribution $\secmaxProb_k\in\Delta([0, 1])$ such that
$\secmaxProb(x\mid \vec{\val}) \equiv \secmaxProb_k(x)$ for all $\vec{\val}\in\valprofileSpace_k$.
\end{corollary}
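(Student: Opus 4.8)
The plan is to exploit the permutation-anonymity property in \Cref{def: k anonymous symmetric marginals} together with the elementary fact that $\secmax(\cdot)$ is a symmetric function of its coordinates, so that reordering the entries of a bid vector leaves its second-highest entry unchanged. Fix $k \in \setwZero$ and pick an arbitrary reference outcome profile $\vec{\val}^{(0)} \in \valprofileSpace_k$. I will define $\secmaxProb_k \in \Delta([0,1])$ to be the law of $\secmax(\vec{\expectedVal})$ when $\vec{\expectedVal} \sim \symmetricSignalProb(\cdot\mid\vec{\val}^{(0)})$, i.e., the pushforward of the conditional signal distribution $\symmetricSignalProb(\cdot\mid\vec{\val}^{(0)})$ under the map $\secmax\colon [0,1]^\bidderNum \to [0,1]$, which is precisely $\secmaxProb(\cdot\mid\vec{\val}^{(0)})$. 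It then remains to show that for every other $\vec{\val} \in \valprofileSpace_k$ this pushforward is the same.

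The argument proceeds in three short steps. \emph{Step 1:} Given any $\vec{\val} \in \valprofileSpace_k$, since $\vec{\val}$ and $\vec{\val}^{(0)}$ both have exactly $k$ coordinates equal to $1$, there is a permutation $\permutation \in \permutationSet([\bidderNum])$ that reorders coordinates so that $\vec{\val}^{(0)}_{\permutation} = \vec{\val}$ (this uses only $\|\vec{\val}\|_1 = \|\vec{\val}^{(0)}\|_1 = k$). \emph{Step 2:} Apply the permutation-anonymity property, which gives $\symmetricSignalProb(\vec{\expectedVal}\mid\vec{\val}^{(0)}) = \symmetricSignalProb(\vec{\expectedVal}_{\permutation}\mid\vec{\val}^{(0)}_{\permutation}) = \symmetricSignalProb(\vec{\expectedVal}_{\permutation}\mid\vec{\val})$ for all $\vec{\expectedVal}\in[0,1]^\bidderNum$; equivalently, if $\vec{\expectedVal}\sim\symmetricSignalProb(\cdot\mid\vec{\val}^{(0)})$ then $\vec{\expectedVal}_{\permutation}\sim\symmetricSignalProb(\cdot\mid\vec{\val})$, since the coordinate permutation $\vec{\expectedVal}\mapsto\vec{\expectedVal}_{\permutation}$ is a measure-preserving bijection of $[0,1]^\bidderNum$. \emph{Step 3:} Observe that $\secmax(\vec{\expectedVal}_{\permutation}) = \secmax(\vec{\expectedVal})$ for all $\vec{\expectedVal}$. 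Combining Steps 2 and 3: for $\vec{\expectedVal}\sim\symmetricSignalProb(\cdot\mid\vec{\val})$ (which we may realize as $\vec{\expectedVal} = \vec{\expectedVal}'_{\permutation}$ with $\vec{\expectedVal}'\sim\symmetricSignalProb(\cdot\mid\vec{\val}^{(0)})$), we have $\secmax(\vec{\expectedVal}) = \secmax(\vec{\expectedVal}') \sim \secmaxProb_k$. Hence $\secmaxProb(\cdot\mid\vec{\val}) \equiv \secmaxProb_k$ for every $\vec{\val}\in\valprofileSpace_k$, as claimed.

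There is no genuinely difficult step here — the whole statement is a one-line change of variables. The only place warranting a little care is the bookkeeping with the permutation action: verifying that $\permutation$ can indeed be chosen so that $\vec{\val}^{(0)}_{\permutation} = \vec{\val}$ (which is where the common $\ell_1$-norm hypothesis enters), and that the induced coordinate permutation on $[0,1]^\bidderNum$ is a measurable isomorphism preserving the relevant measure, so that the pushforwards transport cleanly. It is also worth noting that this corollary uses only the permutation-anonymity half of \Cref{def: k anonymous symmetric marginals}, not the symmetry half; the symmetry half will instead be what later allows the full conditional $\symmetricSignalProb(\cdot\mid\vec{\val})$ to be summarized by the pair of bidder marginals $(\anonySymmMarginal_{k,1}, \anonySymmMarginal_{k,0})$.
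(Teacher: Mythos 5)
Your argument is correct and is essentially the paper's own proof: the paper also combines permutation-anonymity with the permutation-invariance of $\secmax$, carrying out the same change of variables directly inside the integral $\secmaxProb(x\mid\vec{\val}) = \int_{\{\vec{\expectedVal}:\,\secmax(\vec{\expectedVal})=x\}}\symmetricSignalProb(\vec{\expectedVal}\mid\vec{\val})\,\mathrm{d}\vec{\expectedVal}$ rather than phrasing it via pushforwards. Your observation that only the permutation-anonymity half of the definition is used is also accurate.
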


Since every outcome profile in $\valprofileSpace_k$ has the same prior probability,
\Cref{cor:equal sexmax marginal} implies that the revenue contributed from $\symmetricSignalProb(\cdot \mid \vec{\val})$ is the same for every $\vec{\val}\in\valprofileSpace_k$, namely, we have
\begin{align*}
    \valprofileDen(\vec{\val})
    \int\nolimits_{\vec{\expectedVal}\in\expValProfileSpace} \secmax(\vec{\expectedVal}) \cdot \symmetricSignalProb(\vec{\expectedVal} \mid \vec{\val}) ~\mathrm{d}\vec{\expectedVal}
    & =  
    \valprofileDen(\vec{\val}) \cdot \int_x x\cdot \secmaxProb_k(x) ~ \mathrm{d}x\\
    & =
    \valprofileDen(\vec{\val}')
    \int\nolimits_{\vec{\expectedVal}\in\expValProfileSpace} \secmax(\vec{\expectedVal}) \cdot \symmetricSignalProb(\vec{\expectedVal} \mid \vec{\val}') ~\mathrm{d}\vec{\expectedVal}, 
    \quad \forall \vec{\val}, \vec{\val}'\in\valprofileSpace_k~.
\end{align*}
From the above two observations, we conclude that to understand the optimal {\pas} calibrated signaling $\symmetricSignalProb^*(\cdot\mid\vec{\val})$ for $\val\in\valprofileSpace_k$ and $k\in \setwZero$, it suffices to analyze the structure of $\symmetricSignalProb^*(\cdot\mid\vec{\val})$ for one particular $\vec{\val}\in \valprofileSpace_k$ for each $k\in\setwZero$.

Our third observation is that: by the symmetry property of $\symmetricSignalProb$, although the conditional distribution $\symmetricSignalProb(\cdot\mid\vec{\val})$ may vary for different $\vec{\val}\in\valprofileSpace_k$, its bidder-specific marginal always equals to either $\anonySymmMarginal_{k, 0}$ or $\anonySymmMarginal_{k, 1}$. 
In other words, the conditional distribution $\symmetricSignalProb(\cdot\mid\vec{\val})\in\Delta(\expValProfileSpace)$ can be constructed by correlating its corresponding marginals $\anonySymmMarginal_{k, 1}, \anonySymmMarginal_{k, 0}$.
Consequently, determining $\symmetricSignalProb(\cdot\mid\vec{\val})\in\Delta(\expValProfileSpace)$ for $\vec{\val}$ with 1-norm of $k$ can be viewed as a transportation problem involving the marginals $\anonySymmMarginal_{k, 1}, \anonySymmMarginal_{k, 0}$, where the marginals have to satisfy the feasibility requirement we are about to define.

\subsection{The Two-Stage Reformulation of the Seller's Problem}
We now describe how to reformulate the seller's problem in \ref{eq:opt} as a two-stage optimization problem that involves the optimal transport. 

Below, we formally define the feasible transportation plan.
\begin{definition}[Feasible transportation plan]
\label{def: feasible transportation plan}
Given any pair of marginals $\anonySymmMarginal_{k, 1}, \anonySymmMarginal_{k, 0} \in\Delta([0, 1])$ with any $k\in\setwZero$, let $\transPlan(\anonySymmMarginal_{k, 1}, \anonySymmMarginal_{k, 0})$ be the set of feasible transportation plans: it consists of all probability measures $\correlaSignalProb\in\Delta(\expValProfileSpace)$
such that it satisfies for all $x\in[0, 1]$,
\begin{align*}
    \int_{\expectedVal_{-i}} \correlaSignalProb((\expectedVal, \expectedVal_{-i})) ~\mathrm{d}\expectedVal_{-i} 
    = \anonySymmMarginal_{k, 1}(\expectedVal), ~i\in [k]~,\quad \text{and} \quad
    \int_{\expectedVal_{-i}} \correlaSignalProb((\expectedVal, \expectedVal_{-i})) ~\mathrm{d}\expectedVal_{-i} 
    = \anonySymmMarginal_{k, 0}(\expectedVal),~i\in [\bidderNum] \setminus [k]~.
\end{align*}
\end{definition}
We note that the conditions listed above ensures that the marginals of $\correlaSignalProb$ on its first $k$ dimensions align with $\anonySymmMarginal_{k, 1}$, while the marginals on the remaining dimensions align with $\anonySymmMarginal_{k, 0}$. 
The choice of which dimensions correspond to $\anonySymmMarginal_{k, 1}$ and $\anonySymmMarginal_{k, 0}$ indeed does not result in any loss of generality by our aforementioned three observations.
Fix any $k\in\setwZero$ and any pair of marginals $\anonySymmMarginal_{k, 1}, \anonySymmMarginal_{k, 0}$,
we say a second-highest bid distribution $\secmaxProb$ is feasible w.r.t.\ $\anonySymmMarginal_{k, 1}, \anonySymmMarginal_{k, 0}$ if there exists a correlation $\correlaSignalProb\in \transPlan(\anonySymmMarginal_{k, 1}, \anonySymmMarginal_{k, 0})$ such that it can induce the second-highest bid distribution $\secmaxProb(x)
=\prob[\vec{\expectedVal}\sim \correlaSignalProb]{\secmax(\vec{\expectedVal}) =x}$.
Slightly abusing the notation,  we also use $\transPlan(\anonySymmMarginal_{k, 1}, \anonySymmMarginal_{k, 0})$ to represent the set of all feasible second-highest bid distributions.

We call a collection of marginals $(\anonySymmMarginal_{k, 1}, \anonySymmMarginal_{k, 0})_{k\in\setwZero}$ feasible if there exists a feasible {\pas} calibrated signaling that can induce these marginals.
Below we show that the feasibility of {\pas} calibrated signaling is equivalent to the feasibility of its corresponding marginals
$(\anonySymmMarginal_{k, 1}, \anonySymmMarginal_{k, 0})_{k\in\setwZero}$.
\begin{lemma}[Feasible marginals]
\label{lem:feasible marginals}
A collection of marginals $(\anonySymmMarginal_{k, 1}, \anonySymmMarginal_{k, 0})_{k\in\setwZero}$ is feasible if and only if the following condition holds:
\begin{align}
    \label{eq:bc for marginals}
    x = \frac{\sum\nolimits_{k \in[\bidderNum]}\anonyDen_k\cdot k \anonySymmMarginal_{k, 1}(x)}{\sum\nolimits_{k \in[\bidderNum]}  \anonyDen_k \cdot k   \anonySymmMarginal_{k, 1}(x) + \sum\nolimits_{k \in[\bidderNum-1]_0}\anonyDen_k \cdot (\bidderNum - k) \anonySymmMarginal_{k, 0}(x)}~, \quad x\in [0, 1]~.
\end{align}
We use $\feaAnonyMarginals$ to denote the set of all feasible marginals $(\anonySymmMarginal_{k, 1}, \anonySymmMarginal_{k, 0})_{k\in\setwZero}$. 
\end{lemma}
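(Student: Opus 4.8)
The plan is to prove both directions of the equivalence, with the calibration condition \ref{eq:BC private} serving as the bridge. For the \emph{necessity} direction, I would start from a feasible {\pas} calibrated signaling $\symmetricSignalProb$ that induces the marginals $(\anonySymmMarginal_{k, 1}, \anonySymmMarginal_{k, 0})_{k\in\setwZero}$, and directly unpack what calibration means for a fixed bidder, say bidder $1$. The numerator of \ref{eq:BC private} sums over outcome profiles $\vec{\val}$ with $\val_1 = 1$ the quantity $\valprofileDen(\vec{\val})\cdot \symmetricSignalProb_1(x\mid\vec{\val})$. By the symmetry property of $\symmetricSignalProb$, whenever $\|\vec{\val}\|_1 = k$ and $\val_1 = 1$ we have $\symmetricSignalProb_1(x\mid\vec{\val}) = \anonySymmMarginal_{k, 1}(x)$; the number of such profiles in $\valprofileSpace_k$ is $\binom{\bidderNum-1}{k-1}$, and since $\valprofileDen$ is symmetric each has prior mass $\anonyDen_k/\binom{\bidderNum}{k}$. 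Multiplying, $\binom{\bidderNum-1}{k-1}/\binom{\bidderNum}{k} = k/\bidderNum$, so the numerator becomes $\frac{1}{\bidderNum}\sum_{k\in[\bidderNum]}\anonyDen_k\cdot k\cdot\anonySymmMarginal_{k, 1}(x)$. An identical count for the denominator (profiles with $\val_1$ arbitrary: those with $\val_1=1$ contribute $\anonySymmMarginal_{k,1}$, those with $\val_1=0$ contribute $\anonySymmMarginal_{k,0}$ with multiplicity $\binom{\bidderNum-1}{k}/\binom{\bidderNum}{k} = (\bidderNum-k)/\bidderNum$) yields exactly the denominator in \eqref{eq:bc for marginals}, and the $1/\bidderNum$ factors cancel. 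This gives \eqref{eq:bc for marginals}.

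For the \emph{sufficiency} direction, suppose $(\anonySymmMarginal_{k, 1}, \anonySymmMarginal_{k, 0})_{k\in\setwZero}$ satisfies \eqref{eq:bc for marginals}; I need to exhibit a feasible {\pas} calibrated signaling inducing them. The natural construction is: for each $k$ and each $\vec{\val}\in\valprofileSpace_k$, pick \emph{any} coupling $\correlaSignalProb_{\vec\val}\in\transPlan(\anonySymmMarginal_{k, 1}, \anonySymmMarginal_{k, 0})$ of the two marginals (e.g.\ the independent product $\anonySymmMarginal_{k,1}^{\otimes k}\otimes\anonySymmMarginal_{k,0}^{\otimes(\bidderNum-k)}$ placed on the coordinates where $\val_i=1$ resp.\ $\val_i=0$), and set $\symmetricSignalProb(\cdot\mid\vec\val)=\correlaSignalProb_{\vec\val}$; then symmetrize via \eqref{eq:pas transformation} if needed to enforce permutation-anonymity. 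By construction each bidder's marginal is $\anonySymmMarginal_{k,\val_i}$, so the symmetry and permutation-anonymity properties of \Cref{def: k anonymous symmetric marginals} hold, and it remains to check calibration \ref{eq:BC private}. But the computation in the necessity direction is reversible: the left-hand side of \ref{eq:BC private} for bidder $i$, after plugging in the marginals and using the symmetric-prior counting, collapses to exactly the right-hand ratio in \eqref{eq:bc for marginals}, which equals $x$ by hypothesis. Hence $\symmetricSignalProb$ is calibrated and feasible, and induces the given marginals.

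The main obstacle I anticipate is purely bookkeeping rather than conceptual: being careful that the construction in the sufficiency direction actually produces a valid calibrated signaling when some $\anonyDen_k=0$ (in which case $\anonySymmMarginal_{k,1}, \anonySymmMarginal_{k,0}$ are unconstrained / irrelevant and the corresponding terms drop out of \eqref{eq:bc for marginals}), and handling the edge index ranges ($k\in[\bidderNum]$ for the ``outcome-$1$'' marginals since $k=0$ has no bidder with outcome $1$, and $k\in[\bidderNum-1]_0$ for the ``outcome-$0$'' marginals since $k=\bidderNum$ has no bidder with outcome $0$), which is exactly why the two sums in \eqref{eq:bc for marginals} have asymmetric ranges. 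I would also note explicitly that the choice of coupling $\correlaSignalProb_{\vec\val}$ is immaterial for calibration (it only affects the marginals, which are fixed), which is what lets the feasibility of the signaling decouple cleanly from the transportation/correlation problem — this is the structural payoff the lemma is setting up for the two-stage reformulation.
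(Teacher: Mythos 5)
Your proposal is correct and follows essentially the same route as the paper's proof: the necessity direction is the same counting argument (each profile in $\valprofileSpace_k$ with $\val_i=1$ has mass $\anonyDen_k/\binom{\bidderNum}{k}$, and the ratios $\binom{\bidderNum-1}{k-1}/\binom{\bidderNum}{k}=k/\bidderNum$, $\binom{\bidderNum-1}{k}/\binom{\bidderNum}{k}=(\bidderNum-k)/\bidderNum$ produce exactly \eqref{eq:bc for marginals}), and the sufficiency direction is the same construction of picking an arbitrary coupling $\correlaSignalProb_k\in\transPlan(\anonySymmMarginal_{k,1},\anonySymmMarginal_{k,0})$ per level $k$ and extending it permutation-anonymously. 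Your explicit remark that the choice of coupling is immaterial for calibration is precisely the decoupling the paper relies on for the two-stage reformulation.
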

With all above discussions and definitions, we are ready to state our main results in this section.
\begin{proposition}
\label{prop:new formulation opt}
The seller's problem in \ref{eq:opt} can be reformulated as follows:
\begin{itemize}
    \item {\em \textbf{Optimal correlation:} }
    Given any $k\in\setwZero$, for any pair of marginals $\anonySymmMarginal_{k, 1}, \anonySymmMarginal_{k, 0}\in\Delta([0, 1])$, define the following value by optimally correlating the marginals $\anonySymmMarginal_{k, 1}, \anonySymmMarginal_{k, 0}$:
    \begin{align}
        \label{eq:opt correlation}
        \tag{$\mathcal{P}_{\textsc{Corr}}$}
        \RevCorr{\anonySymmMarginal_{k, 1}, \anonySymmMarginal_{k, 0}}
        \triangleq
        \max\nolimits_{
        \secmaxProb_k \in \transPlan(\anonySymmMarginal_{k, 1}, \anonySymmMarginal_{k, 0})
        }
        \displaystyle \int\nolimits x \cdot\secmaxProb_k(x)  ~ \mathrm{d} x~,
    \end{align}
    where $\transPlan(\anonySymmMarginal_{k, 1}, \anonySymmMarginal_{k, 0})$ is the set of all feasible transportation plans\footnote{Throughout this work, we interchangeably use correlation and transportation.}(its formal definition is provided in \Cref{def: feasible transportation plan}), and $\secmaxProb_k$ denotes the induced distribution of the second-highest bid from a feasible transportation (see its definition in \Cref{cor:equal sexmax marginal}).  

    \item 
    {\em \textbf{Optimal marginals:} }
    The seller's revenue from an optimal calibrated signaling $\optPriInforStructure$ to the program \ref{eq:opt} can be obtained as follows:
    \begin{align}
        \label{eq:opt via secmax marginal}
        \tag{$\mathcal{P}_{\textsc{Marg}}$}
        \Rev{\optPriInforStructure}
        & = 
        \max\nolimits_{(\anonySymmMarginal_{k, 1}, \anonySymmMarginal_{k, 0})_{k\in\setwZero}\in \feaAnonyMarginals}  
        ~ 
        \displaystyle 
        \sum\nolimits_{k\in\setwZero}\anonyDen_k \cdot \RevCorr{\anonySymmMarginal_{k, 1}, \anonySymmMarginal_{k, 0}}~, 
    \end{align}
    where $\feaAnonyMarginals$ denotes the set of all feasible bid marginals  (see \Cref{lem:feasible marginals}).
\end{itemize}
\end{proposition}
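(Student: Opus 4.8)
The plan is to prove \Cref{prop:new formulation opt} by showing that the two values coincide, namely
\[
\sup\nolimits_{\signalProb\in\csSpace}\;\expect[\vec{\val}\sim\valprofileDen,\,\vec{\expectedVal}\sim\signalProb(\cdot\mid\vec{\val})]{\secmax(\vec{\expectedVal})}
\;=\;
\max\nolimits_{(\anonySymmMarginal_{k,1},\anonySymmMarginal_{k,0})_{k\in\setwZero}\in\feaAnonyMarginals}\;\sum\nolimits_{k\in\setwZero}\anonyDen_k\cdot\RevCorr{\anonySymmMarginal_{k,1},\anonySymmMarginal_{k,0}},
\]
and that the right-hand maximum is attained; I would do this by proving the two inequalities separately. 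Three facts established above carry the argument: (i) by \Cref{prop:pas without loss} it is without loss of revenue to restrict to {\pas} calibrated signalings; (ii) by \Cref{lem:feasible marginals} a {\pas} signaling is calibrated (feasible) precisely when its induced marginals $(\anonySymmMarginal_{k,1},\anonySymmMarginal_{k,0})_{k\in\setwZero}$ lie in $\feaAnonyMarginals$; and (iii) by \Cref{cor:equal sexmax marginal} the revenue of a {\pas} signaling $\symmetricSignalProb$ depends on its conditional laws only through the class-wise second-highest bid marginals $\secmaxProb_k$, so that $\Rev{\symmetricSignalProb}=\sum_{k\in\setwZero}\anonyDen_k\int x\,\secmaxProb_k(x)\,\mathrm{d}x$ (using $\anonyDen_k=\sum_{\vec{\val}\in\valprofileSpace_k}\valprofileDen(\vec{\val})$).

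For ``$\le$'', I would take an arbitrary $\signalProb\in\csSpace$, use \Cref{prop:pas without loss} to replace it by a {\pas} signaling $\symmetricSignalProb$ with $\Rev{\symmetricSignalProb}=\Rev{\signalProb}$, and let $(\anonySymmMarginal_{k,1},\anonySymmMarginal_{k,0})_{k\in\setwZero}\in\feaAnonyMarginals$ be the marginals it induces (feasibility by \Cref{lem:feasible marginals}). For each $k$, take the canonical profile $\vec{\val}^{(k)}=(1,\dots,1,0,\dots,0)\in\valprofileSpace_k$ with $k$ ones; the symmetry clause of \Cref{def: k anonymous symmetric marginals} shows $\symmetricSignalProb(\cdot\mid\vec{\val}^{(k)})$ has its first $k$ coordinate marginals equal to $\anonySymmMarginal_{k,1}$ and its remaining $\bidderNum-k$ equal to $\anonySymmMarginal_{k,0}$, i.e.\ $\symmetricSignalProb(\cdot\mid\vec{\val}^{(k)})\in\transPlan(\anonySymmMarginal_{k,1},\anonySymmMarginal_{k,0})$, so the induced $\secmaxProb_k$ is feasible w.r.t.\ these marginals. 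Fact (iii) then gives $\Rev{\signalProb}=\sum_{k}\anonyDen_k\int x\,\secmaxProb_k(x)\,\mathrm{d}x\le\sum_{k}\anonyDen_k\,\RevCorr{\anonySymmMarginal_{k,1},\anonySymmMarginal_{k,0}}$, and taking the supremum over $\signalProb$ yields ``$\le$''.

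For ``$\ge$'', I would fix any $(\anonySymmMarginal_{k,1},\anonySymmMarginal_{k,0})_{k\in\setwZero}\in\feaAnonyMarginals$ and, for each $k$, pick $\correlaSignalProb_k^{\star}\in\transPlan(\anonySymmMarginal_{k,1},\anonySymmMarginal_{k,0})$ attaining $\RevCorr{\anonySymmMarginal_{k,1},\anonySymmMarginal_{k,0}}$; a maximizer exists since $\transPlan(\anonySymmMarginal_{k,1},\anonySymmMarginal_{k,0})$ is weak-$*$ compact (couplings with prescribed marginals on the compact cube) and $\correlaSignalProb\mapsto\int\secmax(\vec{\expectedVal})\,\mathrm{d}\correlaSignalProb$ is weak-$*$ continuous (otherwise take an $\eps$-optimizer and let $\eps\to0$). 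Averaging $\correlaSignalProb_k^{\star}$ over all permutations of the first $k$ coordinates and, separately, of the last $\bidderNum-k$ coordinates gives a block-symmetric coupling $\tilde{\correlaSignalProb}_k\in\transPlan(\anonySymmMarginal_{k,1},\anonySymmMarginal_{k,0})$ with the same objective value, because $\secmax$ is permutation-invariant. Defining $\symmetricSignalProb(\cdot\mid\vec{\val})$ for $\vec{\val}\in\valprofileSpace_k$ as the pushforward of $\tilde{\correlaSignalProb}_k$ under any coordinate permutation carrying the set of ones of $\vec{\val}$ onto $[k]$ --- well defined by block-symmetry --- produces a {\pas} signaling whose induced marginals are exactly $(\anonySymmMarginal_{k,1},\anonySymmMarginal_{k,0})_{k\in\setwZero}\in\feaAnonyMarginals$, hence $\symmetricSignalProb\in\csSpace$ by \Cref{lem:feasible marginals}, with $\Rev{\symmetricSignalProb}=\sum_{k}\anonyDen_k\,\RevCorr{\anonySymmMarginal_{k,1},\anonySymmMarginal_{k,0}}$ by fact (iii); taking the supremum shows ``$\ge$''. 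Combining, the two values are equal; and since \ref{eq:opt} admits an optimizer $\optsignalProb$, symmetrizing it gives a {\pas} optimum whose marginals lie in $\feaAnonyMarginals$ and attain the right-hand side, so the supremum there is a maximum.

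The step I expect to be the genuine obstacle is fact (ii), i.e.\ \Cref{lem:feasible marginals}: it is what reduces the infinite-dimensional calibration identity \ref{eq:BC private} --- a joint constraint on all the $\bidderNum$-dimensional conditional laws --- to the single scalar identity \eqref{eq:bc for marginals} on the $2(\bidderNum+1)$ bid marginals, and this is exactly what makes the inner correlation problem depend on the outer variables only through the pair $(\anonySymmMarginal_{k,1},\anonySymmMarginal_{k,0})$ and lets the objective split across the $\ell_1$-classes $k$. The remaining subtleties are comparatively routine: stitching the separately chosen per-class couplings $\correlaSignalProb_k^{\star}$ into one coherent {\pas} signaling (handled above via within-block symmetrization), existence of the inner maximizer (weak-$*$ compactness of $\transPlan(\cdot,\cdot)$), and upgrading the outer $\sup$ to a $\max$ (handled via the standing existence of $\optsignalProb$, or, self-containedly, by weak-$*$ compactness of $\feaAnonyMarginals$ together with upper semicontinuity of $(\anonySymmMarginal_{k,1},\anonySymmMarginal_{k,0})\mapsto\RevCorr{\anonySymmMarginal_{k,1},\anonySymmMarginal_{k,0}}$).
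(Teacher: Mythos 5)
Your proposal is correct and follows essentially the same route as the paper: the paper's proof is a terse chain of equalities that restricts to {\pas} signalings via \Cref{prop:pas without loss}, decomposes the revenue across the $\ell_1$-classes, and swaps the joint optimization for the two-stage one via \Cref{lem:feasible marginals}, which is exactly your two-inequality argument written in one line. Your version is somewhat more careful about the stitching step (the within-block symmetrization of the per-class couplings, needed for the pushforward construction to be well defined) and about attainment of the maxima, both of which the paper leaves implicit.
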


\section{Solving the Optimal Correlation Plan}
\label{subsec:opt correlation}
\newcommand{\coroptCorSignalProb}{\correlaSignalProb^{\textsc{Corr}}}
\newcommand{\coroptSecmaxProb}{\secmaxProb^{\textsc{Corr}}}

In this section, we tackle the transport problem \ref{eq:opt correlation}. Given any $k\in\setwZero$ and any pair of marginals $\anonySymmMarginal_{k, 1}, \anonySymmMarginal_{k, 0}$, we characterize the optimal correlation plan, denoted by $\coroptCorSignalProb_k$, and the corresponding second-highest bid marginal, denoted by $\coroptSecmaxProb_k$.

Intuitively, given any pair of marginals $\anonySymmMarginal_{k, 1}, \anonySymmMarginal_{k, 0}$, 
the goal of designing an optimal correlation plan is to maximize the expected second-highest bid.
To achieve this, the correlation mechanism shall prioritize allocating probability mass to bid vectors where high-value realizations from
$\supp(\anonySymmMarginal_{k, 1}) \cup \supp(\anonySymmMarginal_{k, 0})$ become the second-highest bids.
Consequently, we may want to exclude certain lower values in this set from ever becoming second-highest bids.
To formalize this idea, we define a {\em minimum second-highest bid} -- a threshold $\minsecmax_k$ such that any values no smaller than $\minsecmax_k$ are ``eligible'' to appear as the second-highest bids.
\begin{definition}[Minimum second-highest bid]
\label{def:minimum secmax}
For any $k\in\setwZero$ and any marginals $\anonySymmMarginal_{k, 1}, \anonySymmMarginal_{k, 0}$, 
we define the minimum second-highest bid as follows:
\begin{align}
    \label{eq:minimum secmax}
    \minsecmax_k 
    & \triangleq \sup \left\{\minsecmax\in[0, 1]: \int_{\minsecmax}^1 \frac{1}{2}\cdot 
    \left(k\anonySymmMarginal_{k,1}(x) + (\bidderNum - k)\anonySymmMarginal_{k, 0}(x) \right)~\mathrm{d}\expectedVal \ge 1\right\}~. 
\end{align}    
\end{definition}
To gain some intuition behind the above definition, we note that for any ``high'' value $x\in \supp(\anonySymmMarginal_{k, 1}) \cup \supp(\anonySymmMarginal_{k, 0})$, the term $k\anonySymmMarginal_{k, 1}(x) + (\bidderNum-k)\anonySymmMarginal_{k, 1}(x)$  represents the total probability mass across the two marginals $\anonySymmMarginal_{k, 1}, \anonySymmMarginal_{k, 0}$ that can be allocated to $x$. 
In order for $x$ to appear as the second‐highest bid in some realized  bid vector $\vec{x}$,
at least two coordinates (i.e., bids of two bidders) must both be $x$.
Thus, we need to equally allocate half of that total mass (i.e., $\frac{1}{2}\cdot (k\anonySymmMarginal_{k, 1}(x) + (\bidderNum-k)\anonySymmMarginal_{k, 1}(x))$) to form two such bids.
The quantity $\minsecmax_k$ is then chosen so that if we ``greedily'' assign values $\supp(\anonySymmMarginal_{k, 1}) \cup \supp(\anonySymmMarginal_{k, 0})$ from the largest to the smallest as second‐highest bids, 
we can accumulate at least $1$ unit of total mass for these pairs above $\minsecmax_k$.

The intuition behind the minimum second-highest bid $\minsecmax_k$ also sheds light on how to construct such an optimal correlation:
We can construct the correlation $\correlaSignalProb$ such that its realized bid vector $\vec{\expectedVal}\sim \correlaSignalProb$ satisfies: (1) at least two bidders share the same highest bid, and (2) remaining bidders are assigned minimal values. 
Indeed, by this construction, we can show the following upper bound for the objective value of optimal correlation given in \ref{eq:opt correlation} for any given pair of marginals $\anonySymmMarginal_{k, 0}$ $\anonySymmMarginal_{k, 1}$:
\begin{lemma}
\label{lem:upper bound for secmax prob}
For any $k\in\setwZero$, given any pair of marginals $\anonySymmMarginal_{k, 1}, \anonySymmMarginal_{k, 0}$, 
we have that
\begin{align}
    \label{ineq:corr ub}
   \RevCorr{\anonySymmMarginal_{k, 1}, \anonySymmMarginal_{k, 0}}
   & \leq \int_{\minsecmax_k}^{1} (x-\minsecmax_k) \cdot \left(\frac{k}{2} \anonySymmMarginal_{k, 1}(x) + \frac{\bidderNum - k}{2} \anonySymmMarginal_{k, 0}(x)\right) ~\mathrm{d}\expectedVal  + \minsecmax_k~.
\end{align}
\end{lemma}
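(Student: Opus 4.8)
The plan is to establish the inequality for an arbitrary feasible second-highest bid distribution $\secmaxProb_k\in\transPlan(\anonySymmMarginal_{k,1},\anonySymmMarginal_{k,0})$ and then take the maximum. So fix such a $\secmaxProb_k$ together with a transportation plan $\correlaSignalProb\in\transPlan(\anonySymmMarginal_{k,1},\anonySymmMarginal_{k,0})$ realizing it, i.e.\ $\secmaxProb_k(x)=\prob[\vec{\expectedVal}\sim\correlaSignalProb]{\secmax(\vec{\expectedVal})=x}$. First I would peel off the constant $\minsecmax_k$: since $\secmaxProb_k$ integrates to one, $\int x\,\secmaxProb_k(x)\,\mathrm{d}x=\minsecmax_k+\int (x-\minsecmax_k)\,\secmaxProb_k(x)\,\mathrm{d}x$, and because $x-\minsecmax_k<0$ on $[0,\minsecmax_k)$ this is at most $\minsecmax_k+\int_{\minsecmax_k}^{1}(x-\minsecmax_k)\,\secmaxProb_k(x)\,\mathrm{d}x$. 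The remaining truncated first moment I would rewrite by the layer-cake identity as $\int_{\minsecmax_k}^{1}(x-\minsecmax_k)\,\secmaxProb_k(x)\,\mathrm{d}x=\int_{\minsecmax_k}^{1}\prob[\vec{\expectedVal}\sim\correlaSignalProb]{\secmax(\vec{\expectedVal})\ge t}\,\mathrm{d}t$, which reduces everything to an upper bound on the survival function of the second-highest bid.

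The key step is to bound $\prob[\correlaSignalProb]{\secmax(\vec{\expectedVal})\ge t}$ using only the two prescribed coordinate marginals. The observation is that $\secmax(\vec{\expectedVal})\ge t$ holds exactly when at least two coordinates of $\vec{\expectedVal}$ are $\ge t$, that is, when $N_{\ge t}(\vec{\expectedVal})\ge 2$ where $N_{\ge t}(\vec{\expectedVal})\triangleq\sum_{i\in[\bidderNum]}\indicator{\expectedVal_i\ge t}$; Markov's inequality applied to the nonnegative integer-valued random variable $N_{\ge t}$ then gives $\prob[\correlaSignalProb]{\secmax(\vec{\expectedVal})\ge t}\le\tfrac12\,\expect[\correlaSignalProb]{N_{\ge t}}$. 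By \Cref{def: feasible transportation plan}, under $\correlaSignalProb$ the $i$-th coordinate has law $\anonySymmMarginal_{k,1}$ for $i\in[k]$ and $\anonySymmMarginal_{k,0}$ for $i\in[\bidderNum]\setminus[k]$, so by linearity of expectation $\expect[\correlaSignalProb]{N_{\ge t}}=k\int_{t}^{1}\anonySymmMarginal_{k,1}(x)\,\mathrm{d}x+(\bidderNum-k)\int_{t}^{1}\anonySymmMarginal_{k,0}(x)\,\mathrm{d}x$. Substituting this tail bound into the layer-cake expression and swapping the order of integration once more turns $\int_{\minsecmax_k}^{1}\prob[\correlaSignalProb]{\secmax(\vec{\expectedVal})\ge t}\,\mathrm{d}t$ into exactly $\int_{\minsecmax_k}^{1}(x-\minsecmax_k)\bigl(\tfrac{k}{2}\anonySymmMarginal_{k,1}(x)+\tfrac{\bidderNum-k}{2}\anonySymmMarginal_{k,0}(x)\bigr)\,\mathrm{d}x$; adding back $\minsecmax_k$ and taking the maximum over feasible $\secmaxProb_k$ yields the claimed bound.

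A few remarks on what to watch. The argument never uses the precise value of $\minsecmax_k$: it shows the same bound with $\minsecmax_k$ replaced by any threshold $c\in[0,1]$, and a one-line computation shows the resulting right-hand side, as a function of $c$, has vanishing derivative exactly when $\int_{c}^{1}\tfrac12\bigl(k\anonySymmMarginal_{k,1}(x)+(\bidderNum-k)\anonySymmMarginal_{k,0}(x)\bigr)\,\mathrm{d}x=1$, i.e.\ at the quantity of \Cref{def:minimum secmax}, which is why the tightest instance of the bound is the one stated (its tightness, i.e.\ the matching construction, is handled separately via the optimal correlation plan). The only genuine care needed is measure-theoretic: the marginals may have atoms (cf.\ \Cref{ex:three bidder}), so $\int_{t}^{1}\anonySymmMarginal_{k,1}(x)\,\mathrm{d}x$ and $\int_{t}^{1}\anonySymmMarginal_{k,0}(x)\,\mathrm{d}x$ should be read as the survival functions of the corresponding measures, and the boundary events $\{\expectedVal_i=t\}$ should be handled through these survival functions; since the two Fubini steps and Markov's inequality are insensitive to null sets, I do not anticipate a real obstacle — the entire content of the lemma is the ``at least two coordinates above the threshold'' observation combined with Markov and linearity of expectation.
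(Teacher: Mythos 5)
Your proposal is correct and follows essentially the same route as the paper's proof: layer-cake decomposition of the truncated first moment, Markov's inequality applied to the count of coordinates exceeding the threshold, linearity of expectation via the marginal constraints of the transportation plan, and a final Fubini swap. The only cosmetic difference is that you peel off the constant $\minsecmax_k$ via the mean decomposition before applying the layer-cake identity, whereas the paper splits the layer-cake integral at $\minsecmax_k$ directly; both orderings are equivalent.
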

\begin{proof}[Proof of \Cref{lem:upper bound for secmax prob}]
For any $k\in\setwZero$,
we fix any pair of {\as} marginals $\anonySymmMarginal_{k, 1}, \anonySymmMarginal_{k, 0} \in \Delta([0, 1])$. 
For any $\secmaxProb_k$ induced by some $\correlaSignalProb_k\in\transPlan(\anonySymmMarginal_{k, 1}, \anonySymmMarginal_{k, 0})$, 
for any $\minsecmax\in[0, 1]$ we have
\begin{align*}
    \int_{0}^1 x \cdot \secmaxProb_k(x) ~\mathrm{d} x 
    = \int_{0}^1 \prob[z\sim \secmaxProb_k]{z \ge x} ~\mathrm{d}\expectedVal 
    &= \int_{0}^{\minsecmax} \prob[z\sim \secmaxProb_k]{z \ge x} ~\mathrm{d}\expectedVal  + \int_{\minsecmax}^1  \prob[z\sim \secmaxProb_k]{z \ge x} ~\mathrm{d}x\\
    & \leq \minsecmax  + \int_{\minsecmax}^1  \prob[z\sim \secmaxProb_k]{z \ge \expectedVal} ~\mathrm{d}x\\
    & = \minsecmax + \int_{\minsecmax}^1 \prob[\vec{\expectedVal}\sim\correlaSignalProb_k]{\exists i, j \in [\bidderNum] \text{ s.t. } \expectedVal_i \ge \expectedVal, \expectedVal_j \ge x} ~\mathrm{d}\expectedVal ~.
\end{align*}
Let $\bidderSet = \sum\nolimits_{i \in [\bidderNum]} \bm{1}\{x_i \ge x\}$, then we have
\begin{align*}
    \prob[\vec{\expectedVal}\sim\correlaSignalProb_k]{\exists i, j \in [\bidderNum] \text{ s.t. } \expectedVal_i \ge \expectedVal, \expectedVal_j \ge x} 
    = \prob[\vec{\expectedVal}\sim\correlaSignalProb_k]{\bidderSet \ge 2}
    \le \frac{\expect[]{\bidderSet}}{2} = \frac{1}{2}\sum\nolimits_{i\in [\bidderNum]} \prob[\vec{\expectedVal}\sim\correlaSignalProb_k]{\expectedVal_i \ge x}~,
\end{align*}
where the inequality follows from Markov inequality.
Consequently, we have
\begin{align}
    \int_{0}^1 x \cdot \secmaxProb_k(x) ~\mathrm{d} x 
    &\leq  \sum\nolimits_{i \in [\bidderNum]}\int_{\minsecmax}^1 \frac{1}{2}\prob[\vec{\expectedVal}\sim\correlaSignalProb_k]{\expectedVal_i \ge x}~\mathrm{d}\expectedVal + \minsecmax \nonumber\\
    & = \sum\nolimits_{i \in [\bidderNum]} \frac{1}{2} \left(\int_{\minsecmax}^1 x \prob[\vec{\expectedVal}\sim\correlaSignalProb_k]{\expectedVal_i = x}~\mathrm{d}\expectedVal - \minsecmax(1 - \prob[\vec{\expectedVal}\sim\correlaSignalProb_k]{\expectedVal_i \le \minsecmax}) \right)  + \minsecmax\nonumber\\
    & = \sum\nolimits_{i \in [\bidderNum]}\frac{1}{2}\int_{\minsecmax}^1 (x-\minsecmax) \cdot \prob[\vec{\expectedVal}\sim\correlaSignalProb_k]{\expectedVal_i = x} ~\mathrm{d}\expectedVal + \minsecmax \nonumber \\
    & = \int_{\minsecmax}^{1} (x-\minsecmax) \cdot \left(\frac{k}{2} \anonySymmMarginal_{k, 1}(x) + \frac{\bidderNum - k}{2} \anonySymmMarginal_{k, 0}(x)\right) ~\mathrm{d}\expectedVal  + \minsecmax~,
    \label{eq:upper boudn t}
\end{align}
where the last equality is due to the definition of $\correlaSignalProb_k\in\transPlan(\anonySymmMarginal_{k, 1}, \anonySymmMarginal_{k, 0})$, that is, we have
\begin{align*}
    \sum\nolimits_{i \in [\bidderNum]} \prob[\vec{\expectedVal}\sim\correlaSignalProb_k]{\expectedVal_i = x} 
    = 
    k \anonySymmMarginal_{k, 1}(x) + (\bidderNum - k)\anonySymmMarginal_{k, 0}(x)~.
\end{align*}
We next observe that \eqref{eq:upper boudn t} attains its maximum when $\minsecmax = \minsecmax_k$ where $\minsecmax_k$ is defined as Eqn.~\eqref{eq:minimum secmax}. 
To see this, one can take the first-order derivative of \eqref{eq:upper boudn t} w.r.t \ $\minsecmax$ and check that its first-order condition is satisfied when $\minsecmax = \minsecmax_k$.
\end{proof}

\subsection{Optimal Correlation for \texorpdfstring{$k\neq 1, \bidderNum-1$}{general k}}
With \Cref{def:minimum secmax}, we provide below an optimal correlation for the general case of $k\neq 1, \bidderNum - 1$, and its corresponding value of optimal correlation. 

\begin{proposition}[Optimal correlation for $k\neq 1, \bidderNum-1$]
\label{prop:opt cor general k}
For $k \neq 1, \bidderNum-1$, given any pair of marginals $\anonySymmMarginal_{k, 1}, \anonySymmMarginal_{k, 0}$, there exists an optimal correlation (see the construction provided in \Cref{alg: Optimal Correlation}) $\coroptCorSignalProb_k\in\transPlan(\anonySymmMarginal_{k, 1}, \anonySymmMarginal_{k, 0})$ such that the corresponding second-highest bid marginal $\coroptSecmaxProb_k$ is:
\begin{align}
\renewcommand{\arraystretch}{4}
    \label{eq:opt secmax}
    \coroptSecmaxProb_k(x) & = 
    \begin{cases}
          \displaystyle 
          \displaystyle\frac{k}{2}\anonySymmMarginal_{k,1}(x) + \frac{\bidderNum - k}{2}\anonySymmMarginal_{k, 0}(x)~, & \quad x > \minsecmax_k~;\\[8pt]
          \displaystyle 
          1 - \int_{\minsecmax_k^+}^1 \frac{k}{2}\anonySymmMarginal_{k,1}(x) + \frac{\bidderNum - k}{2}\anonySymmMarginal_{k, 0}(x) ~\mathrm{d}\expectedVal~, & \quad x = \minsecmax_k~;\\[8pt]
          0~,  & \quad x < \minsecmax_k~. 
    \end{cases} 
\end{align}
\end{proposition}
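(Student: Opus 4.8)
The plan is to reduce the proposition to a coupling-construction problem. First I would observe that the candidate distribution $\coroptSecmaxProb_k$ in the statement is a genuine probability distribution over $[0,1]$: its total mass is $\tfrac12\int_{\minsecmax_k^+}^{1}\bigl(k\anonySymmMarginal_{k,1}(x)+(\bidderNum-k)\anonySymmMarginal_{k,0}(x)\bigr)\mathrm{d}x+\coroptSecmaxProb_k(\minsecmax_k)=1$, and $\coroptSecmaxProb_k(\minsecmax_k)\ge 0$ because by \Cref{def:minimum secmax} the map $t\mapsto\int_t^{1}\tfrac12(k\anonySymmMarginal_{k,1}+(\bidderNum-k)\anonySymmMarginal_{k,0})$ is non-increasing and left-continuous, so it is $<1$ for every $t>\minsecmax_k$. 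A short rearrangement then shows $\int x\,\coroptSecmaxProb_k(x)\,\mathrm{d}x=\int_{\minsecmax_k}^{1}(x-\minsecmax_k)\bigl(\tfrac{k}{2}\anonySymmMarginal_{k,1}(x)+\tfrac{\bidderNum-k}{2}\anonySymmMarginal_{k,0}(x)\bigr)\mathrm{d}x+\minsecmax_k$, which is exactly the upper bound of \Cref{lem:upper bound for secmax prob}. Hence it suffices to exhibit one feasible plan $\coroptCorSignalProb_k\in\transPlan(\anonySymmMarginal_{k,1},\anonySymmMarginal_{k,0})$ inducing this $\coroptSecmaxProb_k$; its optimality is then immediate from that lemma.

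Next I would describe the coupling realized by \Cref{alg: Optimal Correlation}, splitting it into \emph{pairing events} and \emph{filler mass}. For each value $x>\minsecmax_k$ the plan produces profiles in which exactly two coordinates equal $x$ and all other coordinates lie weakly below $\minsecmax_k$, with probability $\tfrac12\bigl(k\anonySymmMarginal_{k,1}(x)+(\bidderNum-k)\anonySymmMarginal_{k,0}(x)\bigr)$ (thus using up all marginal mass strictly above $\minsecmax_k$); at $x=\minsecmax_k$ it does the same with total probability $\coroptSecmaxProb_k(\minsecmax_k)$, consuming a fraction $\theta\le 1$ of the combined atom mass $a\triangleq k\anonySymmMarginal_{k,1}(\{\minsecmax_k\})+(\bidderNum-k)\anonySymmMarginal_{k,0}(\{\minsecmax_k\})$, where $2\coroptSecmaxProb_k(\minsecmax_k)\le a$ follows from \Cref{def:minimum secmax}. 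A mass-accounting check then shows that the ``pair'' coordinates consume exactly $2$ units of coordinate mass in total, that the remaining $\bidderNum-2$ coordinate slots per profile exactly exhaust the leftover marginal mass of $\anonySymmMarginal_{k,1},\anonySymmMarginal_{k,0}$, and that this leftover mass all sits weakly below $\minsecmax_k$. That mass is then dealt out to the $\bidderNum-2$ non-pair coordinates as fillers, drawn — conditionally on the realized pair — independently from each bidder's normalized residual law; this step is unconstrained because filler coordinates never need to be tied, and a direct computation confirms that each bidder's overall marginal equals $\anonySymmMarginal_{k,1}$ on coordinates $1,\dots,k$ and $\anonySymmMarginal_{k,0}$ on the rest.

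The crux — and the only place the hypothesis $k\neq 1,\bidderNum-1$ is used — is that for each $x\ge\minsecmax_k$ the coordinate mass at $x$ can actually be split into pairs of \emph{distinct} bidders consistently with the bidder-specific marginals. Concretely, at value $x$ one must pick a probability distribution over unordered pairs in $[\bidderNum]$ whose vertex degrees are $\anonySymmMarginal_{k,1}(x)$ on bidders $1,\dots,k$ and $\anonySymmMarginal_{k,0}(x)$ on bidders $k{+}1,\dots,\bidderNum$ (scaled by $\theta$ when $x=\minsecmax_k$); this is a fractional $b$-matching on the complete graph $K_\bidderNum$, which exists iff the largest prescribed degree is at most the sum of the others, i.e.\ $\max\{\anonySymmMarginal_{k,1}(x),\anonySymmMarginal_{k,0}(x)\}\le(k-1)\anonySymmMarginal_{k,1}(x)+(\bidderNum-k)\anonySymmMarginal_{k,0}(x)$ when $\anonySymmMarginal_{k,1}(x)\ge\anonySymmMarginal_{k,0}(x)$, and symmetrically otherwise. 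For $k\ge 2$ and $\bidderNum-k\ge 2$ this inequality is automatic for every $x$; for $k=1$ (resp.\ $k=\bidderNum-1$) a value carrying mass only under $\anonySymmMarginal_{k,1}$ (resp.\ $\anonySymmMarginal_{k,0}$) cannot be paired at all, which is precisely why those two boundary cases are handled separately. I expect this feasibility claim, together with reconciling the pairing assignment with the filler assignment so that every individual bidder's marginal comes out exactly right, to be the main obstacle.

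Finally I would check that the constructed $\coroptCorSignalProb_k$ induces exactly $\coroptSecmaxProb_k$: every realized profile contains at least two coordinates $\ge\minsecmax_k$ (the pair), so $\secmax(\vec{\expectedVal})\ge\minsecmax_k$ always; if the pair sits at some $x>\minsecmax_k$, every other coordinate is $\le\minsecmax_k<x$, forcing $\secmax(\vec{\expectedVal})=x$; and if the pair sits at $\minsecmax_k$ then $\secmax(\vec{\expectedVal})=\minsecmax_k$. Reading the probabilities of these events off the construction recovers the density $\tfrac{k}{2}\anonySymmMarginal_{k,1}(x)+\tfrac{\bidderNum-k}{2}\anonySymmMarginal_{k,0}(x)$ on $(\minsecmax_k,1]$ and the atom $\coroptSecmaxProb_k(\minsecmax_k)$ at $\minsecmax_k$ — that is, $\coroptSecmaxProb_k$ — and combined with the first step this yields both $\coroptCorSignalProb_k\in\transPlan(\anonySymmMarginal_{k,1},\anonySymmMarginal_{k,0})$ and optimality, completing the proof.
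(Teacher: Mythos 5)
Your proposal is correct and follows the same two-step strategy as the paper's proof: verify that the claimed $\coroptSecmaxProb_k$ is a probability distribution whose mean equals the upper bound of \Cref{lem:upper bound for secmax prob}, then exhibit a feasible coupling realizing it by pairing two coordinates at each value $x\ge\minsecmax_k$ and filling the remaining $\bidderNum-2$ coordinates with the residual low-value mass. Your fractional $b$-matching degree condition is a cleaner, more explicit justification of the pairing feasibility that the paper's \Cref{alg: Optimal Correlation} handles via cyclic pairing (and it correctly isolates where $k\neq 1,\bidderNum-1$ enters); the only small imprecision is that your sufficient condition ``$k\ge 2$ and $\bidderNum-k\ge 2$'' should also be read to cover $k\in\{0,\bidderNum\}$, where one side of the marginal is vacuous and the degree condition reduces to $\bidderNum\ge 2$.
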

The key intuition behind the optimal correlation $\coroptCorSignalProb_k$ 
is as follows: it is specifically designed to maximize how much probability mass is placed on the highest possible second‐highest bids.
As we can see, its induced $\coroptSecmaxProb_k$ places no probability on values below $\minsecmax_k$.
Above $\minsecmax_k$, each point in $\supp(\coroptSecmaxProb_k)$ has a probability that is a weighted average of $\anonySymmMarginal_{k, 1}, \anonySymmMarginal_{k, 0}$, weighted by $\frac{k}{2}$ and $\frac{\bidderNum-k}{2}$, respectively.

The proof of \Cref{prop:opt cor general k} proceeds in two steps. We first argue that given any marginals $\anonySymmMarginal_{k, 1}, \anonySymmMarginal_{k,0}$, the revenue of any feasible correlation $\correlaSignalProb$ (namely, $\expect[\vec{\expectedVal}\sim \correlaSignalProb]{\secmax(\vec{\expectedVal})}$) cannot exceed the right-hand side of \eqref{ineq:corr ub}.
We then show that we can construct a particular correlation (see \Cref{alg: Optimal Correlation} for details) that attains this revenue upper bound, and the second-highest bid marginal is given as in \eqref{eq:opt secmax}, thus proving the optimality.
The algorithm for constructing  $\coroptCorSignalProb_k$, loosely speaking, uses the following two widgets: 
\begin{itemize}
    \item {\em Processing values in descending order:}~ 
    The algorithm iterates over values in \(\supp(\anonySymmMarginal_{k, 1}) \cup \supp(\anonySymmMarginal_{k, 0})\) in descending order, prioritizing higher values to ensure they become the second-highest bid.
    This ensures that the second‐highest bid is as large as possible.
    \item {\em Cyclically pairing high-value bidders:}
    Whenever there exists a value $x\in \supp(\anonySymmMarginal_{k, 1})\cup\supp(\anonySymmMarginal_{k, 0})$ that is no smaller than $\minsecmax_k$, 
    the algorithm pairs two bidders to ensure that the corresponding bid vector has two bidders sharing this high value, and 
    all remaining bidders are assigned minimal values (i.e., those smaller than $\minsecmax_k$).
    To satisfy the feasibility constraints (see \Cref{def: feasible transportation plan}) of a feasible correlation, the algorithm 
    cycles through different bidder pairs, ensuring that each bidder's marginal exactly matches $\anonySymmMarginal_{k, 1}$ or $\anonySymmMarginal_{k, 0}$ without exceeding the leftover probabilities. 
\end{itemize}

We provide below an example to illustrate the structure of optimal correlation $\coroptCorSignalProb_k$ characterized in \Cref{prop:opt cor general k}.
\begin{figure}[htbp]
    \centering
    \newcommand{\colorone}{blue}
\newcommand{\colortwo}{deepyellow}
\newcommand{\colorthree}{green!70!black}
\definecolor{myorange}{HTML}{BE4D25}
\definecolor{mygreen}{HTML}{49BE25}

\usetikzlibrary{arrows.meta,patterns}









\begin{tikzpicture}[xscale=4, yscale=2, scale=1.1]

    \def\xone{1.2}
    \def\xtwo{1.7}
    \def\xthree{2.2}
    \def\xfour{2.7}
    \def\yA{1}
    \def\yB{0.5}
    \def\yC{0}
    \def\yD{-0.5}
    \def\s{0.3}

    \node[font=\small, anchor=base west] at (0.432,1.3) {support};
    \node[font=\small, anchor=base] at (\xone,1.3) {$f_{2,1}$};
    \node[font=\small, anchor=base] at (\xtwo,1.3) {$f_{2,1}$};
    \node[font=\small, anchor=base] at (\xthree,1.3) {$f_{2,0}$};
    \node[font=\small, anchor=base] at (\xfour,1.3) {$f_{2,0}$};
    \node[font=\small, anchor=base west] at (3,1.3) {bid profile prob.};

    \node[anchor=east] at (0.7,\yA) {\small 1};
    \node[anchor=east] at (0.75,\yB) {\small 0.8};
    \node[anchor=east] at (0.75,\yC) {\small 0.2};
    \node[anchor=east] at (0.7,\yD) {\small 0};

    \node[anchor=west] at (3,\yC) {\small $\coroptCorSignalProb_2(0.8, 0.8, 0.2, 0.2) = 0.2$};

    \node[anchor=west] at (3,\yB) {\small $\coroptCorSignalProb_2(0.2, 0.2, 0.8, 0.8) = 0.2$};

    \node[anchor=west] at (3, -0.335) {\small $\coroptCorSignalProb_2(1, 1, 0, 0) = 0.4$};

    \node[anchor=west] at (3,-0.65) {\small $\coroptCorSignalProb_2(0.8, 0.8, 0, 0) = 0.2$};

\fill[pattern=north east lines, pattern color=red,
    draw=black, line width=1.2pt]
    (\xone-\s/2, \yA-0.8*\s/2) rectangle (\xone+\s/2, \yA+0.8*\s/2);
\fill[pattern=north east lines, pattern color=red, draw=black, line width=1.2pt]
    (\xtwo-\s/2, \yA-0.8*\s/2) rectangle (\xtwo+\s/2, \yA+0.8*\s/2);

\fill[pattern=north east lines, pattern color=blue,draw=black, line width=1.2pt]
    (\xone-\s/2, \yB-0.3*\s/2) rectangle (\xone+\s/2, \yB+\s/3);
\fill[pattern=north east lines, pattern color=blue,draw=black, line width=1.2pt]
    (\xtwo-\s/2, \yB-0.3*\s/2) rectangle (\xtwo+\s/2, \yB+\s/3);

\fill[pattern=north east lines, pattern color=mygreen,draw=black, line width=1.2pt]
    (\xone-\s/2, \yB-0.3*\s/2 - 0.4*\s) rectangle (\xone+\s/2, \yB-0.3*\s/2);
\fill[pattern=north east lines, pattern color=mygreen,draw=black, line width=1.2pt]
    (\xtwo-\s/2, \yB-0.3*\s/2 - 0.4*\s) rectangle (\xtwo+\s/2, \yB-0.3*\s/2);

\fill[pattern=north east lines, pattern color=mygreen,draw=black, line width=1.2pt]
    (\xthree-\s/2, \yC-0.4*\s/2) rectangle (\xthree+\s/2, \yC+0.4*\s/2);
\fill[pattern=north east lines, pattern color=mygreen,draw=black, line width=1.2pt]
    (\xfour-\s/2, \yC-0.4*\s/2) rectangle (\xfour+\s/2, \yC+0.4*\s/2);

\fill[pattern=north east lines, pattern color=myorange, draw=black, line width=1.2pt]
    (\xfour-\s/2, \yB-0.4*\s/2) rectangle (\xfour+\s/2, \yB+0.4*\s/2);
\fill[pattern=north east lines, pattern color=myorange,draw=black, line width=1.2pt]
    (\xthree-\s/2, \yB-0.4*\s/2) rectangle (\xthree+\s/2, \yB+0.4*\s/2);

\fill[pattern=north east lines, pattern color=myorange, draw=black, line width=1.2pt]
    (\xone-\s/2, \yC-0.4*\s/2) rectangle (\xone+\s/2, \yC+0.4*\s/2);
\fill[pattern=north east lines, pattern color=myorange,draw=black, line width=1.2pt]
    (\xtwo-\s/2, \yC-0.4*\s/2) rectangle (\xtwo+\s/2, \yC+0.4*\s/2);

\fill[pattern=north east lines, pattern color=red,draw=black, line width=1.2pt]
    (\xthree-\s/2, \yD+0.1-\s/2) rectangle (\xthree+\s/2, \yD+\s/2);
\fill[pattern=north east lines, pattern color=red,draw=black, line width=1.2pt]
    (\xfour-\s/2, \yD+0.1-\s/2) rectangle (\xfour+\s/2, \yD+\s/2);
\fill[pattern=north east lines, pattern color=blue,draw=black, line width=1.2pt]
    (\xthree-\s/2, \yD+0.15-\s/2-0.6*\s) rectangle (\xthree+\s/2, \yD+0.1-\s/2);
\fill[pattern=north east lines, pattern color=blue,draw=black, line width=1.2pt]
    (\xfour-\s/2, \yD+0.15-\s/2-0.6*\s) rectangle (\xfour+\s/2, \yD+0.1-\s/2);




\draw[red, very thick, opacity=0.8, -{Latex[length=2mm]}]
  (1,1) -- (1.95,1) -- (1.95,-0.45) -- (2.95,-0.45);

\draw[blue, very thick, opacity=0.8, -{Latex[length=2mm]}]
  (1,0.525) -- (1.95,0.525) -- (1.95,-0.625) -- (2.95,-0.625);

\draw[mygreen, very thick, opacity=0.8, -{Latex[length=2mm]}]
  (1,0.4) -- (1.95,0.4) -- (1.95,0) -- (2.95,0);

\draw[myorange, very thick, opacity=1, -{Latex[length=2mm]}]
  (2.95,0.5) -- (1.95,0.5) -- (1.95,0) -- (0.95,0);


\draw[dashed, gray, thin] (0.46,1.2) -- (2.9,1.2);
\draw[dashed, gray, thin] (0.46,0.75) -- (2.9,0.75);
\draw[dashed, gray, thin] (0.46,0.25) -- (2.9,0.25);
\draw[dashed, gray, thin] (0.46,-0.15) -- (2.9,-0.15);
\end{tikzpicture}
    \caption{Optimal correlation in \Cref{ex:opt cor general k}.
    Each column is the support of the marginal (i.e., $\supp(\anonySymmMarginal_{2, 1}) = \{1, 0.8, 0.2\}, \supp(\anonySymmMarginal_{2, 0}) = \{0, 0.2, 0.8\}$), and the height of each rectangle represents the probability mass (i.e., $\anonySymmMarginal_{2, 1}(1) = \anonySymmMarginal_{2, 1}(0.8) = 0.4, 
    \anonySymmMarginal_{2,1}(0.2) = 0.2,
    \anonySymmMarginal_{2, 0}(0) = 0.6, \anonySymmMarginal_{2, 0}(0.2) =
    \anonySymmMarginal_{2,0}(0.8) = 0.2$).
    The boxes with same pattern indicate that they are correlated with their respective proportions: 
    The bid profiles $(1,1,0,0), (0.2, 0.2, 0.8, 0.8),(0.8,0.8,0,0), (0.8, 0.8, 0.2, 0.2)$ are indicated by the red, orange, blue, green pattern, respectively.}
    \label{fig:example of optimal correlation for k neq 1}
\end{figure}
\begin{example}
\label{ex:opt cor general k}
Consider $\bidderNum = 4, k = 2$ and the following marginals
$\anonySymmMarginal_{2,1} = 0.5 \cdot \delta_{(1)} + 0.5 \cdot \delta_{(0.8)}$, $\anonySymmMarginal_{2,0} = 0.2 \cdot \delta_{(0.2)} + 0.8 \cdot \delta_{(0)}$.
In this case, we have $\minsecmax_2 = 0.8$, 
and the
optimal $\coroptSecmaxProb_2$ is $\coroptSecmaxProb_2(\cdot) = \frac{1}{2}\delta_{(1)} + \frac{1}{2}\delta_{(0.8)}$.
See \Cref{fig:example of optimal correlation for k neq 1} for a graphic illustration of the optimal correlation.
\end{example}

\begin{algorithm}[H]
    \caption{Algorithm of Optimal Correlation for $k \neq 1, \bidderNum-1$}
    \label{alg: Optimal Correlation}

    \KwIn{$\anonySymmMarginal_{k, 1}$ and $\anonySymmMarginal_{k, 0}$ ($k\neq 1, \bidderNum-1$)}
    
    \KwOut{$\correlaSignalProb_k \in \Delta([0,1]^\bidderNum)$}

    Let $\anonySymmMarginal_{l} = \anonySymmMarginal_{k, o_l}$ for all $l \in [\bidderNum]$\;
    
    \For{each $x \in \supp(\anonySymmMarginal_{k, 1}) \cup \supp(\anonySymmMarginal_{k,0})$, from largest to smallest}{
        \If{all $f_l$ are exhausted (i.e., all their support points are unavailable)}{
            Terminate\;
        }
        
        \For{$o \in \{0,1\}$}{
            Let $S_o$ be the set of bidders with the realized outcome $o$\;
            
            Form a set $S \subset S_o \times S_o$ of unordered pairs such that each bidder in $S_o$ appears in exactly two pairs in $S$\;
            
            \For{all unordered pairs $(i, j) \in S$}{
                $A \gets \frac{1}{2} \anonySymmMarginal_{k,o}(x)$\;
                
                \While{$A > 0$}{
                    Construct $\vec{y}$ by setting $y_i = y_j = x$; for every $l \neq i, j$, set $y_l$ to the minimal available support point of $\anonySymmMarginal_{l}$\;
                    
                    $m \gets \min\left\{ A,\ \min_{l} \anonySymmMarginal_{l}(y_l) \right\}$\;
                    
                    $\correlaSignalProb_k(\vec{y}) \gets m$\;
                    
                    $f_l(y_l) \gets f_l(y_l) - m$ for all $l \in [\bidderNum]$; If $f_l(y_l) = 0$, mark $f_l(y_l)$ as unavailable\;
                    
                    $A \gets A - m$\;
                }
            }
        }
    }
    \Return{$\correlaSignalProb_k$}\;
\end{algorithm}

One important implication of \Cref{prop:opt cor general k} is: under the optimal correlation, at least two bidders share the same highest bid for every realized bid profile. 
We will leverage this property in subsequent subsections to analyze both the optimal marginals and the optimal calibrated signaling.
We summarize this observation as follows:
\begin{corollary}[Multi-maximal bid profile for general $k$]
\label{cor:equal secmax general k}
Given any marginals $\anonySymmMarginal_{k, 1}, \anonySymmMarginal_{k, 0}$, 
at least two bidders share the same highest bid for every realized bid vector in the optimal correlation.
\end{corollary}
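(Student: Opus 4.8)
The plan is to read the claim off the explicit optimal correlation built by \Cref{alg: Optimal Correlation}: by \Cref{prop:opt cor general k}, for $k\neq 1,\bidderNum-1$ this algorithm outputs an optimal $\coroptCorSignalProb_k\in\transPlan(\anonySymmMarginal_{k,1},\anonySymmMarginal_{k,0})$, so it suffices to show that every bid vector in $\supp(\coroptCorSignalProb_k)$ is multi-maximal. Probability mass is assigned only inside the innermost \texttt{while}-loop, where a vector $\vec y$ is formed by setting $y_i=y_j=x$ for the pair $(i,j)$ currently being processed at the value $x$ of the current outer iteration, and setting every other coordinate $y_l$ to the minimal still-available support point of its marginal $\anonySymmMarginal_l$. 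Hence it is enough to show that at that moment all the filler coordinates satisfy $y_l\le x$; then $\max(\vec y)=x$ and this maximum is attained by (at least) the two bidders $i$ and $j$.

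The key step I would carry out first is a descending-order invariant: once the outer iteration handling a value $x'$ has finished, no marginal $\anonySymmMarginal_l$ retains any mass on $[x',1]$. This is exactly the mass-accounting already present in the proof of \Cref{prop:opt cor general k}. Indeed, within the iteration for $x'$ and outcome $o$, the pairing set $S$ is chosen so that every bidder of $S_o$ appears in exactly two pairs; therefore $|S|=|S_o|$, each of the $|S|$ pairs is charged $\tfrac12\anonySymmMarginal_{k,o}(x')$ of probability with both members bidding $x'$, and the total mass placed at $x'$ on bidders with outcome $o$ is $2|S|\cdot\tfrac12\anonySymmMarginal_{k,o}(x')=|S_o|\,\anonySymmMarginal_{k,o}(x')$. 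Since $|S_o|$ equals $k$ when $o=1$ and $\bidderNum-k$ when $o=0$, this matches precisely the aggregate marginal requirement of those $|S_o|$ bidders at the point $x'$, so the point is exhausted for all of them. Combining the two outcomes and inducting over the values processed so far (which are exactly those $\ge x'$, since the outer loop runs in decreasing order) gives the invariant. Consequently, when the algorithm reaches $x$, every available support point of every $\anonySymmMarginal_l$ lies in $[0,x]$, so each filler coordinate $y_l\le x$, which is what was needed.

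Putting this together, $\supp(\coroptCorSignalProb_k)$ is contained in the finite family of vectors produced across all executions of the \texttt{while}-loop, and each of them has its maximum shared by the paired bidders $i,j$; thus at least two bidders submit the same highest bid in every realized bid vector under $\coroptCorSignalProb_k$. I expect the only genuinely delicate point to be the exhaustion invariant — verifying that each value is completely consumed within its own outer iteration and that nothing strictly above $x$ survives to that iteration — but this is precisely the bookkeeping underlying the correctness of \Cref{alg: Optimal Correlation} asserted in \Cref{prop:opt cor general k}, so it can be invoked rather than re-established. As an alternative route avoiding the algorithm, one could argue intrinsically: by \Cref{prop:opt cor general k} the induced $\coroptSecmaxProb_k$ is supported on $[\minsecmax_k,1]$ and, for $x>\minsecmax_k$, puts mass equal to exactly half of the total marginal mass at $x$, so the Markov step in the proof of \Cref{lem:upper bound for secmax prob} is tight, forcing $\#\{i:x_i\ge x\}\in\{0,2\}$ almost surely for every $x>\minsecmax_k$; a short case split on whether $\max(\vec y)$ strictly exceeds or equals $\minsecmax_k$ then rules out a unique maximizer.
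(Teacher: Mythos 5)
Your primary argument is correct and is essentially the paper's: the corollary is read directly off the construction in \Cref{alg: Optimal Correlation}, where each realized vector has two paired coordinates equal to the current value $x$ and all filler coordinates set to minimal available support points, which are at most $x$ because the values are processed in descending order and each value is exhausted within its own iteration (the exhaustion bookkeeping you spell out is exactly what underlies the feasibility step in the paper's proof of \Cref{prop:opt cor general k}). Your alternative route via tightness of the Markov step in \Cref{lem:upper bound for secmax prob} is also sound and would in fact apply to any revenue-optimal correlation, but it is not needed for the statement as used in the paper.
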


\subsection{Optimal Correlation for \texorpdfstring{$k \in\{1, \bidderNum-1\}$}{special k}}
While the second-highest bid marginal of optimal correlation for the case of $k\neq 1, \bidderNum-1$ admits the above succinct form, the optimal correlation for the case of $k = 1$ or $\bidderNum-1$ is significantly more involved. 
Below, we provide an example to illustrate the underlying complexity.
\begin{example}
\label{ex:opt cor special k} 
Consider $\bidderNum = 4, k = 1$ (the case for $k=\bidderNum-1$ follows similarly), and same marginals used in \Cref{ex:opt cor general k}. 
Suppose $\anonySymmMarginal_{1,1} = 0.5 \cdot \delta_{(1)} + 0.5 \cdot \delta_{(0.8)}$, $\anonySymmMarginal_{1,0} = 0.2 \cdot \delta_{(0.2)} + 0.8 \cdot \delta_{(0)}$.
In \Cref{ex:opt cor general k} with $k=2$, we can construct a feasible correlation that can induce a second-highest bid of $1$. 
However, when $k=1$, we cannot achieve this as no matter how we correlate $\anonySymmMarginal_{1, 1}, \anonySymmMarginal_{1, 0}$, and there is only one bidder that can contribute to form a bid of $1$. 
Hence, the second-highest bid must be strictly below $1$.
In other words, the construction of the optimal correlation from \Cref{alg: Optimal Correlation} (which relies on having at least two high‐value bidders) does not work here.
\end{example}

Despite the complexity illustrated above, we can still obtain some useful insights about the structure of the optimal correlation for any given marginals $\anonySymmMarginal_{k, 1}, \anonySymmMarginal_{k, 0}$ when $k = 1$ or $\bidderNum-1$.
For simplicity, we describe the analysis and results for the case of $k = 1$; the other case follows similarly.
Let $\correlaSignalProb_1\in\transPlan(\anonySymmMarginal_{1, 1}, \anonySymmMarginal_{1, 0})$ denote any feasible correlation plan for marginals \( \anonySymmMarginal_{1, 1}, \anonySymmMarginal_{1, 0}\). 
For any $\vec{\expectedVal} \in \supp(\correlaSignalProb_1)$, 
let bidder $i\in[\bidderNum]$ be the first bidder with the highest bid 
and bidder $j\in[\bidderNum]\setminus \{i\}$ be the first bidder with the second-highest bid. 
We can categorize \( \vec{x} \) into three types:
\begin{itemize}
    \item \textbf{Type 1}: 
    $i = 1, j > 1$, 
    and we define $\xType_1 = \{\vec{\expectedVal}: \correlaSignalProb_1(\vec{\expectedVal}) > 0\ \&\  \vec{\expectedVal} \text{ belongs to Type 1}\}$.
    \item \textbf{Type 2}: 
    $i > 1, j = 1$, 
    and we define $\xType_2 = \{\vec{\expectedVal}: \correlaSignalProb_1(\vec{\expectedVal}) > 0\ \&\  \vec{\expectedVal} \text{ belongs to Type 2}\}$.
    \item \textbf{Type 3}: 
    $i > 1, j > 1$, 
    and we define $\xType_3 = \{\vec{\expectedVal}: \correlaSignalProb_1(\vec{\expectedVal}) > 0\ \&\  \vec{\expectedVal} \text{ belongs to Type 3}\}$.
\end{itemize}
With the above definition, our first observation is that for any expected outcome profile, whenever the first bidder has an expected outcome that is not among the top two highest values, the highest bid must equal the second-highest bid.
\begin{lemma}
    \label{lem:case 3 max = secmax}
    Given any pair of marginals \( \anonySymmMarginal_{1, 1}, \anonySymmMarginal_{1, 0} \), let $\coroptCorSignalProb_1 \in \transPlan(\anonySymmMarginal_{1, 1}, \anonySymmMarginal_{1, 0})$ be an optimal correlation. 
    Then for any $\vec{\expectedVal}\sim \coroptCorSignalProb_1$ with $\vec{\expectedVal}\in \xType_3$, it follows that $\mymax(\vec{\expectedVal}) =  \secmax(\vec{\expectedVal}).$
\end{lemma}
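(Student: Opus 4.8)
The plan is to argue by contradiction: suppose $\vec{\expectedVal} \in \xType_3$ is in the support of the optimal correlation $\coroptCorSignalProb_1$ but $\mymax(\vec{\expectedVal}) > \secmax(\vec{\expectedVal})$. Since $\vec{\expectedVal}$ is of Type 3, the first bidder is not among the top two bids, so both the unique highest bid and the second‑highest bid are carried by bidders in $[\bidderNum]\setminus\{1\}$, all of whom draw from the marginal $\anonySymmMarginal_{1,0}$. I would then exhibit a local modification of $\coroptCorSignalProb_1$ that keeps all marginals fixed (hence stays in $\transPlan(\anonySymmMarginal_{1,1},\anonySymmMarginal_{1,0})$) but strictly increases the expected second‑highest bid, contradicting optimality.

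Concretely, the key step is a swap argument. Because the highest value $a := \mymax(\vec{\expectedVal})$ strictly exceeds the second value and both $a$ and the second value lie in $\supp(\anonySymmMarginal_{1,0})$ (carried by two distinct bidders $i,j\neq 1$ in profile $\vec{\expectedVal}$), the mass $a$ must also appear ``somewhere else'' in a configuration where it is \emph{not} wasted as a lone maximum — or, if it never does, one can shuffle the realizations of bidders drawing from $\anonySymmMarginal_{1,0}$ so that two of them simultaneously realize $a$ in some profile, raising the second‑highest bid there from below $a$ to $a$. The cleanest route: pick any other profile $\vec{\expectedVal}'$ in $\supp(\coroptCorSignalProb_1)$ whose second‑highest bid is strictly less than $a$ and which assigns value $a$ to at most one coordinate; transfer a small mass $\eps$ by rematching the coordinates carrying value $a$ and the next‑highest values between $\vec{\expectedVal}$ and $\vec{\expectedVal}'$ so that in the modified profiles two coordinates equal $a$. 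One verifies that each bidder's marginal is unchanged (we only permute which profile a given $(\text{bidder},\text{value})$ assignment sits in, among bidders with identical marginal $\anonySymmMarginal_{1,0}$), and that the total $\int x\,\coroptSecmaxProb_1(x)\,\dd x$ strictly increases. This contradicts $\coroptCorSignalProb_1$ being optimal. The degenerate case — where $a$ appears in exactly one coordinate across the \emph{entire} support and no ``recipient'' profile exists — is handled separately: then $a$ never contributes to revenue at all, and one can simply lower that single realization down to the current value of $\secmax(\vec{\expectedVal})$ in profile $\vec{\expectedVal}$ (again matching some existing support point of $\anonySymmMarginal_{1,0}$, or by a compensating adjustment to preserve the marginal), which leaves revenue unchanged from that profile but frees mass that can be redeployed — more simply, it shows we may assume WLOG no such wasted maximum exists, and then the first case applies.

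I expect the main obstacle to be the bookkeeping in the rematching step: one must be careful that after transferring mass between $\vec{\expectedVal}$ and $\vec{\expectedVal}'$, \emph{every} bidder's marginal is exactly preserved, not just the two bidders whose values we swapped. The trick is that all bidders $2,\dots,\bidderNum$ share the common marginal $\anonySymmMarginal_{1,0}$, so a permutation of values among these bidders within a fixed (multi)set of profile‑configurations is marginal‑preserving; the only genuinely new structure is that we merge two configurations (one with a lone max $a$, one with max $<a$) into two configurations each having two copies of $a$, which redistributes mass across value‑profiles but not across bidders. Making this ``merge'' rigorous — in particular ensuring the resulting object is still a valid probability measure with nonnegative mass everywhere, which forces $\eps$ to be small enough and forces the existence of the recipient profile $\vec{\expectedVal}'$ (or reducing to the case where it exists) — is where the care is needed. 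Once that is set up, the strict revenue improvement and the contradiction with optimality are immediate.
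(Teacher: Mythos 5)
Your overall strategy---a contradiction via a mass-rearranging swap that exploits the fact that all bidders $2,\dots,\bidderNum$ share the marginal $\anonySymmMarginal_{1,0}$---is the same mechanism the paper uses. But there is a genuine gap exactly where you flagged the difficulty: the existence of the ``recipient'' profile, and your fallback when it does not exist. The paper resolves this cleanly by first taking the optimal correlation to be permutation-anonymous over the coordinates $2,\dots,\bidderNum$ (a without-loss step, since these coordinates have identical marginals, so averaging over permutations preserves both feasibility and the objective). Once the plan is symmetric, the partner profile is simply $\vec{\expectedVal}_{\permutation}$ for the transposition $\permutation$ swapping the coordinate carrying $a=\mymax(\vec{\expectedVal})$ with the coordinate carrying $b=\secmax(\vec{\expectedVal})$; it is guaranteed to exist with \emph{exactly the same mass} $p$ as $\vec{\expectedVal}$. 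Swapping one coordinate's value between $\vec{\expectedVal}$ and $\vec{\expectedVal}_{\permutation}$ yields the profiles $(\dots,a,a,\dots)$ and $(\dots,b,b,\dots)$, each with mass $p$; every coordinate's marginal is untouched (coordinate by coordinate, the multiset of values over the two profiles is unchanged), and the second-highest bid contribution rises from $2bp$ to $(a+b)p$, a strict gain. Your proposal never establishes this symmetry, so the recipient profile with a single copy of $a$ at an appropriate coordinate need not exist for an arbitrary feasible plan, and ``at most one coordinate equal to $a$'' even allows zero copies, in which case no rematching can produce two copies of $a$.

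The fallback you offer for the degenerate case is not salvageable as stated: lowering the lone realization $a$ down to $\secmax(\vec{\expectedVal})$ changes that bidder's marginal, and in this lemma the marginals $\anonySymmMarginal_{1,1},\anonySymmMarginal_{1,0}$ are \emph{fixed inputs} to the transport problem---modifying a support point is categorically unavailable here (it belongs to the later optimization over marginals, not to \ref{eq:opt correlation}). The ``compensating adjustment'' you gesture at would have to raise some other realization from $b$ to $a$ elsewhere, which is precisely the swap you were unable to guarantee. So the fix is not to handle the degenerate case separately but to eliminate it up front via symmetrization. With that one preliminary step added, your main-case argument instantiated with $\vec{\expectedVal}'=\vec{\expectedVal}_{\permutation}$ becomes the paper's proof.
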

With \Cref{lem:case 3 max = secmax}, we can represent a feasible correlation $\correlaSignalProb_1$ as follows:
\begin{definition}
\label{def:transition function}
    Given any pair of marginals $\anonySymmMarginal_{1, 1}, \anonySymmMarginal_{1, 0}$ and any correlation plan $\correlaSignalProb_1\in\transPlan(\anonySymmMarginal_{1, 1}, \anonySymmMarginal_{1, 0})$.  For any $x \in \supp(\anonySymmMarginal_{1, 1})$ and $y \in \supp(\anonySymmMarginal_{1, 0})$, 
    we define the probability $\transfunction_1(x, y)$ and the probability $\selffunction_1(y)$ as follows:\footnote{Here the subscript $1$ in $\transfunction_1(x, y), \selffunction_1(y)$ is because they are defined w.r.t.\ the case of $k=1$.}
    \begin{align*}
        \transfunction_1(x, y) 
        & = \int_{\vec{\expectedVal} \in \xType_1: \mymax(\vec{x}) = x, \secmax(\vec{x}) = y \text{ or } 
        \vec{\expectedVal} \in \xType_2: \mymax(\vec{x}) = y, \secmax(\vec{x}) = x} \correlaSignalProb_1(\vec{\expectedVal})~\mathrm{d}\vec{\expectedVal}~.\\
        \selffunction_1(y) 
        & = \int_{\vec{\expectedVal} \in \xType_3: \mymax(\vec{x}) = y, \secmax(\vec{x}) = y} \correlaSignalProb_1(\vec{\expectedVal})~\mathrm{d}\vec{\expectedVal} ~.
    \end{align*}
\end{definition}
Intuitively, given a feasible correlation $\correlaSignalProb_1$, the probability $\transfunction_1(x, y)$ represents the likelihood that the top two expected values are $(x, y)$ in an expected outcome profile $\vec{\expectedVal}\sim \correlaSignalProb_1$, with bidder $1$ having an expected outcome either of $x$ or $y$.
Similarly, $\selffunction_1(y)$ denotes the probability that bidder $1$'s expected outcome does not rank among the top two highest bids.
These definitions of $\transfunction_1, \selffunction_1$ illustrate how the marginals  $\anonySymmMarginal_{1, 1}, \anonySymmMarginal_{1, 0}$ are correlated to form a joint distribution $\correlaSignalProb$.
By treating these probabilities as decision variables, we can formulate a program that characterizes the optimal correlation structure given the marginals $\anonySymmMarginal_{1, 1}, \anonySymmMarginal_{1, 0}$. 
\begin{proposition}[Optimal correlation for $k = 1$]
\label{prop:opt cor k=1}
For the case of $k = 1$, given any pair of marginals $\anonySymmMarginal_{1, 1}, \anonySymmMarginal_{1, 0}$,
the value of the optimal transport defined in \ref{eq:opt correlation} can be solved by the following linear program:
\begin{align}
    \label{eq:opt cor k = 1}
    \arraycolsep=5.4pt\def\arraystretch{1}
    \tag{$\mathcal{P}_{\textsc{CorrLP}}$}
    \begin{array}{lll}
    \max\limits_{\transfunction_1(\cdot, \cdot) \ge \vec{0},~ \selffunction_1(\cdot) \ge \zerobf} 
    & \displaystyle\int_{\minsecmax_1}^1\int_{\minsecmax_1}^1 \min(x, y)\transfunction_1(x, y) ~\mathrm{d}(x, y) + 
    \int_{\minsecmax_1}^1 y \cdot \selffunction_1(y) ~\mathrm{d}y
    & 
    \vspace{1mm}
    \\
    \qquad~  
    \text{s.t.} & 
    \displaystyle
    \int_{\minsecmax_1}^1 \transfunction_1(x, y) ~\mathrm{d}y \leq \anonySymmMarginal_{1,1}(x)~,
    \phantom{\le (\bidderNum - 1)\anonySymmMarginal_{1, 0}(y)~, \quad} 
    &x \in [\minsecmax_1, 1]
    \vspace{1mm}
    \\
    & 
    \displaystyle
    \int_{\minsecmax_1}^1 \transfunction_1(x, y) ~\mathrm{d}\expectedVal + 2 \cdot \selffunction_1(y) \le (\bidderNum - 1)\anonySymmMarginal_{1, 0}(y), 
    &y \in [\minsecmax_1, 1]
    \vspace{1mm}
    \\
    &
    \displaystyle
    \int_{\minsecmax_1}^1\int_{\minsecmax_1}^1\transfunction_1(x, y)~\mathrm{d}x~\mathrm{d}y + \int_{\minsecmax_1}^1\selffunction_1(x) ~\mathrm{d}x
    = 1~.
    \vspace{1mm}
    \end{array}
\end{align}
\end{proposition}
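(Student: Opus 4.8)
The plan is to establish the two inequalities $\RevCorr{\anonySymmMarginal_{1,1},\anonySymmMarginal_{1,0}}\le\nu$ and $\nu\le\RevCorr{\anonySymmMarginal_{1,1},\anonySymmMarginal_{1,0}}$, where $\nu$ denotes the optimal value of \ref{eq:opt cor k = 1}; the case $k=\bidderNum-1$ is entirely symmetric and only $k=1$ is treated. By the symmetrization observations of \Cref{subsec:transportation formulation} we fix the outcome profile to $\vec{\val}=(1,0,\dots,0)$, so bidder $1$ carries marginal $\anonySymmMarginal_{1,1}$ and the $\bidderNum-1$ remaining (exchangeable) bidders each carry marginal $\anonySymmMarginal_{1,0}$. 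Two preliminary reductions are used throughout. First, arguing as in the proof of \Cref{lem:upper bound for secmax prob} (a Markov/exchange argument using that the combined mass above $\minsecmax_1$ exactly suffices to form the required pairs), one shows there is an optimal correlation $\coroptCorSignalProb_1$ whose second-highest bid is almost surely at least $\minsecmax_1$; this is what justifies restricting all the LP variables and integrals to $[\minsecmax_1,1]$. Second, \Cref{lem:case 3 max = secmax} gives that on $\xType_3$ (bidder $1$ not among the two highest bids) the highest and second-highest bids coincide, which is exactly the shape encoded by the $\selffunction_1$ term of the objective.

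\emph{Forward direction} ($\RevCorr\le\nu$). Take the optimal $\coroptCorSignalProb_1$ above and read off $\transfunction_1,\selffunction_1$ as in \Cref{def:transition function}. Since $\secmax\ge\minsecmax_1$ a.s., $\transfunction_1$ is supported on $[\minsecmax_1,1]^2$ and $\selffunction_1$ on $[\minsecmax_1,1]$, and the classes $\xType_1,\xType_2,\xType_3$ partition the support of $\coroptCorSignalProb_1$, so $\int\!\int\transfunction_1+\int\selffunction_1=1$, the equality constraint. For the first inequality constraint: on $\xType_1\cup\xType_2$ bidder $1$'s bid equals the first argument $x$ of $\transfunction_1$, so $\int\transfunction_1(x,y)\,\mathrm{d}y$ is the probability that bidder $1$ bids $x$ \emph{and} is among the top two, which is at most the marginal $\anonySymmMarginal_{1,1}(x)$. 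For the second inequality constraint: an outcome-$0$ bidder equal to $y$ occurs once as the structural bidder in each $\xType_1\cup\xType_2$ outcome at that value (contributing $\int\transfunction_1(x,y)\,\mathrm{d}x$) and twice in each $\xType_3$ outcome with highest bid $y$ (contributing $2\selffunction_1(y)$), and the sum of these structural occurrences is at most the total outcome-$0$ mass at $y$, i.e.\ $(\bidderNum-1)\anonySymmMarginal_{1,0}(y)$. Finally, on $\xType_1\cup\xType_2$ the second-highest bid equals $\min(x,y)$ and on $\xType_3$ it equals $y$, so $\RevCorr{\anonySymmMarginal_{1,1},\anonySymmMarginal_{1,0}}=\expect[\vec{\expectedVal}\sim\coroptCorSignalProb_1]{\secmax(\vec{\expectedVal})}$ is precisely the \ref{eq:opt cor k = 1} objective at $(\transfunction_1,\selffunction_1)$, hence $\le\nu$.

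\emph{Reverse direction} ($\nu\le\RevCorr$). Given any feasible $(\transfunction_1,\selffunction_1)$, I would build a feasible correlation $\correlaSignalProb_1\in\transPlan(\anonySymmMarginal_{1,1},\anonySymmMarginal_{1,0})$ of revenue equal to the objective. The key bookkeeping observation: by \Cref{def:minimum secmax} the total bid mass strictly above $\minsecmax_1$ summed over all bidders is $\le 2$ while the ``structural'' usage demanded by $(\transfunction_1,\selffunction_1)$ is $2\int\!\int\transfunction_1+2\int\selffunction_1=2$ by the equality constraint, so (together with the two inequality constraints being $\ge0$) the leftover mass of bidder $1$ and of the aggregated outcome-$0$ bidders is supported on $[0,\minsecmax_1]$; moreover each profile has exactly $\bidderNum-2$ non-top positions and there are exactly $\bidderNum-2$ units of such leftover mass, so the two balance exactly. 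Concretely: for each $(x,y)$ place mass $\transfunction_1(x,y)$ on profiles where bidder $1$ bids $x$, one outcome-$0$ bidder bids $y$, and the larger of $x,y$ is the highest bid; for each $y$ place mass $\selffunction_1(y)$ on profiles where two outcome-$0$ bidders bid $y$; then fill every remaining position with the leftover low mass, cycling the role assignments among the $\bidderNum-1$ exchangeable outcome-$0$ bidders exactly as in \Cref{alg: Optimal Correlation} so that each outcome-$0$ marginal comes out equal to $\anonySymmMarginal_{1,0}$ and bidder $1$'s equal to $\anonySymmMarginal_{1,1}$. As every non-top bid is at most $\minsecmax_1\le\secmax$ of its profile, it never enters the top two (up to the usual care at an atom of the combined marginal at $\minsecmax_1$, handled as in \Cref{prop:opt cor general k}), so the induced second-highest bid distribution has expectation equal to the objective, giving $\nu\le\RevCorr$.

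\emph{Main obstacle.} The delicate part is the reverse construction: showing the cyclic role-assignment among the $\bidderNum-1$ exchangeable outcome-$0$ bidders can be carried out so that \emph{all} their marginals, and bidder $1$'s, come out exactly $\anonySymmMarginal_{1,0}$ and $\anonySymmMarginal_{1,1}$ while no non-top position ever receives a bid exceeding the profile's second-highest value --- this is exactly where the tight mass-accounting tied to the definition of $\minsecmax_1$ (total high mass $=2$, total low mass $=\bidderNum-2$) is essential, and where the atom-at-$\minsecmax_1$ edge case must be treated with care. On the forward side the only nontrivial ingredient is the reduction that an optimal correlation may be taken with $\secmax\ge\minsecmax_1$ almost surely; everything after that is bookkeeping with \Cref{def:transition function}.
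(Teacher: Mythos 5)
Your proposal follows essentially the same route as the paper's proof: both directions are established by the same correspondence --- reading off $\transfunction_1,\selffunction_1$ from a correlation via the $\xType_1,\xType_2,\xType_3$ decomposition (after reducing, via \Cref{lem:case 3 max = secmax} and the companion lemma bounding third-highest values, to correlations with $\secmax\ge\minsecmax_1$ and $\mymax=\secmax$ on $\xType_3$), and conversely constructing a correlation from a feasible LP point by the cyclic assignment of \Cref{alg: Optimal Correlation} with leftover low mass in the non-top positions. The mass-accounting and the atom-at-$\minsecmax_1$ caveats you flag are exactly the points the paper's construction also has to handle, so no substantive difference.
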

The objective of the above program is to reformulate the seller's expected revenue, expressed as $\int \secmax(\vec{\expectedVal}) \correlaSignalProb_1(\vec{\expectedVal}) ~\mathrm{d}\vec{\expectedVal}$, using the probabilities $\transfunction_1, \selffunction_1$.
The first two constraints in \ref{eq:opt cor k = 1} ensure feasibility by matching the marginal constraints of the transport problem defined in \Cref{def: feasible transportation plan}.
The third constraint arises from the fact that $\correlaSignalProb_1$ is a probability distribution.
\Cref{lem:case 3 max = secmax} and 
\Cref{def:transition function} establish that for any feasible correlation $\correlaSignalProb_1$, we can construct a feasible solution $\transfunction_1, \selffunction_1$ to the program \ref{eq:opt cor k = 1} with equal objective value.
The proof of \Cref{prop:opt cor k=1} also includes the reverse direction that given any feasible solution of \ref{eq:opt cor k = 1}, one can also construct a feasible correlation that achieves the same revenue.

We will later see, in \Cref{subsec:opt marginal}, how to use the program \ref{eq:opt cor k = 1}, together with analyzing the optimal marginals, to characterize the structure of the optimal correlation under the optimal marginals.
The analysis of the optimal correlation for the case of $k\neq 1, \bidderNum-1$ (including the proof of \Cref{prop:opt cor general k}) is provided in  \Cref{subsubsec:proof prop:opt cor general k}, 
and the analysis of the optimal correlation for the case of $k= 1, \bidderNum-1$ (including the proof of \Cref{prop:opt cor k=1}) is provided in   \Cref{subsubsec:proof prop:opt cor k=1}.

\section{Solving the Optimal Feasible Marginals}
\label{subsec:opt marginal}

\newcommand{\optCorr}{\correlaSignalProb^\star}

\newcommand{\newfeaAnonyMarginals}{\feaAnonyMarginals^\dagger}
\newcommand{\sumhelper}{A}

The previous section has analyzed how to characterize the optimal correlation given any two marginals $\anonySymmMarginal_{k, 1}, \anonySymmMarginal_{k, 0}$. 
With these characterizations, in this section, we discuss how to optimize the marginals to obtain an optimal private calibrated signaling. 

As a mirror result to \Cref{cor:equal secmax general k}, our first observation is that it is without loss to consider the optimal marginals 
$\optMarginal_{1,1}, \optMarginal_{1,0}$ 
such that the corresponding optimal correlation $\optCorr_1\in\transPlan(\optMarginal_{1,1}, \optMarginal_{1,0})$ only generates multi-maximal bid profiles
(the case $k = \bidderNum-1$ holds similarly). 
\begin{proposition}[Multi-maximal bid profile for special $k$]
\label{prop:equal secmax special k}
    There exists a pair of optimal marginals $\optMarginal_{1,1}, \optMarginal_{1,0}$ 
    such that every realized bid profile $\vec{\expectedVal}\sim \optCorr_1$ is multi-maximal under the optimal correlation 
    $\optCorr_1\sim\transPlan(\optMarginal_{1,1}, \optMarginal_{1,0})$.
\end{proposition}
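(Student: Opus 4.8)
The plan is to prove \Cref{prop:equal secmax special k} by reducing it to a statement about the linear program~\ref{eq:opt cor k = 1} and then applying a value-lowering surgery to the marginals. By \Cref{lem:case 3 max = secmax}, under an optimal correlation $\optCorr_1$ every Type-$3$ profile (bidder $1$ outside the top two bids) already satisfies $\mymax(\vec{\expectedVal}) = \secmax(\vec{\expectedVal})$ and is multi-maximal. Since $k=1$ forces bidder $1$ to be the unique outcome-$1$ bidder, a profile $\vec{\expectedVal}\sim\optCorr_1$ fails multi-maximality only when bidder $1$ is the sole highest bidder ($\vec{\expectedVal}\in\xType_1$, $\mymax>\secmax$) or the sole second-highest bidder ($\vec{\expectedVal}\in\xType_2$, $\mymax>\secmax$); in the notation of \Cref{def:transition function} these are exactly the profiles carried by off-diagonal mass, i.e.\ $\transfunction_1(x,y)>0$ with $x\neq y$, where $x$ is bidder $1$'s bid (from $\anonySymmMarginal_{1,1}$) and $y$ the other top bid (from $\anonySymmMarginal_{1,0}$). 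Hence it suffices to produce an optimal pair $(\optMarginal_{1,1},\optMarginal_{1,0})$ whose optimal $\transfunction_1$ is supported on the diagonal $\{x=y\}$, since then every realized profile is either Type-$3$ or a genuine tie.

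Starting from an arbitrary optimal collection of marginals together with an optimal $\optCorr_1$ (equivalently an optimal $(\transfunction_1,\selffunction_1)$ in~\ref{eq:opt cor k = 1}), suppose $\transfunction_1(x,y)>0$ for some $x\neq y$. I would remove this off-diagonal mass by moving bidder $1$'s bid onto the tie value. If $x>y$, the bid $x$ merely exceeds the price $y$ and is ``wasted'': lower bidder $1$'s bid from $x$ to $y$ on a sliver $\eta$ of these profiles and transfer the corresponding probability from $x$ to $y$ in $\anonySymmMarginal_{1,1}$; then two bidders bid $y$, the profile is multi-maximal, and the second-highest bid is unchanged. If $x<y$, an outcome-$0$ bidder is the sole max at $y$; raise a sliver of bidder $1$'s probability from $x$ to $y$ so it ties that bidder, which makes the profile multi-maximal and \emph{weakly increases} the second-highest bid. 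After re-solving the inner problems for the modified marginals --- via \Cref{alg: Optimal Correlation} on the untouched coordinates $k\neq 1$ and via~\ref{eq:opt cor k = 1} on $k=1$ --- one obtains a feasible collection with revenue no smaller than before and strictly smaller off-diagonal $k=1$ mass. Taking, among all optimal tuples, one that minimizes $\int\!\!\int_{x\neq y}\transfunction_1(x,y)\,\mathrm{d}(x,y)$, or iterating and passing to the limit, then drives this mass to $0$ and proves the proposition; the case $k=\bidderNum-1$ is identical after swapping the roles of the outcome-$0$ and outcome-$1$ bidders.

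The main obstacle is that relocating $\anonySymmMarginal_{1,1}$-mass between two values perturbs the pooled numerator in the calibration identity~\eqref{eq:bc for marginals}, so the pooled outcome-$0$ weight $\sum_{k}\anonyDen_k(\bidderNum-k)\anonySymmMarginal_{k,0}(\cdot)$ must be re-balanced at the same two values by the \emph{distinct} factors $\tfrac{1-x}{x}$ and $\tfrac{1-y}{y}$ for the collection to remain in $\feaAnonyMarginals$ (see \Cref{lem:feasible marginals}). The heart of the proof is to show this re-balancing can always be carried out at no revenue cost: one natural route is a coordinated reshuffle of outcome-$0$ mass across the marginals, absorbing the (small) surplus or deficit through values below $\minsecmax_1$, where probability mass never affects the second-highest bid, and exploiting the boundary structure forced by~\eqref{eq:bc for marginals} (no outcome-$1$ mass can sit at $0$ and no outcome-$0$ mass at $1$). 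One must then check that feasibility is preserved, that $\sum_{k}\anonyDen_k\,\RevCorr{\anonySymmMarginal_{k,1},\anonySymmMarginal_{k,0}}$ does not drop --- the $k=1$ term is non-decreasing by design, and by \Cref{cor:equal secmax general k} all other coordinates stay multi-maximal under any perturbation of their marginals, so only their transport values need tracking --- and that the procedure terminates. The remaining boundary cases ($y=0$, $x$ or $y$ equal to the atom created at $\minsecmax_1$, or a sliver exceeding the available mass) are routine.
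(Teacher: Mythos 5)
Your reduction is the right one --- multi-maximality for $k=1$ fails exactly on the off-diagonal mass of $\transfunction_1$, so it suffices to exhibit optimal marginals whose optimal transition function is supported on $\{x=y\}$; this is precisely the paper's \Cref{lem: transition property x = y}. The gap is in the surgery. Moving a sliver $\eta$ of $\optMarginal_{1,1}$ from $x$ to the tie value $y<x$ forces, by \eqref{eq:bc for marginals}, the removal of $\anonyDen_1\eta(1-x)/x$ of pooled outcome-$0$ mass at $x$ and the injection of the strictly larger amount $\anonyDen_1\eta(1-y)/y$ at $y$. The mass removed at $x$ generally belongs to some $\anonySymmMarginal_{k,0}$ with $k\neq 1$ and $x\ge\minsecmax_k$ (calibration forces outcome-$0$ mass wherever $x<1$ carries outcome-$1$ mass, and nothing places it in $\anonySymmMarginal_{1,0}$); exporting it to values below $\minsecmax_k$ costs coordinate $k$ about $\tfrac{\bidderNum-k}{2}(x-\minsecmax_k)$ per unit, while your $k=1$ term gains nothing in the $x>y$ case. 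So ``the sum does not drop'' is not a routine check --- it is the theorem, and the move as stated can strictly lose revenue. The same bookkeeping problem afflicts your $x<y$ case (which the paper rules out separately in \Cref{lem: transition property x>=y} via an extensive case analysis, not a one-line monotonicity remark), and the ``minimize off-diagonal mass among optimal tuples'' step presupposes exactly the weak improvement you have not established.

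The paper's resolution differs in two ways worth internalizing. First, it uses optimality of the marginals to show (Steps 1--3 of the proof of \Cref{lem: transition property x = y}) that no coordinate $k\neq 1$ places mass strictly inside an off-diagonal interval, so the interval $[x_l,x_1]$ spanned by a chain of effective correlations is clean. Second, instead of collapsing onto the endpoint $y$, it collapses the entire chain onto the calibration-weighted centroid $\bar x\in(x_l,x_1)$ of all mass in the interval; by construction this preserves \eqref{eq:bc for marginals} exactly, with no import or export of outcome-$0$ mass and hence no side effects on other coordinates, and a direct computation (the contribution $r(x)$ is strictly below $\tfrac12\sum_k\anonyDen_k\, k\,\optMarginal_{k,1}(x)$ because the unmatched part of $\optMarginal_{1,1}(x)$ is only paid at the next, lower chain value) shows the merge strictly increases revenue. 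The choice of the centroid rather than the endpoint is the idea your proposal is missing.
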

Previous \Cref{cor:equal secmax general k} for general $k$ is a direct implication of the optimal correlation given any arbitrary marginals $\anonySymmMarginal_{k, 1}, \anonySymmMarginal_{k, 0}$, namely, it is established without requiring that these marginals are indeed maximizing the seller's revenue. 
However, for the special case of $k = 1$ (or $k = \bidderNum - 1$), directly solving the optimal correlation for arbitrary marginals $\anonySymmMarginal_{1,1}, \anonySymmMarginal_{1,0}$ may not give us \Cref{prop:equal secmax special k}.
Instead, we leverage the optimality condition of the marginals 
$\optMarginal_{1,1}, \optMarginal_{1,0}$
to establish this analogous result. 
\Cref{prop:equal secmax special k} also implicitly implies that such optimal marginals must satisfy 
$\optMarginal_{1, 1}(x) \le (\bidderNum - 1)\optMarginal_{1, 0}(x)$ for all $x\in(\optminsecmax_1, 1]$ (for the case $k = \bidderNum-1$, it implies $(\bidderNum - 1)\optMarginal_{\bidderNum-1, 1}(x) \ge \optMarginal_{\bidderNum - 1, 0}(x)$  for all $x\in(\optminsecmax_{\bidderNum-1},1]$) where $\optminsecmax_1$ is correspondingly defined in \Cref{def:minimum secmax} using marginals $\optMarginal_{1,1}, \optMarginal_{1,0}$, otherwise there exists no feasible correlation that can generate multi-maximal bid profiles.

One nice implication from \Cref{prop:equal secmax special k} is that
the seller's revenue contributed from the optimal marginals $\optMarginal_{1, 1}, \optMarginal_{1, 0}$ can be written as follows (the case for $\optMarginal_{\bidderNum-1, 1}, \optMarginal_{\bidderNum-1, 0}$ follows similarly):
\begin{align*}
    \int_{\optminsecmax_1}^1(x -\optminsecmax_1) \cdot \left(\frac{1}{2}\optMarginal_{1,1}(x) +  \frac{\bidderNum-1}{2} \cdot \optMarginal_{1,0}(x) \right) ~\mathrm{d}x + \optminsecmax_1~.
\end{align*}
This implication follows by noting that under optimal $\optMarginal_{1, 1}, \optMarginal_{1, 0}$, its corresponding function $\transfunction_1(x, y) > 0$ only when $x = y$. 
Thus, we can rewrite the objective in \ref{eq:opt cor k = 1} as the above revenue formulation.
With this observation, we can express the seller's optimal calibrated signaling problem in \ref{eq:opt via secmax marginal} in a more succinct form without concerning the correlation among the marginals:
\begin{align}
    \label{eq:opt marginal new}
    \arraycolsep=5.4pt\def\arraystretch{1}
    \tag{$\mathcal{P}_{\textsc{Marg}}^\dagger$}
    \begin{array}{lll}
    \max\limits_{(\anonySymmMarginal_{k, 1}, \anonySymmMarginal_{k, 0})_{k\in\setwZero}\in \newfeaAnonyMarginals}
    & \displaystyle
    \sum\nolimits_{k \in \setwZero} \anonyDen_k \cdot \left(\int_{\minsecmax_k}^1 (x-\minsecmax_k) \cdot \left(\frac{k}{2}\anonySymmMarginal_{k,1}(x) + \frac{\bidderNum-k}{2}\anonySymmMarginal_{k,0}(x)\right) ~\mathrm{d}x + \minsecmax_k \right)
    & 
    \end{array}
\end{align}
where $\minsecmax_k, k\in\setwZero$ is defined in \Cref{def:minimum secmax}, and $\newfeaAnonyMarginals$ includes all feasible marginals defined as in \Cref{lem:feasible marginals} but additionally subject to that $\anonySymmMarginal_{1, 1}(x) \le (\bidderNum-1)\anonySymmMarginal_{1, 0}(x)$ for $x\in(\minsecmax_1, 1]$ and 
$(\bidderNum-1)\anonySymmMarginal_{\bidderNum-1, 1}(x) \ge \anonySymmMarginal_{\bidderNum-1, 0}(x)$ for $x\in(\minsecmax_{\bidderNum-1}, 1]$. 
Although the above program \ref{eq:opt marginal new} is a not linear program (as the variable $\minsecmax_k$ -- minimum second-highest bid -- also depends on the marginals), we can obtain some structural properties for the optimal marginals which can be further used to solve the optimal calibrated signaling:

\begin{lemma}[Minimum second-highest bid]
\label{def:Minimum second-highest bid}
We define the following minimum second-highest bid defined in \Cref{def:minimum secmax} for the marginals $\optMarginal_{1, 1}, \optMarginal_{1, 0}$ and $\optMarginal_{0, 0}$:
\begin{align}
    \label{eq:opt t_1 and t_0}
    \optminsecmax_1 \triangleq \frac{ \anonyDen_1 + \sum\nolimits_{k\in[2: \bidderNum]}\anonyDen_{k}\cdot k\optprobForth_k}{ 2\anonyDen_1 + \sum\nolimits_{k\in[2: \bidderNum]}\anonyDen_{k}\cdot k\optprobForth_k}~,~~
    \optminsecmax_0 \triangleq 
    \frac{\sum\nolimits_{k\in[2: \bidderNum]}\anonyDen_{k}\cdot k\optprobFortl_k}{\anonyDen_0 + \sum\nolimits_{k\in[2: \bidderNum]}\anonyDen_{k}\cdot k\optprobFortl_k}~.
\end{align}
where $(\optprobForth_k, \optprobFortl_k)_{k\in[2, \bidderNum]}$ is any feasible solution to the following linear system:
\begin{align}
    \label{eq:linear system}
    \tag{\textsc{LinSys}}
    \begin{array}{rlll}
    \probFortl_n  & > 0~; \\
    \probForth_k, \probFortl_k & \ge 0~, \quad & k\in[2: \bidderNum] \\
    \probForth_k + \probFortl_k & = \frac{k-2}{k}~, \quad & k\in[2: \bidderNum]\\
    \sum\nolimits_{k\in[2: \bidderNum]}\anonyDen_{k}\cdot k\probForth_k & = \frac{\anonyDen_1\cdot \sum\nolimits_{k \in [2: \bidderNum]} (k-2)\anonyDen_k + 2\anonyDen_1\anonyDen_0(1-\sqrt{2})}{\anonyDen_1 + \sqrt{2}\anonyDen_0} \vee 0~, &\\
    \sum\nolimits_{k\in[2: \bidderNum]}\anonyDen_{k}\cdot k\probForth_k + \sum\nolimits_{k\in[2: \bidderNum]}\anonyDen_{k}\cdot k\probFortl_k & = \sum\nolimits_{k \in [2: \bidderNum]} (k-2)\anonyDen_k~. &
    \end{array}
\end{align}
\end{lemma}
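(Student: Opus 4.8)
\emph{Proof plan.} The plan is to derive \Cref{def:Minimum second-highest bid} from the reduced program \ref{eq:opt marginal new} in three stages: (i) pin down the support of the optimal feasible marginals $(\optMarginal_{k,1},\optMarginal_{k,0})_{k\in\setwZero}$; (ii) read the two thresholds $\optminsecmax_1,\optminsecmax_0$ off the calibration identity \eqref{eq:bc for marginals}; and (iii) reduce the remaining choice to a one-parameter concave program whose optimizer is exactly what the linear system \ref{eq:linear system} encodes, then verify feasibility of \ref{eq:linear system} and the claimed ranges. Throughout I use that, by \Cref{cor:equal secmax general k} and \Cref{prop:equal secmax special k}, it is without loss to restrict to marginals whose optimal correlation (characterized in \Cref{prop:opt cor general k} and \Cref{prop:opt cor k=1}) produces only multi-maximal bid profiles, so that \ref{eq:opt marginal new} applies.

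\emph{Step 1 (support of the optimal marginals).} For each $k\ge2$ I would first argue that the optimum takes $\minsecmax_k=1$: the calibration identity \eqref{eq:bc for marginals} at $x=1$ forces $\anonySymmMarginal_{k,0}(1)=0$ for every $k$, so placing weight $2/k\le 1$ of $\anonySymmMarginal_{k,1}$ on the bid $1$ already gives $\tfrac{k}{2}\anonySymmMarginal_{k,1}(1)\ge 1$, hence $\minsecmax_k=1$ by \Cref{def:minimum secmax} and the $k$-th objective term equals the constant $1$; devoting more outcome-$1$ mass to the bid $1$ is strictly wasteful. Thus for $k\ge2$ the residual per-bidder outcome-$1$ weight $\tfrac{k-2}{k}$ of $\optMarginal_{k,1}$ sits on bids in $[0,1)$, and by a pooling argument, using that \eqref{eq:bc for marginals} is linear in the marginals and that the $k=0,1$ objective terms favour larger thresholds, it is without loss to place all of this residual mass on exactly two bid levels, which I call $\optminsecmax_1$ and $\optminsecmax_0$, with per-bidder weights $\probForth_k,\probFortl_k$ satisfying $\probForth_k+\probFortl_k=\tfrac{k-2}{k}$ and $\probForth_k,\probFortl_k\ge0$; the normalization $\probFortl_{\bidderNum}>0$ records that the $k=\bidderNum$ profile must feed positive mass into the $k=0$ marginal at $\optminsecmax_0$ (so that $\optminsecmax_0>0$ whenever $\anonyDen_0>0$).

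\emph{Steps 2 and 3 (thresholds, the split, feasibility, ranges).} By \Cref{prop:equal secmax special k}, $\optMarginal_{1,1}$ is the point mass at $\optminsecmax_1$, $\optMarginal_{1,0}$ carries exactly the minimal weight at $\optminsecmax_1$ (carrying more only lowers the threshold and hurts the objective) with the rest at $0$, and $\optMarginal_{0,0}$ carries exactly the minimal weight at $\optminsecmax_0$ with the rest at $0$. Substituting these atoms into \eqref{eq:bc for marginals} at $x=\optminsecmax_1$ and at $x=\optminsecmax_0$, the outcome-$1$ numerators become $\anonyDen_1+\sum_{k\ge2}\anonyDen_k k\probForth_k$ and $\sum_{k\ge2}\anonyDen_k k\probFortl_k$, while the outcome-$0$ contributions collapse to fixed multiples of $\anonyDen_1$ and of $\anonyDen_0$; solving the two resulting scalar equations yields the closed forms \eqref{eq:opt t_1 and t_0}, and the last equality of \ref{eq:linear system} is simply conservation of recycled outcome-$1$ mass, $\sum_{k\ge2}\anonyDen_k k(\probForth_k+\probFortl_k)=\sum_{k\ge2}(k-2)\anonyDen_k$. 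Since $\optminsecmax_1,\optminsecmax_0$ depend on a feasible solution only through the aggregates $\sum_{k\ge2}\anonyDen_k k\probForth_k$ and $\sum_{k\ge2}\anonyDen_k k\probFortl_k$, which are pinned down by the last two equalities of \ref{eq:linear system}, the definition is unambiguous. Plugging \eqref{eq:opt t_1 and t_0} back into \ref{eq:opt marginal new}, the problem reduces to maximizing $\sum_{k\ge2}\anonyDen_k+\anonyDen_1\optminsecmax_1+\anonyDen_0\optminsecmax_0$ over the single aggregate $A:=\sum_{k\ge2}\anonyDen_k k\probForth_k\in[0,C]$ with $C:=\sum_{k\ge2}(k-2)\anonyDen_k$; this objective is strictly concave in $A$, and writing its first-order condition as an equality of two squared ratios, taking square roots, and solving the resulting linear equation produces exactly the closed form for $\sum_{k\ge2}\anonyDen_k k\probForth_k$ in \ref{eq:linear system}, with the truncation $\vee\,0$ handling the boundary optimum $A=0$. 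One checks this value always lies in $[0,C]$, so a per-$k$ allocation with $\probForth_k+\probFortl_k=\tfrac{k-2}{k}$, $\probForth_k,\probFortl_k\ge0$, $\probFortl_{\bidderNum}>0$ exists (the per-$k$ boxes have total capacity $C$), i.e. \ref{eq:linear system} is feasible; and $\optminsecmax_1=1-\tfrac{\anonyDen_1}{2\anonyDen_1+A}\in[\tfrac12,1)$ since $A\ge0$ and $\anonyDen_1>0$, while comparing the two closed forms at the optimal $A$ via the first-order condition gives $\optminsecmax_0\le\optminsecmax_1$, with $\optminsecmax_0>0$ precisely when $\anonyDen_0>0$.

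\emph{Main obstacle.} The delicate point is the two-atom reduction in Step 1: showing it is genuinely without loss to collapse all recycled outcome-$1$ mass onto the two levels $\optminsecmax_1,\optminsecmax_0$ rather than spreading it over a continuum. This needs the linearity of \eqref{eq:bc for marginals} combined with the piecewise/concave form of the optimal-correlation value $\RevCorr{\anonySymmMarginal_{k,1},\anonySymmMarginal_{k,0}}$ from \Cref{prop:opt cor general k} and \Cref{prop:opt cor k=1}, to argue that any spread-out feasible marginal can be replaced by a two-point one without revenue loss while staying feasible in the sense of \Cref{lem:feasible marginals}. A secondary subtlety is disentangling the lone aggregate degree of freedom $A$ from the per-$k$ nonnegativity constraints and confirming that only $A$ (not the individual $\probForth_k$) enters the two thresholds.
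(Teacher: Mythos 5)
Your plan follows essentially the same route as the paper's own proof: the appendix likewise first collapses the optimal marginals onto the four atoms $\{0,\optminsecmax_0,\optminsecmax_1,1\}$ through a sequence of mass-exchange lemmas (the two-atom merging step you flag as the main obstacle is handled there by precisely the concavity argument you anticipate, implemented via Titu's/Cauchy--Schwarz inequality, together with exchange arguments showing $\minsecmax_k=1$ for $k\ge 2$), then reads the thresholds off the calibration identity \eqref{eq:bc for marginals}, and finally solves the one-parameter concave program over the aggregate $A=\sum_{k\ge2}\anonyDen_k k\probForth_k$ with the same square-root first-order condition yielding the $\sqrt{2}$ expression in \ref{eq:linear system}. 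One caveat worth recording: carrying out your Step 2 with the minimal weight $\tfrac{2}{\bidderNum}$ of $\optMarginal_{0,0}$ at $\optminsecmax_0$ gives
\begin{align*}
\optminsecmax_0=\frac{\sum\nolimits_{k\in[2:\bidderNum]}\anonyDen_k k\optprobFortl_k}{2\anonyDen_0+\sum\nolimits_{k\in[2:\bidderNum]}\anonyDen_k k\optprobFortl_k}~,
\end{align*}
which matches the paper's appendix derivation and its program \ref{eq:opt marginal prog} but carries $2\anonyDen_0$ rather than the $\anonyDen_0$ in the displayed statement of \eqref{eq:opt t_1 and t_0} --- so your derivation agrees with the paper's proof, and the mismatch lies in the lemma's displayed formula rather than in your argument.
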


\begin{proposition}[Optimal marginals]
\label{prop:opt marginals}
Below characterized marginals are the optimal solution to the program \ref{eq:opt marginal new}:
\begin{align}
    & \optMarginal_{k, 1} = 
    \begin{cases}
        \displaystyle
        1 \cdot \delta_{(\optminsecmax_1)}~, & k = 1~;\\
        \displaystyle
        \frac{2}{k} \cdot \delta_{(1)} + \optprobForth_k \cdot \delta_{(\optminsecmax_1)} + \optprobFortl_k \cdot \delta_{(\optminsecmax_0)}~, & k\ge 2~;
    \end{cases}
    ~
    \optMarginal_{k,0} = 
    \begin{cases}
        \displaystyle
        \frac{2\cdot \delta_{(\optminsecmax_0)} + (\bidderNum-2)
        \cdot\delta_{(0)}}{\bidderNum}~,  & k = 0~; \\
        \displaystyle
        \frac{\delta_{(\optminsecmax_1)}+ (\bidderNum-2)\cdot \delta_{(0)}}{\bidderNum-1}~, & k = 1~;\\
        \displaystyle
        1\cdot\delta_{(0)}~, & k\ge 2~,
    \end{cases} 
\end{align}
where $\optminsecmax_1, \optminsecmax_0$ are given in Eqn.~\eqref{eq:opt t_1 and t_0} and 
$(\optprobForth_k, \optprobFortl_k)_{k\in[2: \bidderNum]}$ is a feasible solution to the linear system \ref{eq:linear system}.
\end{proposition}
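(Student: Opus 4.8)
Write $g_k \triangleq \tfrac{k}{2}\anonySymmMarginal_{k,1}+\tfrac{\bidderNum-k}{2}\anonySymmMarginal_{k,0}$, a measure of total mass $\tfrac{\bidderNum}{2}\ge 1$. By \Cref{def:minimum secmax} and \Cref{prop:opt cor general k} (and, for $k\in\{1,\bidderNum-1\}$, \Cref{prop:equal secmax special k}), the $k$-th summand of \eqref{eq:opt marginal new} equals $\expect[x\sim\secmaxProb_k]{x}$, where $\secmaxProb_k$ is the ``top unit of mass'' of $g_k$ (it agrees with $g_k$ above $\minsecmax_k$ and has an atom at $\minsecmax_k$ making the total mass $1$); in particular every summand with $k\ge 2$ is at most $\anonyDen_k$, with equality iff $\anonySymmMarginal_{k,1}(\{1\})\ge\tfrac 2k$. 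So the objective of \eqref{eq:opt marginal new} is $\sum_{k\in\setwZero}\anonyDen_k\,\expect[\secmaxProb_k]{x}$, and I must show the proposed marginals lie in $\newfeaAnonyMarginals$ and maximize it.

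\textbf{Feasibility and value of the candidate.} First I would check each $\optMarginal_{k,\cdot}$ is a probability measure — the only nontrivial normalization being $\tfrac 2k+\optprobForth_k+\optprobFortl_k=1$, which is the equality $\optprobForth_k+\optprobFortl_k=\tfrac{k-2}{k}$ of \eqref{eq:linear system} — and then verify calibration \eqref{eq:bc for marginals}. Since every $\optMarginal_{k,\cdot}$ is supported on $\{0,\optminsecmax_0,\optminsecmax_1,1\}$, this reduces to evaluating $\sum_k\anonyDen_k k\,\optMarginal_{k,1}(x)$ and $\sum_k\anonyDen_k(\bidderNum-k)\optMarginal_{k,0}(x)$ at those four points and reading off the ratio: it is $0$ at $x=0$, $1$ at $x=1$, and at $x=\optminsecmax_1,\optminsecmax_0$ it collapses, after using the mass-balance equations of \eqref{eq:linear system}, exactly to the closed forms \eqref{eq:opt t_1 and t_0} (the remaining equations of \eqref{eq:linear system} being precisely what make these two ratios consistent). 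I would also check that $\minsecmax_k$ recomputed from these marginals equals $\optminsecmax_0$ for $k=0$, $\optminsecmax_1$ for $k=1$, and $1$ for $k\ge 2$; that the extra constraints defining $\newfeaAnonyMarginals$ hold vacuously (since $\optMarginal_{1,1}$ and $\optMarginal_{\bidderNum-1,0}$ put no mass strictly above their $\minsecmax$); and that $\optminsecmax_1\in[0.5,1)$, $\optminsecmax_0\in(0,\optminsecmax_1]$ as in \Cref{thm:opt private without IR}, which follow from the inequalities in \eqref{eq:linear system} (the clause $\optprobFortl_\bidderNum>0$ is what forces $\optminsecmax_0>0$). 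The objective value is then $\anonyDen_0\optminsecmax_0+\anonyDen_1\optminsecmax_1+\sum_{k\ge 2}\anonyDen_k$, using $\secmaxProb_k=\delta_{(1)}$ for $k\ge 2$, $\secmaxProb_1=\delta_{(\optminsecmax_1)}$, $\secmaxProb_0=\delta_{(\optminsecmax_0)}$; for $k\in\{1,\bidderNum-1\}$ I would further note this value is genuinely realizable by a correlation via \Cref{prop:opt cor k=1} applied to the trivial feasible point of \eqref{eq:opt cor k = 1} ($\transfunction_1$ a unit atom at $(\optminsecmax_1,\optminsecmax_1)$, $\selffunction_1\equiv 0$).

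\textbf{Optimality.} Take any $(\anonySymmMarginal_{k,1},\anonySymmMarginal_{k,0})_{k\in\setwZero}\in\newfeaAnonyMarginals$. Bounding each $k\ge 2$ summand by $\anonyDen_k$, it remains to show $\anonyDen_0\expect[\secmaxProb_0]{x}+\anonyDen_1\expect[\secmaxProb_1]{x}\le\anonyDen_0\optminsecmax_0+\anonyDen_1\optminsecmax_1$. The essential point is that \eqref{eq:bc for marginals} couples the outcome-$0$ marginals $\anonySymmMarginal_{0,0},\anonySymmMarginal_{1,0}$ to the outcome-$1$ marginals: mass of $\anonySymmMarginal_{0,0}$ or $\anonySymmMarginal_{1,0}$ at a value $x>0$ must be matched by $\tfrac{x}{1-x}$ as much mass of $\sum_k\anonyDen_k k\,\anonySymmMarginal_{k,1}$ at $x$, and the total such outcome-$1$ weight is $\sum_k k\anonyDen_k$. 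The argument I would run is: (1) it is without loss that each $\anonySymmMarginal_{k,1}$, $k\ge 2$, carries mass exactly $\tfrac 2k$ at $1$ (more is wasteful for its own summand; less makes that summand strictly smaller while freeing no useful outcome-$1$ weight), so the outcome-$1$ weight available below $1$ is exactly $B\triangleq\sum_{k\ge 2}(k-2)\anonyDen_k$; (2) an exchange argument collapses this leftover weight, together with $\anonySymmMarginal_{0,0},\anonySymmMarginal_{1,0}$, onto just two pooling levels — one shared with $\anonySymmMarginal_{1,0}$ (where the $k=1$ outcome-$1$ bidder also bids, rather than alone at $1$) and one shared with $\anonySymmMarginal_{0,0}$; (3) letting $A\in[0,B]$ be the weight sent to the first level, \eqref{eq:bc for marginals} pins $\secmaxProb_1,\secmaxProb_0$ to the Dirac masses at $(\anonyDen_1+A)/(2\anonyDen_1+A)$ and $(B-A)/(2\anonyDen_0+B-A)$, reducing everything to the one-dimensional $\max_{A\in[0,B]}\bigl[\anonyDen_1\tfrac{\anonyDen_1+A}{2\anonyDen_1+A}+\anonyDen_0\tfrac{B-A}{2\anonyDen_0+B-A}\bigr]$; (4) this is concave in $A$, and its first-order condition (equating two squared rational expressions, hence producing a square root) yields $\sum_{k\ge 2}\anonyDen_k k\,\optprobForth_k=A^\star$ precisely as in \eqref{eq:linear system}, clipped to $[0,B]$ — the origin of the ``$\vee 0$''. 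Evaluating the maximum at $A^\star$ gives $\anonyDen_0\optminsecmax_0+\anonyDen_1\optminsecmax_1$, so the objective of \eqref{eq:opt marginal new} at an arbitrary feasible point is at most $\anonyDen_0\optminsecmax_0+\anonyDen_1\optminsecmax_1+\sum_{k\ge 2}\anonyDen_k$, matching the candidate.

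\textbf{Main obstacle and corner cases.} I expect the two ``without loss'' reductions in (1)--(2) — that the $k\ge 2$ marginals carry exact unit atoms at $1$, and that the leftover outcome-$1$ weight together with $\anonySymmMarginal_{0,0},\anonySymmMarginal_{1,0}$ collapses to two pooling levels — to be the hard part, since \eqref{eq:opt marginal new} is non-convex (the thresholds $\minsecmax_k$ themselves depend on the marginals), so one most likely needs an extreme-point / support-reduction argument for \eqref{eq:opt marginal new} married to the calibration identity, together with separate control of $\minsecmax_k$ at a general feasible point. I would finish by treating the degenerate regimes ($\anonyDen_0=0$, $\anonyDen_1=0$, $\anonyDen_\bidderNum=0$, and $\bidderNum=2$, where $k=1=\bidderNum-1$ is the only ``special'' slice), in which some denominators in \eqref{eq:opt t_1 and t_0} degenerate and the clip in \eqref{eq:linear system} is active; the ``$\vee 0$'' and the clause $\optprobFortl_\bidderNum>0$ are exactly the devices that keep \eqref{eq:linear system} feasible and the closed forms well-defined there.
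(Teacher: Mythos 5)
Your route is the same as the paper's: reduce, by exchange and support--reduction arguments, to marginals supported on $\{0,\bar x',\bar x,1\}$ with each $\optMarginal_{k,1}$, $k\ge 2$, carrying exactly $2/k$ at $1$, and then solve a one-dimensional concave program over the split $A\in[0,B]$ of the leftover outcome-one weight $B=\sum_{k\ge2}(k-2)\anonyDen_k$. Your program in step (3) is exactly the paper's \ref{eq:opt marginal prog}, your first-order condition correctly produces the $\sqrt2$ and the ``$\vee 0$'' clip of \ref{eq:linear system}, and your feasibility/value check of the candidate matches the paper's Step 1--2 bijection between solutions of \ref{eq:opt marginal prog} and feasible marginals.

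The gap is that steps (1)--(2), which you flag as the hard part, are where essentially the entire proof lives, and they cannot be dispatched by a generic extreme-point argument applied summand-by-summand. For $k\in\{1,\bidderNum-1\}$ the correlation value is the LP \ref{eq:opt cor k = 1}, not a closed functional of the marginals, so before any support reduction the paper must first establish structural properties of the optimal transport \emph{under the optimal marginals}: that $\transfunction_1(x,y)>0$ forces $x\ge y$ and then $x=y$ (\Cref{lem: transition property x>=y,lem: transition property x = y}). These are proved by joint perturbations that move mass between, e.g., $\optMarginal_{1,0}$ and $\optMarginal_{k,1}$ for other $k$, or split one calibrated support point into two, exploiting the cross-$k$ coupling in \eqref{eq:bc for marginals}; your sketch does not engage with this coupling. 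The two-level collapse in your step (2) is the paper's \Cref{lem: divide case} plus \Cref{lem: merge}, and the latter rests on a genuine inequality (Titu's lemma applied to $\tfrac{b^2}{a+2b}+\tfrac{b'^2}{a'+2b'}\ge\tfrac{(b+b')^2}{a+a'+2b+2b'}$), not a dominance exchange. Finally, the justification in your step (1) is wrong as stated: reducing the atom of $\anonySymmMarginal_{k,1}$ at $1$ \emph{does} free outcome-one weight that can be pooled at $\optminsecmax_1$ or $\optminsecmax_0$; the correct argument is quantitative --- the marginal loss $\tfrac12(1-t)$ per unit of weight moved off $1$ exceeds the marginal gain $(1-t)^2$ in the $k\in\{0,1\}$ summands because $t\ge\tfrac12$ --- which is again a computation tied to the first-order analysis, not the a priori dominance you invoke. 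Until these reductions are carried out, the optimality half of the proposition is unproved.
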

As we can see, the above optimal marginals only generate $4$ distinct calibrated signals $0, \optminsecmax_0, \optminsecmax_1, 1$.
Together with the optimal correlation that we characterized in \Cref{subsec:opt correlation}, we have the following implication about the second-highest bid marginal $\optsecmaxProb_k$:
\begin{corollary}
\label{cor:opt secmax prob}
The characterized $\optminsecmax_1, \optminsecmax_0$ in \eqref{eq:opt t_1 and t_0} satisfy that $\optminsecmax_0\le \optminsecmax_1$ and $\optminsecmax_1 \ge 0.5$.
Moreover, the second-highest bid marginal $\optsecmaxProb_k$ of corresponding optimal correlation $\optCorr_k$ for $k\in\setwZero$ satisfies  $\optsecmaxProb_k(\cdot) 
= \delta_{(1)}\indicator{k \ge 2} 
+ \delta_{(\optminsecmax_1)}\indicator{k =1}
+ \delta_{(\optminsecmax_0)}\indicator{k =0}$. 
\end{corollary}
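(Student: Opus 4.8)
The statement bundles three assertions; I handle the two inequalities first and the second-highest-bid marginals second.

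\textbf{The two inequalities.} Write $A \triangleq \sum_{k\in[2:\bidderNum]}\anonyDen_k\,k\,\optprobForth_k$ and $B \triangleq \sum_{k\in[2:\bidderNum]}\anonyDen_k\,k\,\optprobFortl_k$, both nonnegative since $\optprobForth_k,\optprobFortl_k\ge0$ by \ref{eq:linear system}. By \eqref{eq:opt t_1 and t_0} --- equivalently, by evaluating the calibration identity \eqref{eq:bc for marginals} at $x=\optminsecmax_1$ and $x=\optminsecmax_0$ against the atoms of the marginals of \Cref{prop:opt marginals} --- we have $\optminsecmax_1=(\anonyDen_1+A)/(2\anonyDen_1+A)$ and $\optminsecmax_0=B/(2\anonyDen_0+B)$. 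The bound $\optminsecmax_1\ge\tfrac12$ is then one line: $\optminsecmax_1-\tfrac12 = A/\big(2(2\anonyDen_1+A)\big)\ge0$. For $\optminsecmax_0\le\optminsecmax_1$, first record $A+B=W$ with $W\triangleq\sum_{k\in[2:\bidderNum]}(k-2)\anonyDen_k$ (multiply $\optprobForth_k+\optprobFortl_k=(k-2)/k$ by $\anonyDen_k k$ and sum over $k$; this is the last line of \ref{eq:linear system}). Clearing denominators and substituting $B=W-A$, the claim becomes $A\ge \anonyDen_1(W-2\anonyDen_0)/(\anonyDen_1+2\anonyDen_0)$. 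Now insert the closed form $A=\big(\anonyDen_1(W+2\anonyDen_0(1-\sqrt2))\big)/(\anonyDen_1+\sqrt2\,\anonyDen_0)\vee0$ from \ref{eq:linear system} and split on the truncation: if $A=0$ then $W\le2(\sqrt2-1)\anonyDen_0<2\anonyDen_0$, so the right-hand side is $\le0$ and there is nothing to prove; if $A>0$, cross-multiplying the two nonnegative fractions and simplifying, the slack equals $\anonyDen_0(2-\sqrt2)(W+2\anonyDen_1+2\anonyDen_0)$ divided by a positive quantity, hence is $\ge0$. The $\sqrt2$-coefficients in \ref{eq:linear system} are precisely what make this factorization work.

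\textbf{The marginals.} Fix $k\in\setwZero$ and the optimal marginals $\optMarginal_{k,1},\optMarginal_{k,0}$ of \Cref{prop:opt marginals}, and first pin down $\minsecmax_k$ via \Cref{def:minimum secmax}. For $k\ge2$: $\optMarginal_{k,0}=\delta_{(0)}$ and $\optMarginal_{k,1}$ carries mass exactly $2/k$ at the point $1$, so $\tfrac12\big(k\optMarginal_{k,1}+(\bidderNum-k)\optMarginal_{k,0}\big)$ carries mass $1$ at $1$ and none above, giving $\minsecmax_k=1$. For $k=1$: $\optMarginal_{1,1}=\delta_{(\optminsecmax_1)}$ and $\optMarginal_{1,0}$ carries mass $1/(\bidderNum-1)$ at $\optminsecmax_1$, so the same bookkeeping gives $\minsecmax_1=\optminsecmax_1$ (using $\optminsecmax_1<1$; for $\bidderNum=2$ the marginal degenerates to $\optMarginal_{1,0}=\delta_{(\optminsecmax_1)}$, consistently). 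For $k=0$: $\optMarginal_{0,0}$ carries mass $2/\bidderNum$ at $\optminsecmax_0$, so $\minsecmax_0=\optminsecmax_0$ (and $\optminsecmax_0>0$ because $\optprobFortl_\bidderNum>0$ forces $B>0$; if $B=0$ the conclusion below still reads $\delta_{(0)}$). Now read off $\optsecmaxProb_k=\coroptSecmaxProb_k$. If $k\notin\{1,\bidderNum-1\}$, \Cref{prop:opt cor general k} applies: since the marginals carry no mass strictly above $\minsecmax_k$, the ``$x>\minsecmax_k$'' branch of its formula is void and the ``$x=\minsecmax_k$'' branch has mass $1$, so $\coroptSecmaxProb_k=\delta_{(\minsecmax_k)}$ --- i.e.\ $\delta_{(1)}$ when $k\ge2$ and $\delta_{(\optminsecmax_0)}$ when $k=0$. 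If $k\in\{1,\bidderNum-1\}$, invoke \Cref{prop:equal secmax special k} (and its $k=\bidderNum-1$ analogue): the optimal correlation makes every realized bid profile multi-maximal, hence its second-highest bid is a point of $\supp(\optMarginal_{k,1})\cup\supp(\optMarginal_{k,0})$ that is $\ge\minsecmax_k$; by the $\minsecmax_k$ computation the unique such point is $\optminsecmax_1$ (when $k=1$) or $1$ (when $k=\bidderNum-1$), and it is attained by an explicit coupling --- pair the lone outcome-$1$ bidder with the single outcome-$0$ bidder that also bids $\optminsecmax_1$ when $k=1$; pair two of the $\bidderNum-1$ outcome-$1$ bidders that bid $1$ (whose total mass there is exactly $2$) when $k=\bidderNum-1$. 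Thus $\coroptSecmaxProb_1=\delta_{(\optminsecmax_1)}$ and $\coroptSecmaxProb_{\bidderNum-1}=\delta_{(1)}$, and combining all cases yields $\optsecmaxProb_k=\delta_{(1)}\indicator{k\ge2}+\delta_{(\optminsecmax_1)}\indicator{k=1}+\delta_{(\optminsecmax_0)}\indicator{k=0}$.

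\textbf{Main obstacle.} The only genuinely nonroutine step is the reduction and verification of $\optminsecmax_0\le\optminsecmax_1$: boiling it down to a bare inequality in $\anonyDen_0,\anonyDen_1$ and $W=\sum_{k\ge2}(k-2)\anonyDen_k$ using the identities in \ref{eq:linear system}, and confirming that the $\vee0$-truncated closed form of $\sum_k\anonyDen_k k\optprobForth_k$ is tight enough in both branches so that the slack factors manifestly nonnegatively. The marginal part is bookkeeping once $\minsecmax_k$ is computed, and $\optminsecmax_1\ge\tfrac12$ is immediate; one must nonetheless keep the degenerate regimes ($\anonyDen_0$ or $\anonyDen_1$ zero, $\bidderNum=2$, $k=\bidderNum-1=1$) in view throughout, as these are where the formulas collapse but still agree with the stated expression.
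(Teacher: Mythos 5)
Your proof is correct, and I verified the one nontrivial computation: reducing $\optminsecmax_0\le\optminsecmax_1$ to $A\ge\anonyDen_1(W-2\anonyDen_0)/(\anonyDen_1+2\anonyDen_0)$ and checking that in the untruncated branch the slack factors as $(2-\sqrt2)\,\anonyDen_0\,(W+2\anonyDen_1+2\anonyDen_0)$ over a positive denominator does go through. The marginal part of your argument — computing $\minsecmax_k$ from \Cref{def:minimum secmax} for the atoms of \Cref{prop:opt marginals}, then reading off $\coroptSecmaxProb_k$ from \Cref{prop:opt cor general k} for $k\notin\{1,\bidderNum-1\}$ and from the multi-maximality of \Cref{prop:equal secmax special k} plus an explicit pairing for $k\in\{1,\bidderNum-1\}$ — is exactly the route the paper intends (the corollary is stated as an immediate consequence of those results). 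Where you genuinely diverge is on $\optminsecmax_0\le\optminsecmax_1$: the paper establishes the equivalent fact $\bar{x}\ge\bar{x}'$ in the appendix by an exchange argument (if $\bar{x}'>\bar{x}$, transferring probability mass between the two atoms strictly increases revenue, contradicting optimality, with the key step being a Cauchy--Schwarz/Titu-type inequality on the resulting fractions), whereas you verify the inequality directly from the closed forms in \eqref{eq:opt t_1 and t_0} and the constraints of \ref{eq:linear system}. Your route is more self-contained and makes transparent why the $\sqrt2$ coefficients in \ref{eq:linear system} are exactly what is needed; the paper's route explains the inequality as forced by optimality without ever needing the explicit value of $A$. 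Your attention to the degenerate regimes ($A=0$ truncation, $\anonyDen_1=0$, $\bidderNum=2$) is appropriate and correctly resolved.
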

The above \Cref{cor:equal secmax general k} and \Cref{prop:equal secmax special k} together imply the first statement of \Cref{thm:opt private without IR}, and \Cref{cor:opt secmax prob} can give us the second statement of \Cref{thm:opt private without IR}.
\begin{proof}[Proof of \Cref{coro: no IR pi}]
\Cref{coro: no IR pi} follows immediately, as we have $\Rev{\optsignalProb} =  \anonyDen_0 \cdot \optminsecmax_0 + \anonyDen_1 \cdot \optminsecmax_1 + \sum\nolimits_{k \in [2: \bidderNum]} \anonyDen_k$ and $\Wel{\anonyDen} = \sum\nolimits_{k \in [ \bidderNum]} \anonyDen_k$. When $\optminsecmax_0 < \frac{\anonyDen_1(1-\optminsecmax_1)}{\anonyDen_0}$, it follows that $\Rev{\optsignalProb} < \anonyDen_0 \cdot \frac{\anonyDen_1(1-\optminsecmax_1)}{\anonyDen_0} + \anonyDen_1 \cdot \optminsecmax_1 + \sum\nolimits_{k\in [2:\bidderNum]} \anonyDen_k = \Wel{\anonyDen}$. Similarly, when $\optminsecmax_0 \ge \frac{\anonyDen_1(1-\optminsecmax_1)}{\anonyDen_0}$, we have $\Rev{\optsignalProb} \ge \Wel{\anonyDen}$. \qedhere
\end{proof}

\section{Calibrated Signaling with IR Condition}
\label{sec:IR}

In this section, we present an FPTAS for solving an approximately optimal \UIR\ calibrated signaling 
under the uniform tie-breaking rule. 
We approach this problem by adjusting the signaling $\optsignalProb$ in two steps (let $\optminsecmax_1, \optminsecmax_2$ be defined as in \Cref{def:Minimum second-highest bid}): 
(1) In the first step (see \Cref{subsec:first step adjust}), we construct a serrated sequence of calibrated bids within an $\eps^2$-neighborhood of $\optminsecmax_1$ to ensure that, for any outcome profile in $\valprofileSpace_1$, the bidder with realized outcome 1 always wins;
(2) The second step is optional. In this step (see \Cref{subsec:second step adjust}), when the value $\optminsecmax_0$ is initially too large which causes the signaling $\optsignalProb$ to extract more revenue than the maximum welfare, we reduce the minimum second-highest bid for outcome profile in $\valprofileSpace_0$ from $\optminsecmax_0$ to be a value $\minsecmax_{0, \UIR}$ that ensures bidder's ex-ante surplus equal to $0$.
Finally, we summarize all adjustments to obtain an FPTAS for computing an (approximately) optimal calibrated signaling $\signalProb_{\UIR}\in\csSpaceUIR$ (see \Cref{subsec:FPTAS}).

\subsection{Constructing a Serrated Sequence of Calibrated Bids}
\label{subsec:first step adjust}
As mentioned earlier, to ensure that a signaling $\signalProb$ satisfies the IR condition, it is crucial to ensure that for outcome profiles in $\valprofileSpace_1$ (i.e., when $k=1$), the bidder with outcome 1 always wins. 
To do so, we slightly modify the support points of  $\optMarginal_{1,0}$ and  $\optMarginal_{1,1}$ that we characterized in \Cref{prop:opt marginals},
and redistribute the probability mass concentrated at  $\optminsecmax_1$  across a finite number of support points within an arbitrarily small neighborhood around $\optminsecmax_1$. 
Specifically, we ensure that for every support point  $x$  of the modified $\optMarginal_{1,0}$ in this neighborhood, there exists a corresponding support point  $x'$  of the modified  $\optMarginal_{1,1}$  such that  $x' = x + \eps_x$, where  $\eps_x > 0$ is arbitrarily small. 
Consequently, we can construct a correlation plan ensuring that the highest bid exceeds the second-highest bid by $\eps_x$, thus guaranteeing that the bidder with the highest bid always has an outcome of 1. 

In particular, we show that we can construct a set of marginals $(\anonySymmMarginal_{k,1,\UIR}, \anonySymmMarginal_{k,0,\UIR})_{k \in \setwZero}$ based on $(\optMarginal_{k,0}, \optMarginal_{k,1})_{k \in \setwZero}$ such that after correlating these marginals optimally, the induced signaling (1)ensure that the bidder with outcome 1 always wins (under the uniform tie-breaking rule); (2) the expected value of second-highest bid is at most $\varepsilon$ smaller than the revenue $\Rev{\optsignalProb}$.

\begin{proposition}[Approximately optimal marginals with \UIR]
\label{prop: optimal marginals new and uniform tie breaking}
Given a sufficiently small non-negative $\eps$, 
there exists a set of marginals $(\anonySymmMarginal_{k,1,\UIR}, \anonySymmMarginal_{k,0,\UIR})_{k \in \setwZero}$ such that (1) each of them has at most $2/\eps + 2$ supports; (2) under the optimal correlation $\coroptCorSignalProb_1\in\transPlan(\anonySymmMarginal_{1,1,\UIR}, \anonySymmMarginal_{1,0,\UIR})$, the bidder with highest bid always has a click outcome of $1$;
(3) the expected second highest-bid after correlating $(\anonySymmMarginal_{k,1,\UIR}, \anonySymmMarginal_{k,0,\UIR})_{k \in \setwZero}$ optimally is $\eps$-approximate to $\Rev{\optsignalProb}$, namely,
\begin{align*}
    \sum\nolimits_{k\in\setwZero}\anonyDen_k \cdot \RevCorr{\anonySymmMarginal_{k, 1, \UIR}, \anonySymmMarginal_{k, 0, \UIR}} \ge \Rev{\optsignalProb} - \eps~.
\end{align*}
\end{proposition}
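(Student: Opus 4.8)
The plan is to obtain $(\anonySymmMarginal_{k,1,\UIR},\anonySymmMarginal_{k,0,\UIR})_{k\in\setwZero}$ by a local surgery on the optimal marginals of \Cref{prop:opt marginals}: we leave every atom that pins down a second-highest bid (the $\tfrac2k\delta_{(1)}$ blocks for $k\ge 2$, the $\delta_{(\optminsecmax_0)}$ atoms, and $\optMarginal_{0,0}$) untouched, and only split the atom at $\optminsecmax_1$ — shared by $\optMarginal_{1,1}$, $\optMarginal_{1,0}$ and, via the weights $\optprobForth_k$, by $\optMarginal_{k,1}$ for $k\ge 2$ — into a short increasing staircase. Concretely, fix $\largeNum=\Theta(1/\eps)$ and a sequence $\optminsecmax_1 = s_0<s_1<\dots<s_\largeNum$ of calibrated-bid candidates with $s_\largeNum-s_0=O(\eps)$ and consecutive gaps $\Theta(\eps^2)$ (these are the points $\minsecmax_{1,\UIR,l}$), and set
\[
\anonySymmMarginal_{1,1,\UIR} = \sum_{l=1}^{\largeNum}\rho_l\,\delta_{(s_l)} + \rho_{\largeNum+1}\,\delta_{(1)},\qquad
\anonySymmMarginal_{1,0,\UIR} = \tfrac{\bidderNum-2}{\bidderNum-1}\,\delta_{(0)} + \sum_{l=0}^{\largeNum}\tfrac{\rho_{l+1}}{\bidderNum-1}\,\delta_{(s_l)},
\]
keeping $\anonySymmMarginal_{k,0,\UIR}=\optMarginal_{k,0}$ for all $k$, $\anonySymmMarginal_{0,0,\UIR}=\optMarginal_{0,0}$, and for $k\ge 2$ replacing $\optprobForth_k\,\delta_{(\optminsecmax_1)}$ by $\sum_{l=0}^{\largeNum} g_k(s_l)\,\delta_{(s_l)}$ with $\sum_l g_k(s_l)=\optprobForth_k$. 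The free parameters are $\rho_1,\dots,\rho_{\largeNum+1}\ge 0$ and $g_k(s_l)\ge 0$. The point of the extra atom $\delta_{(1)}$ in $\anonySymmMarginal_{1,1,\UIR}$ is that bid $1$ is calibrated for free, so it can serve as the top rung of the staircase and resolve what happens ``above'' $s_\largeNum$.

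Next I would pin down the parameters using the feasibility characterization \eqref{eq:bc for marginals} of \Cref{lem:feasible marginals}. After cancelling the contributions of the $\delta_{(1)}$, $\delta_{(0)}$ and $\delta_{(\optminsecmax_0)}$ atoms (each of which satisfies \eqref{eq:bc for marginals} on its own), the condition at the staircase points reduces, with $G_l:=\sum_{k\ge 2}\anonyDen_k k\,g_k(s_l)$ and the convention $\rho_0:=0$, to the one-sided recurrence
\[
\rho_{l+1} \;=\; \frac{(1-s_l)\,(\anonyDen_1\rho_l+G_l)}{s_l\,\anonyDen_1},\qquad l=0,1,\dots,\largeNum,
\]
subject to $\sum_{l=1}^{\largeNum+1}\rho_l=1$ and $\sum_{l=0}^{\largeNum}G_l=\sum_{k\ge 2}\anonyDen_k k\,\optprobForth_k$. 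Given the $G_l$'s this determines all $\rho_l$ as nonnegative linear combinations of the $G_l$'s, so $\sum_l\rho_l$ becomes a fixed positive linear functional of $(G_l)_l$. I would then use the exact form of $\optminsecmax_1$ from \Cref{def:Minimum second-highest bid}: since $\optminsecmax_1=\tfrac{A_1}{A_1+\anonyDen_1}$ with $A_1=\anonyDen_1+\sum_{k\ge2}\anonyDen_k k\optprobForth_k$ is precisely the ``balance point'' at which a geometric-series evaluation of this recurrence (in the limit of vanishing spread) equals $1$, a (roughly) uniform choice $G_l\equiv\tfrac1{\largeNum+1}\sum_{k\ge2}\anonyDen_k k\optprobForth_k$ already gives $\sum_l\rho_l=1+O(\eps^2)$, and a small nonnegative perturbation of the $G_l$'s (there is spare $\optprobForth_k$-mass to redistribute) makes it exactly $1$. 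This yields $(\anonySymmMarginal_{k,1,\UIR},\anonySymmMarginal_{k,0,\UIR})_{k\in\setwZero}\in\feaAnonyMarginals$.

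Finally I would check the three claims. For (1): each new marginal is supported on at most the $\largeNum+1$ staircase points together with $\{0\},\{1\},\{\optminsecmax_0\}$, hence on at most $\largeNum+3\le 2/\eps+2$ atoms for the chosen $\largeNum$. For (2): take $\coroptCorSignalProb_1$ to be the explicit staircase plan — with probability $\rho_l$ ($1\le l\le\largeNum$) one outcome-$1$ bidder bids $s_l$, one outcome-$0$ bidder bids $s_{l-1}$, the other $\bidderNum-2$ outcome-$0$ bidders bid $0$; with probability $\rho_{\largeNum+1}$ the outcome-$1$ bidder bids $1$, one outcome-$0$ bidder bids $s_\largeNum$, the rest bid $0$ (the nonzero outcome-$0$ bidder chosen uniformly, matching $\anonySymmMarginal_{1,0,\UIR}$). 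This lies in $\transPlan(\anonySymmMarginal_{1,1,\UIR},\anonySymmMarginal_{1,0,\UIR})$ and in every realization the highest bid strictly exceeds the second-highest and is held by the outcome-$1$ bidder; for $k\ge 2$, since the $\tfrac2k$-atom at $1$ is unchanged, \Cref{prop:opt cor general k} still gives second-highest bid $1$, and for $k=0$ it is still $\optminsecmax_0$. For (3): under this plan the second-highest bid when $\|\vec\val\|_1=1$ is one of $s_0,\dots,s_\largeNum$, hence at least $s_0-O(\eps)=\optminsecmax_1-O(\eps)$ even if the staircase is shifted to straddle $\optminsecmax_1$, so
\[
\sum_{k\in\setwZero}\anonyDen_k\,\RevCorr{\anonySymmMarginal_{k,1,\UIR},\anonySymmMarginal_{k,0,\UIR}} \;\ge\; \anonyDen_0\optminsecmax_0+\anonyDen_1\optminsecmax_1+\sum_{k\ge 2}\anonyDen_k - O(\eps) \;=\; \Rev{\optsignalProb}-O(\eps),
\]
using the revenue formula from the proof of \Cref{coro: no IR pi}; rescaling the hidden constant gives the stated bound. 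The main obstacle is the feasibility step: one must show the staircase ``closes up'', i.e.\ that the recurrence above, started from $\rho_0=0$ and driven only by the fixed nonnegative budget $\sum_{k\ge2}\anonyDen_k k\optprobForth_k$ of spare outcome-$1$ mass, can be steered to have nonnegative $\rho_l$'s summing to exactly $1$ — this is exactly where the value of $\optminsecmax_1$ in \Cref{def:Minimum second-highest bid} is used, and it requires a continuity/perturbation argument rather than a closed-form solve. A secondary subtlety is that the staircase plan need not be the revenue-maximizing correlation for $(\anonySymmMarginal_{1,1,\UIR},\anonySymmMarginal_{1,0,\UIR})$ (a tie at the top could, in principle, gain $O(\eps^2)$), so claim (2) is to be understood about the plan we actually construct and use, while claim (3) still follows because $\RevCorr{\cdot,\cdot}$ is a maximum and our plan already attains at least $\optminsecmax_1-O(\eps)$ on the $k=1$ block.
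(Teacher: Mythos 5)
Your overall strategy (split the atom at $\optminsecmax_1$ into a serrated staircase of calibrated bids with $\Theta(\eps^2)$ gaps, pair each outcome-$0$ rung with the outcome-$1$ rung just above it) is the same as the paper's, but the feasibility step — which you yourself flag as the main obstacle — does not go through as written. Your recurrence $\rho_{l+1}=\tfrac{(1-s_l)}{s_l\anonyDen_1}(\anonyDen_1\rho_l+G_l)$ with $\rho_0=0$ is a one-sided upward sweep seeded only by the spare outcome-$1$ budget $\sum_l G_l=c^\star:=\sum_{k\ge2}\anonyDen_k k\optprobForth_k$. Writing $r=\tfrac{1-\optminsecmax_1}{\optminsecmax_1}=\tfrac{\anonyDen_1}{\anonyDen_1+c^\star}$ and solving the recurrence (in the limit of vanishing spread), the coefficient of $G_l$ in $\sum_j\rho_j$ is $\tfrac{1}{\anonyDen_1}\cdot\tfrac{r}{1-r}\,(1-r^{\largeNum+1-l})<\tfrac{1}{c^\star}$, so $\sum_j\rho_j<\tfrac{1}{c^\star}\sum_l G_l=1$ for \emph{every} nonnegative choice of the $G_l$'s; the uniform choice gives $1-\Theta(\eps)$ (not $1+O(\eps^2)$ as you claim, since the boundary term is $g\,r^2(1-r^{\largeNum+1})/(1-r)^2$ with $g=\Theta(\eps)$), and even concentrating all mass at $l=0$ only reaches $1-r^{\largeNum+1}$. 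Hence no ``small nonnegative perturbation of the $G_l$'s'' can make $\anonySymmMarginal_{1,1,\UIR}$ a probability measure, and the construction does not close up.

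The paper resolves exactly this shortfall in two ways you are missing (see Propositions C.1–C.2): the staircase is \emph{two-sided}, $l\in[-\largeNum:\largeNum-1]$ centered at $\optminsecmax_1$ with each rung carrying a fixed $\tfrac{1}{2\largeNum}$ of the $k=1$ mass, and — crucially — the bottom rung $\minsecmax_{1,\UIR,-\largeNum}$ is seeded by transplanting an extra chunk $d=\Theta(\anonyDen_1/\largeNum)$ of outcome-$1$ mass \emph{down from the atom at bid $1$ in $\optMarginal_{\bidderNum,1}$}. This injection is what makes the calibration system solvable with nonnegative weights; it is also the source of the second revenue-loss term $\tfrac{\anonyDen_1-\eps^2}{4\largeNum}$ in the paper's accounting and of the hypothesis $\eps\le 4\anonyDen_\bidderNum/\anonyDen_1$ in Theorem 1.3 (one needs enough mass at the $(k=\bidderNum,\ \text{bid }1)$ atom to give up). Your proof could likely be repaired by either adopting this injection or by dumping the $\Theta(\eps)$ deficit of $\anonySymmMarginal_{1,1,\UIR}$ at a lower calibrated point such as $\optminsecmax_0$ (re-balancing calibration there and re-checking the $k=1$ win condition and the revenue loss), but as stated the feasibility claim is false and the remaining three claims are built on a non-existent object.
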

The exact forms of constructed marginals $(\anonySymmMarginal_{k, 1,\UIR}, \anonySymmMarginal_{k,0,\UIR})_{k\in[\setwZero]}$ are provided in \Cref{prop: optimal marginals new and uniform tie breaking apx}. 
Here, we use \Cref{fig:example marginal for n =3}  to illustrate the structure of these approximate optimal marginals.
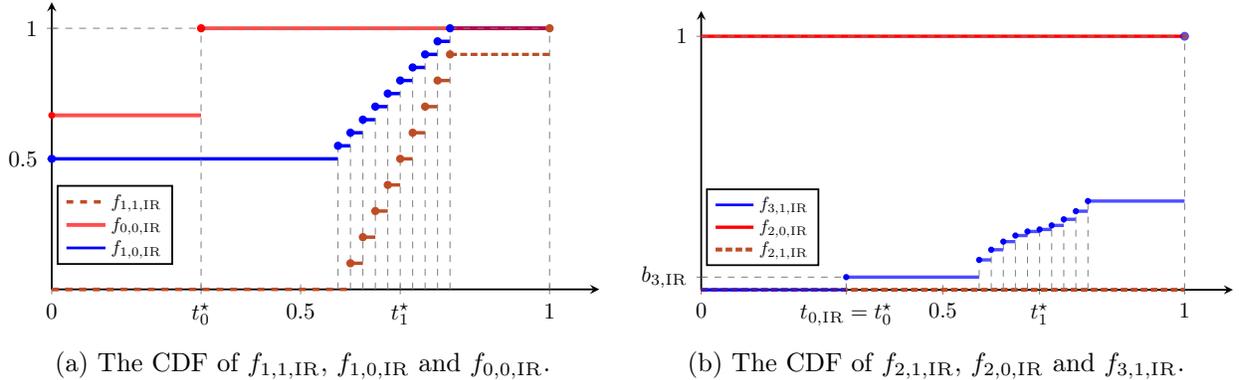
\begin{figure}[htbp]
    \centering
      \begin{subfigure}{0.49\textwidth}
    \centering
    \resizebox{\linewidth}{!}{\begin{tikzpicture}[scale = 1]
\def\x{0.025}
\def\t{0.05}
\def\tmin{0.3}
        \begin{axis}[
            axis lines = left,
            clip=false, 
            xmin=0, xmax=1.1,
            ymin=0, ymax=1.1,
            xtick={0,0.3,0.5,0.7,1},
            xticklabels={$0$, $\optminsecmax_0$,$0.5$, $\optminsecmax_{1}$,$1$},
            ytick={0.5,1},
            yticklabels={$0.5$,$1$},
            width=10cm,
            height=6cm,
            thick,
            legend pos=north west,
            legend style={
            at={(axis cs:0.01,0.4)},  
            anchor=north west,      
            draw=black,            
            fill=white,
            font=\small,
            nodes={scale=0.8, transform shape}
            },
            label style={font=\small},      
            tick label style={font=\small},      
        ]

    \addplot[\highestBidColor, dashed, line width=1.5pt] coordinates {(0,0) (0.125,0)};
    \addlegendentry{$f_{1,1, \UIR}$}

    \addplot[red, line width=1.7pt, opacity = 0.7] coordinates {(0,0.6667) (0.3,0.6667)};
    \addlegendentry{$f_{0,0, \UIR}$}

    \addplot[blue, line width=1.2pt] coordinates {(0.8,1) (1,1)};
    \addlegendentry{$f_{1,0, \UIR}$}
        
        \addplot[blue, line width=1.5pt] coordinates {(0,0.5) (\tmin,0.5)};
        
        \addplot[blue, line width=1.5pt] coordinates {(\tmin,0.5) (0.575,0.5)};
        \addplot[blue, line width=1.5pt] coordinates {(0.575,0.5 + \t) (0.6,0.5 + \t)};
        \addplot[blue, line width=1.5pt] coordinates {(0.6,0.5 + 2*\t) (0.625,0.5 + 2*\t)};
        \addplot[blue, line width=1.5pt] coordinates {(0.625,0.5 + 3*\t) (0.65,0.5 + 3*\t)};
        \addplot[blue, line width=1.5pt] coordinates {(0.65,0.5 + 4*\t) (0.675,0.5 + 4*\t)};
        \addplot[blue, line width=1.5pt] coordinates {(0.675,0.5 + 5*\t) (0.7,0.5 + 5*\t)};
        \addplot[blue, line width=1.5pt] coordinates {(0.7,0.5 + 6*\t) (0.725,0.5 + 6*\t)};
        \addplot[blue, line width=1.5pt] coordinates {(0.725,0.5 + 7*\t) (0.75,0.5 + 7*\t)};
        \addplot[blue, line width=1.5pt] coordinates {(0.75,0.5 + 8*\t) (0.775,0.5 + 8*\t)};
        \addplot[blue, line width=1.5pt] coordinates {(0.775,0.5 + 9*\t) (0.8,0.5 + 9*\t)};
        
        \addplot[blue, line width=1.5pt] coordinates {(0.8,0.5 + 10*\t) (1,0.5 + 10*\t)};

        \addplot[\highestBidColor, dashed, line width=1.5pt] coordinates {(0,0) (0.6,0)};

        \addplot[\highestBidColor, line width=1.5pt] coordinates {(0.6,0.1) (0.625,0.1)};
        \addplot[\highestBidColor, line width=1.5pt] coordinates {(0.625,0.2) (0.65,0.2)};
        \addplot[\highestBidColor, line width=1.5pt] coordinates {(0.65,0.3) (0.675,0.3)};
        \addplot[\highestBidColor, line width=1.5pt] coordinates {(0.675,0.4) (0.7,0.4)};
        \addplot[\highestBidColor, line width=1.5pt] coordinates {(0.7,0.5) (0.725,0.5)};
        \addplot[\highestBidColor, line width=1.5pt] coordinates {(0.725,0.6) (0.75,0.6)};
        \addplot[\highestBidColor, line width=1.5pt] coordinates {(0.75,0.7) (0.775,0.7)};
        \addplot[\highestBidColor,  line width=1.5pt] coordinates {(0.775,0.8) (0.8,0.8)};
        \addplot[\highestBidColor, densely dashed, dash pattern=on 3pt off 1.5pt, line width=1.4pt] coordinates {(0.8,0.9) (1,0.9)};

        \addplot[red, line width=1.7pt, opacity = 0.7] coordinates {(0.3,1) (1,1)};

        \addplot[mark=*, red, mark options={scale=0.7}, only marks] coordinates {(0.3,1)};

        \addplot[mark=*, red, mark options={scale=0.55}, only marks] coordinates {(0,0.6667)};

        \addplot[mark=*, blue, mark options={scale=0.7}, only marks] coordinates {(0.575,0.5+\t)};
        \addplot[mark=*, blue, mark options={scale=0.7}, only marks] coordinates {(0.6,0.5+2*\t)};
        \addplot[mark=*, blue, mark options={scale=0.7}, only marks] coordinates {(0.625,0.5+3*\t)};
        \addplot[mark=*, blue, mark options={scale=0.7}, only marks] coordinates {(0.65,0.5+4*\t)};
        \addplot[mark=*, blue, mark options={scale=0.7}, only marks] coordinates {(0.675,0.5+5*\t)};
        \addplot[mark=*, blue, mark options={scale=0.7}, only marks] coordinates {(0.7,0.5+6*\t)};
        \addplot[mark=*, blue, mark options={scale=0.7}, only marks] coordinates {(0.725,0.5+7*\t)};
        \addplot[mark=*, blue, mark options={scale=0.7}, only marks] coordinates {(0.75,0.5+8*\t)};
        \addplot[mark=*, blue, mark options={scale=0.7}, only marks] coordinates {(0.775,0.5+9*\t)};
        \addplot[mark=*, blue, mark options={scale=0.7}, only marks] coordinates {(0.8,0.5+10*\t)};
         \addplot[mark=*, blue, mark options={scale=0.7}, only marks] coordinates {(0,0.5)};
        
        \addplot[mark=*, \highestBidColor, only marks, mark options={scale=0.7}] coordinates {(1,1)};
        \addplot[mark=*, \highestBidColor, mark options={scale=0.7}, only marks] coordinates {(0.6,0.1)};
        \addplot[mark=*, \highestBidColor, mark options={scale=0.7}, only marks] coordinates {(0.625,0.2)};
        \addplot[mark=*, \highestBidColor, mark options={scale=0.7}, only marks] coordinates {(0.65,0.3)};
        \addplot[mark=*, \highestBidColor, mark options={scale=0.7}, only marks] coordinates {(0.675,0.4)};
        \addplot[mark=*, \highestBidColor, mark options={scale=0.7}, only marks] coordinates {(0.7,0.5)};
        \addplot[mark=*, \highestBidColor, mark options={scale=0.7}, only marks] coordinates {(0.725,0.6)};
        \addplot[mark=*, \highestBidColor, mark options={scale=0.7}, only marks] coordinates {(0.75,0.7)};
        \addplot[mark=*, \highestBidColor, mark options={scale=0.7}, only marks] coordinates {(0.775,0.8)};
        \addplot[mark=*, \highestBidColor, mark options={scale=0.7}, only marks] coordinates {(0.8,0.9)};

    \addplot[dashed, gray, thin] coordinates {(0.575,0) (0.575,0.5+\t)};
    \addplot[dashed, gray, thin] coordinates {(0.6,0) (0.6,0.5+2*\t)};
    \addplot[dashed, gray, thin] coordinates {(0.625,0) (0.625,0.5+3*\t)};
    \addplot[dashed, gray, thin] coordinates {(0.65,0) (0.65,0.5+4*\t)};
    \addplot[dashed, gray, thin] coordinates {(0.675,0) (0.675,0.5+5*\t)};
    \addplot[dashed, gray, thin] coordinates {(0.7,0) (0.7,0.5+6*\t)};
    \addplot[dashed, gray, thin] coordinates {(0.725,0) (0.725,0.5+7*\t)};
    \addplot[dashed, gray, thin] coordinates {(0.75,0) (0.75,0.5+8*\t)};
    \addplot[dashed, gray, thin] coordinates {(0.775,0) (0.775,0.5+9*\t)};
    \addplot[dashed, gray, thin] coordinates {(0.8,0) (0.8,0.5+10*\t)};
    \addplot[dashed, gray, thin] coordinates {(1,0) (1,1)};

    \addplot[dashed, gray, thin] coordinates {(0,1) (1,1)};

    \addplot[dashed, gray, thin] coordinates {(0.3,0) (0.3,1)};

        \end{axis}
    \end{tikzpicture}}
    
    \caption{The CDF of $f_{1,1, \UIR}$, $f_{1,0, \UIR}$ and $f_{0, 0, \UIR}$.}
    \label{fig: UIR marginal k =1}
  \end{subfigure}
  \hspace{0.1cm}
        \begin{subfigure}{0.49\textwidth}
    \centering
    \resizebox{\linewidth}{!}{\begin{tikzpicture}[scale = 1]
\def\x{0.025}
\def\tmin{0.3}
\def\zt{0.008}
        \begin{axis}[
            axis lines = left,
            clip=false, 
            xmin=0, xmax=1.1,
            ymin=0, ymax=1.1,
            xtick={0,0.3,0.5,0.7,1},
            xticklabels={$0$, $\minsecmax_{0, \UIR} =\optminsecmax_0$,$0.5$, $\optminsecmax_{1}$,$1$},
            ytick={0.05,1},
            yticklabels={$b_{3, \UIR}$,$1$},
            width=10cm,
            height=6cm,
            thick,
            legend pos=north west,
            legend style={
            at={(axis cs:0.01,0.4)},  
            anchor=north west,      
            draw=black,            
            fill=white,
            font=\small,
            nodes={scale=0.8, transform shape}
            },
            label style={font=\small},      
            tick label style={font=\small},      
        ]


    \addplot[blue, line width=1.2pt] coordinates {(0,0) (0.1,0)};
    \addlegendentry{$f_{3,1,\UIR}$}

    \addplot[red, line width=1.5pt] coordinates {(0,1) (1,1)};
    \addlegendentry{$f_{2,0,\UIR}$}

    \addplot[myorange, densely dashed, dash pattern=on 3pt off 1.5pt, line width=1.8pt] coordinates {(0,0) (1,0)};
    \addlegendentry{$f_{2,1,\UIR}$}
        
        \addplot[blue, line width=1.5pt, opacity = 0.7] coordinates {(0,0) (\tmin,0)};
        
        \addplot[blue, line width=1.5pt, opacity = 0.7] coordinates {(\tmin,0.05) (0.575,0.05)};
        \addplot[blue, line width=1.5pt, opacity = 0.7] coordinates {(0.575,0.07+6*\zt) (0.6,0.07+6*\zt)};
        \addplot[blue, line width=1.5pt, opacity=0.7] coordinates {(0.6,0.07+11*\zt) (0.625,0.07+11*\zt)};
        \addplot[blue, line width=1.5pt, opacity=0.7] coordinates {(0.625,0.07+15*\zt) (0.65,0.07+15*\zt)};
        \addplot[blue, line width=1.5pt, opacity = 0.7] coordinates {(0.65,0.07+18*\zt) (0.675,0.07+18*\zt)};
        \addplot[blue, line width=1.5pt, opacity = 0.7] coordinates {(0.675,0.07+20*\zt) (0.7,0.07+20*\zt)};
        \addplot[blue, line width=1.5pt, opacity = 0.7] coordinates {(0.7,0.07+21*\zt) (0.725,0.07+21*\zt)};
        \addplot[blue, line width=1.5pt, opacity = 0.7] coordinates {(0.725,0.07+23*\zt) (0.75,0.07+23*\zt)};
        \addplot[blue, line width=1.5pt, opacity= 0.7] coordinates {(0.75,0.07+26*\zt) (0.775,0.07+26*\zt)};
        \addplot[blue, line width=1.5pt, opacity = 0.7] coordinates {(0.775,0.07+30*\zt) (0.8,0.07+30*\zt)};
        
        \addplot[blue, line width=1.5pt, opacity = 0.7] coordinates {(0.8,0.07+35*\zt) (1,0.07+35*\zt)};


        
        \addplot[mark=*, blue, mark options={scale=0.5}, only marks] coordinates {(\tmin,0.05)};
        \addplot[mark=*, blue, mark options={scale=0.5}, only marks] coordinates {(0.575,0.07+6*\zt)};
        \addplot[mark=*, blue, mark options={scale=0.5}, only marks] coordinates {(0.6,0.07+11*\zt)};
        \addplot[mark=*, blue, mark options={scale=0.5}, only marks] coordinates {(0.625,0.07+15*\zt)};
        \addplot[mark=*, blue, mark options={scale=0.5}, only marks] coordinates {(0.65,0.07+18*\zt)};
        \addplot[mark=*, blue, mark options={scale=0.5}, only marks] coordinates {(0.675,0.07+20*\zt)};
        \addplot[mark=*, blue, mark options={scale=0.5}, only marks] coordinates {(0.7,0.07+21*\zt)};
        \addplot[mark=*, blue, mark options={scale=0.5}, only marks] coordinates {(0.725,0.07+23*\zt)};
        \addplot[mark=*, blue, mark options={scale=0.5}, only marks] coordinates {(0.75,0.07+26*\zt)};
        \addplot[mark=*, blue, mark options={scale=0.5}, only marks] coordinates {(0.775,0.07+30*\zt)};
        \addplot[mark=*, blue, mark options={scale=0.5}, only marks] coordinates {(0.8,0.07+35*\zt)};
        
        \addplot[mark=*, blue, only marks,opacity=0.7, mark options={scale=0.8}] coordinates {(1,1)};

        \addplot[mark=*, myorange, only marks,opacity=0.8, mark options={scale=0.45}] coordinates {(1,1)};

    \addplot[dashed, gray, thin] coordinates {(\tmin,0) (\tmin,0.05)};
    \addplot[dashed, gray, thin] coordinates {(0.575,0) (0.575,0.07+6*\zt)};
    \addplot[dashed, gray, thin] coordinates {(0.6,0) (0.6,0.07+11*\zt)};
    \addplot[dashed, gray, thin] coordinates {(0.625,0) (0.625,0.07+15*\zt)};
    \addplot[dashed, gray, thin] coordinates {(0.65,0) (0.65,0.07+18*\zt)};
    \addplot[dashed, gray, thin] coordinates {(0.675,0) (0.675,0.07+20*\zt)};
    \addplot[dashed, gray, thin] coordinates {(0.7,0) (0.7,0.07+21*\zt)};
    \addplot[dashed, gray, thin] coordinates {(0.725,0) (0.725,0.07+23*\zt)};
    \addplot[dashed, gray, thin] coordinates {(0.75,0) (0.75,0.07+26*\zt)};
    \addplot[dashed, gray, thin] coordinates {(0.775,0) (0.775,0.07+30*\zt)};
    \addplot[dashed, gray, thin] coordinates {(0.8,0) (0.8,0.07+35*\zt)};
    \addplot[dashed, gray, thin] coordinates {(1,0) (1,1)};
    
    \addplot[dashed, gray, thin] coordinates {(0,0.05) (\tmin,0.05)};
 
    \addplot[dashed, gray, thin] coordinates {(0,1) (1,1)};

        \end{axis}
    \end{tikzpicture}}
    
    \caption{The CDF of $f_{2,1, \UIR}$, $f_{2,0,\UIR}$ and  $f_{3,1, \UIR}$.}
    \label{fig: UIR marginal k =3}
  \end{subfigure}
    \caption{The CDF of approximately optimal marginals with \UIR when $\bidderNum=3$, $\eps = 0.2$, there are at most $2/\eps + 2$ supports of each marginal.}
    \label{fig:example marginal for n =3}
\end{figure}

Given the marginals $(\anonySymmMarginal_{k,1,\UIR}, \anonySymmMarginal_{k,0,\UIR})_{k \in \setwZero}$ that we characterized in \Cref{prop: optimal marginals new and uniform tie breaking}, along with the optimal correlation plan that we characterized in \Cref{subsec:opt correlation}, we can construct a calibrated signaling $\signalProb_{\UIR}$. 
However, this signaling $\signalProb_{\UIR}$ does not necessarily satisfy the IR condition given by Eqn.~\eqref{IR condi uniform}.  
Indeed, whether it does depends on a condition on $\optminsecmax_0, \optminsecmax_1$. 
In particular, its IR condition holds if and only if  $\optminsecmax_0  \leq \frac{\anonyDen_1(1-\optminsecmax_1)}{\anonyDen_0}$, otherwise it violates the IR condition, which we will address this in our second step adjustment.

\subsection{Adjusting \texorpdfstring{$\optminsecmax_0$}{t0star} to Ensure Full Surplus Extraction}
\label{subsec:second step adjust}
As we discussed in \Cref{subsub:fptas main results}, when $\optminsecmax_0$ is initially too large, i.e., $\optminsecmax_0 > \frac{\anonyDen_1(1-\optminsecmax_1)}{\anonyDen_0}$, the signaling $\optsignalProb$ may extract bidders' surplus more than the maximum welfare $\Wel{\anonyDen}$. 
This implies that under signaling $\optsignalProb$, the bidders' ex ante surplus could be negative. 
Thus, to ensure that the signaling that we obtain in previous section, we reduce $\optminsecmax_0$ to  a value $\minsecmax_{0, \IR}$ (see its characterization in \Cref{prop: conditional second-highest bid distribution under UIR}) such that the bidders have $0$ ex-ante surplus. 
Notice that in doing so, we can guarantee the obtained calibrated signaling $\signalProb_{\UIR}$ achieves efficiency as it extracts full surplus. 
In other words, the constructed $\signalProb_{\UIR}$ is the optimal calibrated signaling that satisfies the IR condition.

The following proposition characterizes the conditional distribution of the second-highest bid under the $\signalProb_{\UIR}$ that we apply the above two-step adjustments.
\begin{proposition}[The conditional second-highest bid distribution under $\signalProb_{\UIR}$]
\label{prop: conditional second-highest bid distribution under UIR}
Given a non-negative $\eps \le \sqrt{\anonyDen_1} \wedge \frac{4\anonyDen_\bidderNum}{\anonyDen_1}$,
let $\largeNum = \lceil\sfrac{1}{\eps}\rceil$. 
There exists an increasing sequence $(C_l)_{l\in[-\largeNum:\largeNum]}$, whose outcomes depend on $(\anonyDen_k)_{k\in\setwZero}, \eps$, that we can define, for $l\in [-\largeNum:\largeNum]$,
\begin{align}
\label{def: minisecmax with UIR w FPTAS}
    \minsecmax_{1, \UIR, l} 
    & \gets \optminsecmax_1 + C_l \eps^2; \quad
    \minsecmax_{0, \UIR} \gets  
    \begin{cases}
     \optminsecmax_0 & \optminsecmax_0 \leq  \frac{\anonyDen_1(1-\optminsecmax_1)}{\anonyDen_0} ~
     ;\\
     \frac{\anonyDen_1}{2\largeNum\anonyDen_0}\sum\nolimits_{l\in [-\largeNum: \largeNum-1]}(1-\minsecmax_{1, \UIR, l}) & \optminsecmax_0 > \frac{\anonyDen_1(1-\optminsecmax_1)}{\anonyDen_0}~.
    \end{cases}
\end{align}
Combining the two steps described above, we construct $(\anonySymmMarginal_{k, 1, \UIR}, \anonySymmMarginal_{k ,0, \UIR})_{k \in \setwZero}$(see \Cref{prof: marginal UIR}  for explicit forms). Let $\signalProb_{\UIR}$ be the calibrated signaling obtained by correlating them optimally according to \Cref{prop:opt cor general k} and \Cref{prop:opt cor k=1}. 
Then the signaling $\signalProb_{\UIR}$:
\begin{itemize}
    \item It is IR.
    \item 
    It has the following conditional second-highest bid distribution $\secmaxProb_{\UIR}(\cdot\mid \vec{\val})$:
    \begin{align}
    \renewcommand{\arraystretch}{4}
    \label{eq:opt secmax under uir}
    \secmaxProb_{\UIR}(\cdot\mid \vec{\val}) = 
    \begin{cases}
      \delta_{(1)} - \frac{\anonyDen_1 - \eps^2}{4\anonyDen_\bidderNum \largeNum}\cdot\indicator{\optminsecmax_0 \leq \frac{\anonyDen_1(1-\optminsecmax_1)}{\anonyDen_0}}\cdot \left(\delta_{(1)} -  \delta_{(\minsecmax_{1,\UIR, -\largeNum})}\right)~,
      & \quad \|\vec{\val}\|_1 = \bidderNum~;\\[8pt] 
      \delta_{(1)}~, & \quad \|\vec{\val}\|_1\in[2:\bidderNum-1]~;\\[8pt] 
      \frac{1}{2\largeNum} \cdot \sum\nolimits_{l \in [-\largeNum: \largeNum-1]} \delta_{(\minsecmax_{1, \UIR, l})} ~, & \quad \|\vec{\val}\|_1 = 1~;\\[8pt]
      \displaystyle 
      \delta_{(\minsecmax_{0, \UIR})}~, & \quad \|\vec{\val}\|_1 = 0~.
    \end{cases}
    \end{align}
    \item Its revenue $\Rev{\signalProb_{\UIR}}$ is $\eps$-approximate to $\Rev{\optsignalProb_{\UIR}}$: $\Rev{\signalProb_{\UIR}} \ge \Rev{\optsignalProb_{\UIR}} - \eps$.
\end{itemize}

\end{proposition}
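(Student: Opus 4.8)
Here is a proof plan for the final statement (the characterization of $\secmaxProb_{\UIR}(\cdot\mid\vec\val)$, together with IR and the $\eps$-revenue bound).

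\textbf{Overview and the conditional second-highest bid distribution.} The plan is to assemble $\signalProb_{\UIR}$ from the two-step-adjusted marginals $(\anonySymmMarginal_{k,1,\UIR},\anonySymmMarginal_{k,0,\UIR})_{k\in\setwZero}$ --- the step-one serrated marginals of \Cref{prop: optimal marginals new and uniform tie breaking}, with the $k=0$ marginal possibly modified by replacing its atom at $\optminsecmax_0$ with one at $\minsecmax_{0,\UIR}$ --- and the optimal correlation plans of \Cref{prop:opt cor general k} (for $k\neq 1,\bidderNum-1$) and \Cref{prop:opt cor k=1} (for $k\in\{1,\bidderNum-1\}$). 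First I would record the increasing sequence $(C_l)$ from \eqref{def: minisecmax with UIR w FPTAS} and check that the hypothesis $\eps\le\sqrt{\anonyDen_1}\wedge\tfrac{4\anonyDen_\bidderNum}{\anonyDen_1}$ (with $\largeNum=\lceil 1/\eps\rceil$) makes each $\minsecmax_{1,\UIR,l}=\optminsecmax_1+C_l\eps^2$ lie in $(\optminsecmax_0,1)$ and makes $\tfrac{\anonyDen_1-\eps^2}{4\anonyDen_\bidderNum\largeNum}\in[0,1]$, so the $k=\bidderNum$ correction term in \eqref{eq:opt secmax under uir} describes a genuine distribution. Then the formula \eqref{eq:opt secmax under uir} is read off case by case in $k=\|\vec\val\|_1$: for $k\in[2:\bidderNum-1]$ the outcome-$0$ marginal is $\delta_{(0)}$ and the outcome-$1$ marginal puts mass $\tfrac2k$ on $1$, so \Cref{prop:opt cor general k} pairs two outcome-$1$ bidders at $1$, giving $\delta_{(1)}$; for $k=\bidderNum$ the same mechanism gives $\delta_{(1)}$ except on the $\tfrac{\anonyDen_1-\eps^2}{4\anonyDen_\bidderNum\largeNum}$-fraction of profiles where a pair is formed at $\minsecmax_{1,\UIR,-\largeNum}$ (the mass reserved there for calibration, present only when step two is skipped); for $k=1$, plugging the serrated marginals into \Cref{prop:opt cor k=1} yields the correlation where the second-highest bid is $\minsecmax_{1,\UIR,l}$ and the highest is $\minsecmax_{1,\UIR,l+1}$, held by the unique outcome-$1$ bidder, with equal mass over $l\in[-\largeNum:\largeNum-1]$, i.e.\ $\tfrac{1}{2\largeNum}\sum_l\delta_{(\minsecmax_{1,\UIR,l})}$; for $k=0$ the marginal puts mass $\tfrac2\bidderNum$ on $\minsecmax_{0,\UIR}$, and pairing two bidders there gives $\delta_{(\minsecmax_{0,\UIR})}$.

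\textbf{Individual rationality.} Using that a winning bidder's bid equals its calibrated signal and that it pays the second-highest bid, I would compute the per-outcome-profile ex ante surplus of a representative bidder and sum with weights $\anonyDen_k$, exploiting the symmetry of $\anonyDen$. For $k\in[2:\bidderNum-1]$ the winner bids $1$, hence has signal $1$ and (by calibration, $\expect{\val\mid 1}=1$) realized outcome $1$, so its ex post surplus is $1-1=0$. For $k=\bidderNum$ every bidder has outcome $1$, so the winner's surplus is $1-\secmax(\vec x)\ge 0$. For $k=1$ the step-one construction forces the strictly highest bidder to be the unique outcome-$1$ bidder, who wins outright and pays $\minsecmax_{1,\UIR,l}<1$, giving surplus $1-\minsecmax_{1,\UIR,l}>0$. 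For $k=0$ the winner has outcome $0$ and pays $\minsecmax_{0,\UIR}$, contributing $-\minsecmax_{0,\UIR}$. Collecting terms, the per-bidder ex ante surplus is $\tfrac1\bidderNum$ times $\anonyDen_1\cdot\tfrac{1}{2\largeNum}\sum_{l\in[-\largeNum:\largeNum-1]}(1-\minsecmax_{1,\UIR,l})-\anonyDen_0\minsecmax_{0,\UIR}$ plus the nonnegative $k=\bidderNum$ contribution $\tfrac{\anonyDen_1-\eps^2}{4\largeNum}(1-\minsecmax_{1,\UIR,-\largeNum})\indicator{\optminsecmax_0\le\anonyDen_1(1-\optminsecmax_1)/\anonyDen_0}$. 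When $\optminsecmax_0>\anonyDen_1(1-\optminsecmax_1)/\anonyDen_0$ the indicator vanishes and $\minsecmax_{0,\UIR}$ is by definition $\tfrac{\anonyDen_1}{2\largeNum\anonyDen_0}\sum_l(1-\minsecmax_{1,\UIR,l})$, making the bracket exactly zero; when $\optminsecmax_0\le\anonyDen_1(1-\optminsecmax_1)/\anonyDen_0$ we have $\minsecmax_{0,\UIR}=\optminsecmax_0$ and I would use the design of $(C_l)$ in \Cref{prop: optimal marginals new and uniform tie breaking} (so that $\tfrac{1}{2\largeNum}\sum_l\minsecmax_{1,\UIR,l}\le\optminsecmax_1$) to bound the bracket below by $\anonyDen_1(1-\optminsecmax_1)-\anonyDen_0\optminsecmax_0+\text{(nonnegative term)}\ge 0$.

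\textbf{Revenue.} From the conditional distribution above, $\Rev{\signalProb_{\UIR}}=\anonyDen_0\minsecmax_{0,\UIR}+\anonyDen_1\cdot\tfrac{1}{2\largeNum}\sum_l\minsecmax_{1,\UIR,l}+\sum_{k=2}^{\bidderNum}\anonyDen_k-\text{(correction)}$. If $\optminsecmax_0\le\anonyDen_1(1-\optminsecmax_1)/\anonyDen_0$, the second step is not applied, so the constructed marginals are exactly those of \Cref{prop: optimal marginals new and uniform tie breaking}, and its part~(3) gives $\Rev{\signalProb_{\UIR}}=\sum_{k}\anonyDen_k\RevCorr{\anonySymmMarginal_{k,1,\UIR},\anonySymmMarginal_{k,0,\UIR}}\ge\Rev{\optsignalProb}-\eps\ge\Rev{\optsignalprobUIR}-\eps$, the last inequality because $\csSpaceUIR\subseteq\csSpace$. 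If $\optminsecmax_0>\anonyDen_1(1-\optminsecmax_1)/\anonyDen_0$, the IR computation shows the total bidder surplus is exactly $0$, and since the item is awarded to an outcome-$1$ bidder on every profile with $\|\vec\val\|_1\ge 1$, the achieved welfare equals $\sum_{k\in[\bidderNum]}\anonyDen_k=\Wel{\anonyDen}$; decomposing revenue as achieved welfare minus surplus yields $\Rev{\signalProb_{\UIR}}=\Wel{\anonyDen}$. As any IR calibrated signaling has revenue at most its achieved welfare, which is at most $\Wel{\anonyDen}$, we get $\Rev{\optsignalprobUIR}=\Wel{\anonyDen}=\Rev{\signalProb_{\UIR}}$, which in particular gives the $\eps$-bound; this also proves the ``full surplus'' clause of \Cref{thm:opt private IR w FPTAS} (and dovetails with \Cref{coro: no IR pi}).

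\textbf{Main obstacle.} The crux is the IR verification in the regime $\optminsecmax_0\le\anonyDen_1(1-\optminsecmax_1)/\anonyDen_0$: replacing the single atom $\optminsecmax_1$ by the serrated family $(\minsecmax_{1,\UIR,l})$ perturbs the conditional means seen by the outcome-$0$ and outcome-$1$ bidders, so the increments $C_l\eps^2$ and the correction mass $\tfrac{\anonyDen_1-\eps^2}{4\anonyDen_\bidderNum\largeNum}$ on $\minsecmax_{1,\UIR,-\largeNum}$ must be chosen simultaneously so that (i) the calibration identity \eqref{eq:bc for marginals} holds for all of $0,\minsecmax_{0,\UIR},\minsecmax_{1,\UIR,-\largeNum},\dots,\minsecmax_{1,\UIR,\largeNum},1$; (ii) enough outcome-$1$ mass sits on $\minsecmax_{1,\UIR,-\largeNum}$, which otherwise arises only as an outcome-$0$ second-highest bid; (iii) the ex ante surplus stays nonnegative; and (iv) the transport problem of \Cref{prop:opt cor k=1} is feasible on these serrated marginals. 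Checking that a single admissible choice of $(C_l)$ meets all four while remaining within $\eps$ of $\Rev{\optsignalProb}$ is the technical heart of the argument; the remaining steps are bookkeeping over \Cref{prop:opt cor general k}, \Cref{prop:opt cor k=1}, and \Cref{prop: optimal marginals new and uniform tie breaking}.
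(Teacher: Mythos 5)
Your plan follows essentially the same route as the paper: it assembles $\signalProb_{\UIR}$ from the two-step-adjusted marginals of \Cref{prop: optimal marginals new and uniform tie breaking} (with the $\optminsecmax_0$-atoms moved to $\minsecmax_{0,\UIR}$), invokes \Cref{prop:opt cor general k} and \Cref{prop:opt cor k=1} to read off $\secmaxProb_{\UIR}(\cdot\mid\vec{\val})$ case by case, verifies IR by the same ex ante surplus accounting the paper uses (binding when $\optminsecmax_0>\anonyDen_1(1-\optminsecmax_1)/\anonyDen_0$, slack otherwise), and obtains the revenue bound from part (3) of \Cref{prop: optimal marginals new and uniform tie breaking} in the first regime and from full-surplus extraction in the second. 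You also correctly locate the technical heart of the argument (the simultaneous calibration, feasibility, and IR constraints on the serrated sequence $(C_l)$ and the mass $d$ borrowed from $\optMarginal_{\bidderNum,1}(1)$), which is exactly where the paper's own, rather terse, proof does its work.
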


We also illustrate the CDF of the second-highest bids under $\signalProb_{\UIR}$ in \Cref{fig: CDF of 2nd bid under UIR}.
\begin{figure}[htbp]
    \centering
    \begin{tikzpicture}[scale = 1]
\def\x{0.025}
\def\tmin{0.3}
        \begin{axis}[
            axis lines = left,
            clip=false,  
            xmin=0, xmax=1.1,
            ymin=0, ymax=1.1,
            xtick={0,0.3,0.5,0.7,1},
            xticklabels={$0$, $\minsecmax_{0, \UIR} =\optminsecmax_0$,$0.5$, $\optminsecmax_{1}$,$1$},
            ytick={0,0.15,0.6+\x,1},
            yticklabels={$0$,$\lambda_0$,$\lambda_0+\lambda_1+\frac{\lambda_1 - \eps^2}{4\lambda_n M}$,$1$},
            width=10cm,
            height=6cm,
            thick,
            legend pos=north west,
            legend style={
            at={(axis cs:0.01,0.9)},  
            anchor=north west,      
            draw=black,            
            fill=white,
            font=\small,
            nodes={scale=0.8, transform shape}
            },
            label style={font=\small},      
            tick label style={font=\small},      
        ]

    \addplot[\highestBidColor, dashed, line width=1.2pt] coordinates {(0,0) (0.125,0)};
    \addlegendentry{highest bid}

    \addplot[blue, line width=1.2pt] coordinates {(0.8,0.6+\x) (1,0.6+\x)};
    \addlegendentry{second-highest bid}
        
        \addplot[blue, line width=1.5pt] coordinates {(0,0) (\tmin,0)};
        
        \addplot[blue, line width=1.5pt] coordinates {(\tmin,0.15) (0.575,0.15)};
        \addplot[blue, line width=1.5pt] coordinates {(0.575,0.195+\x) (0.6,0.195+\x)};
        \addplot[blue, line width=1.5pt] coordinates {(0.6,0.24+\x) (0.625,0.24+\x)};
        \addplot[blue, line width=1.5pt] coordinates {(0.625,0.285+\x) (0.65,0.285+\x)};
        \addplot[blue, line width=1.5pt] coordinates {(0.65,0.33+\x) (0.675,0.33+\x)};
        \addplot[blue, line width=1.5pt] coordinates {(0.675,0.375+\x) (0.7,0.375+\x)};
        \addplot[blue, line width=1.5pt] coordinates {(0.7,0.42+\x) (0.725,0.42+\x)};
        \addplot[blue, line width=1.5pt] coordinates {(0.725,0.465+\x) (0.75,0.465+\x)};
        \addplot[blue, line width=1.5pt] coordinates {(0.75,0.51+\x) (0.775,0.51+\x)};
        \addplot[blue, line width=1.5pt] coordinates {(0.775,0.555+\x) (0.8,0.555+\x)};
        
        \addplot[blue, line width=1.5pt] coordinates {(0.8,0.6+\x) (1,0.6+\x)};

        \addplot[\highestBidColor, dashed, line width=1.5pt] coordinates {(0,0) (\tmin,0)};

        \addplot[\highestBidColor, dashed, line width=1.5pt] coordinates {(\tmin,0.15) (0.575,0.15)};
        \addplot[\highestBidColor, line width=1.5pt] coordinates {(0.575,0.15) (0.6,0.15)};
        \addplot[\highestBidColor, line width=1.5pt] coordinates {(0.6,0.195) (0.625,0.195)};
        \addplot[\highestBidColor, line width=1.5pt] coordinates {(0.625,0.24) (0.65,0.24)};
        \addplot[\highestBidColor, line width=1.5pt] coordinates {(0.65,0.285) (0.675,0.285)};
        \addplot[\highestBidColor, line width=1.5pt] coordinates {(0.675,0.33) (0.7,0.33)};
        \addplot[\highestBidColor, line width=1.5pt] coordinates {(0.7,0.375) (0.725,0.375)};
        \addplot[\highestBidColor, line width=1.5pt] coordinates {(0.725,0.42) (0.75,0.42)};
        \addplot[\highestBidColor, line width=1.5pt] coordinates {(0.75,0.465) (0.775,0.465)};
        \addplot[\highestBidColor,  line width=1.5pt] coordinates {(0.775,0.51) (0.8,0.51)};
        \addplot[\highestBidColor, densely dashed, dash pattern=on 3pt off 1.5pt, line width=1.4pt] coordinates {(0.8,0.555) (1,0.555)};
        
        \addplot[blue, line width=1.5pt] coordinates {(0.8,0.6+\x) (1,0.6+\x)};

        \addplot[mark=*, blue, mark options={scale=0.7}, only marks, opacity=0.8] coordinates {(\tmin,0.15)};
        \addplot[mark=*, \highestBidColor, mark options={scale=0.55}, only marks, opacity=0.6] coordinates {(\tmin,0.15)};


        
        \addplot[mark=*, blue, mark options={scale=0.7}, only marks] coordinates {(0.575,0.195+\x)};
        \addplot[mark=*, blue, mark options={scale=0.7}, only marks] coordinates {(0.6,0.24+\x)};
        \addplot[mark=*, blue, mark options={scale=0.7}, only marks] coordinates {(0.625,0.285+\x)};
        \addplot[mark=*, blue, mark options={scale=0.7}, only marks] coordinates {(0.65,0.33+\x)};
        \addplot[mark=*, blue, mark options={scale=0.7}, only marks] coordinates {(0.675,0.375+\x)};
        \addplot[mark=*, blue, mark options={scale=0.7}, only marks] coordinates {(0.7,0.42+\x)};
        \addplot[mark=*, blue, mark options={scale=0.7}, only marks] coordinates {(0.725,0.465+\x)};
        \addplot[mark=*, blue, mark options={scale=0.7}, only marks] coordinates {(0.75,0.51+\x)};
        \addplot[mark=*, blue, mark options={scale=0.7}, only marks] coordinates {(0.775,0.555+\x)};
        \addplot[mark=*, blue, mark options={scale=0.7}, only marks] coordinates {(0.8,0.6+\x)};
        
        \addplot[mark=*, blue, only marks,opacity=0.8, mark options={scale=0.7}] coordinates {(1,1)};
        \addplot[mark=*, \highestBidColor, only marks, opacity=0.6, mark options={scale=0.55}] coordinates {(1,1)};


        \addplot[mark=*, \highestBidColor, mark options={scale=0.7}, only marks] coordinates {(0.6,0.195)};
        \addplot[mark=*, \highestBidColor, mark options={scale=0.7}, only marks] coordinates {(0.625,0.24)};
        \addplot[mark=*, \highestBidColor, mark options={scale=0.7}, only marks] coordinates {(0.65,0.285)};
        \addplot[mark=*, \highestBidColor, mark options={scale=0.7}, only marks] coordinates {(0.675,0.33)};
        \addplot[mark=*, \highestBidColor, mark options={scale=0.7}, only marks] coordinates {(0.7,0.375)};
        \addplot[mark=*, \highestBidColor, mark options={scale=0.7}, only marks] coordinates {(0.725,0.42)};
        \addplot[mark=*, \highestBidColor, mark options={scale=0.7}, only marks] coordinates {(0.75,0.465)};
        \addplot[mark=*, \highestBidColor, mark options={scale=0.7}, only marks] coordinates {(0.775,0.51)};
        \addplot[mark=*, \highestBidColor, mark options={scale=0.7}, only marks] coordinates {(0.8,0.555)};

    \addplot[dashed, gray, thin] coordinates {(\tmin,0) (\tmin,0.15)};
    \addplot[dashed, gray, thin] coordinates {(0.575,0) (0.575,0.195+\x)};
    \addplot[dashed, gray, thin] coordinates {(0.6,0) (0.6,0.24+\x)};
    \addplot[dashed, gray, thin] coordinates {(0.625,0) (0.625,0.285+\x)};
    \addplot[dashed, gray, thin] coordinates {(0.65,0) (0.65,0.33+\x)};
    \addplot[dashed, gray, thin] coordinates {(0.675,0) (0.675,0.375+\x)};
    \addplot[dashed, gray, thin] coordinates {(0.7,0) (0.7,0.42+\x)};
    \addplot[dashed, gray, thin] coordinates {(0.725,0) (0.725,0.465+\x)};
    \addplot[dashed, gray, thin] coordinates {(0.75,0) (0.75,0.51+\x)};
    \addplot[dashed, gray, thin] coordinates {(0.775,0) (0.775,0.555+\x)};
    \addplot[dashed, gray, thin] coordinates {(0.8,0) (0.8,0.6+\x)};
    \addplot[dashed, gray, thin] coordinates {(1,0) (1,1)};
    
    \addplot[dashed, gray, thin] coordinates {(0,0.15) (\tmin,0.15)};
    \addplot[dashed, gray, thin] coordinates {(0,1) (1,1)};



    \addplot[dashed, gray, thin] coordinates {(0,0.6+\x) (0.8,0.6+\x)};

        \end{axis}
    \end{tikzpicture}
    \caption{The CDF of the highest and second-highest bid distribution under $\signalProb_{\UIR}$ when $\optminsecmax_0 \leq \frac{\anonyDen_1(1-\optminsecmax_1)}{\anonyDen_0}$.}
    \label{fig: CDF of 2nd bid under UIR}
\end{figure}

\subsection{The FPTAS for Computing \texorpdfstring{$\signalProb_{\UIR}$}{piIR}}
\label{subsec:FPTAS}
We are now ready to summarize all previous discussions to present the algorithm in our FPTAS that computes the (approximately) optimal $\signalProb_{\UIR}$ in \Cref{alg: FPTAS algorithm}.  Let $\bitComplexity_k$ denote the bit complexity of the input $\anonyDen_k$.
The running time for computing $(\optprobForth_k, \optprobFortl_k)_{k \in [2: \bidderNum]}$ is $\text{poly}(\bidderNum, \sum\nolimits_{k\in\setwZero}\bitComplexity_k)$, since \ref{eq:linear system} is a linear system with $O(\bidderNum)$ variables and $O(\bidderNum)$ constraints that depend on the input $(\anonyDen_k)_{k \in \setwZero}$. According to \Cref{prop: optimal marginals new and uniform tie breaking}, there are at most $2/\eps + 2$ different signals, so the running time for constructing $\signalProb_{\UIR}$ is $\text{poly}(1/\eps)$.  Therefore, the total running time of \Cref{alg: FPTAS algorithm} is $\text{poly}(\bidderNum, \sum\nolimits_{k\in\setwZero}\bitComplexity_k, \sfrac{1}{\eps})$.
More importantly, when we have $\optminsecmax_0 > \frac{\anonyDen_1(1-\optminsecmax_1)}{\anonyDen_0}$, one can show that there exist an $\eps$ that depends on $(\anonyDen_k)_{k\in[\setwZero]}$ and $\bidderNum$ such that the signaling $\signalProb_\UIR$ output from \Cref{alg: FPTAS algorithm} can actually extract full surplus, implying that signaling $\signalProb_\UIR$ is indeed optimal.

\setcounter{AlgoLine}{0}
\begin{algorithm}[H]
    \caption{The FPTAS for Solving \ref{eq:opt IR}}
    \label{alg: FPTAS algorithm}
    \KwIn{$(\anonyDen_k)_{k \in \setwZero}$, $\eps > 0$; set $\largeNum = \lceil 1/\eps \rceil$}
    \KwOut{$\signalProb_{\UIR} \in \csSpace_{\UIR}$}
    
    Compute $(\optprobForth_k, \optprobFortl_k)_{k \in [2: \bidderNum]}$: Solve the linear system in~\ref{eq:linear system} and select any feasible solution $(\optprobForth_k, \optprobFortl_k)_{k \in [2: \bidderNum]}$\;
    
    Construct the sequence $(t_{1, \UIR, l})_{l \in [-\largeNum: \largeNum-1]}$ as follows:\;
    
    \Indp
        Set $c^\star = \sum_{k\in[2: \bidderNum]}\anonyDen_{k}\cdot k\optprobForth_k$\;
        
        Set $\optminsecmax_1 = \frac{ \anonyDen_1 + c^\star}{ 2\anonyDen_1 + c^\star}$\;
        
        For each $l \in [-\largeNum: \largeNum-1]$, set\\
        \hspace*{2em} $t_{1, \UIR, l} = \optminsecmax_1 + \dfrac{l}{2\largeNum} \cdot \dfrac{\anonyDen_1}{(2\anonyDen_1 + c^\star + \frac{l}{2\largeNum}\eps^2)(2\anonyDen_1+c^\star)} \cdot \eps^2$\;
        
    \Indm
    Set $t_{1, \UIR, \largeNum} = 1$\;

    Construct $(\anonySymmMarginal_{k, 1, \UIR}, \anonySymmMarginal_{k, 0, \UIR})_{k \in \setwZero}$ according to~\Cref{prop: conditional second-highest bid distribution under UIR}\;

    Correlate $(\anonySymmMarginal_{k, 1, \UIR}, \anonySymmMarginal_{k, 0, \UIR})_{k \in \setwZero}$ to obtain $\signalProb_{\UIR}$ as follows:\;
    
    \Indp
        For $k = 1$: For all $l \in [-\largeNum, \largeNum-1]$, correlate $t_{1, \UIR, l} \in \supp(\anonySymmMarginal_{1, 0, \UIR})$ with $t_{1, \UIR, l+1} \in \supp(\anonySymmMarginal_{1, 1, \UIR})$\;
            
        For $k \notin \{1, \bidderNum-1\}$: Correlate $\anonySymmMarginal_{k,1, \UIR}$ and $\anonySymmMarginal_{k,0, \UIR}$ using~\Cref{alg: Optimal Correlation}\;
        
        For $k = \bidderNum-1$: As $\anonySymmMarginal_{\bidderNum-1, 0, \UIR}$ contains only the support point $0$, correlate $\anonySymmMarginal_{\bidderNum-1, 1, \UIR}$ and $\anonySymmMarginal_{\bidderNum-1, 0, \UIR}$ using~\Cref{alg: Optimal Correlation} (this can be viewed as a special case with only $\bidderNum-1$ bidders, all having realized outcomes equal to $1$)\;
        
    \Indm
    \Return{$\signalProb_{\UIR}$}\;
\end{algorithm}

Next, we prove \Cref{thm:opt private IR w FPTAS}  from a marginal perspective.
\begin{proof}[Proof of \Cref{thm:opt private IR w FPTAS}]
According to \Cref{alg: FPTAS algorithm}, we can construct \(\signalProb_{\UIR}\) in time \(\text{poly}(\bidderNum, \sum\nolimits_{k\in\setwZero}\bitComplexity_k, \sfrac{1}{\eps})\), ensuring that the resulting conditional second-highest bid distribution matches \(\phi_{\UIR}\) defined in \Cref{prop: conditional second-highest bid distribution under UIR}.
Therefore, it suffices to establish the revenue guarantee of $\signalProb_{\UIR}$.

First, we show that when $\optminsecmax_0 \leq  \frac{\anonyDen_1(1-\optminsecmax_1)}{\anonyDen_0}$, we have $\Rev{\signalProb_{\UIR}} \ge \Rev{\optsignalProb_{\UIR}} - \eps$.
According to \Cref{prop: optimal marginals new and uniform tie breaking}, we have
\begin{align*}
\Rev{\optsignalProb_{\UIR}} - \Rev{\signalProb_{\UIR}} \leq
    \Rev{\optsignalProb} - \Rev{\signalProb_{\UIR}} \leq \eps ~.
\end{align*}
Thus, we have $\signalProb_{\UIR}$ $\eps$-approximate to $\optsignalProb_{\UIR}$.

Next, we show that
when $\optminsecmax_0 > \frac{\anonyDen_1(1-\optminsecmax_1)}{\anonyDen_0}$, 
there exists $\eps  \leq \frac{2\bidderNum\anonyDen_n}{\anonyDen_1} (\optprobFortl_{\bidderNum} - b_{\bidderNum, \UIR})$(the formal definitions of $b_{\bidderNum ,\UIR}$ and the intuition behind this condition are provided in \Cref{prof: marginal UIR}), such that
$\Rev{\signalProb_{\UIR}} = \Rev{\optsignalProb_{\UIR}} = \Wel{\anonyDen}$. 
According to definition of $\phi_{\UIR}$ in \Cref{prop: conditional second-highest bid distribution under UIR}, when $\optminsecmax_{0} > \anonyDen_1(1-\optminsecmax)/ \anonyDen_0$, we have
\begin{align*}
    \Rev{\signalProb_{\UIR}} 
    & = \anonyDen_0 \cdot \frac{\anonyDen_1}{2\largeNum\anonyDen_0}\sum_{l\in [-\largeNum: \largeNum-1]}(1-\minsecmax_{1, \UIR, l}) + \anonyDen_1 \cdot \frac{1}{2\largeNum} \cdot \sum_{l \in [-\largeNum: \largeNum-1]} \delta_{\minsecmax_{1, \UIR, l}} + \sum_{k \in [2:\bidderNum]} \anonyDen_k \cdot 1 \\
    & = \Wel{\anonyDen}~.
\end{align*}
It follows that 
\begin{align*}
    \Rev{\optsignalProb_{\UIR}} \leq \Wel{\anonyDen} = \Rev{\signalProb_{\UIR}} \leq \Rev{\optsignalProb_{\UIR}}~.
\end{align*}
Thus, we have $\Rev{\signalProb_{\UIR}} = \Rev{\optsignalProb_{\UIR}}$.
\qedhere
\end{proof}

\newpage
\bibliography{mybib.bib}

\appendix

\newcommand{\coroptSignalProb}{\signalProb^\dagger}

\section{Missing Proofs in \texorpdfstring{\Cref{subsec:transportation formulation}}{}}
\label{apx:proofs in transportation formulation}

\begin{proof}[Proof of \Cref{prop:pas without loss}]
Given any calibrated signaling $\signalProb$, our proof proceeds with three steps: in step 1, we show that the constructed calibrated signaling $\symmetricSignalProb$ in \eqref{eq:pas transformation} satisfies the properties in \Cref{def: k anonymous symmetric marginals}, i.e., it is {\pas}; 
in step 2, we show that $\symmetricSignalProb$ is also feasible, namely, it satisfies the Bayes consistency condition;
finally in step 3, we establish that $\Rev{\symmetricSignalProb} = \Rev{\signalProb}$.

\textbf{Step 1 -- $\symmetricSignalProb$ satisfies properties in \Cref{def: k anonymous symmetric marginals}.}
By the construction \eqref{eq:pas transformation}, for any $\vec{\val} \in \valprofileSpace$ and $\permutation \in \permutationSet([\bidderNum])$, it follows that, for all $\vec{\expectedVal} \in \expValProfileSpace$, 
\begin{align*}
    \symmetricSignalProb(\vec{\expectedVal}_{\permutation} \mid \vec{\val}_{\permutation}) &= \frac{1}{|\permutationSet([\bidderNum])|} \sum\nolimits_{\permutation_1 \in \permutationSet([\bidderNum])} \signalProb((\vec{\expectedVal}_{\permutation})_{\permutation_1} \mid (\vec{\val}_{\permutation})_{\permutation_1})\\
    & = \frac{1}{|\permutationSet([\bidderNum])|} \sum\nolimits_{\permutation' \in \permutationSet([\bidderNum])} \signalProb(\vec{\expectedVal}_{\permutation'} \mid \vec{\val}_{\permutation'}) = \symmetricSignalProb(\vec{\expectedVal} \mid \vec{\val})~.
\end{align*}
Thus, $\symmetricSignalProb$ satisfies the permutation-anonymity of \Cref{def: k anonymous symmetric marginals}. 

Next, we show that $\symmetricSignalProb$ satisfies the symmetry property defined in \Cref{def: k anonymous symmetric marginals}. 
For any $k \in \setwZero$, suppose there exists $\vec{\val}$ such that  $\|\vec{\val}\|_1 = k$, let $\symmetricSignalProb_i(\cdot \mid \vec{\val})$ be the bidder-specific marginal distribution for bidder $i$. For any bidder $i$ with $\val_i = 1$ and $k \ge 1$(notice that $\val_i = 1$ does not exist when $k = 0$), according to (\ref{eq:pas transformation}), it follows that
\begin{equation}
\label{eq: bar pi marginal distribution for \val_i = 1}
    \symmetricSignalProb_i(\expectedVal \mid \vec{\val}) = \frac{(n-k)!(k-1)!}{|\permutationSet([\bidderNum])|} \sum\nolimits_{\permutation \in \permutationSet([\bidderNum])}\sum\nolimits_{j \in \{j : (\vec{\val}_{\permutation})_{j} = 1\}} \signalProb_j(\expectedVal \mid \vec{\val}_{\permutation})~,
\end{equation}
where $\signalProb_j(\cdot \mid \vec{\val}_{\permutation})$ represents the bidder-specific marginal distribution of $\signalProb(\cdot \mid \vec{\val}_{\permutation})$.

Similarly, for any bidder $i$ with $\val_i = 0$ and $k < \bidderNum$(notice that $\val_i = 0$ does not exist when $k = \bidderNum$), we have
\begin{equation*}
    \symmetricSignalProb_i(\expectedVal \mid \vec{\val}) = \frac{k!(n-k-1)!}{|\permutationSet([\bidderNum])|} \sum\nolimits_{\permutation \in \permutationSet([\bidderNum])}\sum\nolimits_{j \in \{j : (\vec{\val}_{\permutation})_{j} = 0\}} \signalProb_j(\expectedVal \mid \vec{\val}_{\permutation})~.
\end{equation*}
We define the above $\symmetricSignalProb_i(\cdot \mid \vec{\val})$ as $\anonySymmMarginal_{k, 1}(\cdot)$ with $\val_i = 1$, and $\symmetricSignalProb_i(\cdot \mid \vec{\val})$ as $\anonySymmMarginal_{k, 0}(\cdot)$  with $\val_i = 0$. For any $\permutation \in \permutationSet([\bidderNum])$ and bidder $j$ such that $(\vec{\val}_{\permutation})_{j} = 1$, it follows that
\begin{align*}
    \symmetricSignalProb_j(\expectedVal \mid \vec{\val}_{\permutation}) &= \frac{(n-k)!(k-1)!}{|\permutationSet([\bidderNum])|} \sum\nolimits_{\permutation_1 \in \permutationSet([N])}\sum\nolimits_{l \in \{l: ((\vec{\val}_{\permutation})_{\permutation_1})_l = 1\}} \signalProb_l(\expectedVal \mid (\vec{\val}_{\permutation})_{\permutation_1})\\
    & = \frac{(n-k)!(k-1)!}{|\permutationSet([\bidderNum])|} \sum\nolimits_{\permutation' \in \permutationSet([N])}\sum\nolimits_{l \in \{l: (\vec{\val}_{\permutation'})_l = 1\}} \signalProb_l(\expectedVal \mid \vec{\val}_{\permutation'}) = \anonySymmMarginal_{k, 1}(\expectedVal) ~.
\end{align*}
Similarly, for any $\permutation \in \permutationSet([\bidderNum])$ and bidder $j$ such that $(\vec{\val}_{\permutation})_j = 0$, we have $\symmetricSignalProb_i(\expectedVal \mid \vec{\val}_{\permutation}) = \anonySymmMarginal_{k, 0}(\expectedVal)$. Thus, $\symmetricSignalProb$ satisfies the symmetry of \Cref{def: k anonymous symmetric marginals}.
In summary, $\symmetricSignalProb$ is {\pas}.

\textbf{Step 2 -- $\symmetricSignalProb$ is Bayes consistent.}
Next, we show that $\symmetricSignalProb$ is Bayes consistent.
Notice that the Bayes consistency condition of $\symmetricSignalProb$ can be reformulated using the bidder-specific marginals $\symmetricSignalProb_i$ 
Thus, it is sufficient to show that 
for any bidder $i$ and $\expectedVal \in [0, 1]$:
\begin{equation}
\label{eq: to prove BC for bar pi}
        \expectedVal\sum\nolimits_{\vec{\val} \in \valprofileSpace} \valprofileDen(\vec{\val}) \symmetricSignalProb_i(\expectedVal \mid \vec{\val}) = \sum\nolimits_{\vec{\val} \in \{\vec{\val}: \val_i = 1\}}\valprofileDen(\vec{\val}) \symmetricSignalProb_i(\expectedVal\mid \vec{\val})~.
\end{equation}
We first observe that 
\begin{align*}
     \sum\nolimits_{\vec{\val} \in \valprofileSpace} \valprofileDen(\vec{\val}) \symmetricSignalProb_i(\expectedVal \mid \vec{\val}) &= \sum\nolimits_{k \in \setwZero}\sum\nolimits_{\vec{\val} \in \valprofileSpace_k}\valprofileDen(\vec{\val})\symmetricSignalProb_i(\expectedVal \mid \vec{\val})\\
     & = \sum\nolimits_{k \in \setwZero}\left(\sum\nolimits_{\vec{\val} \in \{\vec{\val}: \val_i = 1, \vec{\val} \in \valprofileSpace_k \} } \valprofileDen(\vec{\val}) \symmetricSignalProb_i(\expectedVal \mid \vec{\val}) + \sum\nolimits_{\vec{\val} \in \{\vec{\val}: \val_i = 0, \vec{\val} \in \valprofileSpace_k \} } \valprofileDen(\vec{\val}) \symmetricSignalProb_i(\expectedVal \mid \vec{\val}) \right) ~.
\end{align*}
Since for any $\vec{\val}, \vec{\val}' \in \valprofileSpace_k$, $\valprofileDen(\vec{\val}) = \valprofileDen(\vec{\val}')$, we define $\vkProb_k \triangleq \valprofileDen(\vec{\val})$, for any $\vec{\val} \in \valprofileSpace_k$. 
Since $\symmetricSignalProb$ is {\pas}, according to \Cref{eq: bar pi marginal distribution for \val_i = 1} and the definition of symmetry in \Cref{def: k anonymous symmetric marginals}, given $k$, it follows that
\begin{align*}
    \sum\limits_{\vec{\val} \in \{\vec{\val}: \val_i = 1, \vec{\val} \in \valprofileSpace_k \} } \valprofileDen(\vec{\val})\symmetricSignalProb_i(\expectedVal \mid \vec{\val}) & = \vkProb_k \sum\nolimits_{\vec{\val} \in \{\vec{\val}: \val_i = 1, \vec{\val} \in \valprofileSpace_k \} } \symmetricSignalProb_i(\expectedVal \mid \vec{\val})\\
    & = \vkProb_k \binom{n-1}{k-1} \anonySymmMarginal_{k, 1}(\expectedVal)\\
    & = \vkProb_k \binom{n-1}{k-1} \frac{(n-k)!(k-1)!}{|\permutationSet([\bidderNum])|} \sum\nolimits_{\vec{\val} \in \valprofileSpace_k}\sum\nolimits_{\{j : \val_{j} = 1\}} \signalProb_j(\expectedVal \mid \vec{\val})\\
    & = \frac{\vkProb_k}{\bidderNum} \sum\nolimits_{\vec{\val} \in \valprofileSpace_k}\sum\nolimits_{j \in \{j : \val_{j} = 1\}} \signalProb_j(\expectedVal \mid \vec{\val})~.
\end{align*}
Similarly, given $k$, it follows that
\begin{align*}
    \sum\nolimits_{\vec{\val} \in \{\vec{\val}: \val_i = 0, \vec{\val} \in \valprofileSpace_k \} } \valprofileDen(\vec{\val}) \symmetricSignalProb_i(\expectedVal \mid \vec{\val}) = \frac{\vkProb_k}{\bidderNum} \sum\nolimits_{\vec{\val} \in \valprofileSpace_k}\sum\nolimits_{j \in \{j : \val_{j} = 0\}} \signalProb_j(\expectedVal \mid \vec{\val})~.
\end{align*}
Thus, we have the following observations for \eqref{eq: to prove BC for bar pi}:
\begin{align*}
    \text{LHS of } \eqref{eq: to prove BC for bar pi} 
    & = \expectedVal \sum\nolimits_{k \in \setwZero} \frac{\vkProb_k}{\bidderNum} \left( \sum\nolimits_{\vec{\val} \in \valprofileSpace_k}\sum\nolimits_{j \in \{j : \val_{j} = 1\}} \signalProb_j(\expectedVal \mid \vec{\val}) + \sum\nolimits_{\vec{\val} \in \valprofileSpace_k}\sum\nolimits_{j \in \{j : \val_{j} = 0\}} \signalProb_j(\expectedVal \mid \vec{\val}) \right); \\
    \text{RHS of } \eqref{eq: to prove BC for bar pi} 
    & = \sum\nolimits_{k \in \setwZero} \frac{\vkProb_k}{\bidderNum} \sum\nolimits_{\vec{\val} \in \valprofileSpace_k}\sum\nolimits_{j \in \{j : \val_{j} = 1\}} \signalProb_j(\expectedVal \mid \vec{\val})~.
\end{align*}
Since $\signalProb$ is calibrated, for any bidder $j$ and $\expectedVal \in [0, 1]$, it follows that 
\begin{align}
\label{eq: satisfy BC}
        & \expectedVal\sum\nolimits_{\vec{\val} \in \valprofileSpace} \valprofileDen(\vec{\val}) \signalProb_j(\expectedVal \mid \vec{\val}) = \sum\nolimits_{\vec{\val} \in \{\vec{\val}: \val_j = 1\}}\valprofileDen(\vec{\val}) \signalProb_j(\expectedVal \mid \vec{\val})~.
\end{align}
Since for any $j \in [\bidderNum]$, (\ref{eq: satisfy BC}) holds, we have
\begin{equation}
\label{eq: left and right}
    \expectedVal\sum\nolimits_{\vec{\val} \in \valprofileSpace}  \sum\nolimits_{j \in [\bidderNum]} \valprofileDen(\vec{\val})  \signalProb_j(\expectedVal \mid \vec{\val}) = \sum\nolimits_{j \in [\bidderNum]}\sum\nolimits_{\vec{\val} \in \{\vec{\val}: \val_j = 1\}}\valprofileDen(\vec{\val}) \signalProb_j(\expectedVal \mid \vec{\val})~.
\end{equation}
We further observe the following for the above equation:
\begin{align*} 
         \text{LHS of } \eqref{eq: left and right}
         & =  \expectedVal \sum\nolimits_{j \in [\bidderNum]} \sum\nolimits_{k \in \setwZero}\left(\sum\nolimits_{\vec{\val} \in \{\vec{\val}: \val_j = 1, \vec{\val} \in \valprofileSpace_k \} } \valprofileDen(\vec{\val}) \signalProb_j(\expectedVal \mid \vec{\val}) + \sum\nolimits_{\vec{\val} \in \{\vec{\val}: \val_j = 0, \vec{\val} \in \valprofileSpace_k \} } \valprofileDen(\vec{\val}) \signalProb_j(\expectedVal \mid \vec{\val}) \right) \\
         & = \expectedVal \sum\nolimits_{k \in \setwZero} \vkProb_k \left(\sum\nolimits_{\vec{\val} \in \valprofileSpace_k}\sum\nolimits_{j \in \{j : \val_{j} = 1\}} \signalProb_j(\expectedVal \mid \vec{\val}) + \sum\nolimits_{\vec{\val} \in \valprofileSpace_k}\sum\nolimits_{j \in \{j : \val_{j} = 0\}} \signalProb_j(\expectedVal \mid \vec{\val})  \right);\\
         \text{RHS of } \eqref{eq: left and right}
         & = \sum\nolimits_{k \in \setwZero} \vkProb_k \sum\nolimits_{\vec{\val} \in \valprofileSpace_k}\sum\nolimits_{j \in \{j : \val_{j} = 1\}} \signalProb_j(\expectedVal \mid \vec{\val})~.
\end{align*}
Thus, we have that (\ref{eq: to prove BC for bar pi}) holds, that means $\symmetricSignalProb$ is Bayes consistent. It is obvious that for any $\vec{\val} \in \valprofileSpace$ it follows that
$\int_{\vec{\expectedVal}} \symmetricSignalProb(\vec{\expectedVal} \mid \vec{\val}) \mathrm{d} \vec{\expectedVal} = 1$.
Thus, $\symmetricSignalProb$ is a feasible calibrated signaling.

\textbf{Step 3 -- $\Rev{\symmetricSignalProb} = \Rev{\signalProb}$.}
The revenue of the seller when she uses $\symmetricSignalProb$ is as follows,
\begin{align*}
    \Rev{\symmetricSignalProb} & =  \sum\nolimits_{\vec{\val} \in \valprofileSpace} \valprofileDen(\vec{\val}) \int_{\vec{\expectedVal} \in \expValProfileSpace} \secmax(\vec{\expectedVal}) \cdot \symmetricSignalProb(\vec{\expectedVal} \mid \vec{\val}) ~\mathrm{d}\vec{\expectedVal} \\
    & = \sum\nolimits_{k \in \setwZero} \vkProb_k \sum\nolimits_{\vec{\val} \in \valprofileSpace_k}\int_{\vec{\expectedVal} \in \expValProfileSpace } \secmax(\vec{\expectedVal}) \cdot \symmetricSignalProb(\vec{\expectedVal} \mid \vec{\val}) ~\mathrm{d}\vec{\expectedVal} \\
    & = \sum\nolimits_{k \in \setwZero} \vkProb_k \sum\nolimits_{\vec{\val} \in \valprofileSpace_k} \sum\nolimits_{\permutation \in \permutationSet([\bidderNum])} \frac{1}{|\permutationSet([\bidderNum])|}\int_{\vec{\expectedVal} \in \expValProfileSpace } \secmax(\vec{\expectedVal}) \cdot \signalProb(\vec{\expectedVal}_{\permutation} \mid \vec{\val}_{\permutation}) ~\mathrm{d}\vec{\expectedVal} \\
    & = \sum\nolimits_{k \in \setwZero} \vkProb_k \sum\nolimits_{\vec{\val} \in \valprofileSpace_k}\int_{\vec{\expectedVal} \in \expValProfileSpace } \secmax(\vec{\expectedVal}) \cdot \signalProb(\vec{\expectedVal} \mid \vec{\val}) ~\mathrm{d}\vec{\expectedVal}  \\
    & =
    \sum\nolimits_{\vec{\val} \in \valprofileSpace} \valprofileDen(\vec{\val}) \int_{\vec{\expectedVal} \in \expValProfileSpace} \secmax(\vec{\expectedVal}) \cdot\signalProb(\vec{\expectedVal} \mid \vec{\val}) ~\mathrm{d}\vec{\expectedVal}
    = \Rev{\signalProb}~.
\end{align*}
We thus finish the proof.
\end{proof}

\begin{proof}[Proof of \Cref{cor:equal sexmax marginal}]
According to the permutation-anonymity of  \Cref{def: k anonymous symmetric marginals}, for any $\vec{\val} \in \valprofileSpace_k$ and $\vec{\expectedVal}$, we have $\symmetricSignalProb(\vec{\expectedVal} \mid \vec{\val}) = \symmetricSignalProb(\vec{\expectedVal}_{\permutation} \mid \vec{\val}_{\permutation})$ for any $\permutation \in \permutationSet([\bidderNum])$. Since for any $\permutation$,  $\secmax(\vec{\expectedVal}_{\permutation}) = \secmax(\vec{\expectedVal})$. It follows that, for any $x \in [0,1]$,
\begin{align*}
    \secmaxProb(x \mid \vec{\val})  
    = \int_{\vec{\expectedVal} \in \{\vec{\expectedVal}: \secmax(\vec{\expectedVal}) = x \}  } \symmetricSignalProb(\vec{\expectedVal} \mid \vec{\val}) ~\mathrm{d}\vec{\expectedVal}
    & = \int_{\vec{\expectedVal} \in \{\vec{\expectedVal}: \secmax(\vec{\expectedVal}_{\permutation}) = x \}  } \symmetricSignalProb(\vec{\expectedVal} \mid \vec{\val}_{\permutation}) ~\mathrm{d}\vec{\expectedVal} =
    \secmaxProb(x \mid \vec{\val}_{\permutation} )~.
\end{align*}
Thus, for any $k \in \setwZero$, we have $\secmaxProb(x\mid  \vec{\val}) \equiv \secmaxProb_k(x)$ for all $\vec{\val}\in\valprofileSpace_k$.
\end{proof}

\begin{proof}[Proof of \Cref{lem:feasible marginals}]
We prove the result in two steps.
First, we show that for any feasible {\pas} calibrated signaling $\symmetricSignalProb$, its Bayes consistency condition \eqref{eq:BC private} can be expressed as in \eqref{eq:bc for marginals}.
To see this, we notice that for any $\vec{\val}, \vec{\val}'$ with $\|\vec{\val}\|_1 = \|\vec{\val}\|_1$ we have $\valprofileDen(\vec{\val}) = \valprofileDen(\vec{\val})$, we let $p_k \triangleq \valprofileDen(\vec{\val})$ for any $\vec{\val}$ with $\|\vec{\val}\|_1 = k$, and we have $\anonyDen_k = \binom{\bidderNum}{k} \cdot p_k$. Let $(\anonySymmMarginal_{k, 1}, \anonySymmMarginal_{k, 0})_{k\in\setwZero}$ be the corresponding marginals of the calibrated signaling $\symmetricSignalProb$.
Then calibrated \eqref{eq:BC private} condition of $\symmetricSignalProb$ can be rewritten as follows: for all $x \in [0, 1]$, we have
\begin{align}
    x & = \frac{\sum\nolimits_{k \in[\bidderNum]}\binom{\bidderNum-1}{k-1} \cdot p_k   \anonySymmMarginal_{k, 1}(x)}{\sum\nolimits_{k \in[\bidderNum]}  \binom{\bidderNum-1}{k-1} \cdot p_k   \anonySymmMarginal_{k, 1}(x) + \sum\nolimits_{k \in[\bidderNum-1]_0}\binom{\bidderNum-1}{k} \cdot p_k   \anonySymmMarginal_{k, 0}(x)} \notag\\
    & = \frac{\sum\nolimits_{k \in[\bidderNum]}\anonyDen_k\cdot k \anonySymmMarginal_{k, 1}(x)}{\sum\nolimits_{k \in[\bidderNum]}  \anonyDen_k \cdot k   \anonySymmMarginal_{k, 1}(x) + \sum\nolimits_{k \in[\bidderNum-1]_0}\anonyDen_k \cdot (\bidderNum - k) \anonySymmMarginal_{k, 0}(x)}~.
    \notag
\end{align}
Next, we show that for any collection of marginals $(\anonySymmMarginal_{k,1}, \anonySymmMarginal_{k,0})_{k \in \setwZero}$ satisfying \eqref{eq:bc for marginals}, we can construct a feasible {\pas} calibrated signaling  that can induce these marginals. 
Given a collection of marginals $(\anonySymmMarginal_{k,1}, \anonySymmMarginal_{k,0})_{k \in \setwZero}$ satisfying \eqref{eq:bc for marginals}, 
we consider a collection of transportation plans $(\correlaSignalProb_k)_{k\in\setwZero}$ where each 
$\correlaSignalProb_k\in\Delta(\expValProfileSpace)$is a transportation plan constructed as in \Cref{def: feasible transportation plan} for the marginal pair $\anonySymmMarginal_{k,1}, \anonySymmMarginal_{k,0}$.
We now construct a calibrated signaling $\symmetricSignalProb$ as follows: let $\vec{\val}_k \in \valprofileSpace_k$ be an outcome profile in which its first $k$ dimensions have outcome of $1$, and remaining dimensions have outcome of $0$:
\begin{align*}
    \symmetricSignalProb(\vec{\expectedVal}_\permutation\mid (\vec{\val}_k)_\permutation )
    = \correlaSignalProb_k(\vec{\expectedVal})~, \quad 
    \vec{\expectedVal} \in \supp(\correlaSignalProb_k)~, ~
    \permutation\in\permutationSet([\bidderNum])~,~ k\in\setwZero~.
\end{align*}
It is easy to verify that by construction the above $\symmetricSignalProb$ is indeed a feasible calibrated signaling and meanwhile, it is {\pas} as it satisfies all the properties in \Cref{def: k anonymous symmetric marginals} where its induced marginals are exactly $(\anonySymmMarginal_{k,1}, \anonySymmMarginal_{k,0})_{k \in \setwZero}$.
Combining the above two steps, we thus finish the proof.
\end{proof}

\begin{proof}[Proof of \Cref{prop:new formulation opt}]
Recall that from \Cref{prop:pas without loss}, we can express the seller's optimal revenue as follows:
\begin{align*}
    \Rev{\optPriInforStructure}
    &= \max\limits_{\text{all feasible} {\pas} \symmetricSignalProb} \Rev{\symmetricSignalProb}\\
    &\overset{(a)}{ = } \max\limits_{(\correlaSignalProb_k)_{k \in \setwZero}} \sum\nolimits_{k \in \setwZero} \anonyDen_k \int_{\vec{\expectedVal}} \secmax(\vec{\expectedVal}) \cdot \correlaSignalProb_k(\vec{\expectedVal})~\mathrm{d}\vec{\expectedVal} \\
    &\overset{(b)}{ = }  
    \max\limits_{(\anonySymmMarginal_{k, 1}, \anonySymmMarginal_{k, 0})_{k\in\setwZero}\in \feaAnonyMarginals}  \sum\nolimits_{k \in \setwZero} \anonyDen_k \max_{\correlaSignalProb_k \in \transPlan(\anonySymmMarginal_{k, 1}, \anonySymmMarginal_{k,0})} \int_{\vec{\expectedVal}} \secmax(\vec{\expectedVal}) \cdot \correlaSignalProb_k(\vec{\expectedVal})~\mathrm{d}\vec{\expectedVal} \\
    &\overset{(c)}{ = }  
    \max\limits_{(\anonySymmMarginal_{k, 1}, \anonySymmMarginal_{k, 0})_{k\in\setwZero}\in \feaAnonyMarginals}  \sum\nolimits_{k \in \setwZero} \anonyDen_k \cdot \RevCorr{\anonySymmMarginal_{k, 1}, \anonySymmMarginal_{k, 0}}~,
\end{align*}
where equality (a) is by the property of {\pas} calibrated signaling $\symmetricSignalProb$, 
equality (b) is by \Cref{lem:feasible marginals}, and equality (c) is by the definition of $\secmaxProb_k$ and the definition of program \ref{eq:opt correlation}.
\end{proof}


\section{Missing Proofs in \texorpdfstring{\Cref{subsec:opt correlation}}{}}
\label{apx:proofs in subsec:opt correlation}
\subsection{Analysis of optimal correlation for \texorpdfstring{$k\neq 1, \bidderNum-1$}{}}
\label{subsubsec:proof prop:opt cor general k}

\begin{proof}[Proof of \Cref{prop:opt cor general k}]
Given any pair of marginals $\anonySymmMarginal_{k, 1}, \anonySymmMarginal_{k, 0}$  for any $k\in\setwZero\setminus\{1, \bidderNum-1\}$,
our proof proceeds with two steps: in step 1, we construct a correlation $\coroptCorSignalProb_k$ such that it is indeed a feasible correlation plan, namely, $\coroptCorSignalProb_k\in\transPlan(\anonySymmMarginal_{k, 1}, \anonySymmMarginal_{k, 0})$;
then in step 2, we show that its corresponding second-highest bid marginal $\coroptSecmaxProb_k $ can indeed attain the upper bound that we establish in \Cref{lem:upper bound for secmax prob}.

\textbf{Step 1 -- Constructing a feasible correlation $\coroptCorSignalProb_k\in\transPlan(\anonySymmMarginal_{k, 1}, \anonySymmMarginal_{k, 0})$}
We construct a feasible $\coroptCorSignalProb_k\in\transPlan(\anonySymmMarginal_{k, 1}, \anonySymmMarginal_{k, 0})$ by constructing a conditional distribution $\symmetricSignalProb(\cdot\mid\vec{\val})  \in\transPlan(\anonySymmMarginal_{k, 1}, \anonySymmMarginal_{k, 0})$ for some outcome profile $\vec{\val}\in\valprofileSpace_k$.
In particular, we present the \Cref{alg: Optimal Correlation}, which, given any two marginals $\anonySymmMarginal_{k,0}, \anonySymmMarginal_{k, 1}$ and given any outcome profile $\vec{\val}\in\valprofileSpace_k$, returns a feasible joint distribution $\symmetricSignalProb(\cdot\mid\vec{\val})\in \transPlan(\anonySymmMarginal_{k, 1}, \anonySymmMarginal_{k, 0})$.
 
For each bidder \( i \) and the expected outcome \( \expectedVal \in \supp(f_{k, \val_i}(\cdot)) \), we define that if the current \( \symmetricSignalProb_i(\expectedVal) = \anonySymmMarginal_{k, \val_i}(x) \), then \( \expectedVal \) is considered \textbf{exhausted} for bidder \( i \). And if there exists some bidder $i$ such that $\expectedVal$ is not exhausted for bidder $i$, we define $\expectedVal$ is \textbf{available}. For any expected outcome $\expectedVal$ and bidder $i$, we denote bidder $i$ as the relevant bidder for $\expectedVal$, if $\expectedVal \in \supp(\anonySymmMarginal_{k, \val_i})$.  We further define:
\begin{align*}
        \underline{\expectedVal}_i &= \min \left\{ \expectedVal \mid \expectedVal \in \supp(\anonySymmMarginal_{k, \val_i}(\cdot)) \text{ and } \expectedVal \text{ is not exhausted for bidder } i \right\}~. \\ 
        \setofAvaliableX &= \{\expectedVal \mid \expectedVal \in \supp(\anonySymmMarginal_{k,0}(\cdot)) \cup \supp(\anonySymmMarginal_{k, 1}(\cdot)) \text{ and } \expectedVal \text{ is  available} \}~.
\end{align*}
Without loss of generality, we assume that $\vec{\val}$ with $\|\vec{\val}\|_1 = k$ satisfies that  for $i  = 1, 2 ,\cdots k$, $\val_i = 1$; otherwise, $\val_i = 0$. The core idea of the algorithm is to iteratively select the largest expected outcome $x$ from the set $\setofAvaliableX$. When constructing the vector of expected outcomes, we assign the expected outcome $x$ to two relevant bidders and set the expected outcomes of the remaining bidders to $\underline{x}_i$. Since $k \neq 1, N-1$, we can always identify such two bidders in each iteration. We employ a rounding method to ensure that $x$ is exhausted for the relevant bidders, repeating this process until all expected outcomes are exhausted. 
The details of the procedure is provided in \Cref{alg: Optimal Correlation}.

By following this iterative construction process, we ensure that the conditional distributions \( \symmetricSignalProb(\cdot \mid \vec{\val}) \) are optimally correlated while maintaining the constraints imposed by \( P \). The termination condition guarantees that all relevant outcomes are exhausted for bidders, thereby satisfying the necessary optimality criteria.  
Moreover, according to the procedure of \Cref{alg: Optimal Correlation}, the second-highest bid marginal of $\coroptSecmaxProb_k $ indeed has the form that we state in \Cref{prop:opt cor general k}.

\textbf{Step 2 -- Proving that \Cref{lem:upper bound for secmax prob} is indeed tight under $\coroptSecmaxProb_k $}
We next show that the above constructed correlation can indeed obtain the upper bound that we provide in \Cref{lem:upper bound for secmax prob}.
In other words, $\coroptSecmaxProb_k$ is an optimal solution of the program \ref{eq:opt via secmax marginal}. 
Let $\minsecmax_k$ be defined in \Cref{lem:upper bound for secmax prob}.
We consider following two cases:
\begin{itemize}
    \item 
    If there is no mass for $\frac{k}{2}\anonySymmMarginal_{k,1}(x) + \frac{\bidderNum - k}{2}\anonySymmMarginal_{k, 0}(x)$ at $\minsecmax_k$, by definition of $\minsecmax_k$, we have
    \begin{equation*}
        \int_{\minsecmax_k}^1 \frac{k}{2} \anonySymmMarginal_{k, 1}(x) + \frac{\bidderNum - k}{2} \anonySymmMarginal_{k, 0}(x) ~\mathrm{d}\expectedVal = 1~.
    \end{equation*}
    According to \Cref{lem:upper bound for secmax prob}, it follows that 
    \begin{align*}
        \int_{0}^1 x \cdot \secmaxProb_k(x) ~\mathrm{d} x &\leq   \int_{\minsecmax_k}^{1} (x-\minsecmax_k) \cdot \left(\frac{k}{2} \anonySymmMarginal_{k, 1}(x) + \frac{\bidderNum - k}{2} \anonySymmMarginal_{k, 0}(x)\right) ~\mathrm{d}\expectedVal  + \minsecmax_k \\
        & = \int_{\minsecmax_k}^{1} x \cdot \left(\frac{k}{2} \anonySymmMarginal_{k, 1}(x) + \frac{\bidderNum - k}{2} \anonySymmMarginal_{k, 0}(x)\right) ~\mathrm{d}\expectedVal~.
    \end{align*}
    According to the definition of $\coroptSecmaxProb_k(x)$, it follows that
    \begin{align*}
        \int_{0}^1 x \cdot \coroptSecmaxProb_k(x) ~\mathrm{d} x & = \int_{\minsecmax_k}^{1} x \cdot \left(\frac{k}{2} \anonySymmMarginal_{k, 1}(x) + \frac{\bidderNum - k}{2} \anonySymmMarginal_{k, 0}(x)\right) ~\mathrm{d}\expectedVal  ~.
    \end{align*}
    Thus, 
    $\coroptSecmaxProb_k$ is indeed an optimal solution of \ref{eq:opt via secmax marginal}.
    \item 
    Similarly, if  there is a mass for $\frac{k}{2}\anonySymmMarginal_{k,1}(x) + \frac{\bidderNum - k}{2}\anonySymmMarginal_{k, 0}(x)$ at $\minsecmax_k$, by definition of $\coroptSecmaxProb_k(x)$,  it follows that
    \begin{equation*}
        \int_{0}^1 x \cdot \coroptSecmaxProb_k(x) ~\mathrm{d} x  = \int_{\minsecmax_k}^{1} (x-\minsecmax_k) \cdot \left(\frac{k}{2} \anonySymmMarginal_{k, 1}(x) + \frac{\bidderNum - k}{2} \anonySymmMarginal_{k, 0}(x)\right) ~\mathrm{d}\expectedVal  + \minsecmax_k  ~.
    \end{equation*}
    According to \Cref{lem:upper bound for secmax prob}, $\coroptSecmaxProb_k$ is an optimal solution of \ref{eq:opt via secmax marginal}.
\end{itemize}
We thus finish the proof.
\end{proof}

\subsection{Analysis of optimal correlation for \texorpdfstring{$k=1, \bidderNum-1$}{}}
\label{subsubsec:proof prop:opt cor k=1}
We take \( k = 1 \) as an example to demonstrate the optimal correlation method.
\begin{proof}[Proof of \Cref{lem:case 3 max = secmax}]
We prove this by contradiction. Let  \(\coroptCorSignalProb_1 \) be the correlation plan obtained by optimally correlating \( \anonySymmMarginal_{1, 1} \) and \( \anonySymmMarginal_{1, 0} \), we will construct another correlation plan $\correlaSignalProb'_1$ based on $\coroptCorSignalProb_1$.
Given $\coroptCorSignalProb_1$,  suppose there exists $\vec{\expectedVal}  \in \xType_3$ and $\mymax(\vec{\expectedVal}) > \secmax(\vec{\expectedVal})$. Let $\mymax(\vec{\expectedVal}) = a$ and $\secmax(\vec{\expectedVal}) = b$. Without loss of generality, assume that the component with the largest value of \( \vec{\expectedVal} \) is \( \expectedVal_2 \), and the second largest component is \( \expectedVal_3 \), i.e. $\vec{\expectedVal} = (\expectedVal_1, a, b, \expectedVal_4, \cdots, \expectedVal_{\bidderNum})$. Let $\permutation \in \permutationSet(\{2, 3, \cdots, \bidderNum\})$ be a permutation such that $\permutation(2) = 3, \permutation(3) = 2$ and $\permutation(i) = i, \forall i \neq 2,3$, we have $\vec{\expectedVal}_{\permutation} = (\expectedVal_1, b, a, \expectedVal_4, \cdots , \expectedVal_\bidderNum)$. 
According to the definition of \Cref{def: k anonymous symmetric marginals}, it follows that $\coroptCorSignalProb_1(\vec{\expectedVal}) = \coroptCorSignalProb_1(\vec{\expectedVal}_{\permutation})$. 
We define $p = \coroptCorSignalProb_1(\vec{\expectedVal})$.
Then, we only exchange the values of the third dimension components of \( \vec{\expectedVal} \) and \( \vec{\expectedVal}_{\permutation} \), resulting in \( \vec{\expectedVal}' = (\expectedVal_1, a, a, \expectedVal_4, \cdots, \expectedVal_\bidderNum) \) and \( \vec{\expectedVal}'' = (\expectedVal_1, b, b, \cdots, \expectedVal_\bidderNum) \). We let $\correlaSignalProb'_1(\vec{\expectedVal}') = \correlaSignalProb'_1(\vec{\expectedVal}'') = p$. It follows that 
\begin{align*}
    &\secmax(\vec{\expectedVal}') \cdot \correlaSignalProb'_1(\vec{\expectedVal}') + \secmax(\vec{\expectedVal}'')\cdot \correlaSignalProb'_1(\vec{\expectedVal}'')\\
    &- (\secmax(\vec{\expectedVal})\cdot \coroptCorSignalProb_1(\vec{\expectedVal}) + \secmax(\vec{\expectedVal}_{\permutation})\cdot \coroptCorSignalProb_1(\vec{\expectedVal}_{\permutation}) = (a - b)p~. 
\end{align*}
Therefore, following this method, for all \( \permutation \in \permutationSet(\{2, 3, \cdots, \bidderNum\}) \), there exists a \( \permutation' \), and we can pairwise exchange the second largest component of \( \vec{\expectedVal}_{\permutation} \) with the  largest component of \( (\vec{\expectedVal}_{\permutation})_{\permutation'} \) to construct \( \correlaSignalProb'_1(\vec{\expectedVal}'_{\permutation} \) and $\correlaSignalProb'_1(\vec{\expectedVal}''_{\permutation})$. For all $\vec{\expectedVal}' \notin \{\vec{\expectedVal}' : \exists \permutation \in \permutationSet(\{2, \cdots, \bidderNum\}), \vec{\expectedVal}' = \vec{\expectedVal}_{\permutation}\}$, let $\correlaSignalProb'_1(\vec{\expectedVal}') = \coroptCorSignalProb_1(\vec{\expectedVal}')$. Clearly, \( \correlaSignalProb'_1 \) is a feasible correlation plan. 
\begin{align*}
    \int_{\vec{\expectedVal}} \secmax(\vec{\expectedVal})\correlaSignalProb'_1(\vec{\expectedVal}) ~\mathrm{d}\vec{\expectedVal} - 
     \int_{\vec{\expectedVal}} \secmax(\vec{\expectedVal}) \coroptCorSignalProb_1(\vec{\expectedVal}) ~\mathrm{d}\vec{\expectedVal}
    = \frac{|\permutationSet(\{2, 3, \cdots, \bidderNum\})|}{2}(a-b)p > 0~.
\end{align*}
This contradicts \( \coroptCorSignalProb_1 \) is the optimal correlation plan. 
\end{proof}

\begin{lemma}
\label{lem: min secmax >= max 3-max}
Given \( \anonySymmMarginal_{1, 1} \) and \( \anonySymmMarginal_{1, 0} \), let \( \coroptCorSignalProb_1 \) be the correlation plan obtained by optimally correlating \( \anonySymmMarginal_{1, 1} \) and \( \anonySymmMarginal_{1, 0} \). Let $\underline{x} = \inf\{x: \exists \vec{\expectedVal} \text{ such that } \coroptCorSignalProb_1(\vec{\expectedVal}) > 0, \secmax(\vec{\expectedVal}) = x\}$. Let $\bar{x} = \sup\{x: \exists \vec{\expectedVal} \text{ such that } \coroptCorSignalProb_1(\vec{\expectedVal}) > 0 \text{ and } x \text{ is the $k$-th highest value of } \vec{\expectedVal}, k > 2\}$. It follows that, $\underline{x} \ge \bar{x}$.   
\end{lemma}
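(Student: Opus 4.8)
The plan is to argue by contradiction via an exchange argument on the optimal correlation plan, in the spirit of the proof of \Cref{lem:case 3 max = secmax}. Suppose $\underline{x} < \bar{x}$ and fix a level $b$ with $\underline{x} < b < \bar{x}$. By definition of $\bar{x}$ there is a \textit{wasteful} profile $\vec{y}$ with $\coroptCorSignalProb_1(\vec{y})>0$ having a coordinate of value $\ge b$ strictly below its top two entries; since that coordinate has rank at least $3$, the two largest entries of $\vec{y}$ are also $\ge b$, so $\vec{y}$ has at least three coordinates $\ge b$. By definition of $\underline{x}$ there is a \textit{deficient} profile $\vec{z}$ with $\coroptCorSignalProb_1(\vec{z})>0$ and $\secmax(\vec{z})<b$, hence $\vec{z}$ has at most one coordinate $\ge b$. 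The goal is to build a feasible plan $\correlaSignalProb_1'\in\transPlan(\anonySymmMarginal_{1,1},\anonySymmMarginal_{1,0})$ with strictly larger expected second-highest bid by transferring a small mass $\eta>0$: move $\ge b$-valued entries out of below-top-two positions of (permuted copies of) $\vec{y}$ into the two largest positions of $\vec{z}$, and push the displaced small entries back. Because each such transfer swaps entries between two profiles position-by-position, all bidder marginals are preserved, so $\correlaSignalProb_1'$ is feasible by \Cref{def: feasible transportation plan}, and a strict improvement contradicts optimality of $\coroptCorSignalProb_1$.

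For the transfer, let $j\ge 3$ be the number of coordinates of $\vec{y}$ that are $\ge b$; these occupy ranks $1,\dots,j$. If $j\ge 4$, swap the rank-$(j-1)$ and rank-$j$ entries of $\vec{y}$ (both $\ge b$, both below its top two) with the two largest entries of $\vec{z}$: the two largest entries of $\vec{y}$ are untouched, so $\secmax(\vec{y})$ is unchanged, while $\vec{z}$'s two largest entries become $\ge b$ and its remaining entries stay $<b$, so $\secmax(\vec{z})$ strictly increases (to $\vec{y}$'s old rank-$j$ value) — a strict improvement. If $j=3$, swap the two largest entries of $\vec{y}$ (both $\ge b$) with the two largest entries of $\vec{z}$: then $\secmax(\vec{z})$ becomes $\vec{y}$'s old second-largest entry ($\ge b$), while $\secmax(\vec{y})$ drops to the maximum of $\mymax(\vec{z})$ and $\vec{y}$'s remaining entries; a one-line computation shows the net change equals $\max\{\mymax(\vec{z}),\, \vec{y}\text{'s rank-}4,\dots,\bidderNum\text{ entries}\}-\secmax(\vec{z})\ge 0$, strictly positive unless $\mymax(\vec{z})=\secmax(\vec{z})$ and all of $\vec{y}$'s rank-$\ge 4$ entries are $\le\secmax(\vec{z})$. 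Throughout, the outcome-type matching between the positions involved in $\vec{y}$ and $\vec{z}$ is arranged by averaging the transfer over all permutations in $\permutationSet(\{2,\dots,\bidderNum\})$, under which $\coroptCorSignalProb_1$ is invariant, exactly as in \Cref{lem:case 3 max = secmax}; the only non-permutable position is bidder $1$, whose placement is pinned down using \Cref{lem:case 3 max = secmax} (if bidder $1$ is the below-top-two coordinate of $\vec{y}$ that is $\ge b$, then $\vec{y}\in\xType_3$, so its two largest entries are equal and belong to outcome-$0$ bidders — precisely the $j=3$ configuration treated above).

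The residual obstacle is the single degenerate case where the $j=3$ exchange is revenue-neutral rather than strictly improving. There the plan is to invoke an extremal choice: among all correlation plans maximizing the expected second-highest bid, take $\coroptCorSignalProb_1$ to additionally minimize the total mass $\coroptCorSignalProb_1\bigl(\{\vec{x}:\#\{i:x_i\ge b\}\ge 3\}\bigr)$ of wasteful profiles at level $b$. In the degenerate case, $\mymax(\vec z)=\secmax(\vec z)<b$ forces $\vec z$ to have zero coordinates $\ge b$, and the exchange turns $\vec{y}$ (three coordinates $\ge b$) into a profile with only one, and turns $\vec{z}$ into a profile with exactly two; hence it strictly decreases this mass by $\eta>0$ while leaving the objective unchanged, contradicting minimality. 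Thus no wasteful profile exists at any level $b\in(\underline{x},\bar{x})$, which forces $\bar{x}\le\underline{x}$. I expect the bulk of the work to lie in (i) this degeneracy/extremality bookkeeping and (ii) the careful verification of outcome-type matching (especially the role of bidder $1$) and of the small-$\bidderNum$ boundary cases; the conceptual content is simply that an optimal plan never ``wastes'' a coordinate whose value exceeds a second-highest bid realized elsewhere.
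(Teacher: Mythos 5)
Your overall strategy -- an exchange argument by contradiction, averaged over permutations of $\{2,\dots,\bidderNum\}$ -- is the same as the paper's, and your bookkeeping in the $j=3$ branch is correct. In fact you are more careful than the paper on one point: the paper's claimed strict gain of $(b-a)\min(p_1,p_2)$ fails exactly in the degenerate configuration you isolate (e.g.\ $\vec{y}=(0.9,0.9,0.9,0)$, $\vec{z}=(0.5,0.5,0,0)$ gives a revenue-neutral swap), and your secondary extremal criterion (among optimal plans, minimize the mass of profiles with three or more coordinates above level $b$) is a legitimate way to close that hole. Note, though, that this establishes the conclusion only for a \emph{specially chosen} optimal plan rather than for an arbitrary one as the lemma is phrased; that suffices for how the lemma is used downstream, but you should say so explicitly.

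The genuine gap is feasibility of your exchange. Your swap moves the rank-$(j-1)$ and rank-$j$ entries of $\vec{y}$ into ``the two largest positions of $\vec{z}$,'' i.e.\ it exchanges values sitting at \emph{different coordinate indices} in the two profiles. That is not a ``position-by-position'' swap and it does not preserve the coordinate-wise marginals required by \Cref{def: feasible transportation plan}. Averaging over $\permutationSet(\{2,\dots,\bidderNum\})$ can realign indices $2,\dots,\bidderNum$ because they all carry the same marginal $\anonySymmMarginal_{1,0}$, but it cannot move bidder $1$, whose marginal $\anonySymmMarginal_{1,1}$ is different. You pin down bidder $1$ only when it is the wasted coordinate of $\vec{y}$; you never treat the cases where bidder $1$ holds the maximum or second maximum of $\vec{z}$, or the maximum of $\vec{y}$ in the $j=3$ branch where you swap out $\vec{y}$'s top two. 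In those configurations no permutation of $\{2,\dots,\bidderNum\}$ aligns the swapped positions and the constructed $\correlaSignalProb_1'$ is infeasible. The clean fix -- and what the paper actually does -- is to swap at the \emph{same} coordinate indices in both profiles, which preserves every marginal for free: for $j\ge 4$ your argument then goes through verbatim (inserting two values $\ge b$ anywhere into $\vec{z}$ forces $\secmax(\vec{z}')\ge b$, and $\vec{y}$ loses only rank-$\ge 3$ entries), but for $j=3$ a same-index swap no longer reproduces the revenue computation you wrote (the received entries need not land at $\vec{z}$'s top-two positions, and $\vec{y}$ need not receive $\vec{z}$'s top two), so that branch and its degenerate subcase must be reworked with same-index exchanges before the extremality argument can be applied.
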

\begin{proof}[Proof of \Cref{lem: min secmax >= max 3-max}]
We prove this by contradiction. Let \( \coroptCorSignalProb_1 \) be the correlation plan obtained by optimally correlating \( \anonySymmMarginal_{1, 1} \) and \( \anonySymmMarginal_{1, 0} \).
Suppose there exists $\vec{\expectedVal}_1$ and $\vec{\expectedVal}_2$ such that $\secmax(\vec{\expectedVal}_1) = a$ and the $k$-th($k > 2$) highest bid of $\vec{\expectedVal}_2$ is $b$ with $b > a$ and $\coroptCorSignalProb_1(\vec{\expectedVal}_1) = p_1 > 0$, $\coroptCorSignalProb_1(\vec{\expectedVal}_2) = p_2 > 0$. Assume that the dimension corresponding to the second-highest bid of \( \vec{\expectedVal} \) is \( i \), and the dimension corresponding to the $k$-th($k > 2$) highest bid of \( \vec{\expectedVal}' \) is \( j \). We construct \( \vec{\expectedVal}_1' \) and \( \vec{\expectedVal}_2' \), where for $i, j$, let $(\vec{\expectedVal}_1')_i = (\vec{\expectedVal}_2)_i, \quad (\vec{\expectedVal}_1')_j = (\vec{\expectedVal}_2)_j$, and for other dimensions $k$, let $ \quad (\vec{\expectedVal}_1')_k = (\vec{\expectedVal}_1)_k$. Similarly, define \( \vec{\expectedVal}_2' \) as: for $i, j$, let $
(\vec{\expectedVal}_2')_i = (\vec{\expectedVal}_1)_i, \quad (\vec{\expectedVal}_2')_j = (\vec{\expectedVal}_1)_j$ and for other dimensions $k$, let  $ (\vec{x}_2')_k = (\vec{x}_2)_k$. We construct $\correlaSignalProb'_1$ such that $\correlaSignalProb'_1(\vec{\expectedVal}_1) = p_1 - \min(p_1, p_2)$, $\correlaSignalProb'_1(\vec{\expectedVal}_2) = p_2 - \min(p_1, p_2)$, $\correlaSignalProb'_1(\vec{\expectedVal}'_1) = \min(p_1, p_2)$, $\correlaSignalProb'_1(\vec{\expectedVal}'_2) = \min(p_1, p_2)$. It follows that 
\begin{align*}
    \sum\nolimits_{\vec{\expectedVal} \in \{\vec{\expectedVal}_1, \vec{\expectedVal}_2, \vec{\expectedVal}'_1, \vec{\expectedVal}'_2\} } \secmax(\vec{\expectedVal}) 
  \cdot  \correlaSignalProb'_1(\vec{\expectedVal}) - \sum\nolimits_{\vec{\expectedVal} \in \{\vec{\expectedVal}_1, \vec{\expectedVal}_2\} } \secmax(\vec{\expectedVal}) \cdot \coroptCorSignalProb_1(\vec{\expectedVal}) \ge (b -a) \min(p_1, p_2)~.
\end{align*}
Therefore, for \( \permutation \in \permutationSet(\{2, 3, \cdots, \bidderNum\})\), we can construct \( \correlaSignalProb'_1((\vec{\expectedVal}_1)_{\permutation} ) \), $\correlaSignalProb'_1((\vec{\expectedVal}_2)_{\permutation})$, $\correlaSignalProb'_1((\vec{\expectedVal}'_1)_{\permutation})$, $\correlaSignalProb'_1((\vec{\expectedVal}'_2)_{\permutation})$ in the manner described above. And for other $\vec{\expectedVal}$, let $\correlaSignalProb'_1(\vec{\expectedVal}) = \coroptCorSignalProb_1(\vec{\expectedVal})$. 
It is obvious that \( \correlaSignalProb'_1 \) is a feasible correlation plan, it follows that
\begin{align*}
    \int_{\vec{\expectedVal}} \secmax(\vec{\expectedVal}) \correlaSignalProb'_1(\vec{\expectedVal}) ~\mathrm{d}\vec{\expectedVal} - \int_{\vec{\expectedVal}} \secmax(\vec{\expectedVal}) \coroptCorSignalProb_1(\vec{\expectedVal}) ~\mathrm{d}\vec{\expectedVal} 
    \ge |\permutationSet(\{2, 3, \cdots, \bidderNum\})|(b-a)\min(p_1, p_2) > 0~.
\end{align*}
which contradicts that \( \coroptCorSignalProb_1 \) be  the optimal correlation plan.
\end{proof}

We define
\begin{equation*}
     \minsecmax_1 \triangleq \sup \left\{t: \int_{t}^1 \frac{1}{2}\anonySymmMarginal_{1,1}(x) + \frac{\bidderNum - 1}{2}\anonySymmMarginal_{1, 0}(x) ~\mathrm{d}\expectedVal \ge 1\right\}~.
\end{equation*}

Based on \Cref{lem:case 3 max = secmax,lem: min secmax >= max 3-max}, given $\anonySymmMarginal_{1,1}, \anonySymmMarginal_{1, 0}$ solving for the optimal transportation plan is equivalent to solving the following program:
\begin{align}
\label{eq: correlation program for k=1}
     \max\limits_{\correlaSignalProb_1 \in \transPlan(\anonySymmMarginal_{1,1}, \anonySymmMarginal_{1,0})} & \int_{\vec{\expectedVal}} \secmax(\vec{\expectedVal}) \cdot \correlaSignalProb_1(\vec{\expectedVal}) ~\mathrm{d}\vec{\expectedVal} \notag\\
    \text{s.t.}\quad\quad &  \secmax(\vec{\expectedVal}) = \|\vec{\expectedVal}\|_1~, \quad \vec{\expectedVal} \in \xType_3 \notag \\
    &\secmax(\vec{\expectedVal}) \ge \minsecmax_1~, \quad\vec{\expectedVal} \in \supp(\correlaSignalProb_1(\cdot))~.
    \tag{$\mathcal{Q}$}
\end{align}

\begin{proof}[Proof of \Cref{prop:opt cor k=1}]
Our proof proceeds with two steps. 
In step 1, we show that for any feasible transportation plan, there exists a feasible solution to the program \ref{eq: correlation program for k=1}, such that both of them have the same objective value. 
In step 2, we show that for any feasible solution to the program \ref{eq: correlation program for k=1}, there exists a feasible transportation plan, such that both of them have the same objective value. 

\textbf{Step 1.} First, we prove that for any feasible transportation plan $\correlaSignalProb_1\in\transPlan(\anonySymmMarginal_{1, 1}, \anonySymmMarginal_{1, 0})$ obtained by correlating \( \anonySymmMarginal_{1, 1}, \anonySymmMarginal_{1, 0}\) that is a feasible solution to \ref{eq: correlation program for k=1}, there exists a feasible solution $\transfunction_1, \selffunction_1$  to \ref{eq:opt cor k = 1} such that 
$$\int_{\vec{\expectedVal}} \secmax(\vec{\expectedVal}) \cdot \correlaSignalProb_1(\vec{\expectedVal})~\mathrm{d}\vec{\expectedVal} = \int_{(x, y) \in [0,1]^2 } \min(x, y)\transfunction_1(x, y) ~\mathrm{d}(x, y) + \int_{0}^1 y \cdot \selffunction_1(y) ~\mathrm{d}y~.$$
Given any solution $\correlaSignalProb_1$ to \ref{eq: correlation program for k=1}, we construct $\transfunction_1(x, y)$ for any $x, y \in [0,1]$ as follows, 
\begin{equation*}
    \transfunction_1(x, y) = \int_{\vec{\expectedVal} \in \xType_1: \mymax(\vec{\expectedVal}) = x, \secmax(\vec{\expectedVal}) = y \text{ or } \vec{\expectedVal} \in  \xType_2: \mymax(\vec{\expectedVal}) = y, \secmax(\vec{\expectedVal}) = x} \correlaSignalProb_1(\vec{\expectedVal})~\mathrm{d}\vec{\expectedVal}~.
\end{equation*}
And we construct $\selffunction_1(y)$ for any $y \in [0,1]$. as follows
\begin{equation*}
    \selffunction_1(y) = \int_{\vec{\expectedVal} \in \xType_3: \mymax(\vec{\expectedVal}) = \secmax(\vec{\expectedVal}) = y} \correlaSignalProb_1(\vec{\expectedVal})~\mathrm{d}\vec{\expectedVal}~.
\end{equation*}

According to \Cref{def: feasible transportation plan}, it follows that for all $x \in [0, 1], y\in[0, 1]$,
\begin{align*}
    \anonySymmMarginal_{1,1}(x) & \ge \int_{\vec{\expectedVal} \in \xType_1: \mymax(\vec{\expectedVal}) = x \text{ or } \vec{\expectedVal} \in \xType_2: \secmax(\vec{\expectedVal})  = x}  \correlaSignalProb_1(\vec{\expectedVal})~\mathrm{d}\vec{\expectedVal} = \int_{0}^1 \transfunction_1(x ,y)~\mathrm{d}y ~; 
    \\
    (\bidderNum - 1)\anonySymmMarginal_{1, 0}(y) &\ge
    \int_{\vec{\expectedVal} \in \xType_1: \secmax(\vec{\expectedVal}) = y \text{ or } \vec{\expectedVal} \in \xType_2: \mymax(\vec{\expectedVal}) = y} \correlaSignalProb_1(\vec{\expectedVal})~\mathrm{d}\vec{\expectedVal} + 
    \int_{\vec{\expectedVal} \in \xType_3: \mymax(\vec{\expectedVal}) = \secmax(\vec{\expectedVal}) = y} 2 \cdot\correlaSignalProb_1(\vec{\expectedVal}) ~\mathrm{d}\vec{\expectedVal} \\
    & = \int_{0}^1\transfunction_1(x, y) ~\mathrm{d}\expectedVal + 2 \cdot \selffunction_1(y)~.
\end{align*}
According to the definition of $\xType_1, \xType_2$ and $\xType_3$, it follows that
\begin{align*}
    1 = \int_{\vec{\expectedVal} \in \xType_1}  \correlaSignalProb_1(\vec{\expectedVal})~\mathrm{d}\vec{\expectedVal}  
    + \int_{\vec{\expectedVal} \in \xType_2} \correlaSignalProb_1(\vec{\expectedVal})~\mathrm{d}\vec{\expectedVal} + \int_{\vec{\expectedVal} \in \xType_3} 2 \cdot \correlaSignalProb_1(\vec{\expectedVal}) ~\mathrm{d}\vec{\expectedVal} 
    =  \int_0^1\int_0^1\transfunction_1(x, y)~\mathrm{d}x~\mathrm{d}y + \int_{0}^1\selffunction_1(x)~. 
\end{align*}
Since for all $\vec{\expectedVal} \in \correlaSignalProb_1$, $\secmax(\vec{\expectedVal}) \ge \minsecmax_1$, it follows that $\transfunction_1(x, y) = 0$ if   $x < \minsecmax_1$ or $y < \minsecmax_1$ and $\selffunction_1(y) = 0$ for all $y < \minsecmax_1$.
Therefore, \( \transfunction_1 \) and \( \selffunction_1 \) satisfy the constraints of program \ref{eq:opt cor k = 1}. 

Moreover, we have
\begin{align*}
    &\int_{\vec{\expectedVal}} \secmax(\vec{\expectedVal}) \cdot \correlaSignalProb_1(\vec{\expectedVal}) ~\mathrm{d}\vec{\expectedVal}\\
    & = \int_{(x, y) \in [0,1]^2}\int_{\vec{\expectedVal} \in \xType_1: \mymax(\vec{\expectedVal}) = x, \secmax(\vec{\expectedVal}) = y \text{ or } \vec{\expectedVal} \in  \xType_2: \mymax(\vec{\expectedVal}) = y, \secmax(\vec{\expectedVal}) = x} \secmax(\vec{\expectedVal}) \cdot \correlaSignalProb_1(\vec{\expectedVal})~\mathrm{d}\vec{\expectedVal} \\
    & \quad 
    + \int_{y \in [0,1]} \int_{\vec{\expectedVal} \in \xType_3: \mymax(\vec{\expectedVal}) = \secmax(\vec{\expectedVal}) = y} \secmax(\vec{\expectedVal}) \cdot \correlaSignalProb_1(\vec{\expectedVal})~\mathrm{d}\vec{\expectedVal}\\
    & =  \int_{0}^1\int_{0 }^1 \min(x, y)\transfunction_1(x, y) ~\mathrm{d}x~\mathrm{d}y + \int_{0}^1 y\cdot  \selffunction_1(y) ~\mathrm{d}y~.
\end{align*}

\textbf{Step 2.} Next, we prove that for any feasible solution $\transfunction_1, \selffunction_1$ to \ref{eq:opt cor k = 1}, there exists a feasible transportation plan $\correlaSignalProb_1$ obtained by correlating $\anonySymmMarginal_{1,1}$ and $\anonySymmMarginal_{1, 0}$ that is a feasible solution to \ref{eq: correlation program for k=1} such that
$$\int_{\vec{\expectedVal}} \secmax(\vec{\expectedVal}) \cdot \correlaSignalProb_1(\vec{\expectedVal}) ~\mathrm{d}\vec{\expectedVal} = \int_{0}^1 \int_{0}^1 \min(x, y)\transfunction_1(x, y) ~\mathrm{d}x ~\mathrm{d}y + \int_{0}^1 y \cdot\selffunction_1(y) ~\mathrm{d}y~.$$
Given $\anonySymmMarginal_{1,1}$ and $\anonySymmMarginal_{1, 0}$,  let $\secmaxProb_1$ represent the probability density of the second-highest bid obtained by correlating $\anonySymmMarginal_{1, 1}$ and $\anonySymmMarginal_{1, 0}$. It follows that
\begin{equation*}
    \int_{\vec{\expectedVal}} \secmax(\vec{\expectedVal}) \cdot \correlaSignalProb_1(\vec{\expectedVal}) ~\mathrm{d}\vec{\expectedVal} = \int_{0}^1 x \cdot \secmaxProb_1(x)~\mathrm{d}x~.
\end{equation*}
Thus, we need to prove that for any feasible solution $\transfunction_1, \selffunction_1$ to \ref{eq:opt cor k = 1}, there exists a probability density of second-highest bid $\secmaxProb_1$ obtained by correlating $\anonySymmMarginal_{1, 1}$ and $\anonySymmMarginal_{1, 0}$ such that 
\begin{equation*}
    \int_{0}^1 x \cdot \secmaxProb_1(x)~\mathrm{d}\expectedVal =  \int_{0}^1 \int_{0}^1 \min(x, y)\transfunction_1(x, y) ~\mathrm{d}x~\mathrm{d}y + \int_{0}^1 y\cdot\selffunction_1(y) ~\mathrm{d}y~.
\end{equation*}
and there exists a feasible solution $\correlaSignalProb_1$ to \ref{eq: correlation program for k=1} that can induce such $\secmaxProb_1$.
Given any feasible $\transfunction_1, \selffunction_1$ to \ref{eq:opt cor k = 1}, we construct $\secmaxProb_1$ as follows for all $x \in [0,1]$, 
\begin{align*}
    & \secmaxProb_1(x) 
    = \int_{y > x} \transfunction_1(x, y) + \transfunction_1(y, x)~\mathrm{d}y  + \selffunction_1(x) + \transfunction_1(x,x)~.
\end{align*}
First, we prove that there exists a feasible solution $\correlaSignalProb_1$ to \ref{eq: correlation program for k=1} that can induce such $\secmaxProb_1$ above. 
We prove this by construction. We construct a feasible solution $\correlaSignalProb_1$ to \ref{eq: correlation program for k=1} to induce such $\secmaxProb_1$. We iteratively construct \(\correlaSignalProb_1\). We initialize $\correlaSignalProb_1(\vec{\expectedVal}) = 0$ for all $\vec{\expectedVal}$. In each iteration, given $(x, y)$ such that $\transfunction_1(x, y) > 0$, we construct $\bidderNum-1$ vectors of expected outcomes $\vec{\expectedVal}_1, \vec{\expectedVal}_2, \cdots, \vec{\expectedVal}_{\bidderNum-1}$. For each $\vec{\expectedVal}_i$, we let $(\vec{\expectedVal}_i)_1 = x$, $(\vec{\expectedVal}_i)_i = y$ and $(\vec{\expectedVal}_i)_j = \underline{\expectedVal}$, where $\underline{\expectedVal} = \inf \{x: x \in \supp(\anonySymmMarginal_{1,0}), \int_{\vec{\expectedVal}: \exists t \ge 2 \text{ such that } x_t = x} \correlaSignalProb_1(\vec{\expectedVal})~\mathrm{d}\vec{\expectedVal} < (\bidderNum-1)\anonySymmMarginal_{1,0}(x) \}$. Then, we let $\correlaSignalProb_1(\vec{\expectedVal}_i) = \min(\frac{\transfunction_1(x, y)}{\bidderNum-1}, \anonySymmMarginal_{1,0}(\underline{x}))$ for all $i \in [\bidderNum-1]$. We repeatedly construct such $\vec{\expectedVal}_1, \vec{\expectedVal}_2, \cdots, \vec{\expectedVal}_{\bidderNum-1}$, until 
\begin{equation*}
    \int_{\vec{\expectedVal}: \secmax(\vec{\expectedVal}) = \min(x,y), \mymax(\vec{\expectedVal}) = \max(x,y)} \correlaSignalProb_1(\vec{\expectedVal})~\mathrm{d}\vec{\expectedVal} = \transfunction_1(x, y)~.
\end{equation*}
Next, we proceed to the next iteration, until, for all \( \transfunction_1(x, y)  > 0\), we have
\begin{equation*}
        \int_{\vec{\expectedVal}: \secmax(\vec{\expectedVal}) = \min(x,y), \mymax(\vec{\expectedVal}) = \max(x,y)} \correlaSignalProb_1(\vec{\expectedVal})~\mathrm{d}\vec{\expectedVal} = \transfunction_1(x, y)~.
\end{equation*}
Then, we finish the construction. From the construction process, we have $\correlaSignalProb_1$ is a feasible solution to \ref{eq: correlation program for k=1}.

By the construction of $\secmaxProb_1$, we have
\begin{align*}
    \int_{0}^1 x \cdot \secmaxProb_1(x) ~\mathrm{d}\expectedVal & = \int_{x \in [0,1]} x\cdot \int_{y > x} \transfunction_1(x, y)  + \transfunction_1(y, x) ~\mathrm{d}y ~\mathrm{d}\expectedVal + \int_{0}^1 x\cdot\transfunction_1(x,x)~\mathrm{d}x + \int_{0}^1 y \cdot \selffunction_1(y) ~\mathrm{d}y \\
    & = \int_{0}^1\int_{0}^1 \min(x, y)\transfunction_1(x, y) ~\mathrm{d}x~\mathrm{d}y + \int_{0}^1 y \cdot\selffunction_1(y) ~\mathrm{d}y~.
\end{align*}
Combining the above two steps, we thus finish the proof.
\end{proof}

\begin{definition}[Effective Correlation]
    \label{def: effective correlation}
    Given $\anonySymmMarginal_{1,1}$ and $\anonySymmMarginal_{1, 0}$, let $\coroptCorSignalProb$ be any feasible transportation plan obtained by optimally correlating $\anonySymmMarginal_{1,1}$ and $\anonySymmMarginal_{1, 0}$.  We define the set of \textbf{effective correlations} $\effectiveCorrelation$ as follows,
    \begin{align*}
        \effectiveCorrelation  = &\{(x, y): x \in \supp(\anonySymmMarginal_{1,1}), y \in \supp(\anonySymmMarginal_{1, 0}), \exists \vec{\expectedVal} \text{ such that } \coroptCorSignalProb(\vec{\expectedVal}) > 0 \\
        & \text{ and } \mymax(\vec{\expectedVal}) = x, \secmax(\vec{\expectedVal}) = y \text{ or } \mymax(\vec{\expectedVal}) = y, \secmax(\vec{\expectedVal}) = x  \}~.
    \end{align*}
Let $\transfunction_1(\cdot, \cdot)$ be the transition function corresponding to $\coroptCorSignalProb$, we have the following equivalent expression
\begin{align*}
    \effectiveCorrelation = \{(x, y): x \in \supp(\anonySymmMarginal_{1,1}), y \in \supp(\anonySymmMarginal_{1, 0}), \transfunction_1(x, y) > 0\}~.
\end{align*}
\end{definition}

A conclusion that will be useful for our later analysis can be derived from the above program: there exists a monotone property in the optimal correlation given any marginals $\anonySymmMarginal_{1, 1}, \anonySymmMarginal_{1, 0}$.
\begin{lemma}[Monotone correlation for $k = 1$]
    \label{lem: monotonic correlation}
    Given any $\anonySymmMarginal_{1,1}, \anonySymmMarginal_{1, 0}$, the optimal solution $\transfunction_1^{\dagger}, \selffunction_1^{\dagger}$ of \ref{eq:opt cor k = 1} must satisfy that for any $\transfunction_1^{\dagger}(x, y) > 0, \transfunction_1^{\dagger}(x', y') > 0$ with $x \ge x'$, we must have $y\ge y'$.
\end{lemma}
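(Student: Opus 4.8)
The plan is a rearrangement (exchange) argument driven by the fact that the cost coefficient $\min(x,y)$ in the objective of \ref{eq:opt cor k = 1} satisfies the Monge/supermodularity inequality: for $x \ge x'$ and $y' \ge y$ one has $\min(x,y') + \min(x',y) \ge \min(x,y) + \min(x',y')$, which is checked by enumerating the finitely many orderings of $x',x,y',y$. Suppose an optimal $(\transfunction_1^\dagger, \selffunction_1^\dagger)$ violated monotonicity, so that $\transfunction_1^\dagger(x,y) > 0$ and $\transfunction_1^\dagger(x',y') > 0$ with $x \ge x'$ but $y < y'$. For small $\delta > 0$ I would perturb $\transfunction_1^\dagger$ by subtracting $\delta$ at the cells $(x,y)$ and $(x',y')$ and adding $\delta$ at $(x,y')$ and $(x',y)$, leaving $\selffunction_1^\dagger$ untouched.

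The routine checks are: feasibility of the perturbed solution for $\delta \le \min\{\transfunction_1^\dagger(x,y),\transfunction_1^\dagger(x',y')\}$ — the $x$- and $x'$-row integrals $\int\transfunction_1(x,\cdot)$, $\int\transfunction_1(x',\cdot)$ are unchanged, the $y$- and $y'$-column quantities $\int\transfunction_1(\cdot,y) + 2\selffunction_1(y)$, $\int\transfunction_1(\cdot,y') + 2\selffunction_1(y')$ are unchanged, the total mass is unchanged, and the support stays inside $[\minsecmax_1,1]^2$; and optimality — by the Monge inequality the objective changes by $\delta\bigl(\min(x,y')+\min(x',y)-\min(x,y)-\min(x',y')\bigr) \ge 0$, so the perturbed solution is again optimal. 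This already shows a monotone optimum exists; to obtain the stated ``must satisfy'' I would, among the (weak-$*$ compact) set of optima, select the one maximizing the strictly supermodular secondary objective $\int\!\!\int xy\,\transfunction_1(x,y)\,\mathrm{d}x\,\mathrm{d}y$. Applied to a violating pair of that solution, the same swap raises the secondary objective by $\delta(x-x')(y'-y)$, which is $>0$ whenever $x > x'$ and $y' > y$ — a contradiction; the boundary case $x = x'$ is vacuous up to relabelling the pair (or is absorbed into the explicit construction of the optimum used in the proof of \Cref{prop:opt cor k=1}).

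The step I expect to be the main obstacle is exactly this last upgrade: naive iteration of weakly-improving swaps need not terminate, since a swap can replace one monotonicity violation by another, so the proof has to route through a strictly supermodular lexicographic tie-breaker and must carefully treat the ``flat'' configurations ($x=x'$, or $x \ge y'$, or $y \le x'$) in which the primary objective is unchanged by the swap. A secondary technical point is atomicity: the cell-swap is cleanest when $\anonySymmMarginal_{1,1}$ and $\anonySymmMarginal_{1,0}$ are finitely supported — the only case needed for \Cref{prop:opt marginals} — and for general marginals one swaps measurable rectangles of mass $\delta$ and repeats the identical computation.
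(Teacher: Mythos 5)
Your proposal is correct, and at its core it is the same exchange argument the paper uses: take a violating pair, perform a $2\times 2$ mass swap that preserves all marginal constraints, and check that the objective does not decrease. The differences are in execution, and they are worth noting. The paper carries out the swap on the full $\bidderNum$-dimensional correlation plan $\correlaSignalProb_1$ (exchanging coordinates of bid vectors and symmetrizing over permutations) and then verifies the revenue change by enumerating the six orderings of $x,x',y,y'$ case by case; you instead work directly with the LP variables $\transfunction_1$ of \ref{eq:opt cor k = 1}, where the swap is the standard transportation-polytope exchange and the single Monge/supermodularity inequality $\min(x,y')+\min(x',y)\ge \min(x,y)+\min(x',y')$ replaces the entire case analysis. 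Your version is cleaner and, importantly, you correctly flag a point the paper's proof glosses over: the paper iterates weakly-improving swaps ``until no violating pair exists'' without arguing termination, and what it actually establishes is only that \emph{some} optimal solution is monotone (which is all that is used downstream, e.g.\ in the proof that $\transfunction_1(x,y)>0$ implies $x\ge y$ for optimal marginals), whereas the lemma as stated asserts that \emph{every} optimum is monotone. Your strictly supermodular secondary objective $\int\!\!\int xy\,\transfunction_1(x,y)\,\mathrm{d}x\,\mathrm{d}y$ is a standard and valid way to close that gap for the stronger reading; alternatively one could simply weaken the statement to ``there exists an optimal solution satisfying\dots,'' matching what the paper proves and needs. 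Either way, your argument is sound and, on the termination/quantifier issue, more careful than the paper's.
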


\begin{proof}[Proof of \Cref{lem: monotonic correlation}]
    Given $\anonySymmMarginal_{1,1}$ and $\anonySymmMarginal_{1, 0}$,  we prove that for all transportation plan $\correlaSignalProb_1$ obtained by correlating $\anonySymmMarginal_{1,1}$ and $\anonySymmMarginal_{1, 0}$ that do not satisfy the monotonic correlation property, we can transform them into $\correlaSignalProb'_1$ that satisfy the monotonic correlation property, and we have
    \begin{equation*}
        \int_{\vec{\expectedVal}} \secmax(\vec{\expectedVal}) \correlaSignalProb'_1(\vec{\expectedVal}) ~\mathrm{d}\vec{\expectedVal} \ge         \int_{\vec{\expectedVal}} \secmax(\vec{\expectedVal}) \correlaSignalProb_1(\vec{\expectedVal}) ~\mathrm{d}\vec{\expectedVal}~. 
    \end{equation*}
We prove it using a constructive method. Given $\anonySymmMarginal_{1,1}$ and $\anonySymmMarginal_{1, 0}$, let $\correlaSignalProb_1$ be a feasible transportation plan  obtained by correlating $\anonySymmMarginal_{1,1}$ and $\anonySymmMarginal_{1,0}$. Without loss of generality, given $\vec{\val}$, suppose that $v_1 = 1$ and $\val_i = 0$, $i \neq 1$. Suppose that there exist $(x, y), (x', y') \in \effectiveCorrelation$ and $x > x'$, $y < y'$, that means, there exists $\vec{\expectedVal}_1$  such that $\correlaSignalProb_1(\vec{\expectedVal}_1) > 0$ and  $\mymax(\vec{\expectedVal}_1) = x$, $\secmax(\vec{\expectedVal}_1) = y$ or $\mymax(\vec{\expectedVal}_1) = y$, $\secmax(\vec{\expectedVal}_1) = x$ and there exists $\vec{\expectedVal}_2$ such that $\correlaSignalProb_1(\vec{\expectedVal}_2) > 0$ and $\mymax(\vec{\expectedVal}_2) = x'$, $\secmax(\vec{\expectedVal}_2) = y'$ or $\mymax(\vec{\expectedVal}_2)= y'$, $\secmax(\vec{\expectedVal}_2) = x'$.

There are $\binom{4}{2} = 6$ possible orderings of $x,y,x',y'$, and we discuss the cases. For each case, we perform a swapping operation to construct \( \correlaSignalProb'_1\), and we ensure that this swapping operation does not reduce the revenue.
\begin{itemize}
    \item $x > x'  \ge y' > y$. Keep the other expected outcomes of \( \vec{\expectedVal}_1 \) unchanged, and replace the expected outcome at the first position with \( x' \), we denote the resulting vector as \( \vec{\expectedVal}_1' \). Similarly, keep the other expected outcomes of \( \vec{\expectedVal}_2 \) unchanged, and replace the expected outcome at the first position with \( x \), we denote the resulting vector as \( \vec{\expectedVal}_2' \). We construct $\correlaSignalProb'_1$ as follows, let $\correlaSignalProb'_{1}(\vec{\expectedVal}_1) = \correlaSignalProb_1(\vec{\expectedVal}_1) - \min(\correlaSignalProb_1(\vec{\expectedVal}_1), \correlaSignalProb_1(\vec{\expectedVal}_2))$; $\correlaSignalProb'_1(\vec{\expectedVal}_2) = \correlaSignalProb_1(\vec{\expectedVal}_2) - \min(\correlaSignalProb_1(\vec{\expectedVal}_1), \correlaSignalProb_1(\vec{\expectedVal}_2))$;
    $\correlaSignalProb'_1(\vec{\expectedVal}'_1) = \min(\correlaSignalProb_1(\vec{\expectedVal}_1), \correlaSignalProb_1(\vec{\expectedVal}_2))$; $\correlaSignalProb'_1(\vec{\expectedVal}_2) = \min(\correlaSignalProb_1(\vec{\expectedVal}_1), \correlaSignalProb_1(\vec{\expectedVal}_2))$ and for all other $\vec{\expectedVal}$, let $\correlaSignalProb'_1(\vec{\expectedVal}) = \correlaSignalProb_1(\vec{\expectedVal})$. Obviously, $\correlaSignalProb'_1$ is a feasible transportation plan obtained by correlating $\anonySymmMarginal_{1,1}$ and $\anonySymmMarginal_{1,0}$ and it follows that, 
    \begin{align*}
        &\int_{\vec{\expectedVal}} \secmax(\vec{\expectedVal}) 
\cdot \correlaSignalProb'_1(\vec{\expectedVal})~\mathrm{d}\vec{\expectedVal} - \int_{\vec{\expectedVal}} \secmax(\vec{\expectedVal}) 
\cdot \correlaSignalProb_1(\vec{\expectedVal})~\mathrm{d}\vec{\expectedVal} \\
=& y\cdot \correlaSignalProb_1(\vec{\expectedVal}_1) + y'\cdot \correlaSignalProb_1(\vec{\expectedVal}_2) -\left(y\cdot \correlaSignalProb_1(\vec{\expectedVal}_1) + y'\cdot \correlaSignalProb_1(\vec{\expectedVal}_2)\right) = 0~.
    \end{align*}
\item $x \ge y' \ge x' \ge y$ and among the three equalities, only the first and third equalities can hold simultaneously.
 In this case,  we perform the same operation to construct \( \correlaSignalProb'_1 \) as above, and we have
\begin{align*}
        &\int_{\vec{\expectedVal}} \secmax(\vec{\expectedVal}) 
\cdot \correlaSignalProb'_1(\vec{\expectedVal})~\mathrm{d}\vec{\expectedVal} - \int_{\vec{\expectedVal}} \secmax(\vec{\expectedVal}) 
\cdot \correlaSignalProb_1(\vec{\expectedVal})~\mathrm{d}\vec{\expectedVal} 
= (y'-x')\cdot \min(\correlaSignalProb_1(\vec{\expectedVal}_1), \correlaSignalProb_1(\vec{\expectedVal}_2)) \ge 0~.
\end{align*}
\item $x \ge y' > y \ge x'$. In this case,  we perform the same operation to construct \( \correlaSignalProb'_1 \) as above, and we have
\begin{align*}
        &\int_{\vec{\expectedVal}} \secmax(\vec{\expectedVal}) 
\cdot \correlaSignalProb'_1(\vec{\expectedVal})~\mathrm{d}\vec{\expectedVal} - \int_{\vec{\expectedVal}} \secmax(\vec{\expectedVal}) 
\cdot \correlaSignalProb_1(\vec{\expectedVal})~\mathrm{d}\vec{\expectedVal} 
= (y'-x')\cdot \min(\correlaSignalProb_1(\vec{\expectedVal}_1), \correlaSignalProb_1(\vec{\expectedVal}_2)) > 0~.
\end{align*}
\item $y' > y \ge x > x'$. This case is similar to case 1.
\item $y' \ge x \ge y \ge x'$ and among the three equalities, only the first and third equalities can hold simultaneously. This case is similar to case 2.
\item  $y'\ge x > x' \ge y$. This case is similar to case 3.
\end{itemize}
In summary, for each case, we can construct \( \correlaSignalProb'_1 \) in the manner described above and ensure that the revenue does not decrease. For each pair $(x,y), (x',y') \in \effectiveCorrelation$ such that \( x > x' \) and \( y' > y \), we can perform the operation described above to construct \( \correlaSignalProb'_1 \), until no such pair $(x,y), (x'y')$ exists in the set of effective correlations of $\correlaSignalProb'_1$ , meaning that \( \correlaSignalProb'_1 \) satisfies the monotonic correlation property. Since each operation guarantees that the revenue does not decrease, we have
$\int_{\vec{\expectedVal}} \secmax(\vec{\expectedVal}) \correlaSignalProb'_1(\vec{\expectedVal}) ~\mathrm{d}\vec{\expectedVal} \ge         \int_{\vec{\expectedVal}} \secmax(\vec{\expectedVal}) \correlaSignalProb_1(\vec{\expectedVal})~\mathrm{d}\vec{\expectedVal}$.
\end{proof}

\section{Missing Proofs in \texorpdfstring{\Cref{subsec:opt marginal}}{}}

\subsection{Missing Proofs for \texorpdfstring{\Cref{prop:equal secmax special k}}{}}

Previous research on the optimal correlation for \( k = 1 \) has been based on arbitrarily given \( \anonySymmMarginal_{1,1} \) and \( \anonySymmMarginal_{1,0} \), but in reality, the seller's revenue is affected by both \( \anonySymmMarginal_{1,1} \) and \( \anonySymmMarginal_{1,0} \) themselves, as well as the way they are correlated. We define \( (\optMarginal_{k,1}, \optMarginal_{k,0})_{k \in \setwZero} \) as the optimal {\pas} marginals, and when we seek the optimal correlation for \( \optMarginal_{1,1}, \optMarginal_{1,0} \), the corresponding transition function \( \transfunction_1(\cdot, \cdot) \) has the following property.

\begin{lemma}
\label{lem: transition property x>=y}
Given optimal {\pas} marginals $\optMarginal_{1,1}$ and $\optMarginal_{1,0}$, let $\optCorr_1$ be a transportation plan obtained by optimally correlating $\optMarginal_{1,1}$ and $\optMarginal_{1,0}$, let $\transfunction_1(\cdot, \cdot)$ be the corresponding transition function, we focus on the monotonic correlation, it follows that for any $x \in \supp(\optMarginal_{1,1}), y \in \supp(\optMarginal_{1,0})$ with $\transfunction_1(x, y) > 0$, we have $x \ge y$. 
\end{lemma}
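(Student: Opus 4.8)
The plan is a proof by contradiction built on an exchange of probability mass in the optimal marginals. Suppose the conclusion fails: for a monotone optimal correlation $\optCorr_1$ of the optimal marginals $\optMarginal_{1,1},\optMarginal_{1,0}$ (such a monotone optimal correlation exists by \Cref{lem: monotonic correlation}), its transition function satisfies $\transfunction_1(x_0,y_0)>0$ for some $x_0\in\supp(\optMarginal_{1,1})$ and $y_0\in\supp(\optMarginal_{1,0})$ with $x_0<y_0$. By \Cref{def:transition function} this means that, with positive probability, the realized profile has bidder $1$ (the outcome-$1$ bidder) bidding $x_0$ while some outcome-$0$ bidder bids $y_0>x_0$, so that $y_0$ is the highest bid and $x_0$ is the second-highest; moreover $x_0,y_0\ge\minsecmax_1$, and by \Cref{lem: min secmax >= max 3-max} every coordinate of such a profile other than the top two lies below $\minsecmax_1$. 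By \Cref{lem: monotonic correlation} the effective correlation set $\effectiveCorrelation$ is a monotone relation, which I would use to keep the bookkeeping tractable.

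I would then perturb $(\optMarginal_{1,1},\optMarginal_{1,0})$, keeping all other marginals $(\optMarginal_{k,1},\optMarginal_{k,0})_{k\ne1}$ fixed, and show the perturbed pair stays in $\newfeaAnonyMarginals$ while yielding strictly larger revenue $\sum_{k\in\setwZero}\anonyDen_k\,\RevCorr{\anonySymmMarginal_{k,1},\anonySymmMarginal_{k,0}}$, contradicting optimality of $(\optMarginal_{k,1},\optMarginal_{k,0})_{k\in\setwZero}$ for \ref{eq:opt marginal new}. The point is that $y_0$ is currently ``wasted'': on these profiles it appears only as the top bid and never as a second-highest bid. Moving a small mass $\eps$ lets us instead make a value $\ge y_0$ the second-highest bid (raising the self-correlation term $\selffunction_1$ or a $\transfunction_1(\cdot,\cdot)$ entry at higher coordinates in \ref{eq:opt cor k = 1}), which increases $\int x\,\secmaxProb_1(x)\,\mathrm dx$. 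The calibration identity \eqref{eq:bc for marginals} pins down the compensating changes: since it ties $\anonySymmMarginal_{1,1}(x)$ and $\anonySymmMarginal_{1,0}(x)$ at each $x$ (and to the fixed $\anonySymmMarginal_{k,\cdot}$), removing $\optMarginal_{1,1}$-mass at $x_0$ forces determined adjustments of $\optMarginal_{1,0}$ at $x_0$ and $y_0$, with the residual absorbed at the atom $0$, where \eqref{eq:bc for marginals} imposes no restriction. One then checks that the feasibility inequality $\anonySymmMarginal_{1,1}(x)\le(\bidderNum-1)\anonySymmMarginal_{1,0}(x)$ on $(\minsecmax_1,1]$ survives (using $\minsecmax_1\ge\tfrac12$ from \Cref{cor:opt secmax prob}) and that the perturbation shifts $\minsecmax_1$ by only $O(\eps)$, not enough to erase the first-order revenue gain.

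The delicate part -- and the main obstacle -- is orchestrating these requirements simultaneously: the calibration identity couples the two marginals pointwise, so the perturbation is necessarily multi-atom and its compensating pieces are not free; the multi-maximality feasibility constraint $\anonySymmMarginal_{1,1}\le(\bidderNum-1)\anonySymmMarginal_{1,0}$ caps how much $\optMarginal_{1,1}$-mass one may place at the high value $y_0$; and $\minsecmax_1$ itself depends on the marginals. The crux is to exhibit a perturbation direction with a strictly positive first-order effect on the objective of \ref{eq:opt marginal new} despite all three; here the monotone structure of $\effectiveCorrelation$ from \Cref{lem: monotonic correlation} and the closed-form optimal correlation of \Cref{prop:opt cor k=1} are what make the first-order computation explicit. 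Once a strict improvement is produced, optimality is contradicted, so no such pair $(x_0,y_0)$ exists, i.e.\ $\transfunction_1(x,y)>0$ implies $x\ge y$; the case $k=\bidderNum-1$ is symmetric.
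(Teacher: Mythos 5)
Your high-level strategy (contradict optimality of the marginals via a revenue-improving perturbation) is the same as the paper's, but your plan diverges at the decisive point and, as written, does not close. You propose to perturb only $(\optMarginal_{1,1},\optMarginal_{1,0})$ while ``keeping all other marginals $(\optMarginal_{k,1},\optMarginal_{k,0})_{k\ne1}$ fixed.'' The paper's proof does the opposite: its starting observation is that $\transfunction_1(x,y)>0$ with $x<y$ forces $0<y<1$, and then calibration \eqref{eq:bc for marginals} at the point $y$ forces \emph{some} $\anonySymmMarginal_{k,1}(y)>0$; the case analysis (ten cases) is organized around which $k$ this is and how that mass is used, and in every case the improving perturbation moves mass in a marginal with $k\ne 1$ (or in $\anonySymmMarginal_{1,1}$ at $y$ itself). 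This is not cosmetic. In the paper's Case 1 the $k=1$ revenue change under the natural swap is exactly zero and the strict gain comes entirely from the $k\ne1$ component; worse, if $\anonySymmMarginal_{1,1}(y)=0$ and the calibration at $y$ is carried by some $\anonySymmMarginal_{k,1}(y)>0$ with $k\ne1$, then with the other marginals frozen the value $\anonySymmMarginal_{1,0}(y)$ is pinned from below by \eqref{eq:bc for marginals} and the ``wasted'' outcome-$0$ mass at $y$ cannot be removed or repurposed at all within your restricted perturbation class. So the existence of a strictly improving direction inside your restricted polytope is exactly what is in doubt, and you explicitly leave it as ``the crux'' rather than exhibiting one.

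Two smaller issues. First, your bookkeeping of the calibration coupling is off: \eqref{eq:bc for marginals} is a pointwise constraint, so removing $\optMarginal_{1,1}$-mass at $x_0$ forces a proportional adjustment of $\optMarginal_{1,0}$ only at $x_0$; it imposes nothing at $y_0$ (the adjustment there would have to come from total-mass conservation, absorbed at the free atoms $0$ and $1$), so the perturbation is not ``determined'' in the way you describe. Second, invoking \Cref{cor:opt secmax prob} ($\optminsecmax_1\ge 0.5$) to verify feasibility is circular: that corollary is derived from the characterization of the optimal marginals, which in the paper rests on this lemma. To repair the argument you should drop the restriction to $k=1$ and, as the paper does, identify from calibration at $y$ a marginal $\anonySymmMarginal_{k,1}$ with mass at $y$ and perturb that marginal jointly with $\optMarginal_{1,0}$ (e.g.\ shifting the $\anonySymmMarginal_{k,1}$-mass at $y$ up to $1$ and down to $x$ in calibration-preserving proportions), which is where the strict revenue gain actually materializes.
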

\begin{proof}
We prove this by contradiction. Given optimal {\pas} marginals $(\optMarginal_{k,1}, \optMarginal_{k, 0})_{k \in \setwZero}$, let $\optCorr_1$ be a transportation plan obtained by 
optimally correlating $\optMarginal_{1,1}$ and $\optMarginal_{1,0}$ and let $\transfunction_1$ be the corresponding transition function. Suppose that there exist $x \in \supp(\optMarginal_{1,1})$ and $y \in \supp(\optMarginal_{1,0})$ such that $\transfunction_1(x, y) > 0$ and $x < y$. Since $(\optMarginal_{k,1}, \optMarginal_{k, 0})_{k \in \setwZero}$ satisfy Bayes-consistency constrains and $\optMarginal_{1,0}(y) > 0$, we have $y < 1$, similarly, we have $x > 0$. Thus, we have $0 < x < y < 1$. 
According to the Bayes-consistency constrains, we have
\begin{align*}
y\cdot \left(\sum\nolimits_{k\in[\bidderNum]} \anonyDen_k \cdot k  f_{k, 1}(y) + \sum\nolimits_{k\in[\bidderNum-1]_0} \anonyDen_k \cdot (\bidderNum - k) f_{k, 0}(y)\right) - \sum\nolimits_{k\in[\bidderNum]} \anonyDen_k \cdot k f_{k, 1}(y) = 0 ~.   
\end{align*}
Thus, there exists $\anonySymmMarginal_{k,1}$, such that $\anonySymmMarginal_{k, 1}(y) > 0$, we discuss the cases.

    \textbf{Case 1.} $\exists k \neq 1, \bidderNum - 1$ such that $\optMarginal_{k, 1}(y) > 0$. We construct $\anonySymmMarginal'_{1,0}$ and $\anonySymmMarginal'_{k, 1}$ based on $\optMarginal_{1, 0}$ and $\anonySymmMarginal_{k, 1}$ and we show that the revenue of $\anonySymmMarginal'_{1,0}$ and $\anonySymmMarginal'_{k, 1}$ is more than the revenue of $\optMarginal_{1, 0}$ and $\optMarginal_{k, 1}$. Let 
    \begin{align*}
        \quantile \triangleq \min\left(\optMarginal_{1, 0}(y), \frac{y}{1-y}\cdot\frac{\anonyDen_1}{\anonyDen_k}\cdot\frac{\bidderNum - 1}{k} \cdot \optMarginal_{1,0}(y), \frac{1}{\bidderNum - 1}\transfunction_1(x, y)\right)~.
    \end{align*}
    Next, we construct $\anonySymmMarginal'_{1, 0}, \anonySymmMarginal'_{k,1}$ and $\transfunction'_1$. We define
    \begin{equation*}
        \anonySymmMarginal'_{1,0}(t) = 
        \begin{cases}
            \optMarginal_{1,0}(y) - \quantile~, & t = y~; \\
            \optMarginal_{1,0}(x) + \quantile~, & t = x~;\\
            \optMarginal_{1,0}(t)~, & \text{o.w. }~,
        \end{cases}
        \anonySymmMarginal'_{k,1}(t) = 
        \begin{cases}
            \optMarginal_{k, 1}(1) + \left(\frac{y}{1-y} - \frac{x}{1-x}\right) \cdot \frac{\anonyDen_1}{\anonyDen_k} \cdot \frac{\bidderNum-1}{k} \cdot \quantile~, & t = 1~;\\
            \optMarginal_{k, 1}(y) - \frac{y}{1-y}\cdot\frac{\anonyDen_1}{\anonyDen_k}\cdot\frac{\bidderNum-1}{k} \cdot \quantile~, & t = y~;\\
            \optMarginal_{k, 1}(x) + \frac{x}{1-x}\cdot\frac{\anonyDen_1}{\anonyDen_k}\cdot\frac{\bidderNum-1}{k} \cdot \quantile~, & t = x~;\\
            \optMarginal_{k,1}(t)~, & \text{o.w. }~.
        \end{cases}
    \end{equation*}
    And
    \begin{equation*}
        \transfunction'_1(z_1, z_2) = 
        \begin{cases}
            \transfunction_1(x, y) - (\bidderNum - 1)\cdot\quantile~, & z_1 = x, z_2 = y~;\\
            \transfunction_1(x, x) + (\bidderNum - 1)\cdot\quantile~, & z_1 = x, z_2 = x~;\\
            \transfunction_1({z_1, z_2})~, & \text{o.w. }~.
        \end{cases}
    \end{equation*}
    It is easy to verify that replacing $\optMarginal_{1, 0}$ with $\anonySymmMarginal'_{1,0}$ and $\optMarginal_{k, 1}$ with $\anonySymmMarginal'_{k, 1}$ still satisfies the constraint conditions. Next, we prove that the revenue will increase after replacing \(\optMarginal_{1, 0}\) with \(\anonySymmMarginal'_{1, 0}\) and \(\optMarginal_{k , 1}\) with \(\anonySymmMarginal'_{k, 1}\). Let $\selffunction_1$ be the function correspond to the optimal correlation between $\optMarginal_{1,1}$ and $\optMarginal_{1, 0}$ and $\selffunction'_1$ be the function correspond to the correlation between $\optMarginal_{1,1}$ and $\anonySymmMarginal'_{1, 0}$. Let $\selffunction_1'(t) = \selffunction_1(t)$ for all $t \in [0, 1]$. Given $\optMarginal_{1,1}$ and $\anonySymmMarginal'_{1, 0}$, we can verify that $\transfunction'_1, \selffunction'_1 $ is a feasible solution to \ref{eq:opt cor k = 1}. 
    Since $\transfunction_1(x, y) > 0$, we have $\minsecmax_1\leq x < y < 1$.
    According to the definition of $\anonySymmMarginal'_{1,0}$ and $\transfunction'_1$, we have
    \begin{align*}
        \text{Rev}_1 - \text{Rev}_2  = & 
        \int_{(z_1, z_2) \ge (\minsecmax_1, \minsecmax_1) } \min(z_1, z_2)\transfunction_1(z_1, z_2) ~\mathrm{d}(z_1, z_2) + \int_{z_2 \ge \minsecmax_1} z_2 \cdot \selffunction_1(z_2) ~\mathrm{d}z_2 \\
        & - \int_{(z_1, z_2) \ge (\minsecmax_1, \minsecmax_1) } \min(z_1, z_2)\transfunction'_1(z_1, z_2) ~\mathrm{d}(z_1, z_2) + \int_{z_2 \ge \minsecmax_1} z_2 \cdot \selffunction_1(z_2) ~\mathrm{d}z_2 \\
         = & x \cdot (\transfunction_1(x, y)+\transfunction_1(x,x)) - x\cdot (\transfunction'_1(x, y) + \transfunction'_1(x, x)) = 0~.
    \end{align*}
    where $\text{Rev}_1$ represents seller's revenue of the optimal correlation between $\optMarginal_{1, 1}$ and $\optMarginal_{1, 0}$ and $\text{Rev}_2$ represents seller's revenue of the correlation between $\optMarginal_{1,1}$ and $\anonySymmMarginal'_{1,0}$.

    Next, we calculate the change in revenue brought about by changing \(\optMarginal_{k, 1}\) to \(\anonySymmMarginal'_{k, 1}\). Let $\text{Rev}_3$ be seller's revenue by optimally correlating $\optMarginal_{k, 1}$ and $\optMarginal_{k, 0}$ and $\text{Rev}_4$ be seller's revenue by optimally correlating $\anonySymmMarginal'_{k, 1}$ and $\optMarginal_{k, 0}$. We define
    \begin{equation*}
        \minsecmax'_k \triangleq \sup\{t: \int_{t}^1 \frac{1}{2}\anonySymmMarginal'_{k, 1}(z) + \frac{\bidderNum-k}{2}\optMarginal_{k,0}(z)~\mathrm{d}z~\}~.
    \end{equation*}
    We need to discuss the ordering between \(\minsecmax_k\), \(x\), and \(y\).
    \begin{itemize}
        \item \textbf{Case 1.1} $\minsecmax_k \leq x < y$. According to the definition of $\anonySymmMarginal'_{k, 1}$, in this case, we have $\minsecmax_k = \minsecmax'_k$, thus,  it follows that 
        \begin{align*}
            \text{Rev}_3 - \text{Rev}_4 = & \int_{\minsecmax_k}^1 (z - \minsecmax_k) \cdot \left(\frac{k}{2}\optMarginal_{k, 1}(z) + \frac{\bidderNum-k}{2} \optMarginal_{k,0}(z)  \right) ~\mathrm{d}z + \minsecmax_k \\
            & - \int_{\minsecmax_k}^1 (z - \minsecmax_k) \cdot \left(\frac{k}{2}\anonySymmMarginal'_{k, 1}(z) + \frac{\bidderNum-k}{2} \optMarginal_{k,0}(z)  \right) ~\mathrm{d}z - \minsecmax_k\\
            =& \frac{(y-x)(y+x-1-xy)}{(1-y)(1-x)}\cdot \frac{\anonyDen_1}{\anonyDen_k} \cdot \frac{\bidderNum -1}{k} \cdot \quantile~.
        \end{align*}
        Since $0 < x < y < 1$, it follows that $y +x - 1 - xy < 0$ and $y-x > 0$, thus, we have
        \begin{align*}
            \text{Rev}_3 - \text{Rev}_4 = \frac{(y-x)(y+x-1-xy)}{(1-y)(1-x)}\cdot \frac{\anonyDen_1}{\anonyDen_k} \cdot \frac{\bidderNum -1}{k} \cdot \quantile < 0~.
        \end{align*}
        Thus, we have $\text{Rev}_1 + \text{Rev}_3 < \text{Rev}_2 + \text{Rev}_4$, that means when we change \(\optMarginal_{1, 0}\) and \(\optMarginal_{k, 1}\) to \(\anonySymmMarginal'_{1, 0}\) and \(\anonySymmMarginal'_{k, 1}\), seller's revenue increases. This contradicts that $(\optMarginal_{k, 0}, \optMarginal_{k,1})_{k \in \setwZero}$ are optimal {\pas} marginals. 
    \item \textbf{Case 1.2} $x < \minsecmax_k \leq y$ and \textbf{Case 1.3} $x < y < \minsecmax_{k}$ are similar to \textbf{Case 1.1}.
    \end{itemize}
    
    The proofs of \textbf{Case 2} $\selffunction_{\bidderNum-1}(y) > 0$ and \textbf{Case 3} $\optMarginal_{\bidderNum-1, 1}(y) > 0$ and $y < \minsecmax_{\bidderNum-1}$ are similar to \textbf{Case 1}.
    
    Next, we consider $y \ge \minsecmax_{\bidderNum-1}$, there exists $z \in \supp(\anonySymmMarginal_{\bidderNum-1, 0})$ such that $\transfunction_{\bidderNum-1}(y, z) > 0$, 
    we discuss the cases.
    
    \textbf{Case 4.} $\transfunction_{\bidderNum-1}(y, z) > 0$ and $z > y$. We construct 
    \begin{equation*}
        \anonySymmMarginal'_{1,0}(t) = 
        \begin{cases}
            \optMarginal_{1,0}(y_1) + \frac{\quantile_1}{2} + \epsilon_1~, & t = y_1~;\\
            \optMarginal_{1,0}(y) - \quantile_1 ~,& t = y~; \\
            \optMarginal_{1,0}(y_2) + \frac{\quantile_1}{2} - \epsilon_1~, & t = y_2~;\\
            \optMarginal_{1,0}(t)~, & \text{o.w. }~.
        \end{cases}
        \anonySymmMarginal'_{\bidderNum-1, 1}(t) = 
        \begin{cases}
            \optMarginal_{\bidderNum-1, 1}(y_1) + \frac{\quantile_2}{2} + \epsilon_2~, & t = y_1~;\\
            \optMarginal_{\bidderNum-1, 1}(y) -  \quantile_2~, & t = y~; \\
            \optMarginal_{\bidderNum-1, 1}(y_2) + \frac{\quantile_2}{2} - \epsilon_2~, & t = y_2~;\\
            \optMarginal_{\bidderNum-1, 1}(t)~, & \text{o.w. }~.
        \end{cases}
    \end{equation*}
    where $y_2 < y < y_1$. And we impose the following constraints on $\quantile_1, \quantile_2, \epsilon_1, \epsilon_2$: 1. $\quantile_1 \leq \min(\optMarginal_{1, 1}(y), \transfunction_1(x, y))$, $\quantile_2 \leq \min(\optMarginal_{\bidderNum,1}(y), \transfunction_{\bidderNum-1}(y, z))$; 2.$y = \frac{\anonyDen_{\bidderNum-1} \cdot \quantile_2}{\anonyDen_1 \cdot \quantile_1 + \anonyDen_{\bidderNum-1}\cdot \quantile_2}$; 3. $\anonyDen_1 \cdot \epsilon_1 = \anonyDen_{\bidderNum-1} \cdot \epsilon_2$; 4. $y_1 = \frac{\anonyDen_{\bidderNum-1}\cdot(\quantile_2 + 2\epsilon_2)}{\anonyDen_1\cdot\quantile_1 + \anonyDen_{\bidderNum-1}\cdot\quantile_2}$; 5. $y_2 = \frac{\anonyDen_{\bidderNum-1}\cdot(\quantile_2 - 2\epsilon_2)}{\anonyDen_1\cdot\quantile_1 + \anonyDen_{\bidderNum-1}\cdot\quantile_2}$. 6. $z > y_1$, $y_2 > x$. 
    We let let $\anonySymmMarginal'_{k,1} = \optMarginal_{k, 1}$, $\anonySymmMarginal_{k, 0} = \optMarginal_{k, 0}$ for all other {\pas} marginals. Then, we construct
    \begin{equation*}
        \transfunction'_1(z_1, z_2) = 
        \begin{cases}
            \transfunction_1^*(x, y_1) + (\bidderNum-1) \cdot (\frac{\quantile_1}{2} + \epsilon_1)~,  & z_1 = x, z_2 = y_1~; \\
            \transfunction_1^*(x, y) - (\bidderNum-1) \cdot \quantile_1~, & z_1 = x, z_2 = y~; \\
            \transfunction_1^*(x, y_2) + (\bidderNum-1) \cdot (\frac{\quantile_1}{2} - \epsilon_1)~, & z_1 = x, z_2 = y_2~;\\
            \transfunction^*_1(z_1, z_2)~, & \text{o.w. }~.
        \end{cases}
    \end{equation*}
    and 
    \begin{equation*}
        \transfunction'_{\bidderNum-1}(z_1, z_2) = 
        \begin{cases}
            \transfunction^*_{\bidderNum-1}(y_1, z) + (\bidderNum-1) \cdot (\frac{\quantile_2}{2} + \epsilon_2)~, & z_1 = y_1, z_2 = z~; \\
            \transfunction^*_{\bidderNum-1}(y, z) - (\bidderNum-1) \cdot \quantile_2~, & z_1 = y, z_2 = z~; \\
            \transfunction^*_{\bidderNum-1}(y_2, z) + (\bidderNum-1) \cdot (\frac{\quantile_2}{2} - \epsilon_2)~, & z_1 = y_2, z_2 = z~;\\
            \transfunction^*_{\bidderNum-1}(z_1, z_2)~, & \text{o.w. }~.
        \end{cases}
    \end{equation*}
    Let $\text{Rev}_1$ be the revenue of $(\optMarginal_{k, 1}, \optMarginal_{k, 0})_{k \in \setwZero}$ with $\transfunction_1$, $\transfunction_{\bidderNum-1}$ and $\text{Rev}_2$ be the revenue of $(\anonySymmMarginal'_{k, 1}, \anonySymmMarginal'_{k, 0})_{k \in \setwZero}$ with $\transfunction'_1$, $\transfunction'_{\bidderNum-1}$. Since $y \ge t_k$, we have
    \begin{align*}
        \text{Rev}_2 - \text{Rev}_1 = \anonyDen_{\bidderNum-1} \cdot (\bidderNum-1) \cdot (y_1 \cdot (\frac{\quantile_2}{2} + \epsilon_2) + y_2 \cdot(\frac{\quantile_2}{2} - \epsilon_2) - y \cdot \quantile_2)~.
    \end{align*}
    According to the constrains, we have
    \begin{align*}
        \text{Rev}_2 - \text{Rev}_1 = & \frac{(\bidderNum-1) \cdot \anonyDen_{\bidderNum-1}^2\cdot(\quantile_2 + 2\epsilon_2)}{\anonyDen_1\cdot\quantile_1 + \anonyDen_{\bidderNum-1}\cdot\quantile_2} \cdot (\frac{\quantile_2}{2} + \epsilon_2) \\
        & + \frac{(\bidderNum-1) \cdot \anonyDen_{\bidderNum-1}^2\cdot(\quantile_2 - 2\epsilon_2)}{\anonyDen_1\cdot\quantile_1 + \anonyDen_{\bidderNum-1}\cdot\quantile_2} \cdot (\frac{\quantile_2}{2} - \epsilon_2) - \frac{(\bidderNum-1) \cdot \anonyDen_{\bidderNum-1}^2 \cdot \quantile_2}{\anonyDen_1 \cdot \quantile_1 + \anonyDen_{\bidderNum-1}\cdot \quantile_2} \cdot \quantile_2\\
        = & \frac{(\bidderNum-1) \cdot \anonyDen_{\bidderNum-1}^2 \cdot 4\epsilon_2^2}{\anonyDen_1 \cdot \quantile_1 + \anonyDen_{\bidderNum-1}\cdot \quantile_2} > 0~.
    \end{align*} 
    which contradicts $(\optMarginal_{k, 1}, \optMarginal_{k, 0})$ is optimal.    

    The proofs of \textbf{Case 5} $\transfunction_{\bidderNum-1}(y, z) > 0$ and $x \leq z < y $ and \textbf{Case 6}  $\transfunction_{\bidderNum-1}(y, z) > 0$ and $z < x$ are similar to that of \textbf{Case 4}.

     Next, we consider the case $\optMarginal_{1,1}(y) > 0$, and there exists $z \in \supp(\anonySymmMarginal_{1, 0})$ such that $\transfunction_{1}(y, z) > 0$, 
    we discuss the cases.
    
    \textbf{Case 7.} $\transfunction_1(y, z) > 0$ and $z < y$. From \Cref{lem: monotonic correlation}, we only consider monotonic correlation, since $\transfunction_1(y, z) > 0$ and $\transfunction_1(x, y) > 0$, and $x < y$, $z < y$, this case will not occur.

    The proofs of  \textbf{Case 8.} $\transfunction_1(y, z) > 0$ and $z > y$, \textbf{Case 9} $\optMarginal_{\bidderNum-1,1}(y) > 0$ and $\transfunction_{\bidderNum-1}(y,y) > 0$ and \textbf{Case 10} $\optMarginal_{1,1}(y) > 0$ and $\transfunction_1(y,y)>0$ are similar to that of  \textbf{Case 4}.

In summary, $\transfunction_1(x,y) > 0$ and $y > x$ contradict the fact that $(\optMarginal_{k,0},\optMarginal_{k,1})_{k \in \setwZero}$ is optimal.
\end{proof}

Without loss of generality, we can restrict the optimal $(\anonySymmMarginal_{k,1},\anonySymmMarginal_{k,0})_{k \in \setwZero}$ and \( \transfunction_1 \) to satisfy the following property:
\begin{lemma}
    \label{lem:no supp below tk}
    It is without loss to consider marginals $(\optMarginal_{k,1}, \optMarginal_{k,0})_{k \in \setwZero}$ that for any $k\in\setwZero$, $\optMarginal_{k,0}(x)  = 0$ for all $x\in(0, \minsecmax_k)$.
\end{lemma}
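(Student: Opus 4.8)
Proof idea for \Cref{lem:no supp below tk}. The plan is to start from an optimal feasible collection $(\optMarginal_{k,1},\optMarginal_{k,0})_{k\in\setwZero}$ and repeatedly relocate any $\optMarginal_{k,0}$-mass that sits strictly below the threshold $\minsecmax_k$ down onto the atom at $0$, compensating inside the $\optMarginal_{\cdot,1}$ marginals so that the calibration identity \eqref{eq:bc for marginals} is preserved, until no sub-threshold $\optMarginal_{k,0}$-mass remains. The construction will be revenue non-decreasing, so it again produces an optimal solution of \ref{eq:opt marginal new}, which is what the lemma asserts.

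First I would isolate why such mass is harmless. By \Cref{lem:upper bound for secmax prob}, \Cref{prop:opt cor general k} and \Cref{prop:opt cor k=1}, the revenue contributed by block $k$ equals $\int_{\minsecmax_k}^{1}(x-\minsecmax_k)\,\rho_k(x)\,\dd x+\minsecmax_k$, where $\rho_k:=\tfrac{k}{2}\anonySymmMarginal_{k,1}+\tfrac{\bidderNum-k}{2}\anonySymmMarginal_{k,0}$, and by \Cref{def:minimum secmax} the cutoff $\minsecmax_k$ is obtained by accumulating $\rho_k$ from the top until one unit of mass is collected. Two monotonicity consequences drive the argument. First, relocating $\anonySymmMarginal_{k,0}$-mass from a point $x<\minsecmax_k$ to $0$ leaves $\int_t^1\rho_k$ unchanged for every $t\in(x,1]$, hence changes neither $\minsecmax_k$ nor the restriction of $\rho_k$ to $[\minsecmax_k,1]$, hence changes neither $\minsecmax_k$ nor block $k$'s revenue. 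Second, relocating $\anonySymmMarginal_{k',1}$-mass (in any block) to a weakly larger value only moves mass of $\rho_{k'}$ upward; since $t+\int_t^1(x-t)\,\dd\rho_{k'}(x)$ is then weakly larger for every $t$, and block $k'$'s revenue is the minimum of this quantity over $t$, the relocation is revenue non-decreasing (and weakly raises $\minsecmax_{k'}$).

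The core step is the feasibility-preserving relocation. Fix $k$ and $x\in(0,\minsecmax_k)$ with $p:=\anonySymmMarginal_{k,0}(x)>0$. Move this $p$ of mass to $0$ in $\anonySymmMarginal_{k,0}$; this is calibration-safe at $0$, where the numerator of \eqref{eq:bc for marginals} is forced to vanish. The denominator of \eqref{eq:bc for marginals} at $x$ has just dropped by $\anonyDen_k(\bidderNum-k)p$, so I would restore the ratio $x$ by removing exactly $q:=\tfrac{x(\bidderNum-k)}{k(1-x)}p$ worth of numerator weight at $x$ — a one-line check shows $\anonyDen_k k q$ is precisely the offsetting decrease — drawn first from $\anonySymmMarginal_{k,1}(x)$ and, if that is insufficient, from $\anonySymmMarginal_{k',1}(x)$ of other blocks, and re-depositing the corresponding probability mass of each affected $\anonySymmMarginal_{k',1}$ onto the atom at $1$, which is calibration-safe at $1$ where the denominator of \eqref{eq:bc for marginals} is forced to vanish. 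Probability masses are conserved by construction, the two consequences above show the revenue does not drop, and all moves occur at $x<\minsecmax_k$, at $0$, or at $1$. Iterating over all such pairs $(k,x)$ and passing to the limit by compactness of the space of feasible marginals supported in $[0,1]$ yields an optimal collection with $\anonySymmMarginal_{k,0}$ supported off $(0,\minsecmax_k)$ for every $k$.

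The main obstacle is the bookkeeping forced by cross-block coupling in \eqref{eq:bc for marginals}: emptying $\anonySymmMarginal_{k,0}$ below $\minsecmax_k$ requires the upward relocation of $\anonySymmMarginal_{\cdot,1}$-mass, which can weakly raise thresholds $\minsecmax_{k'}$, so mass previously at or above $\minsecmax_{k'}$ may turn sub-threshold and must be cleaned up on a later pass; I expect to control this by noting that each pass moves a strictly positive amount of mass permanently onto the atoms $\{0,1\}$, so the thresholds form bounded monotone sequences and the process converges. A second delicate point is the extra inequalities defining $\newfeaAnonyMarginals$ at the endpoint $1$: since calibration forces $\anonySymmMarginal_{k,0}(1)=0$, one cannot in general deposit $\anonySymmMarginal_{1,1}$-mass (resp. $\anonySymmMarginal_{\bidderNum-1,1}$-mass) at $1$, so for the borrowed weight from blocks $k\in\{1,\bidderNum-1\}$ the compensating mass must instead be placed just below the corresponding threshold, where the monotone-correlation structure of \Cref{lem: monotonic correlation} and \Cref{lem: transition property x>=y} guarantees it can be absorbed without breaking feasibility of the optimal correlation.
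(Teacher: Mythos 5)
Your construction is sound in its core mechanics but follows a genuinely different route from the paper's. The paper's proof handles an offending point $t<\minsecmax_{\bar k}$ by sending $\optMarginal_{\bar k,0}(t)$ to the atom at $0$ and then relocating \emph{all} remaining mass at $t$ (numerator and denominator, across every block) to a single recalibrated point $t'>t$, defined so that the ratio in \eqref{eq:bc for marginals} at $t'$ is automatically $t'$; revenue is weakly higher because every surviving piece of mass moves weakly upward. You instead keep the residual denominator mass in place at $x$, strip off exactly the calibration-balancing amount of numerator mass (your computation of $q$ is correct), and ship it to the atom at $1$. Both moves preserve \eqref{eq:bc for marginals} and are revenue non-decreasing; your "minimum over $t$ of $t+\int_t^1(x-t)\,\mathrm{d}\rho_{k'}$" observation is a clean way to see monotonicity for blocks $k'\neq 1,\bidderNum-1$, and it is arguably more transparent than the paper's "it is obvious" step. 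The paper's variant buys one thing for free: because the compensating mass lands at a recalibrated interior point rather than at $1$, it never creates standalone atoms of $\anonySymmMarginal_{1,1}$ or $\anonySymmMarginal_{\bidderNum-1,1}$ at $1$, so the structural side-conditions used downstream are not disturbed. Neither proof seriously addresses termination of the iteration over infinitely many support points, so I do not count your compactness sketch against you.

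The one step of yours that does not work as written is the patch for blocks $k\in\{1,\bidderNum-1\}$: depositing the borrowed $\anonySymmMarginal_{1,1}$-mass "just below the corresponding threshold" violates \eqref{eq:bc for marginals}, since an arbitrary point just below $\minsecmax_1$ generically carries no matching denominator mass, and no amount of correlation structure (\Cref{lem: monotonic correlation}) can repair a broken marginal-level calibration identity. Fortunately the patch is unnecessary for this lemma: the inequality $\anonySymmMarginal_{1,1}(x)\le(\bidderNum-1)\anonySymmMarginal_{1,0}(x)$ is a derived property of the optimum used later, not a feasibility constraint of $\feaAnonyMarginals$, and depositing the mass at $1$ keeps \eqref{eq:bc for marginals} (the denominator at $1$ vanishes) while weakly increasing the block-$1$ value of \ref{eq:opt cor k = 1}, since any feasible $\transfunction_1(x,\cdot)$ can be rerouted through $\transfunction_1(1,\cdot)$ with $\min(1,y)\ge\min(x,y)$. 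Drop the "just below the threshold" detour and argue directly through the LP, and the proof goes through.
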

\begin{proof}[Proof of \Cref{lem:no supp below tk}]
We prove this by construction. Given optimal {\pas} marginals $(\optMarginal_{k,1}, \optMarginal_{k,0})_{k \in \setwZero}$, suppose there exists $\bar{k}$ such that there is $0 < t < \minsecmax_{\bar{k}}$ and $\optMarginal_{\bar{k},0}(t) > 0$. It follows that
    \begin{align*}
        t = \frac{\sum\nolimits_{k: t \in \supp(\optMarginal_{k,1})}\anonyDen_k k\cdot\optMarginal_{k,1}(t)}{\sum\nolimits_{k: t \in \supp(\optMarginal_{k,1})}\anonyDen_kk\cdot\optMarginal_{k,1}(t) + \sum\nolimits_{k: t \in \supp(\optMarginal_{k,0})}(\bidderNum-k)\anonyDen_k \cdot  \optMarginal_{k,0}(t)}~.
    \end{align*}
    We define
    \begin{align*}
        t' = \frac{\sum\nolimits_{k: t \in \supp(\optMarginal_{k,1})}\anonyDen_k  k\cdot\optMarginal_{k,1}(t)}{\sum\nolimits_{k: t \in \supp(\optMarginal_{k,1})}\anonyDen_k k\cdot\optMarginal_{k,1}(t) + \sum\nolimits_{k: t \in \supp(\optMarginal_{k,0})}(\bidderNum-k)\anonyDen_k \cdot  \optMarginal_{k,0}(t) - \anonyDen_{\bar{k}}(\bidderNum-\bar{k})\cdot\optMarginal_{\bar{k}, 0}(t)}~.
    \end{align*}
    We have $t' > t$. And we construct
    \begin{equation*}
        \anonySymmMarginal'_{\bar{k}, 0}(x) = 
        \begin{cases}
            0~, & x = t~;\\
            \optMarginal_{\bar{k}, 0}(0) + \optMarginal_{\bar{k},0}(t)~, & x = 0~;\\
            \optMarginal_{\bar{k},0} (x)~, & \text{o.w. }~.
        \end{cases}
    \end{equation*}
    and for all $k \in \{k: t \in \supp(\optMarginal_{k,1}) \}$, we construct
    \begin{equation*}
    \anonySymmMarginal'_{k,1}(x) = 
    \begin{cases}
        \optMarginal_{k,1}(t') + \optMarginal_{k,1}(t)~, & x = t'~;\\
        0~, & x = t~;\\
        \optMarginal_{k, 1}(x)~, & \text{o.w. }~. 
    \end{cases}
    \end{equation*}
    and for  all $k \in \{k: t \in \supp(\optMarginal_{k,0}) \}$, we construct
    \begin{equation*}
        \anonySymmMarginal'_{k,0}(x) = 
    \begin{cases}
        \optMarginal_{k,0}(t') + \optMarginal_{k,0}(t)~, & x = t'~;\\
        0~, & x = t~;\\
        \optMarginal_{k, 0}(x)~, & \text{o.w. }~. 
    \end{cases}
    \end{equation*}
    and we let $\anonySymmMarginal'_{k,1} = \optMarginal_{k, 1}$, $\anonySymmMarginal_{k, 0} = \optMarginal_{k, 0}$ for all other {\pas} marginals. 
    Let $\text{Rev}_1$ be the revenue of $(\optMarginal_{k, 1}, \optMarginal_{k, 0})_{k \in \setwZero}$ and $\text{Rev}_2$ be the revenue of $(\anonySymmMarginal'_{k, 1}, \anonySymmMarginal'_{k, 0})_{k \in \setwZero}$. Since $t < \minsecmax_{\bar{k}}$, it is obvious that $\text{Rev}_2 - \text{Rev}_1 \ge 0$. Thus, we can only focus on the optimal {\pas} marginals $(\optMarginal_{k,1}, \optMarginal_{k,0})_{k \in \setwZero}$ that satisfy for any $k$, $\optMarginal_{k,0}(t)  = 0$ for all $0< t < \minsecmax_k$.
\end{proof}

Next, we prove that we can restrict the optimal $(\optMarginal_{k,1},\optMarginal_{k,0})_{k \in \setwZero}$ and \( \transfunction_1 \) to satisfy the following property: if $\transfunction_1(x, y) > 0$, then $x = y$.

\begin{lemma}
\label{lem: transition property x = y}
    To solve the optimal transportation plan for $k =1$, without loss of generality, we can focus on the optimal {\pas} marginals $(\optMarginal_{k,1}, \optMarginal_{k,0})_{k \in \setwZero}$ and transition function $\transfunction_1$ such that if $\transfunction_1(x, y) > 0$, then $x = y$.
\end{lemma}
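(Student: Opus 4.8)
The plan is to take an optimal {\pas} marginal profile $(\optMarginal_{k,1},\optMarginal_{k,0})_{k\in\setwZero}$ together with an optimal transport plan for $k=1$ that is already ``as tidy as possible'', and then show that any remaining off-diagonal mass of the transition function $\transfunction_1$ can be pushed onto the diagonal $x=y$ at no loss of revenue. For the tidy starting point I would invoke \Cref{lem: monotonic correlation} to assume the optimal $k=1$ correlation is monotone (with transition function $\transfunction_1$ and self-loop function $\selffunction_1$), \Cref{lem: transition property x>=y} to get $\transfunction_1(x,y)>0\Rightarrow x\ge y$, and \Cref{lem:no supp below tk} to assume each $\optMarginal_{k,0}$ is supported on $\{0\}\cup[\minsecmax_k,1]$. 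Note also that if $\transfunction_1(x,y)>0$ then $y$ is realized as a second-highest bid, so $y\ge\minsecmax_1>0$, while $y\in\supp(\optMarginal_{1,0})$ forces $y<1$ by calibration; hence every off-diagonal pair obeys $\minsecmax_1\le y<x\le 1$.

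The core step is a local ``diagonalizing'' modification. Assume $\transfunction_1(x^\star,y^\star)>0$ with $x^\star>y^\star$. I would move a small amount $\delta$ of bidder~$1$'s outcome-$1$ probability from $x^\star$ down to $y^\star$ inside $\optMarginal_{1,1}$, and correspondingly reroute the matching $\transfunction_1$-mass from $(x^\star,y^\star)$ to $(y^\star,y^\star)$. Since $x^\star>y^\star$, the second-highest bid carried by that mass stays equal to $y^\star$, so the $k=1$ block's contribution to $\RevCorr{\optMarginal_{1,1},\optMarginal_{1,0}}$ does not change under this step alone. What the step does break is the calibration identity \eqref{eq:bc for marginals} at $x^\star$ (whose outcome-$1$ share drops below $x^\star$) and at $y^\star$ (whose outcome-$1$ share rises above $y^\star$). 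I would restore both identities using transfers involving only the ``extreme'' signals $0$ and $1$: at $y^\star$, add outcome-$0$ mass pulled from some $\optMarginal_{k,0}(0)$ (the value $0$ is pure outcome-$0$), and at $x^\star$ either return outcome-$0$ mass at $x^\star$ to the atom at $0$ or inject outcome-$1$ mass from some $\optMarginal_{k,1}(1)$ (the value $1$ is pure outcome-$1$), with magnitudes tuned to the multipliers $\anonyDen_k k$ and $\anonyDen_k(\bidderNum-k)$ so both sides of \eqref{eq:bc for marginals} are rebalanced exactly. For $\delta$ small, every modified marginal remains a probability measure; re-correlating each affected block via \Cref{prop:opt cor general k} (or \Cref{prop:opt cor k=1} when $k\in\{1,\bidderNum-1\}$) preserves the multi-maximal structure, and one then checks the net revenue change is $\ge 0$.

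With that step in hand the conclusion is routine: among optimal monotone solutions, pick one minimising the total off-diagonal transition mass $\int_{x\ne y}\transfunction_1(x,y)\,\mathrm{d}(x,y)$ (or, if one prefers, use that the optimal marginals can be taken finitely supported and induct on the number of off-diagonal support pairs). If this quantity were positive, the diagonalizing step would produce a feasible solution of weakly larger revenue --- hence still optimal --- with strictly smaller off-diagonal mass, a contradiction. Therefore the chosen optimal solution has $\transfunction_1(x,y)>0\Rightarrow x=y$; the case $k=\bidderNum-1$ follows by the same argument with the roles of outcome-$1$ and outcome-$0$ interchanged.

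The hard part is making the second paragraph rigorous. The compensating transfers must simultaneously (i) restore \eqref{eq:bc for marginals} at both $x^\star$ and $y^\star$, (ii) keep each $\optMarginal_{k,1},\optMarginal_{k,0}$ a valid distribution, and (iii) not decrease the revenue of \emph{any} block after it is re-correlated optimally --- and a transfer that lowers a bid from $1$ toward $x^\star$, or from $x^\star$ toward $0$, or that shifts some $\minsecmax_k$, can a priori cost revenue in block $k$. Which transfers are actually available, and whether the $k=1$ gain (possibly obtained instead by pushing the \emph{second}-highest bidder's mass from $y^\star$ up to $x^\star$, which strictly raises the $k=1$ revenue and buys slack elsewhere) dominates those costs, depends on where $x^\star$, $y^\star$, $0$, and $1$ sit relative to the thresholds $\minsecmax_k$ across all $k$. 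I expect this to require a case analysis parallel to --- and building on --- the one already carried out in the proof of \Cref{lem: transition property x>=y}.
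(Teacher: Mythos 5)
Your overall strategy (clean up via the monotonicity, $x\ge y$, and no-support-below-$\minsecmax_k$ lemmas, then eliminate off-diagonal mass by an exchange argument plus a minimality/induction step) is structurally reasonable, but the core step — the ``local diagonalizing modification'' — has a genuine gap that you yourself flag and do not close, and in its stated form it would actually \emph{lose} revenue. Moving $\delta$ of bidder~$1$'s outcome-$1$ mass from $x^\star$ down to $y^\star$ and rerouting $\transfunction_1(x^\star,y^\star)$ to $(y^\star,y^\star)$ leaves the $k=1$ block's revenue unchanged (the second-highest bid stays $y^\star$), so you have \emph{zero} slack in block $1$ to pay for the calibration repairs. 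But those repairs are not free: restoring \eqref{eq:bc for marginals} at $x^\star$ by pulling outcome-$1$ mass down from $\optMarginal_{k,1}(1)$ costs block $k\ge 2$ exactly $(1-x^\star)$ per unit of paired mass (mass at $1$ contributes $1$ to the second-highest bid there, mass at $x^\star$ at most $x^\star$), and restoring it by expelling outcome-$0$ mass at $x^\star$ to $0$ in a block with $x^\star\ge\minsecmax_k$ likewise strictly decreases that block's value. So the exchange, as described, is strictly revenue-decreasing whenever any repair touches a ``productive'' point, and the minimality argument in your last paragraph cannot get off the ground. The alternative you mention in passing (raising the $y^\star$-mass up to $x^\star$ instead, which does generate a strict gain in block $1$) is the germ of a workable argument, but it is a different modification whose feasibility and accounting you have not developed.

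The paper closes this gap by a different mechanism that avoids compensating transfers through $0$ and $1$ altogether. After first showing (by separate exchange arguments) that no \emph{other} block has outcome-$1$ mass in $[y,x)$ or outcome-$0$ mass in $(y,x]$, it observes that the off-diagonal support of $\transfunction_1$ decomposes into descending chains $(x_1,\dots,x_l)$ with $\transfunction_1(x_i,x_{i+1})>0$, and then \emph{merges each entire chain to its calibration-weighted barycenter} $\bar x\in(x_l,x_1)$. Because $\bar x$ is defined as the ratio of total outcome-$1$ weight to total weight over $[x_l,x_1]$, calibration is restored exactly with no side payments, and the revenue comparison is a clean accounting identity: at $\bar x$ the outcome-$1$ and outcome-$0$ masses balance so the full rate $\tfrac12\sum_k\anonyDen_k k\optMarginal_{k,1}$ is extracted, whereas along the chain the unpaired outcome-$1$ excess at each $x_i$ must be matched down to $x_{i+1}<x_i$, giving $r(x_i)<\tfrac12\sum_k\anonyDen_k k\optMarginal_{k,1}(x_i)$ and hence a strict improvement — contradicting optimality. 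If you want to salvage your route, you would need either to adopt this barycenter merge or to carry out the full case analysis you defer to, including the preliminary steps that clear the interval $(y,x)$ of other blocks' mass; as written, the proof is incomplete at its central step.
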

\begin{proof}
According to  \Cref{lem: transition property x>=y}, we have that given optimal {\pas} $(\optMarginal_{k,1}, \optMarginal_{k,0})_{k \in \setwZero}$ and transition function $\transfunction_1$, if $\transfunction_1(x, y) > 0$, it follows that $x \ge y$. Next, we prove that we can focus on transition function $\transfunction_1$ that satisfies if $\transfunction_1(x, y) > 0$, then $x = y$. We prove this step by step. Given $(\anonySymmMarginal_{k,1}, \anonySymmMarginal_{k,0})_{k \in \setwZero}$ and $\transfunction_1$, suppose there exists $(x,y)$ such that $\transfunction_1(x,y) > 0$ and $x > y$. Let $x \triangleq \sup\{t: \exists y \text{ such that } \transfunction_1(t,y) > 0 \text{ and } t > y \}$, and $y \triangleq \sup\{t: t < x, \transfunction_1(x, t) > 0\}$. 

\textbf{Step 1.} First, we prove that without loss of generality, we can restrict for any $k \neq 1$, we have $\optMarginal_{k,1}(z) = 0$, for all $z \in [y, x)$. Suppose there exists $k \neq 1$ such that $\optMarginal_{k, 1}(z) > 0$ and $z \in [y, x)$. Let $\quantile \triangleq \min(\transfunction_1(x,y), k \cdot\frac{\anonyDen_k}{\anonyDen_1}\cdot\optMarginal_{k,1}(z))$. We construct 
\begin{equation*}
    \anonySymmMarginal'_{k, 1}(t) = 
    \begin{cases}
        \optMarginal_{k,1}(x) + \frac{1}{k} \cdot \frac{\anonyDen_1}{\anonyDen_k} \cdot \quantile~, & t = x~;\\
        \optMarginal_{k,1}(z) - \frac{1}{k} \cdot \frac{\anonyDen_1}{\anonyDen_k} \cdot \quantile~, & t = z~;\\
        \optMarginal_{k,1}(t)~, & \text{o.w. }~.
    \end{cases}
    \anonySymmMarginal'_{1,1}(t) = 
    \begin{cases}
        \optMarginal_{1,1}(x) - \quantile~, & t = x~;\\
        \optMarginal_{1,1}(z) + \quantile~, & t = z~;\\
        \optMarginal_{1,1}(t)~, & \text{o.w. }~.
    \end{cases}
\end{equation*}
and we construct
\begin{equation*}
    \transfunction'_1(z_1, z_2) = 
    \begin{cases}
        \transfunction_1(x, y) - \quantile~, & z_1 = x, z_2 = y~;\\
        \transfunction_1(z, y) + \quantile~, & z_1 = z, z_2 = y~; \\
        \transfunction_1(z_1, z_2)~,  & \text{o.w. }~.
    \end{cases}
\end{equation*}

 and we let $\anonySymmMarginal'_{k,1} = \optMarginal_{k, 1}$, $\anonySymmMarginal_{k, 0} = \optMarginal_{k, 0}$ for all other {\pas} marginals. 
    Let $\text{Rev}_1$ be the revenue of $(\optMarginal_{k, 1}, \optMarginal_{k, 0})_{k \in \setwZero}$ with $\transfunction_1$ and $\text{Rev}_2$ be the revenue of $(\anonySymmMarginal'_{k, 1}, \anonySymmMarginal'_{k, 0})_{k \in \setwZero}$ with $\transfunction'_1$. Since $y \leq z < x$, it is obvious that $\text{Rev}_2 \ge \text{Rev}_1$. 
    By performing the above operation, we obtain that if \( \transfunction_1(x, y) < k \cdot\frac{\anonyDen_k}{\anonyDen_1}\cdot\optMarginal_{k,1}(z) \), then we have \( \transfunction'(x, y) > 0 \), \( \anonySymmMarginal'_{k,1}(z) = 0 \); if \( \transfunction_1(x, y) \ge k \cdot\frac{\anonyDen_k}{\anonyDen_1}\cdot\optMarginal_{k,1}(z) \), we have \( \transfunction'(x, y) = 0 \). 
    Therefore, by repeating the above operation, we can obtain $(\anonySymmMarginal'_{k,1}, \anonySymmMarginal'_{k,0})_{k \in \setwZero}$ without decreasing the revenue that is $(\anonySymmMarginal'_{k,1}, \anonySymmMarginal'_{k,0})_{k \in \setwZero}$ is an optimal {\pas} marginals such that for all \( z \in [y,x) \), we have \( \anonySymmMarginal'_{k,1}(z) = 0 \) for all $k \neq 1$. 
    
    \textbf{Step 2.} Next, we prove that without loss of generality, we can restrict for any $k \neq 1$, we have $\optMarginal_{k,0}(z) = 0$, for all $z \in (y, x]$. Suppose there exists $k \neq 1$ such that $\optMarginal_{k,0}(z) > 0$ and $z \in (y, x]$. Let $\quantile \triangleq \min(\transfunction_1(x, y), \frac{(\bidderNum-k)\anonyDen_k }{\anonyDen_1}\cdot\optMarginal_{k,0}(z))$, we construct
\begin{equation*}
    \anonySymmMarginal'_{1,0}(t) = 
    \begin{cases}
        \optMarginal_{1,0}(z) + \frac{\quantile}{\bidderNum-1}~, & t = z~;\\
        \optMarginal_{1,0}(y) - \frac{\quantile}{\bidderNum-1}~, & t = y ~;\\
        \optMarginal_{1,0}(t)~, & \text{o.w. }~.
    \end{cases}
    \anonySymmMarginal'_{k,0}(t) = 
    \begin{cases}
        \optMarginal_{k,0}(z) - \frac{\anonyDen_1}{\anonyDen_k\cdot(\bidderNum-k)}\cdot \quantile~, & t = z~;\\
        \optMarginal_{k,0}(y) +  \frac{\anonyDen_1}{\anonyDen_k\cdot(\bidderNum-k)}\cdot \quantile~, & t = y ~;\\
        \optMarginal_{k,0}(t)~, & \text{o.w. }~.
    \end{cases}
\end{equation*}

and we construct
\begin{equation*}
    \transfunction'_1(z_1, z_2) = 
    \begin{cases}
        \transfunction_1(x, z) + \quantile~, & z_1 = x, z_2 = y~;\\
        \transfunction_1(x, y) - \quantile~, & z_1 = x, z_2 = y~; \\
        \transfunction_1(z_1, z_2)~,  & \text{o.w. }~.
    \end{cases}
\end{equation*}
and we let $\anonySymmMarginal'_{k,1} = \optMarginal_{k, 1}$, $\anonySymmMarginal_{k, 0} = \optMarginal_{k, 0}$ for all other {\pas} marginals. 
    Let $\text{Rev}_1$ be the revenue of $(\optMarginal_{k, 1}, \optMarginal_{k, 0})_{k \in \setwZero}$ with $\transfunction_1$ and $\text{Rev}_2$ be the revenue of $(\anonySymmMarginal'_{k, 1}, \anonySymmMarginal'_{k, 0})_{k \in \setwZero}$ with $\transfunction'_1$. Since $y \leq z < x$, if $k \neq \bidderNum-1$, we have
    $\text{Rev}_2 - \text{Rev}_1 \ge (z-y)\cdot\anonyDen_1\cdot\quantile - \frac{1}{2}\cdot(z-y)\cdot\anonyDen_1\cdot\quantile > 0$;
    if $k = \bidderNum-1$, we have
    $\text{Rev}_2 - \text{Rev}_1 \ge (z-y)\cdot\anonyDen_1\cdot\quantile - (z-y)\cdot\anonyDen_1\cdot\quantile = 0$.
    Therefore, by repeating the above operation, we can obtain $(\anonySymmMarginal'_{k,1}, \anonySymmMarginal'_{k,0})_{k \in \setwZero}$ without decreasing the revenue that is $(\anonySymmMarginal'_{k,1}, \anonySymmMarginal'_{k,0})_{k \in \setwZero}$ is an optimal {\pas} marginals such that for all \( z \in (y,x] \), we have \( \anonySymmMarginal'_{k,0}(z) = 0 \) for all $k \neq 1$.

    \textbf{Step 3.} Let $(\optMarginal_{k, 1}, \optMarginal_{k, 0})_{k \in \setwZero}$ be the optimal {\pas} marginals that satisfy the property we proved in \textbf{Step 1} and \textbf{Step 2}, we have that for any $z \in (y, x)$, $\optMarginal_{1,1}(z) = 0$.
    We prove this by contradiction. Suppose $\optMarginal_{1,1}(z) > 0$, since $0<z<1$ and for any $k \neq 1$, $\optMarginal_{k,0}(z) = 0$, we have $\optMarginal_{1,0}(z) > 0$(otherwise, $z =1$). Then, there exists $z' \ge z$ such that $\transfunction_1(z', z) > 0$. 
    If there exists $z' \ge x $ such that  $\transfunction_1(z', z) > 0$, since $z < x < z'$, this contradicts that $x \triangleq \sup\{t: \exists y \text{ such that } \transfunction_1(t,y) > 0 \text{ and } t > y \}$ and $y \triangleq \sup\{t: t < x, \transfunction_1(x, t) > 0\}$; If there exists $ z \leq z' < x$, we have $x>z'$,$z>y$ and $\transfunction_1(x, y) > 0$, $\transfunction_1(z', z)$ > 0, which contradicts \Cref{lem: monotonic correlation}. Thus, we have for any $z \in (y, x)$, $\optMarginal_{1,1}(z) = 0$, $\optMarginal_{1,0}(z) = 0$.

    In summary, we only need to focus on the optimal $(\optMarginal_{k,1},\optMarginal_{k,0})_{k \in \setwZero}$ and \( \transfunction_1 \) that satisfy the above three properties. 
    
    \textbf{Step 4. }
    Finally, we prove that for optimal {\pas} marginals \( (\optMarginal_{k,1}, \optMarginal_{k, 0})_{k \in \setwZero} \) and transition function \( \transfunction_1 \) that satisfy the above three properties, if \( \transfunction_1(x, y) > 0 \), then \( x = y \). 
    We prove this by contradiction.
    
    Given \( (\optMarginal_{k,1}, \optMarginal_{k, 0})_{k \in \setwZero} \) and \( \transfunction_1 \)  satisfying the above four properties, suppose there exists $x>y$ such that $\transfunction_1(x, y) > 0$.
    According to \textbf{Step 1}, \textbf{Step 2} and \textbf{Step 3}, we can partition the supports of $\optMarginal_{1,0}$ and $\optMarginal_{1,1}$ into multiple segments,  each of which can be represented as a descending sequence $(x_1, x_2, \cdots, x_l)$ satisfying $\transfunction_1(x_i, x_{i+1}) > 0$ for all $i \in [l-1]$. According to \textbf{Step 1} and \textbf{Step 2}, we have for all $k \neq 1$ and any $z \in (x_l, x_1]$, $z' \in [x_l, x_1)$, it follows that $\optMarginal_{k,0}(z) = 0$ and $\optMarginal_{k,1}(z') = 0$.
    The core idea of the proof is as follows: for each segment $(x_1, \cdots, x_l)$,  we can transfer the probability mass of all points within $[x_l, x_1]$ belonging to $\supp((\optMarginal_{k,1}, \optMarginal_{k,0})_{k \in \setwZero})$ to a single point $\bar{x} \in (x_l ,x_1)$ in a way that strictly improves the expected revenue. Formally, we define
    \begin{align*}
        \bar{x} = \frac{\int_{x_l}^{x_1} \sum\nolimits_{k \in[\bidderNum]} \anonyDen_k \cdot k  \optMarginal_{k, 1}(x) ~\mathrm{d}x}{\int_{x_l}^{x_1} \left(\sum\nolimits_{k\in[\bidderNum]} \anonyDen_k \cdot k  \optMarginal_{k, 1}(x) + \sum\nolimits_{k\in[\bidderNum-1]_0} \anonyDen_k \cdot (\bidderNum - k) \optMarginal_{k, 0}(x)\right) ~\mathrm{d}x}~.
    \end{align*}
    We have $ x_l < \bar{x} < x_1$. For $k \in \{k: \exists x \in [x_l, x_1], \optMarginal_{k, 1}(x) > 0\}$, we construct
    \begin{equation*}
        \anonySymmMarginal'_{k,1}(x) = 
        \begin{cases}
            \int_{x_l}^{x_1} \optMarginal_{k,1}(t) ~\mathrm{d}t~, & \text{if }x = \bar{x}~; \\
            0~, & \text{if }x \in [x_l, x_1] \text{ and } x \neq \bar{x}~;\\
            \optMarginal_{k, 1}(x)~, & \text{o.w. }~.
        \end{cases}
    \end{equation*}
and for $k \in \{k: \exists x \in [x_l, x_1], \optMarginal_{k, 0}(x) > 0\}$,  we construct
\begin{equation*}
     \anonySymmMarginal'_{k,0}(x) = 
        \begin{cases}
            \int_{x_l}^{x_1} \optMarginal_{k,0}(t) ~\mathrm{d}t~, & \text{if }x = \bar{x}~; \\
            0~, & \text{if }x \in [x_l, x_1] \text{ and } x \neq \bar{x}~;\\
            \optMarginal_{k, 0}(x)~, & \text{o.w. }~.
        \end{cases}
\end{equation*}
and we let $\anonySymmMarginal'_{k,1} = \optMarginal_{k, 1}$, $\anonySymmMarginal_{k, 0} = \optMarginal_{k, 0}$ for all other {\pas} marginals. 

Since $(\optMarginal_{k,1}, \optMarginal_{k,0})_{k \in \setwZero}$ and $(\anonySymmMarginal'_{k,1}, \anonySymmMarginal'_{k,0})_{k \in \setwZero}$  only differ in revenue on \( [x_l, x_1] \), we use \( \text{Rev}_1 \) to represent the revenue of $(\optMarginal_{k,1}, \optMarginal_{k,0})_{k \in \setwZero}$ when \( x \in [x_l, x_1] \), and \( \text{Rev}_2 \) to represent the revenue of $(\anonySymmMarginal'_{k,1}, \anonySymmMarginal'_{k,0})_{k \in \setwZero}$ when \( x \in [x_l, x_1] \). 
According to \Cref{lem:no supp below tk}, for any $k$, if there exists $\optMarginal_{k,0}(x_l) > 0$, then $x_l \ge \minsecmax_k$, it follows that $\bar{x} \ge \minsecmax_k$ for all $k \in \{k: \exists x \in [x_l, x_1], \optMarginal_{k, 0}(x) > 0 \text{ or } \optMarginal_{k, 1}(x) > 0\}$, thus, we have
\begin{align*}
    \text{Rev}_2 & = \frac{1}{2} \cdot \bar{x} \cdot \left(\int_{x_l}^{x_1} \left(\sum\nolimits_{k\in[\bidderNum]} \anonyDen_k \cdot k  \optMarginal_{k, 1}(x) + \sum\nolimits_{k\in[\bidderNum-1]_0} \anonyDen_k \cdot (\bidderNum - k) \optMarginal_{k, 0}(x)\right) ~\mathrm{d}x\right)\\
    & = \frac{1}{2} \cdot \int_{x_l}^{x_1} \sum\nolimits_{k\in[\bidderNum]} \anonyDen_k \cdot k  \optMarginal_{k, 1}(x) ~\mathrm{d}x~.
\end{align*}
When calculating \( \text{Rev}_1 \), we calculate the revenue contributed by  each \( x \in [x_l, x_1] \), and then sum them up. As an example, we calculate the revenue contributed by \( x_1 \), and we denote the revenue contributed by \( x_1 \) as \( r(x_1) \). We have
\begin{align*}
    r(x_1) = & \frac{1}{2} \cdot x_1 \cdot \left(\sum\nolimits_{k\in[\bidderNum]} \anonyDen_k \cdot k  \optMarginal_{k, 1}(x_1) + \sum\nolimits_{k\in[\bidderNum-1]_0} \anonyDen_k \cdot (\bidderNum - k) \optMarginal_{k, 0}(x_1) - \anonyDen_1\left(\optMarginal_{1,1}(x_1) - (\bidderNum-1)\optMarginal_{1,0}(x_1)\right) \right) \\
    & + \frac{1}{2} \cdot x_2 \cdot \anonyDen_1\left(\optMarginal_{1,1}(x_1) - (\bidderNum-1)\optMarginal_{1,0}(x_1)\right) \\
    < & \frac{1}{2} \cdot x_1 \cdot \left(\sum\nolimits_{k\in[\bidderNum]} \anonyDen_k \cdot k  \optMarginal_{k, 1}(x_1) + \sum\nolimits_{k\in[\bidderNum-1]_0} \anonyDen_k \cdot (\bidderNum - k) \optMarginal_{k, 0}(x_1)\right) = \frac{1}{2}\sum\nolimits_{k\in[\bidderNum]} \anonyDen_k \cdot k  \optMarginal_{k, 1}(x_1)~.
\end{align*}
Similarly, for any $x \in [x_l, x_1]$, we have
\begin{equation*}
    r(x) < \frac{1}{2}\sum\nolimits_{k\in[\bidderNum]} \anonyDen_k \cdot k  \optMarginal_{k, 1}(x)~.
\end{equation*}
Thus, we have 
\begin{align*}
    \text{Rev}_1 = \int_{x_l}^{x_1} r(x) ~\mathrm{d}x < \frac{1}{2} \cdot \int_{x_l}^{x_1} \sum\nolimits_{k\in[\bidderNum]} \anonyDen_k \cdot k  \optMarginal_{k, 1}(x) ~\mathrm{d}x = \text{Rev}_2~.
\end{align*}
This contradicts $(\optMarginal_{k, 1}, \optMarginal_{k, 0})_{k \in \setwZero}$ is optimal. Thus, we have for optimal {\pas} marginals \( (\optMarginal_{k,1}, \optMarginal_{k, 0})_{k \in \setwZero} \) and transition function \( \transfunction_1 \) that satisfy the above three properties, if \( \transfunction_1(x, y) > 0 \), then \( x = y \). 
\qedhere
\end{proof}

\begin{proposition}
\label{prop:revenue for k=1}
Given $(\anonySymmMarginal_{k,1}, \anonySymmMarginal_{k,0})_{k \in \setwZero}$ and $\transfunction_1$, $\selffunction_1$ satisfying the property in \Cref{lem: transition property x = y}, it follows that
\begin{align*}
    \int_{\minsecmax_1}^1\int_{\minsecmax_1}^1 \min(x, y)\transfunction_1(x, y) ~\mathrm{d}(x, y) + 
    \int_{\minsecmax_1}^1 y  \selffunction_1(y) ~\mathrm{d}y
    = \int_{\minsecmax_1}^1(x -\minsecmax_1)  \left(\frac{1}{2}\anonySymmMarginal_{1,1}(x) +  \frac{\bidderNum-1}{2}  \anonySymmMarginal_{1,0}(x) \right) ~\mathrm{d}x + \minsecmax_1~.
\end{align*}
\end{proposition}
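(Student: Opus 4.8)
The plan is to strip off the structural hypothesis, recognize the induced second‑highest bid distribution, and then integrate. Since $\transfunction_1,\selffunction_1$ satisfy the property of \Cref{lem: transition property x = y}, we have $\transfunction_1(x,y)=0$ for $x\neq y$, so $\min(x,y)$ becomes $x$ on the diagonal and the left‑hand side collapses to $\int_{\minsecmax_1}^1 x\bigl(\transfunction_1(x,x)+\selffunction_1(x)\bigr)\,\mathrm{d}x$. I would then show that $g(x)\triangleq\transfunction_1(x,x)+\selffunction_1(x)$, viewed on $[\minsecmax_1,1]$, is exactly the second‑highest bid distribution $\secmaxProb_1$ of the corresponding optimal correlation: substituting $\transfunction_1(x,y)=0$ ($x\neq y$) into the density formula for $\secmaxProb_1$ derived in the proof of \Cref{prop:opt cor k=1} annihilates the off‑diagonal integral and leaves $\secmaxProb_1(x)=\transfunction_1(x,x)+\selffunction_1(x)$ (here \Cref{lem:case 3 max = secmax} is what guarantees that $\selffunction_1$ captures the entire Type‑$3$ mass). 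Moreover $\secmaxProb_1$ puts no mass strictly below $\minsecmax_1$, because $\transfunction_1,\selffunction_1$ live on $[\minsecmax_1,1]$, and $\int_{\minsecmax_1}^1 g(x)\,\mathrm{d}x=1$ by the equality constraint of \ref{eq:opt cor k = 1}. Hence the left‑hand side is $\int_{\minsecmax_1}^1 x\,\secmaxProb_1(x)\,\mathrm{d}x$, the expected second‑highest bid.

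The core step is to identify $\secmaxProb_1$ itself: I claim $\secmaxProb_1(x)=\tfrac12\anonySymmMarginal_{1,1}(x)+\tfrac{\bidderNum-1}{2}\anonySymmMarginal_{1,0}(x)$ for a.e.\ $x\in(\minsecmax_1,1]$, with the remaining probability $A\triangleq 1-\int_{(\minsecmax_1,1]}\bigl(\tfrac12\anonySymmMarginal_{1,1}+\tfrac{\bidderNum-1}{2}\anonySymmMarginal_{1,0}\bigr)$ (which is nonnegative, by the definition of $\minsecmax_1$ in \Cref{def:minimum secmax}) concentrated as an atom at $\minsecmax_1$. The bound $g(x)=\transfunction_1(x,x)+\selffunction_1(x)\le\tfrac12\anonySymmMarginal_{1,1}(x)+\tfrac{\bidderNum-1}{2}\anonySymmMarginal_{1,0}(x)$ is immediate from adding the first two constraints of \ref{eq:opt cor k = 1} and using diagonality (this is just \Cref{lem:upper bound for secmax prob} at $k=1$). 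For the reverse inequality I would invoke the optimality of $(\transfunction_1,\selffunction_1)$ for \ref{eq:opt cor k = 1}: were the inequality strict on a set of positive measure, then — since $\int_{\minsecmax_1}^1 g=1\le\int_{[\minsecmax_1,1]}\bigl(\tfrac12\anonySymmMarginal_{1,1}+\tfrac{\bidderNum-1}{2}\anonySymmMarginal_{1,0}\bigr)$ while this latter integral truncated to $[t,1]$ falls below $1$ for every $t>\minsecmax_1$ — there is a high value $x_{\mathrm{hi}}$ whose constraint slack can be partly filled by adding $\secmaxProb_1$‑mass there, compensated by removing an equal amount of $\secmaxProb_1$‑mass from a strictly smaller value $x_{\mathrm{lo}}$ that must carry positive mass; this perturbation keeps all constraints of \ref{eq:opt cor k = 1} satisfied and strictly increases $\int x\,\secmaxProb_1(x)\,\mathrm{d}x$, contradicting optimality. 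This determines $\secmaxProb_1$ on $(\minsecmax_1,1]$, and $\int_{\minsecmax_1}^1 g=1$ then pins down the atom $A$ at $\minsecmax_1$.

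To conclude, I would split $x=\minsecmax_1+(x-\minsecmax_1)$ inside $\int_{\minsecmax_1}^1 x\,\secmaxProb_1(x)\,\mathrm{d}x$: the $\minsecmax_1$ piece contributes $\minsecmax_1\bigl(A+\int_{(\minsecmax_1,1]}(\tfrac12\anonySymmMarginal_{1,1}+\tfrac{\bidderNum-1}{2}\anonySymmMarginal_{1,0})\bigr)=\minsecmax_1$ by definition of $A$, and the $(x-\minsecmax_1)$ piece equals $\int_{\minsecmax_1}^1(x-\minsecmax_1)\bigl(\tfrac12\anonySymmMarginal_{1,1}(x)+\tfrac{\bidderNum-1}{2}\anonySymmMarginal_{1,0}(x)\bigr)\,\mathrm{d}x$, since the weight $(x-\minsecmax_1)$ kills the atom at $\minsecmax_1$ and the marginals contribute nothing to a weighted integral over $[\minsecmax_1,1]$ from support below $\minsecmax_1$. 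Summing gives precisely the claimed right‑hand side. The main obstacle is the reverse inequality $\secmaxProb_1\ge\tfrac12\anonySymmMarginal_{1,1}+\tfrac{\bidderNum-1}{2}\anonySymmMarginal_{1,0}$ on $(\minsecmax_1,1]$: unlike the case $k\neq 1,\bidderNum-1$ in \Cref{prop:opt cor general k}, this cannot hold for arbitrary marginals, so the exchange argument genuinely relies on the optimality of the marginals (alternatively it can be extracted from the structural reductions already performed in the proofs of \Cref{lem: transition property x = y} and \Cref{lem:no supp below tk}), and the feasibility bookkeeping for the perturbation has to be handled with some care about atoms.
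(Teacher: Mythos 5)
Your proposal follows the same skeleton as the paper's proof: use diagonality to collapse the left-hand side to $\int_{\minsecmax_1}^1 x\,(\transfunction_1(x,x)+\selffunction_1(x))\,\mathrm{d}x$, identify $\transfunction_1(x,x)+\selffunction_1(x)$ with $\tfrac12\anonySymmMarginal_{1,1}(x)+\tfrac{\bidderNum-1}{2}\anonySymmMarginal_{1,0}(x)$ on $(\minsecmax_1,1]$ plus a residual atom at $\minsecmax_1$, and then split $x=\minsecmax_1+(x-\minsecmax_1)$. The final computation is identical to the paper's.

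The one place you genuinely diverge is the justification of the tightness step, and here your version is the more careful one. The paper asserts that the equalities $\anonySymmMarginal_{1,1}(x)=\transfunction_1(x,x)$ and $(\bidderNum-1)\anonySymmMarginal_{1,0}(x)=\transfunction_1(x,x)+2\selffunction_1(x)$ for $x>\minsecmax_1$ follow ``since $(\transfunction_1,\selffunction_1)$ is a feasible solution'' to \ref{eq:opt cor k = 1}. Feasibility alone only yields the inequalities, and your instinct that this is the main obstacle is right: if the capacity density $\tfrac12\anonySymmMarginal_{1,1}+\tfrac{\bidderNum-1}{2}\anonySymmMarginal_{1,0}$ has an atom at $\minsecmax_1$, a feasible diagonal plan can leave slack at some $x>\minsecmax_1$ and park the corresponding mass at $\minsecmax_1$, in which case the claimed identity fails. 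The hypothesis of the proposition (marginals and transition function ``satisfying the property in \Cref{lem: transition property x = y}'') implicitly carries optimality, and that is what your exchange argument exploits. Your sketch of that exchange is right in spirit but, as you note, needs care: when only the $\anonySymmMarginal_{1,1}$-constraint at $x_{\mathrm{hi}}$ has slack you cannot simply add diagonal $\transfunction_1$-mass there (the $\anonySymmMarginal_{1,0}$-constraint blocks it); you instead trade $\selffunction_1(x_{\mathrm{hi}})$ for $\transfunction_1(x_{\mathrm{hi}},x_{\mathrm{hi}})$ at rate $1{:}2$, which preserves the $\anonySymmMarginal_{1,0}$-budget, increases total mass, and therefore forces a compensating removal at a strictly lower support point — this is where the structural facts $\anonySymmMarginal_{1,1}(x)\le(\bidderNum-1)\anonySymmMarginal_{1,0}(x)$ on $(\minsecmax_1,1]$ and \Cref{lem:no supp below tk} enter. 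With that detail filled in, your argument is complete and, if anything, repairs a gap in the paper's own one-line justification.
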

\begin{proof}
Given $(\anonySymmMarginal_{k,1}, \anonySymmMarginal_{k,0})_{k \in \setwZero}$ and $\transfunction_1$, $\selffunction_1$ satisfying the property in \Cref{lem: transition property x = y}, since $(\transfunction_1, \selffunction_1)$ is a feasible solution to \eqref{eq:opt cor k = 1}, it follows that for $x > \minsecmax_1$, $\anonySymmMarginal_{1,1}(x) = \transfunction_1(x,x)$, $(\bidderNum-1)\anonySymmMarginal_{1,0}(x) = \anonySymmMarginal_{1,1}(x) + 2\selffunction_1(x)$.
Thus, we have, for $x > \minsecmax_1$
\begin{align*}
    \transfunction_1(x,x) + \selffunction_1(x) = \frac{1}{2}\anonySymmMarginal_{1,1}(x) +  \frac{\bidderNum-1}{2} \cdot \anonySymmMarginal_{1,0}(x)~.
\end{align*}
For $x = \minsecmax_1$, we have 
$$1 - \int_{\minsecmax_1^+}^1\frac{1}{2}\anonySymmMarginal_{1,1}(x) +  \frac{\bidderNum-1}{2} \cdot \anonySymmMarginal_{1,0}(x) ~\mathrm{d}x = \transfunction_1(t_1,t_1) + h_1(t_1)~.
$$
It follows that
\begin{align*}
    & \int_{\minsecmax_1}^1\int_{\minsecmax_1}^1 \min(x, y)\transfunction_1(x, y) ~\mathrm{d}(x, y) + 
    \int_{\minsecmax_1}^1 y \cdot \selffunction_1(y) ~\mathrm{d}y \\
    = & \int_{\minsecmax_1}^1 x\cdot(\transfunction_1(x, x) + \selffunction_1(x)) ~\mathrm{d}x
    =  \int_{\minsecmax_1}^1(x -\minsecmax_1) \cdot \left(\frac{1}{2}\anonySymmMarginal_{1,1}(x) +  \frac{\bidderNum-1}{2} \cdot \anonySymmMarginal_{1,0}(x) \right) ~\mathrm{d}x + \minsecmax_1~. \qedhere
\end{align*}
\end{proof}

\subsection{Missing Proofs for \texorpdfstring{\Cref{prop:opt marginals}}{}}
\label{apx:proof for optimal marginals}

As mentioned above, without loss of generality, we can focus on $(\anonySymmMarginal_{k,1}, \anonySymmMarginal_{k,0})_{k \in \setwZero}$ and $(\transfunction_k, \selffunction_k)_{k = 1, \bidderNum -1}$ satisfying the properties in \Cref{lem:no supp below tk,lem: transition property x = y}. Next, we solve for the optimal $(\anonySymmMarginal_{k,1}, \anonySymmMarginal_{k,0})_{k \in \setwZero}$. Based on \Cref{prop:opt cor general k,prop:revenue for k=1}, we can formulate the following program:
\begin{align}
    \label{eq:opt program new}
    \arraycolsep=5.4pt\def\arraystretch{1}
    \tag{$\mathcal{P}_1$}
    &\begin{array}{llll}
    \max\limits_{\anonySymmMarginal_{k, 1}, \anonySymmMarginal_{k, 0}}  ~ &
    \displaystyle 
    \sum\nolimits_{k \in \setwZero} \anonyDen_k \cdot \int_{\minsecmax_k}^1 (x-\minsecmax_k) \cdot \left(\frac{k}{2}\anonySymmMarginal_{k,1}(x) + \frac{\bidderNum-k}{2}\anonySymmMarginal_{k,0}(x)\right) ~\mathrm{d}x + \minsecmax_k~.
    \quad & \text{s.t.} &
    \vspace{1mm}
    \\
    & 
    \displaystyle
    x \text{ satisfies }\eqref{eq:bc for marginals},
    &  
    x \in [0, 1]
    \vspace{1mm}
    \\
    & 
    \displaystyle \int_{0}^1 f_{k, 0} (x) ~\mathrm{d}\expectedVal  = 1~,
    &  
    k\in\setwZero
    \vspace{1mm}
    \\
    & 
    \displaystyle \int_{0}^1 f_{k, 1} (x) ~ \mathrm{d}x  = 1~.
    &  
    k\in\setwZero
    \vspace{1mm} 
    \\
    \end{array}
\end{align}
Next, we investigate the structural properties of the optimal solution of \ref{eq:opt program new} to further simplify the problem. First, we prove that for \( k \neq 1, \bidderNum-1 \), we can move all the probability above \( \minsecmax_k \) in \( \anonySymmMarginal_{k,1} \) to the point \( x = 1 \) without decreasing the revenue.
\begin{lemma}
\label{lem: if x > tk then x = 1}
We can focus on the optimal solution $(\optMarginal_{k,1}, \optMarginal_{k,0})_{k \in \setwZero}$ to \ref{eq:opt program new} satisfies that for $k \neq 1, \bidderNum-1$, if $\anonySymmMarginal_{k,1}(x) > 0$ and $x \ge \minsecmax_k$, then $x = 1$; for $k = \bidderNum-1$, if $x \ge \minsecmax_{\bidderNum-1}$ and $\selffunction_{\bidderNum-1}(x) > 0$, then $x = 1$.
\end{lemma}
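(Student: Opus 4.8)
The plan is to argue by a local exchange on the bid marginals. Fix an optimal solution $(\optMarginal_{k,1},\optMarginal_{k,0})_{k\in\setwZero}$ of program~\ref{eq:opt program new}; by \Cref{lem:no supp below tk} and \Cref{lem: transition property x = y} we may assume each $\optMarginal_{k,0}$ is supported on $\{0\}\cup[\minsecmax_k,1]$ and that the optimal correlation for $k\in\{1,\bidderNum-1\}$ has the ``$x=y$'' form. Then the objective of~\ref{eq:opt program new} is a sum of per-index terms, the $j$-th being $\anonyDen_j\int_0^1 y\,\phi_j(y)\,\mathrm dy$, where $\phi_j$ is the second-highest-bid distribution of the optimal correlation for index $j$; by \Cref{prop:opt cor general k} (and \Cref{prop:revenue for k=1} with its $j=\bidderNum-1$ analogue), $\phi_j(y)=\tfrac j2\optMarginal_{j,1}(y)+\tfrac{\bidderNum-j}2\optMarginal_{j,0}(y)$ for $y>\minsecmax_j$, with an atom $1-\int_{\minsecmax_j^{+}}^1\phi_j$ at $\minsecmax_j$ and no mass below. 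Suppose for contradiction that for some $k\neq1,\bidderNum-1$ the marginal $\optMarginal_{k,1}$ places positive mass at a point $x$ with $\minsecmax_k\le x<1$; the case $k=\bidderNum-1$ with $\selffunction_{\bidderNum-1}(x)>0$ forces $\optMarginal_{\bidderNum-1,1}(x)>0$ (at least two outcome-$1$ bidders bid $x$) and is handled identically.

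Since $0<x<1$, the calibration identity \eqref{eq:bc for marginals} at $x$ forces $\sum_{j}\anonyDen_j(\bidderNum-j)\optMarginal_{j,0}(x)>0$, so I would pick some $k'$ with $\optMarginal_{k',0}(x)>0$ and perform the following perturbation: move an infinitesimal $\delta>0$ of $\optMarginal_{k,1}$-mass from $x$ up to $1$, and simultaneously move the amount $\epsilon=\frac{(1-x)\,k\,\anonyDen_k}{x\,(\bidderNum-k')\,\anonyDen_{k'}}\,\delta$ of $\optMarginal_{k',0}$-mass from $x$ down to $0$. One checks that the new marginals are still normalized and still satisfy \eqref{eq:bc for marginals} at every value: at $x$ by the choice of $\epsilon$ (numerator and denominator of the fraction shrink in exactly the forced ratio), at $1$ because no outcome-$0$ bidder bids $1$, and at $0$ because no outcome-$1$ bidder bids $0$. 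So the perturbed point is again feasible for~\ref{eq:opt program new}.

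Next I would bound the change in the objective. Moving $\optMarginal_{k,1}$-mass from $x$ to $1$ only rearranges $\optMarginal_{k,1}$ inside $[\minsecmax_k,1]$, so $\minsecmax_k$ is unchanged and the $k$-th term rises by exactly $\anonyDen_k\tfrac k2(1-x)\delta>0$ (half of the moved mass is the second-highest bid). Moving $\epsilon$ of $\optMarginal_{k',0}$ from $x\ge\minsecmax_{k'}$ down to $0<\minsecmax_{k'}$ strips half-weight $\tfrac{\bidderNum-k'}2\epsilon$ from above $\minsecmax_{k'}$, so $\minsecmax_{k'}$ drops just enough to recapture this much half-weight; the half-weight lost from the $k'$-th term sat at value $x$, while the recaptured half-weight lies at values $\le\minsecmax_{k'}\le x$, so the $k'$-th term falls by at most $\anonyDen_{k'}\tfrac{\bidderNum-k'}2\epsilon\,(x-\minsecmax_{k'})$ (working directly with $\int_0^1 y\,\phi_{k'}(y)\,\mathrm dy$ makes this clean even when there is an atom at $\minsecmax_{k'}$). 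No other term changes. Substituting $\anonyDen_{k'}(\bidderNum-k')\epsilon=\frac{1-x}{x}\anonyDen_k k\delta$, the net change in the objective is at least
\[
\anonyDen_k\tfrac k2(1-x)\delta-\tfrac{1-x}{2x}\anonyDen_k k\delta\,(x-\minsecmax_{k'})=\anonyDen_k\tfrac k2(1-x)\delta\cdot\tfrac{\minsecmax_{k'}}{x}\ \ge\ 0,
\]
which is strictly positive whenever $\minsecmax_{k'}>0$ and $x<1$. This contradicts optimality, so no point $x<1$ can carry $\optMarginal_{k,1}$-mass above $\minsecmax_k$; in the degenerate case $\minsecmax_{k'}=0$ the inequality is only weak, but then the rearrangement is still revenue-preserving. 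Iterating over all offending $x$ and all $k\neq1,\bidderNum-1$ (and $k=\bidderNum-1$ via $\selffunction_{\bidderNum-1}$) — or doing the rearrangement in one stroke and passing to the limit — yields an optimal solution with the claimed property.

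The step I expect to be the main obstacle is tracking how the implicitly-defined threshold $\minsecmax_{k'}$ moves under the perturbation. It is a $\sup$ over an integral inequality (\Cref{def:minimum secmax}), and $\phi_{k'}$ has a special value $1-\int_{\minsecmax_{k'}^{+}}^1\phi_{k'}$ exactly at $\minsecmax_{k'}$ (\Cref{prop:opt cor general k}), so when $\optMarginal_{k',0}$ (or $\optMarginal_{k,1}$) has an atom sitting precisely at its threshold one must be careful about how much mass is counted as the second-highest bid before and after the move, and verify that both the estimate ``recaptured half-weight has value $\le\minsecmax_{k'}$'' and the per-index form $\anonyDen_j\int y\,\phi_j$ of the objective survive. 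Away from such atoms the computation above is exact; the atom cases and the boundary case $\minsecmax_{k'}=0$ cost only weak inequalities, which is all that is needed for a ``without loss'' statement.
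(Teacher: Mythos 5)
Your proof takes a genuinely different route from the paper's. The paper's exchange keeps the offending mass paired with the \emph{entire} column at $x$: it sends $\optMarginal_{k_1,1}(x)$ to $1$ and relocates all remaining mass at $x$ (across every index, both outcome-$1$ and outcome-$0$) to a single recalibrated posterior $x'<x$, then balances revenue using the definitions of $x$ and $x'$ as ratios. You instead keep the value $x$ fixed and restore calibration pointwise by draining outcome-$0$ mass of a single index $k'$ from $x$ down to $0$ in the exact compensating ratio. Your version is more local and makes the calibration bookkeeping trivial (only the values $0$, $x$, $1$ are touched), at the price of having to track how the threshold $\minsecmax_{k'}$ responds; the paper's version avoids singling out a $k'$ but has to recompute calibration at the new point $x'$ and account for sub-threshold masses that ride along.

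Two points need tightening. First, your bound ``the $k'$-th term falls by at most $\anonyDen_{k'}\tfrac{\bidderNum-k'}{2}\epsilon\,(x-\minsecmax_{k'})$'' is too strong in general: writing $H_{k'}(t)=\int_t^1\bigl(\tfrac{k'}{2}\optMarginal_{k',1}+\tfrac{\bidderNum-k'}{2}\optMarginal_{k',0}\bigr)$, the drop equals $c(x-\minsecmax_{k'})+\int_0^{\minsecmax_{k'}}\max\bigl(0,\,c-(H_{k'}(t)-1)\bigr)\,\mathrm dt$ with $c=\tfrac{\bidderNum-k'}{2}\epsilon$, and the second term is $\Theta(c)$ (not $o(c)$) whenever $H_{k'}\equiv 1$ on an interval below $\minsecmax_{k'}$ --- e.g.\ a support gap beneath the threshold --- not only in the atom or $\minsecmax_{k'}=0$ cases you flag. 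Consequently the strict improvement, and hence the contradiction, can fail. The argument is rescued by the crude bound: the removed half-weight sat at value $x$ and the recaptured half-weight sits at values $\ge 0$, so the $k'$-term drops by at most $\anonyDen_{k'}\tfrac{\bidderNum-k'}{2}\epsilon\cdot x=\tfrac{1-x}{2}\anonyDen_k k\delta$, which exactly matches the gain $\anonyDen_k\tfrac k2(1-x)\delta$; the net change is weakly non-negative, which is all a ``we can focus on'' statement requires, and your one-stroke fallback then delivers the claimed structure. Second, if the only index with $\optMarginal_{k',0}(x)>0$ is $k'\in\{1,\bidderNum-1\}$, the per-index formula $\anonyDen_{k'}\int y\,\phi_{k'}$ you rely on is only valid under the side constraints $\anonySymmMarginal_{1,1}(y)\le(\bidderNum-1)\anonySymmMarginal_{1,0}(y)$ (resp.\ its $\bidderNum-1$ analogue) on $(\minsecmax_{k'},1]$, and removing $\optMarginal_{k',0}$-mass at $x$ can violate them; you should either verify the constraint survives or treat these indices separately (as the paper implicitly must as well).
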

\begin{proof}
    We prove this by construction. Suppose there exists $k_1 \neq 1, \bidderNum-1$ and $x \ge \minsecmax_{k_1}$ such that $\optMarginal_{k_1,1}(x) > 0$. According to Bayes-consistency, we have
    \begin{equation*}
        x = \frac{\sum\nolimits_{k \in[1, \bidderNum]} \anonyDen_k \cdot k \optMarginal_{k, 1}(x)}{\sum\nolimits_{k \in[1, \bidderNum]} \anonyDen_k \cdot k  \optMarginal_{k, 1}(x) + \sum\nolimits_{k \in[0, \bidderNum-1]} \anonyDen_k \cdot (\bidderNum - k) \optMarginal_{k, 0}(x)}~.
    \end{equation*}
    We define
    \begin{equation*}
        x' = \frac{\sum\nolimits_{k \in[1, \bidderNum]} \anonyDen_k \cdot k \optMarginal_{k, 1}(x) - \anonyDen_{k_1}\optMarginal_{k_1,1}(x)}{\sum\nolimits_{k \in[1, \bidderNum]} \anonyDen_k \cdot k  \optMarginal_{k, 1}(x) + \sum\nolimits_{k \in[0, \bidderNum-1]} \anonyDen_k \cdot (\bidderNum - k) \optMarginal_{k, 0}(x) - \anonyDen_{k_1}\optMarginal_{k_1,1}(x)}~,
    \end{equation*}
we have $x' < x$. For all $k \in \{k: k\neq k_1, \optMarginal_{k,1}(x) > 0\}$, we construct
\begin{equation*}
    \anonySymmMarginal'_{k,1}(t) = 
    \begin{cases}
        0~, & \text{if }t = x~;\\
        \optMarginal_{k,1}(x') + \optMarginal_{k,1}(x)~, & \text{if }t = x'~;\\
        \optMarginal_{k,1}(t)~, &  \text{o.w. }~.
    \end{cases}
\end{equation*}
and for all $k \in \{k: \optMarginal_{k,0}(x) > 0\}$, we construct
\begin{equation*}
    \anonySymmMarginal'_{k, 0}(t) = 
    \begin{cases}
        0~, & \text{if }t = x~;\\
        \optMarginal_{k, 0}(x') + \optMarginal_{k, 0}(x)~, & \text{if }t = x'~; \\
        \optMarginal_{k, 0}(t)~, & \text{o.w. }~.
    \end{cases}
    \anonySymmMarginal'_{k_1, 1}(t) = 
    \begin{cases}
        \optMarginal_{k_1, 1}(1) + \optMarginal_{k_1, 1}(x)~, & \text{if }t = 1~;\\
        0~,  & \text{if }t = x~;\\
        \optMarginal_{k_1,1}(t)~, & \text{o.w. }~.
    \end{cases}
\end{equation*}
and we let $\anonySymmMarginal'_{k,1} = \optMarginal_{k, 1}$, $\anonySymmMarginal_{k, 0} = \optMarginal_{k, 0}$ for all other {\pas} marginals. 
Let $\text{Rev}_1$ be the revenue of $(\optMarginal_{k,1}, \optMarginal_{k,0})_{k \in \setwZero}$ and $\text{Rev}_2$ be the revenue of $(\anonySymmMarginal'_{k,1}, \anonySymmMarginal'_{k,0})_{k \in \setwZero}$. According to \Cref{lem:no supp below tk}, for all $k \in \setwZero$, if $x < \minsecmax_k$, then $\optMarginal_{k,0}(x) = 0$, it follows that
\begin{align*}
    & \text{Rev}_2 - \text{Rev}_1 \\
    \ge ~ & \frac{1}{2}x'\left(\sum\nolimits_{k \in[1, \bidderNum]} \anonyDen_k  k  \optMarginal_{k, 1}(x) + \sum\nolimits_{k \in[0, \bidderNum-1]} \anonyDen_k  (\bidderNum - k) \optMarginal_{k, 0}(x) - \anonyDen_{k_1}\optMarginal_{k_1,1}(x) - \sum\nolimits_{k: x < \minsecmax_k}\anonyDen_k k \optMarginal_{k,1}(x)\right)\\
    + ~ & \frac{1}{2} \anonyDen_{k_1}\optMarginal_{k_1,1}(x) - \frac{1}{2}x\left(\sum\nolimits_{k \in[1, \bidderNum]} \anonyDen_k  k  \optMarginal_{k, 1}(x) + \sum\nolimits_{k \in[0, \bidderNum-1]} \anonyDen_k  (\bidderNum - k) \optMarginal_{k, 0}(x) - \sum\nolimits_{k: x < \minsecmax_k}\anonyDen_k k \optMarginal_{k,1}(x) \right)~.
\end{align*}
According to the definition of $x$ and $x'$, we have
\begin{align*}
    \text{Rev}_2 - \text{Rev}_1 \ge \sum\nolimits_{k: x < \minsecmax_k}\anonyDen_k\cdot k \optMarginal_{k,1}(x) \cdot (x - x') \ge 0~.
\end{align*}
Thus, for all such \( x \) and \( k_1 \), we can use the above method to transfer the probability $\optMarginal_{k_1,1}(x)$ on $x$ to the point $1$ without decreasing the revenue.

Similarly, for $k = \bidderNum-1$, we can transfer the probability $\selffunction_{\bidderNum-1}(x)$ on $x$ to the point $1$ without decreasing the revenue. Thus, we can focus on the optimal solution $(\optMarginal_{k,1}, \optMarginal_{k,0})_{k \in \setwZero}$ to \ref{eq:opt program new} satisfies that for $k \neq 1, \bidderNum-1$, if $\anonySymmMarginal_{k,1}(x) > 0$ and $x \ge \minsecmax_k$, then $x = 1$; for $k = \bidderNum-1$, if $\selffunction_{\bidderNum-1}(x) > 0$, then $x = 1$. 
\end{proof}

\begin{lemma}
\label{lem: divide case}
Given optimal solution $(\optMarginal_{k,1}, \optMarginal_{k,0})_{k \in \setwZero}$ to \ref{eq:opt program new}, for all $x \ge \minsecmax_1$, if $\optMarginal_{1,1}(x) > 0$, then for all $k \neq 1, \bidderNum-1$, we have $\optMarginal_{k,0}(x) = 0$ and for $k = 1$, we have $\optMarginal_{1,1}(x) = (\bidderNum-1)\optMarginal_{1,0}(x)$. Similarly, for all $x \ge \minsecmax_{\bidderNum-1}$, if $\optMarginal_{\bidderNum-1, 1}(x) > 0$, then for all $k \neq 1, \bidderNum-1$, we have $\optMarginal_{k, 0}(x) = 0$.
\end{lemma}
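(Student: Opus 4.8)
The plan is to prove the statement by a local exchange (perturbation) argument on the marginals, in the same vein as \Cref{lem: transition property x>=y,lem: transition property x = y,lem: if x > tk then x = 1}: assuming an optimal solution $(\optMarginal_{k,1},\optMarginal_{k,0})_{k\in\setwZero}$ of \ref{eq:opt program new} (already satisfying the structure of \Cref{lem:no supp below tk} and \Cref{lem: transition property x = y}) violates the claim at some point $x\ge\minsecmax_1$ with $\optMarginal_{1,1}(x)>0$, I will exhibit a feasible perturbation of the marginals that weakly---and, outside a few degenerate threshold configurations, strictly---increases the objective, yielding either a contradiction with optimality or a revenue-equivalent optimal solution that does satisfy the claim. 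The violation is of one of two kinds: (a) $\optMarginal_{k_0,0}(x)>0$ for some $k_0\notin\{1,\bidderNum-1\}$ with $\anonyDen_{k_0}>0$, or (b) $\optMarginal_{1,1}(x)<(\bidderNum-1)\optMarginal_{1,0}(x)$ (by \Cref{prop:equal secmax special k} the reverse inequality holds throughout $(\minsecmax_1,1]$, so a strict gap is the only way the asserted equality can fail; the case $\anonyDen_1=0$ is vacuous).

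First I would localize $x$: Bayes consistency \eqref{eq:bc for marginals} and $\optMarginal_{1,1}(x)>0$ force $x>0$; the presence of any $0$-outcome mass at $x$ forces $x<1$; \Cref{lem:no supp below tk} gives $x\ge\minsecmax_{k_0}$; and multi-maximality for $k=1$ upgrades $\optMarginal_{1,1}(x)>0$ to $\optMarginal_{1,0}(x)>0$. One also records that $\optMarginal_{1,0}(0)>0$ for $\bidderNum\ge 3$, from the identity $\int_{\minsecmax_k}^1 g_k\,\dd x=1$ with $g_k\triangleq\tfrac{k}{2}\optMarginal_{k,1}+\tfrac{\bidderNum-k}{2}\optMarginal_{k,0}$. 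For case (a), the perturbation moves mass only within the $0$-outcome marginals: decrease $\optMarginal_{k_0,0}(x)$ by a small $a>0$ and increase $\optMarginal_{1,0}(x)$ by $b>0$ with $\anonyDen_{k_0}(\bidderNum-k_0)\,a=\anonyDen_1(\bidderNum-1)\,b$, restoring the two normalizations by the reverse moves at the low point $0$ (and at $\minsecmax_1$ if needed). The calibration of $a,b$ leaves the $0$-outcome weight $\sum_k\anonyDen_k(\bidderNum-k)\optMarginal_{k,0}(x)$ at $x$ unchanged, and since the $1$-outcome marginals are untouched, \eqref{eq:bc for marginals} is preserved everywhere; the multi-maximality inequality $\optMarginal_{1,1}\le(\bidderNum-1)\optMarginal_{1,0}$ on $(\minsecmax_1,1]$ is only relaxed (we enlarge $\optMarginal_{1,0}(x)$, weakly enlarge $\minsecmax_1$), so the perturbed marginals stay feasible for \ref{eq:opt program new} and their value is still the objective of that program. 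Case (b) is handled by the analogous redistribution that additionally adjusts $\optMarginal_{1,1}(x)$ proportionally (relocating its mass to $\minsecmax_1$, never to $1$) so that Bayes consistency is maintained.

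It then remains to sign the objective change, for which only the terms $R_k\triangleq\anonyDen_k\int_{\minsecmax_k}^1 x\,g_k(x)\,\dd x$ (with $\int_{\minsecmax_k}^1 g_k\,\dd x=1$), $k\in\{1,k_0\}$, matter. Pushing $0$-outcome mass down at $x\ge\minsecmax_{k_0}$ drags $\minsecmax_{k_0}$ downward and, after the threshold re-adjusts, costs (to leading order in $a$) a positive multiple of $(x-\minsecmax_{k_0})$; pulling $0$-outcome mass up to $x\ge\minsecmax_1$ drags $\minsecmax_1$ upward and gains a positive multiple of $(x-\minsecmax_1)$; invoking the calibration of $a,b$, the net change collapses to leading order to a positive multiple of $\minsecmax_{k_0}-\minsecmax_1$ (case (a)) or of $\minsecmax_1$ (case (b)). Hence, if $\minsecmax_{k_0}>\minsecmax_1$ the perturbation strictly improves the objective; if $\minsecmax_{k_0}<\minsecmax_1$ the opposite perturbation does; in either case optimality is contradicted unless $\minsecmax_{k_0}=\minsecmax_1$ (resp. $\minsecmax_1=0$), where the perturbation is revenue-neutral and can be iterated to drive all the offending $0$-outcome mass off $[\minsecmax_1,1]$---giving the claim ``without loss of generality.'' The $\bidderNum\le 2$ corner case and the vacuous cases $\anonyDen_1=0$ or $\anonyDen_{k_0}=0$ are checked directly, and the assertion about $\minsecmax_{\bidderNum-1}$ and $\optMarginal_{\bidderNum-1,1}$ follows from the mirror argument, swapping the roles of $1$- and $0$-outcomes and using the mirror bound $(\bidderNum-1)\optMarginal_{\bidderNum-1,1}\ge\optMarginal_{\bidderNum-1,0}$ on $(\minsecmax_{\bidderNum-1},1]$.

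\textbf{The main obstacle} is twofold. First, the perturbation must preserve the multi-maximality structure for $k=1$ (and $\bidderNum-1$), so that the objective of \ref{eq:opt program new} still equals the true revenue; this is exactly why the exchange is confined to $0$-outcome marginals, why it is $\optMarginal_{1,0}$ (not $\optMarginal_{1,1}$) that is pulled upward, and why any relocation of $\optMarginal_{1,1}$-mass goes to $\minsecmax_1$ rather than to $1$ (a move to $1$ would violate $\optMarginal_{1,1}\le(\bidderNum-1)\optMarginal_{1,0}$ since $\optMarginal_{1,0}(1)=0$, destroying the revenue formula). Second, the accounting of how each threshold $\minsecmax_k$ shifts under the perturbation---including atoms sitting exactly at a threshold---is delicate, because the sign of the objective change is dictated entirely by the resulting ordering of $\minsecmax_{k_0}$ against $\minsecmax_1$; getting this sign right, and dispatching the regimes where it degenerates to zero via the neutral-iteration route, is where the real content of the proof lies.
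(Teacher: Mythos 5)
Your proposal takes a genuinely different route from the paper. The paper's proof does not transfer mass between different outcome classes at the point $x$; instead it \emph{splits} the single support point $x$ into two new calibrated points $x'$ and $x''$, sending the $k\in\{1,\bidderNum-1\}$ paired mass (plus a $\quantile$-fraction of the ``sub-threshold'' outcome-$1$ mass from classes with $x<\minsecmax_k$) to $x'$ and the remaining sub-threshold outcome-$1$ mass together with the $k\notin\{1,\bidderNum-1\}$ outcome-$0$ mass to $x''$, then shows $\text{Rev}_2-\text{Rev}_1>0$ for some $\quantile\in[0,1]$. That construction never needs to compare $\minsecmax_{k_0}$ with $\minsecmax_1$ and never borrows mass from the point $0$.

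Your exchange argument, as structured, has a genuine gap. Your own first-order computation shows the net change of the calibrated transfer is a positive multiple of $\minsecmax_{k_0}-\minsecmax_1$, so when $\minsecmax_{k_0}<\minsecmax_1$ the forward perturbation strictly \emph{loses} revenue and you must run it in reverse, i.e.\ decrease $\optMarginal_{1,0}(x)$. But precisely when part (ii) of the lemma already holds at $x$ with equality, $\optMarginal_{1,1}(x)=(\bidderNum-1)\optMarginal_{1,0}(x)$, any decrease of $\optMarginal_{1,0}(x)$ leaves $\newfeaAnonyMarginals$ (multi-maximality for $k=1$ fails at $x$), and the reverse move may also require $\optMarginal_{k_0,0}(0)>0$, which you have not secured. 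So the configuration ``$\optMarginal_{k_0,0}(x)>0$, $\optMarginal_{1,1}(x)=(\bidderNum-1)\optMarginal_{1,0}(x)>0$, $\minsecmax_{k_0}<\minsecmax_1\le x$'' survives your argument and the lemma is not established there; you cannot rule this configuration out a priori, since $\minsecmax_k=1$ for $k\ge 2$ is only derived \emph{after} this lemma. Two secondary issues: (1) the compensating withdrawal needs $\optMarginal_{1,0}(0)\ge b$, which can fail (e.g.\ all of $\optMarginal_{1,0}$ sitting on $[\minsecmax_1,1]$), and your fallback of compensating at $\minsecmax_1$ alters the denominator of \eqref{eq:bc for marginals} there and breaks calibration; (2) since $\RevCorr{\anonySymmMarginal_{k,1},\anonySymmMarginal_{k,0}}=\min_t\bigl[\int_t^1(x-t)g_k\,\mathrm{d}x+t\bigr]$ is concave in the marginals, a perturbation whose first-order change is zero is weakly revenue-\emph{decreasing} at second order (the thresholds re-optimize against you), so the ``iterate the neutral perturbation'' step in the degenerate case $\minsecmax_{k_0}=\minsecmax_1$ is unsound unless you additionally verify that atoms at the common threshold absorb the transferred mass without moving either $\minsecmax_{k_0}$ or $\minsecmax_1$. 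A point-splitting argument in the paper's style avoids all three problems.
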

\begin{proof}
    We prove this by contradiction. Given optimal solution $(\optMarginal_{k,1}, \optMarginal_{k,0})_{k \in \setwZero}$ to \ref{eq:opt program new}, suppose there exists $x \ge \minsecmax_1$  such that $\optMarginal_{1,1}(x) > 0$ and $\sum_{k: k \in \setwZero, k \neq 1, \bidderNum-1}\optMarginal_{k,0}(x) > 0$, it follows that $0 < x < 1$. Next, we prove that we can split the probability of row \( x \) into two other rows, \( x' \) and \( x'' \), thereby increasing the revenue.

    According to Bayes-consistency and \Cref{lem: if x > tk then x = 1}, we have
    \begin{equation*}
        x = \frac{\sum\nolimits_{k: x < \minsecmax_k} k\anonyDen_k\cdot\optMarginal_{k, 1}(x) + \anonyDen_1\optMarginal_{1,1}(x) + \anonyDen_{\bidderNum-1}(\bidderNum-1)\optMarginal_{\bidderNum-1}(x)}{\sum\nolimits_{k: x < \minsecmax_k} k\anonyDen_k\cdot\optMarginal_{k, 1}(x) + \anonyDen_1\optMarginal_{1,1}(x) + \anonyDen_{\bidderNum-1}(\bidderNum-1)\optMarginal_{\bidderNum-1}(x) + \sum_{k \in \setwZero}\anonyDen_{k}(\bidderNum-k)\optMarginal_{k,0}(x)}~.
    \end{equation*}
    We split part of the probability in \( \sum\nolimits_{k: x < \minsecmax_k} k\anonyDen_k\cdot\optMarginal_{k, 1}(x)\) with \(\anonyDen_1\optMarginal_{1,1}(x), \anonyDen_{\bidderNum-1}(\bidderNum-1)\optMarginal_{\bidderNum-1}(x) \) and assign it to row \( x' \), and split the remaining part of the probability in \( \sum\nolimits_{k: x < \minsecmax_k} k\anonyDen_k\cdot\optMarginal_{k, 1}(x) \) with \( \sum_{k \in \setwZero}\anonyDen_{k}(\bidderNum-k)\optMarginal_{k,0}(x) \) and assign it to row \( x'' \).
    Let $0 \leq \quantile \leq 1$, we define
    \begin{align*}
        x' & = \frac{\quantile \cdot \sum\nolimits_{k: x < \minsecmax_k} k\anonyDen_k\cdot\optMarginal_{k, 1}(x) + \anonyDen_1\optMarginal_{1,1}(x) + \anonyDen_{\bidderNum-1}(\bidderNum-1)\optMarginal_{\bidderNum-1}(x) }{\quantile \cdot \sum\nolimits_{k: x < \minsecmax_k} k\anonyDen_k\cdot\optMarginal_{k, 1}(x) + 2\anonyDen_1\optMarginal_{1,1}(x) + 2\anonyDen_{\bidderNum-1}(\bidderNum-1)\optMarginal_{\bidderNum-1}(x)  }~,\\
      x'' & = \frac{(1-\quantile) \cdot \sum\nolimits_{k: x < \minsecmax_k} k\anonyDen_k\cdot\optMarginal_{k, 1}(x)}{(1-\quantile) \cdot \sum\nolimits_{k: x < \minsecmax_k} k\anonyDen_k\cdot\optMarginal_{k, 1}(x) + \sum_{k \in \setwZero}\anonyDen_{k}(\bidderNum-k)\optMarginal_{k,0}(x) - \sum_{k = 1, \bidderNum-1}k\anonyDen_k\cdot\optMarginal_{k,1}(x)}~.
    \end{align*}
for $k \in \{k: x < \minsecmax_k\}$, we construct $\anonySymmMarginal'_{k,1}$ as follows
\begin{equation*}
    \anonySymmMarginal'_{k,1}(t) = 
    \begin{cases}
        0~, & \text{if }t = x~;\\
        \optMarginal_{k,1}(x') + \quantile  \cdot \optMarginal_{k,1}(x)~, & \text{if }t = x'~;\\
        \optMarginal_{k,1}(x'') + (1-\quantile) \cdot \optMarginal_{k,1}(x)~, & \text{if }t = x''~;\\
        \optMarginal_{k,1}(t)~, & \text{o.w. }~. 
    \end{cases}
\end{equation*}
for all $k \in \{k: k\neq 1, \bidderNum-1 \text{ and } \optMarginal_{k,0}(x) > 0\}$, we construct
\begin{equation*}
    \anonySymmMarginal'_{k,0}(t) = 
    \begin{cases}
        0~, & \text{if }t = x~;\\
        \optMarginal_{k,0}(x'') + \optMarginal_{k,0}(x)~, & \text{if }t = x''~;\\
        \optMarginal_{k,0}(t)~, & \text{o.w. }~.
    \end{cases}
\end{equation*}
for $k = 1, \bidderNum-1$, we construct
\begin{equation*}
    \anonySymmMarginal'_{k,1}(t) = 
    \begin{cases}
        0~, & \text{if }t = x~;\\
        \optMarginal_{k,1}(x') + \optMarginal_{k,1}(x)~, & \text{if }t = x'~;\\
        \optMarginal_{k,1}(t)~, & \text{o.w. }~.
    \end{cases}
    \anonySymmMarginal'_{1,0}(t) = 
    \begin{cases}
        0~, & \text{if }t = x~; \\
        \optMarginal_{1,0}(x') + \frac{1}{\bidderNum-1}\cdot\optMarginal_{1,1}(x)~, & \text{if }t = x'~;\\
        \optMarginal_{1,0}(x'') + \optMarginal_{1,0}(x) - \frac{1}{\bidderNum-1}\cdot\optMarginal_{1,1}(x)~, &\text{if }t = x''~;\\
        \optMarginal_{1,0}(t)~, & \text{o.w. }~.
    \end{cases}
\end{equation*}
for $k = \bidderNum-1$, we construct
\begin{equation*}
    \anonySymmMarginal'_{\bidderNum-1,0}(t) = 
    \begin{cases}
        0~, & \text{if }t = x~; \\
        \optMarginal_{\bidderNum-1,0}(x') +(\bidderNum-1)\cdot\optMarginal_{\bidderNum-1,1}(x)~, & \text{if }t = x'~;\\
        \optMarginal_{\bidderNum-1,0}(x'') + \optMarginal_{\bidderNum-1,0}(x) - (\bidderNum-1)\cdot\optMarginal_{\bidderNum-1,1}(x)~, &\text{if }t = x''~; \\
        \optMarginal_{\bidderNum-1, 0}(t)~, & \text{o.w. }~.
    \end{cases}
\end{equation*}
and we let $\anonySymmMarginal'_{k,1} = \optMarginal_{k, 1}$, $\anonySymmMarginal_{k, 0} = \optMarginal_{k, 0}$ for all other marginals. 
Let $\text{Rev}_1$ be the revenue of $(\optMarginal_{k,1}, \optMarginal_{k,0})_{k \in \setwZero}$ and $\text{Rev}_2$ be the revenue of $(\anonySymmMarginal'_{k,1}, \anonySymmMarginal'_{k,0})_{k \in \setwZero}$.
We can express \( \text{Rev}_2 - \text{Rev}_1 > 0 \) as an inequality involving \( \quantile \), and this inequality has solutions \( \quantile \in [0, 1] \), which contradicts $(\optMarginal_{k,1}, \optMarginal_{k,0})_{k \in \setwZero}$ is optimal solution to \ref{eq:opt program new}.
\end{proof}
Given optimal solution  $(\optMarginal_{k,1}, \optMarginal_{k,0})_{k \in \setwZero}$, according to \Cref{lem: if x > tk then x = 1,lem: divide case}, we can divide all \( x \in \{x: 0< x < 1, \exists k \in \setwZero \text{ such that } \optMarginal_{k,1}(x) > 0 \text{ or } \optMarginal_{k,0}(x) > 0\} \) into two classes.

\begin{definition}
\label{def: x class}
Given optimal solution  $(\optMarginal_{k,1}, \optMarginal_{k,0})_{k \in \setwZero}$, for any $x \in (0,1)$ satisfying that there exists $k \in \setwZero$ such that $\optMarginal_{k,0}(x) > 0$ or $\optMarginal_{k,1}(x)$ > 0, it must belong to one of the following two classes.
\begin{itemize}
    \item $\mathcal{C}_1 = \{x: \optMarginal_{1,1}(x) \cdot \optMarginal_{1,0}(x) > 0 \text{ or } \optMarginal_{\bidderNum-1, 1}(x) \cdot \optMarginal_{\bidderNum-1, 0}(x) > 0\}$.
    \item  $\mathcal{C}_2 = \{x: \optMarginal_{1,1}(x) \cdot \optMarginal_{1,0}(x) + \optMarginal_{\bidderNum-1, 1}(x) \cdot \optMarginal_{\bidderNum-1, 0}(x) = 0\}$.
\end{itemize}
\end{definition}

\begin{lemma}
\label{lem: merge}
    For any two points belonging to the same class, we can merge them into a single point without decreasing the revenue.
\end{lemma}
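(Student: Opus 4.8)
The plan is to prove this by an explicit mass‑transfer argument, in the same spirit as \Cref{lem: transition property x = y} and \Cref{lem: divide case}. Fix an optimal solution $(\optMarginal_{k,1},\optMarginal_{k,0})_{k\in\setwZero}$ to \ref{eq:opt program new} that (without loss) already enjoys the structural properties of \Cref{lem:no supp below tk}, \Cref{lem: transition property x = y}, \Cref{lem: if x > tk then x = 1}, and \Cref{lem: divide case}, and suppose two distinct points $x_1<x_2$ lie in the same class $\mathcal{C}_i$, $i\in\{1,2\}$. I would pool the entire mass sitting at $x_1$ and at $x_2$ --- across every marginal $\optMarginal_{k,1}$ and $\optMarginal_{k,0}$ --- onto a single new location $\bar x$, and show the resulting marginals remain feasible for \ref{eq:opt program new} while the revenue does not fall.

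The location $\bar x$ is forced by requiring \eqref{eq:bc for marginals} to hold for the pooled mass: writing $N_j=\sum_{k\in[\bidderNum]}\anonyDen_k\,k\,\optMarginal_{k,1}(x_j)$ and $D_j=N_j+\sum_{k\in[\bidderNum-1]_0}\anonyDen_k\,(\bidderNum-k)\,\optMarginal_{k,0}(x_j)$ for the ``winning'' and total mass at $x_j$, so that $x_j=N_j/D_j$, I set
\[
\bar x=\frac{N_1+N_2}{D_1+D_2},
\]
the weighted mediant of $x_1$ and $x_2$, whence $x_1\le\bar x\le x_2$. I then define $\anonySymmMarginal'_{k,o}(\bar x)=\optMarginal_{k,o}(\bar x)+\optMarginal_{k,o}(x_1)+\optMarginal_{k,o}(x_2)$, $\anonySymmMarginal'_{k,o}(x_1)=\anonySymmMarginal'_{k,o}(x_2)=0$ for $o\in\{0,1\}$ and all $k$, and leave all other marginals unchanged. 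Feasibility is routine: normalization is preserved because transported and received masses coincide; \eqref{eq:bc for marginals} holds at every untouched point, vacuously at $x_1,x_2$, and at $\bar x$ by the choice above; and the multi‑maximality requirement $\anonySymmMarginal_{1,1}\le(\bidderNum-1)\anonySymmMarginal_{1,0}$ on $(\minsecmax_1,1]$ (and its $k=\bidderNum-1$ counterpart) needed for the representation underlying \ref{eq:opt program new} is preserved --- in $\mathcal{C}_1$ because \Cref{lem: divide case} gives the identity $\optMarginal_{1,1}(x_j)=(\bidderNum-1)\optMarginal_{1,0}(x_j)$ at the paired points and equalities are closed under addition, and in $\mathcal{C}_2$ because there the relevant products vanish at $x_1,x_2$, hence also at $\bar x$.

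The crux is the revenue comparison, and this is where the dependence of each threshold $\minsecmax_k$ on the marginals is the main obstacle. I would first show the merge leaves every $\minsecmax_k$ unchanged: in $\mathcal{C}_1$ the two points are self‑paired in some coordinate $k\in\{1,\bidderNum-1\}$, which by the support constraint in \ref{eq:opt cor k = 1} forces $x_1,x_2\ge\minsecmax_k$, so the tail $\int_{\minsecmax_k}^1\tfrac{1}{2}\big(k\anonySymmMarginal_{k,1}+(\bidderNum-k)\anonySymmMarginal_{k,0}\big)$ is untouched, and since $\bar x$ lies between $x_1$ and $x_2$ it stays on the same side of each threshold; in $\mathcal{C}_2$, \Cref{lem: if x > tk then x = 1}, \Cref{lem:no supp below tk}, and \Cref{lem: divide case} together pin down, for every coordinate $k$, whether the mass $x_1$ and $x_2$ carry in that coordinate is eligible (above $\minsecmax_k$) or not, and force $x_1,x_2$ --- hence $\bar x$ --- onto the same side of each $\minsecmax_k$. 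With the thresholds frozen, the objective of \ref{eq:opt program new} splits, coordinate by coordinate, into $\anonyDen_k\big(\int_{\minsecmax_k}^1(x-\minsecmax_k)(\tfrac{k}{2}\anonySymmMarginal_{k,1}(x)+\tfrac{\bidderNum-k}{2}\anonySymmMarginal_{k,0}(x))\,dx+\minsecmax_k\big)$, whose contribution from a row of mass is \emph{linear} in the row location. A direct computation using $\bar x-x_1=\tfrac{D_2}{D_1+D_2}(x_2-x_1)$ and $\bar x-x_2=-\tfrac{D_1}{D_1+D_2}(x_2-x_1)$ then reduces the revenue change to a mediant inequality comparing the pooled eligible mass to the individual eligible masses; this is exactly the type of bookkeeping carried out in Step~4 of the proof of \Cref{lem: transition property x = y}, and it yields that the revenue is non‑decreasing, strictly so unless $x_1=x_2$. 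Iterating the merge within each class --- each step strictly reduces the number of support points in $(0,1)$ --- terminates with an optimal solution in which $\mathcal{C}_1$ and $\mathcal{C}_2$ are each a single point, which is the asserted reduction.

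I expect the genuinely delicate piece to be the threshold‑invariance step together with the case split that prevents a $\mathcal{C}_2$ point from straddling some $\minsecmax_k$ in a coordinate where the per‑row revenue stops being linear; in particular one must treat separately the scenario where $x_1$ is self‑paired in coordinate $1$ while $x_2$ is self‑paired in coordinate $\bidderNum-1$, invoking \Cref{lem: if x > tk then x = 1} once more to rule out a problematic interaction between the two thresholds.
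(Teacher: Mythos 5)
Your proposal follows essentially the same route as the paper's proof: pool the mass at the two points onto the weighted mediant $\bar x=(N_1+N_2)/(D_1+D_2)$, check that calibration and the thresholds $\minsecmax_k$ are undisturbed, and reduce the revenue comparison to a mediant-type inequality. The one step you defer --- showing $\bar x(E_1+E_2)\ge x_1E_1+x_2E_2$ for the eligible masses $E_j$ --- is exactly where the paper invokes Titu's lemma, and it does not follow from linearity in the row location alone but from the identity $E_j=D_j(1-x_j)$ supplied by \Cref{lem: if x > tk then x = 1} and \Cref{lem: divide case}; with that identity your displacement formulas give the revenue change as $(x_2-x_1)^2\,D_1D_2/(D_1+D_2)\ge 0$, which closes the argument.
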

\begin{proof}
First, we prove that for any $x, x' \in \mathcal{C}_1$, we can merge them into $\bar{x}$ without decreasing the revenue. According to \Cref{lem: if x > tk then x = 1,lem: divide case}, we have
\begin{equation*}
    x = \frac{\sum_{k: \minsecmax_k > x,  \optMarginal_{k,1}(x) > 0} k\anonyDen_{k}\optMarginal_{k,1}(x) + \sum_{k =1, \bidderNum-1}k\anonyDen_{k}\optMarginal_{k,1}(x)}{\sum_{k: \minsecmax_k > x,  \optMarginal_{k,1}(x) > 0} k\anonyDen_{k}\optMarginal_{k,1}(x) + \sum_{k =1, \bidderNum-1}k\anonyDen_{k}\optMarginal_{k,1}(x) + \sum_{k=1, \bidderNum-1}\anonyDen_{k}(\bidderNum-k)\optMarginal_{k,0}(x)}~.
\end{equation*}
According to \Cref{lem: if x > tk then x = 1,lem: divide case}, we have $\optMarginal_{1,1}(x) = (\bidderNum-1)\optMarginal_{1,0}(x)$ and $(\bidderNum-1)\optMarginal_{\bidderNum-1, 1}(x) = \optMarginal_{\bidderNum-1, 0}(x)$, thus, we have
\begin{equation*}
        x = \frac{\sum_{k: \minsecmax_k > x,  \optMarginal_{k,1}(x) > 0} k\anonyDen_{k}\optMarginal_{k,1}(x) + \sum_{k =1, \bidderNum-1}k\anonyDen_{k}\optMarginal_{k,1}(x)}{\sum_{k: \minsecmax_k > x,  \optMarginal_{k,1}(x) > 0} k\anonyDen_{k}\optMarginal_{k,1}(x) + 2\sum_{k =1, \bidderNum-1}k\anonyDen_{k}\optMarginal_{k,1}(x) }~.
\end{equation*}
Similarly, we have
\begin{equation*}
    x' = \frac{\sum_{k: \minsecmax_k > x',  \optMarginal_{k,1}(x') > 0} k\anonyDen_{k}\optMarginal_{k,1}(x') + \sum_{k =1, \bidderNum-1}k\anonyDen_{k}\optMarginal_{k,1}(x')}{\sum_{k: \minsecmax_k > x',  \optMarginal_{k,1}(x') > 0} k\anonyDen_{k}\optMarginal_{k,1}(x') + 2\sum_{k =1, \bidderNum-1}k\anonyDen_{k}\optMarginal_{k,1}(x') }~.
\end{equation*}
Then, we define $\bar{x}$ by merging $x, x'$, we define
\begin{equation*}
    \bar{x} = \frac{\sum_{t = x, x'}\left(\sum_{k: \minsecmax_k > t,  \optMarginal_{k,1}(t) > 0} k\anonyDen_{k}\optMarginal_{k,1}(t) + \sum_{k =1, \bidderNum-1}k\anonyDen_{k}\optMarginal_{k,1}(t)\right)}{\sum_{t = x, x'}\left(\sum_{k: \minsecmax_k > t,  \optMarginal_{k,1}(t) > 0} k\anonyDen_{k}\optMarginal_{k,1}(t) + 2\sum_{k =1, \bidderNum-1}k\anonyDen_{k}\optMarginal_{k,1}(t)  \right)}~.
\end{equation*}
Let \( \text{Rev}_1 \) be the revenue before merging, and \( \text{Rev}_2 \) be the revenue after merging. For convenience, we let \( a  = \sum_{k: \minsecmax_k > x,  \optMarginal_{k,1}(x) > 0} k\anonyDen_{k}\optMarginal_{k,1}(x)\), $b = \sum_{k =1, \bidderNum-1}k\anonyDen_{k}\optMarginal_{k,1}(x) $, $a' = \sum_{k: \minsecmax_k > x',  \optMarginal_{k,1}(x') > 0} k\anonyDen_{k}\optMarginal_{k,1}(x')$, $b' = \sum_{k =1, \bidderNum-1}k\anonyDen_{k}\optMarginal_{k,1}(x')$, we have
\begin{align*}
    \text{Rev}_2 - \text{Rev}_1 \ge & \bar{x}\cdot\left(b + b'\right) - x\cdot b - x'\cdot b' \\
    = &(1- \frac{b+b'}{a+2b+a'+2b'})\cdot(b+b') - (1-\frac{b}{a+2b})\cdot b - (1-\frac{b'}{a' + 2b'})\cdot b' \\
    = & \frac{b^2}{a+2b} + \frac{b'^2}{a'+2b'} - \frac{(b+b')^2}{a+2b+a'+2b'}~.
\end{align*}
According to Titu's lemma, it follows that
\begin{align*}
    \frac{b^2}{a+2b} + \frac{b'^2}{a'+2b'} - \frac{(b+b')^2}{a+2b+a'+2b'} \ge 0~.
\end{align*}
Thus, we have $\text{Rev}_2-\text{Rev}_1 \ge 0$. Therefore, we can merge any two points belonging to $\mathcal{C}_1$ without decreasing the revenue. Similarly, we can prove that we can merge any two points belonging to $\mathcal{C}_2$ without decreasing the revenue. Thus, we finish the prove.
\end{proof}

\begin{proposition}
    Without loss of generality, we can focus on the optimal calibrated signaling with $4$ supports.
\end{proposition}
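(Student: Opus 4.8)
The plan is to combine the structural reductions already established — \Cref{lem:no supp below tk}, \Cref{lem: transition property x = y}, \Cref{lem: if x > tk then x = 1}, and \Cref{lem: divide case} — with the merging operation of \Cref{lem: merge}, and then observe that once each of the two classes $\mathcal{C}_1,\mathcal{C}_2$ is collapsed to a single point, only four signal values survive. Concretely, I would start from an optimal solution $(\optMarginal_{k,1},\optMarginal_{k,0})_{k\in\setwZero}$ of \ref{eq:opt program new} that already satisfies all previously-established without-loss properties; in particular, by \Cref{lem:no supp below tk} and \Cref{lem: if x > tk then x = 1} every value $x$ in $\bigcup_{k}\bigl(\supp(\optMarginal_{k,1})\cup\supp(\optMarginal_{k,0})\bigr)$ lies in $\{0,1\}$ or in the open interval $(0,1)$, and by \Cref{def: x class} every such $x\in(0,1)$ belongs to exactly one of the classes $\mathcal{C}_1$ or $\mathcal{C}_2$ (the classification there is exhaustive on the support, and $\mathcal{C}_1\cap\mathcal{C}_2=\emptyset$).

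Next I would apply the merging operation of \Cref{lem: merge} to each class. The proof of \Cref{lem: merge} merges two same-class points $x,x'$ into a Bayes-consistent ``center of mass'' $\bar{x}$ and shows that the seller's revenue does not decrease, the key step being the Engel/Titu form of the Cauchy--Schwarz inequality. I would note that exactly the same argument, with the two-term sums replaced by a sum (or integral) over an entire class, shows that replacing all of $\mathcal{C}_1$ by a single Bayes-consistent point $x_1^\star\in(0,1)$ and all of $\mathcal{C}_2$ by a single Bayes-consistent point $x_0^\star\in(0,1)$ cannot decrease revenue; moreover a product of the form $\optMarginal_{1,1}(x)\optMarginal_{1,0}(x)$ (or its $\bidderNum-1$ analogue) remains positive after the merge whenever it was positive somewhere in $\mathcal{C}_1$, so the merged point still lies in $\mathcal{C}_1$ and the collapse is internally consistent. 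After the two collapses, every used signal lies in $\{0,x_0^\star,x_1^\star,1\}$, a set of at most four points, and the resulting marginals are still feasible for \ref{eq:opt program new} and attain the optimal value — hence there is an optimal calibrated signaling supported on at most four signals. Matching this back against \Cref{prop:opt marginals} then identifies $x_1^\star=\optminsecmax_1$ and $x_0^\star=\optminsecmax_0$.

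The main obstacle I anticipate is handling a possibly uncountable support: a single invocation of \Cref{lem: merge} merges only two points, so I must justify the simultaneous collapse of an entire class. I expect this to be routine — the revenue-difference inequality underlying \Cref{lem: merge} is precisely the multi-term estimate $\sum_i \tfrac{b_i^2}{a_i+2b_i} \ge \tfrac{\left(\sum_i b_i\right)^2}{\sum_i (a_i+2b_i)}$, which holds for arbitrary finite sums and, by monotone convergence, extends to countable sums and to integrals, so collapsing $\mathcal{C}_1$ (and likewise $\mathcal{C}_2$) to its center of mass cannot lower revenue. A secondary point worth a sentence is that the merged solution still satisfies the hypotheses under which $\mathcal{C}_1,\mathcal{C}_2$ were defined (no mass below $\minsecmax_k$, mass above $\minsecmax_k$ concentrated at $1$ for $k\neq 1,\bidderNum-1$, and $\transfunction_1(x,y)>0\Rightarrow x=y$); but since the merged solution is again optimal, \Cref{lem:no supp below tk}, \Cref{lem: transition property x = y}, \Cref{lem: if x > tk then x = 1}, and \Cref{lem: divide case} all apply to it verbatim, so no additional work is needed there.
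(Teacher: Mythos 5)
Your proposal is correct and follows essentially the same route as the paper: classify the interior support points into $\mathcal{C}_1$ and $\mathcal{C}_2$ via \Cref{def: x class}, collapse each class to a single Bayes-consistent point using the merging operation of \Cref{lem: merge}, and conclude that only $\{0,\bar{x},\bar{x}',1\}$ remain. Your extra care about extending the two-point merge to a whole (possibly infinite) class via the multi-term Engel/Cauchy--Schwarz inequality is a point the paper's own proof leaves implicit, so it is a welcome addition rather than a deviation.
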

\begin{proof}
By \Cref{lem: merge}, we can merge all the points in \( \mathcal{C}_1 \) into a single point, which we define as \( \bar{x} \), without decreasing the revenue. Similarly, we can merge all the points in \( \mathcal{C}_2 \) into a single point, which we define as \( \bar{x}' \), without decreasing the revenue. Therefore, for all \( k \in \setwZero \), \( \optMarginal_{k,1} \) and \( \optMarginal_{k,0} \) have at most 4 supports, namely $0, 1$, \( \bar{x} \), and \( \bar{x}' \).
\end{proof}
\begin{lemma}
Given optimal marginals $(\optMarginal_{k,1}, \optMarginal_{k,0})_{k \in \setwZero}$ with $4$ supports, we have $\minsecmax_0 = \bar{x}'$.
\end{lemma}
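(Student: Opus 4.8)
The plan is to read $\minsecmax_0$ directly off \Cref{def:minimum secmax} once the shape of $\optMarginal_{0,0}$ on the four support points $\{0,\bar{x},\bar{x}',1\}$ is known. Specializing \eqref{eq:minimum secmax} to $k=0$, the first summand $k\,\anonySymmMarginal_{k,1}$ vanishes, so
\[
  \minsecmax_0=\sup\Bigl\{\,t\in[0,1]:\int_t^1\tfrac{\bidderNum}{2}\,\optMarginal_{0,0}(x)\,\mathrm{d}x\ge 1\Bigr\};
\]
in particular $\minsecmax_0$ is determined by $\optMarginal_{0,0}$ alone. Since the lemma hypothesis is that the optimal signaling has four supports, $\bar{x}'$ is a genuine point of $(0,1)$ (i.e.\ $\mathcal{C}_2\neq\emptyset$ in \Cref{def: x class}).

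Next I would show $\supp(\optMarginal_{0,0})\subseteq\{0,\bar{x}'\}$. The point $x=1$ is excluded by Bayes-consistency: in \eqref{eq:bc for marginals} at $x=1$ the right-hand side equals $1$ only if every outcome-$0$ term vanishes, so $\optMarginal_{0,0}(1)=0$. The point $\bar{x}$ is excluded because $\bar{x}\in\mathcal{C}_1$, so (using the symmetry between $k=1$ and $k=\bidderNum-1$) we may assume $\optMarginal_{1,1}(\bar{x})\,\optMarginal_{1,0}(\bar{x})>0$; then \Cref{lem:no supp below tk} forces $\bar{x}\ge\minsecmax_1$, and \Cref{lem: divide case} gives $\optMarginal_{k,0}(\bar{x})=0$ for every $k\neq 1,\bidderNum-1$, in particular $\optMarginal_{0,0}(\bar{x})=0$. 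Hence $\optMarginal_{0,0}=p\,\delta_{(\bar{x}')}+(1-p)\,\delta_{(0)}$ for some $p\in[0,1]$, and the integral above equals $\tfrac{\bidderNum p}{2}$ for $t\in(0,\bar{x}']$ and $0$ for $t>\bar{x}'$. Therefore $\minsecmax_0=\bar{x}'$ precisely when $\bidderNum p\ge 2$, and $\minsecmax_0=0$ otherwise. One direction is free: by \Cref{lem:no supp below tk} the positive support point $\bar{x}'$ of $\optMarginal_{0,0}$ satisfies $\bar{x}'\ge\minsecmax_0$, so it remains only to establish $\bar{x}'\le\minsecmax_0$, equivalently $\bidderNum p\ge 2$.

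For the remaining inequality I would invoke optimality. If $\bidderNum p<2$ then $\minsecmax_0=0$, and by \Cref{lem:upper bound for secmax prob} together with \Cref{prop:opt cor general k} (legitimate since $k=0\notin\{1,\bidderNum-1\}$) the revenue contributed by the $k=0$ block is only $\anonyDen_0\cdot\tfrac{\bidderNum p}{2}\,\bar{x}'$, strictly below $\anonyDen_0\,\bar{x}'$. I would then construct a feasible perturbation inside $\newfeaAnonyMarginals$ that transfers a small mass $\eps$ of $\optMarginal_{0,0}$ from $0$ to $\bar{x}'$ and restores \eqref{eq:bc for marginals} at $\bar{x}'$ by moving a compensating mass of some $\optMarginal_{k,1}$ with $k\ge 2$ down from its atom at $1$ to $\bar{x}'$ (the shifted amount being $\tfrac{\bar{x}'}{1-\bar{x}'}\cdot\tfrac{\bidderNum\anonyDen_0}{\bidderNum\anonyDen_k}\,\eps$, so that the numerator of \eqref{eq:bc for marginals} at $\bar{x}'$ grows by exactly the required amount). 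Because $\bar{x}'$ lies below the relevant $\minsecmax_k$ (and below $\minsecmax_1=\bar x$, which one also needs to have pinned down), the loss on the altered block is of the same order as the gain on the $k=0$ block, and a short computation shows the net change in the objective is nonnegative — strictly positive under the hypotheses forcing four distinct supports — contradicting optimality of $(\optMarginal_{k,1},\optMarginal_{k,0})_{k\in\setwZero}$. Hence $\bidderNum p\ge 2$ and $\minsecmax_0=\bar{x}'$.

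\textbf{Main obstacle.} The delicate point is the final perturbation: one must verify the perturbed marginals remain in $\newfeaAnonyMarginals$ (Bayes-consistency at \emph{every} point, nonnegativity of all shifted masses, and that the inequality constraints for $k=1,\bidderNum-1$ are untouched), and one must recompute $\minsecmax_k$ for the block whose atom at $1$ was depleted — it drops from its original value — so that the revenue bookkeeping via \Cref{lem:upper bound for secmax prob}/\Cref{prop:opt cor general k} is applied with the correct threshold and the sign of the net change comes out right (and strict). Handling the degenerate configurations where no $k\ge 2$ carries mass at $1$, or where $\anonyDen_0=0$, requires either choosing a different compensating marginal or a separate direct argument.
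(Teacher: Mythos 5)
Your reduction of the statement to the single inequality $\bidderNum p\ge 2$, where $\optMarginal_{0,0}=p\,\delta_{(\bar{x}')}+(1-p)\,\delta_{(0)}$, is sound, and is in fact more careful than the paper, which jumps from $\minsecmax_0\neq\bar{x}'$ straight to $\optMarginal_{0,0}(0)=1$ without addressing the intermediate case $0<p<2/\bidderNum$. The genuine gap is in your closing perturbation. You keep the point $\bar{x}'$ fixed and restore calibration by pulling mass $\delta=\frac{\bar{x}'}{1-\bar{x}'}\cdot\frac{\bidderNum\anonyDen_0}{k\anonyDen_k}\,\eps$ off the atom at $1$ of some $\optMarginal_{k,1}$ with $k\ge2$ (note the denominator must be $k\anonyDen_k$, not $\bidderNum\anonyDen_k$). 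The gain on the $k=0$ block is $\tfrac12\bidderNum\anonyDen_0\eps\,\bar{x}'$, while the loss on block $k$ is $\anonyDen_k\,(1-\minsecmax_k')\,\tfrac{k\delta}{2}$, where $\minsecmax_k'$ is the threshold \emph{after} depletion. If block $k$ carries no mass at $\bar{x}$ (i.e.\ $\optMarginal_{k,1}(\bar{x})=0$), then $\minsecmax_k'=\bar{x}'$ and the loss equals $\tfrac12\bidderNum\anonyDen_0\eps\,\bar{x}'$ exactly: the net change is zero, and no contradiction with optimality follows. Even when $\minsecmax_k'=\bar{x}$, the net change is $\tfrac12\bidderNum\anonyDen_0\eps\,\bar{x}'\cdot\frac{\bar{x}-\bar{x}'}{1-\bar{x}'}$, whose sign requires $\bar{x}>\bar{x}'$ --- a fact the paper only establishes \emph{after} this lemma, so invoking it here risks circularity. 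So the strict improvement you need is simply not available in all admissible configurations, and your ``main obstacle'' paragraph correctly identifies but does not close this hole.

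The paper's own argument sidesteps all of this by perturbing only $\optMarginal_{0,0}$: it moves mass $2/\bidderNum$ from $0$ up to the pooled point and lets the pooled posterior itself drop to maintain \eqref{eq:bc for marginals}, rather than compensating with outcome-one mass borrowed from another block. Writing $a$ for the outcome-one mass and $b$ for the pre-existing outcome-zero mass at the point, the revenue change takes the form $q+\frac{b^2}{a+b}-\frac{(b+q)^2}{a+b+q}$, which is strictly positive by Titu's lemma whenever $a>0$ --- unconditionally, with no reference to $\bar{x}$, no case analysis on where any other block's threshold lands, and no dependence on later lemmas. To repair your version you would need to first prove $\bar{x}>\bar{x}'$ and additionally rule out (or handle separately) the configurations in which every eligible $k\ge2$ has $\optMarginal_{k,1}(\bar{x})=0$; the simpler fix is to adopt the paper's self-contained perturbation for the final step while keeping your (better) support analysis.
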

\begin{proof}
First, we prove $\minsecmax_0 = \bar{x}'$. Given optimal marginals $\optMarginal_{k,1}, \optMarginal_{k,0}$, suppose $\minsecmax_0 \neq \bar{x}'$, then, we have $\minsecmax_0 = 0$ that is $\optMarginal_{0,0}(0) = 1$. We can transfer part of the probability from \( \anonySymmMarginal_{0,0}(0) \) to \( \bar{x}' \), making \( \minsecmax_0 \) become \( \bar{x}' \), thus increasing revenue. According to the definition of $\bar{x}'$, we have
\begin{equation*}
    \bar{x}' = \frac{\sum_{k: \bar{x}' < \minsecmax_k}k\anonyDen_k \cdot \optMarginal_{k,1}(\bar{x}')}{\sum_{k: \optMarginal_{k,0}(\bar{x}') > 0} (\bidderNum-k)\anonyDen_k \cdot \optMarginal_{k,0}(\bar{x}') + \sum_{k: \bar{x}' < \minsecmax_k}k\anonyDen_k \cdot \optMarginal_{k,1}(\bar{x}')}~.
\end{equation*}
Let $\quantile = \frac{2}{\bidderNum}$, we construct
\begin{equation*}
    \anonySymmMarginal'_{0,0}(x) = 
    \begin{cases}
        \quantile~, & \text{if }x = \bar{x}'~; \\
        1- \quantile~, & \text{if }x = 0 ~.
    \end{cases}
\end{equation*}
At this point, the value of \( \bar{x}' \) decreases, and we let \( \text{Rev}_1 \) represent the revenue before transferring the probability, and \( \text{Rev}_2 \) represent the revenue after transferring the probability. For convenience, we let \( a  = \sum_{k: \minsecmax_k > x} k\anonyDen_{k}\optMarginal_{k,1}(x)\), 
$b = \sum_{k: \optMarginal_{k,0}(\bar{x}') > 0} (\bidderNum-k)\anonyDen_k \cdot \optMarginal_{k,0}(\bar{x}')$, we have
\begin{align*}
    R_2 - R_1 = \frac{a}{a+b+\quantile}(b+\quantile) - \frac{ab}{a+b}  = \quantile + \frac{b^2}{a+b} - \frac{(\quantile+b)^2}{a+b+\quantile} > 0~.
\end{align*}
This contradicts that $(\optMarginal_{k,1},\optMarginal_{k,0})$ is optimal, thus, we have $\minsecmax_0 = \bar{x}'$.
\end{proof}
\begin{lemma}
Given optimal marginals $(\optMarginal_{k,1}, \optMarginal_{k,0})_{k \in \setwZero}$ with $4$ supports, we have $\bar{x} \ge \bar{x}'$.
\end{lemma}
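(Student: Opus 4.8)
The plan is to argue by contradiction: suppose $\bar{x} < \bar{x}'$ and derive a violation of optimality of $(\optMarginal_{k,1},\optMarginal_{k,0})_{k\in\setwZero}$.

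\emph{Step 1 (structural reductions around $\bar{x}$ and $\bar{x}'$).} We have just shown $\bar{x}' = \minsecmax_0$, so $\bar{x}\in(0,\minsecmax_0)$, and Lemma~\ref{lem:no supp below tk} applied to $\optMarginal_{0,0}$ gives $\optMarginal_{0,0}(\bar{x}) = 0$; combined with Bayes-consistency at $x=1$ (which forces $\optMarginal_{0,0}(1)=0$) this pins $\optMarginal_{0,0}$ to be supported on $\{0,\bar{x}'\}$ with $\optMarginal_{0,0}(\bar{x}')\ge 2/\bidderNum$. Since $\bar{x}\in\mathcal{C}_1$, assume w.l.o.g.\ that the witnessing index is $k=1$, i.e.\ $\optMarginal_{1,1}(\bar{x})\optMarginal_{1,0}(\bar{x})>0$ (the index $k=\bidderNum-1$ being symmetric). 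Then $\optMarginal_{1,0}(\bar{x})>0$ forces $\bar{x}\ge\minsecmax_1$ by Lemma~\ref{lem:no supp below tk}, after which Lemma~\ref{lem: divide case} yields $\optMarginal_{k,0}(\bar{x})=0$ for all $k\notin\{1,\bidderNum-1\}$ and $\optMarginal_{1,1}(\bar{x})=(\bidderNum-1)\optMarginal_{1,0}(\bar{x})$. At the larger point $\bar{x}'>\bar{x}\ge\minsecmax_1$, the multi-maximality constraint $\optMarginal_{1,1}(x)\le(\bidderNum-1)\optMarginal_{1,0}(x)$ on $(\minsecmax_1,1]$, together with Lemma~\ref{lem:no supp below tk} and Lemma~\ref{lem: transition property x = y}, forces $\optMarginal_{1,1}(\bar{x}')=\optMarginal_{1,0}(\bar{x}')=0$, and likewise $\optMarginal_{\bidderNum-1,1}(\bar{x}')=\optMarginal_{\bidderNum-1,0}(\bar{x}')=0$; hence the only $0$-outcome mass sitting at $\bar{x}'$ is the $k=0$ mass.

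\emph{Step 2 (the two calibration identities).} Feeding these facts into \eqref{eq:bc for marginals} at $x=\bar{x}$ and $x=\bar{x}'$ gives $\bar{x} = (A+B)/(2A+B)$ and $\bar{x}' = M/(M+Z')$, where $A := \anonyDen_1\optMarginal_{1,1}(\bar{x}) + (\bidderNum-1)\anonyDen_{\bidderNum-1}\optMarginal_{\bidderNum-1,1}(\bar{x})$ is the $k\in\{1,\bidderNum-1\}$ contribution (whose $1$-outcome and $0$-outcome parts at $\bar{x}$ are equal by Step 1), $B := \sum_{k\notin\{1,\bidderNum-1\}} k\anonyDen_k\optMarginal_{k,1}(\bar{x})$, $M := \sum_{k\ge 2} k\anonyDen_k\optMarginal_{k,1}(\bar{x}')$, and $Z' := \bidderNum\anonyDen_0\optMarginal_{0,0}(\bar{x}')\ge 2\anonyDen_0$. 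Equivalently $1-\bar{x} = A/(2A+B)$ and $1-\bar{x}' = Z'/(M+Z')$, so $\bar{x}<\bar{x}'$ reads $A/(2A+B) > Z'/(M+Z')$. Observe further that $\minsecmax_k=1$ for every $k\ge 2$ — otherwise Prop.~\ref{prop:opt cor general k} shows the second-highest bid for that index is not deterministically $1$ and could be raised, contradicting optimality — so all the mass in $B$ and in the $k\ge 2$ part of $M$ consists of $1$-outcomes parked at values that are low for their own index; these masses contribute nothing to the revenue of indices $k\ge 2$ and exist only to calibrate $\bar{x}$ and $\bar{x}'$ against the $0$-outcome masses $A$ and $Z'$.

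\emph{Step 3 (the revenue-improving modification).} The plan is now to re-route the entire parking mass currently at $\bar{x}$ — the $k\ge 2$ mass making up $B$, together with the high-value $k\in\{1,\bidderNum-1\}$ mass whose calibration partner is the $0$-outcome mass $A$ at $\bar{x}$ — onto the single surviving interior point $\bar{x}'$, while simultaneously collapsing the $k=1$ high value down to the threshold $\minsecmax_1$. This is precisely the kind of consolidation carried out in Lemma~\ref{lem: merge}; one checks, via the same convexity (Titu / Cauchy–Schwarz) estimate used there, that the modification never decreases the total expected second-highest bid, and that the inequality is strict unless $\bar{x}\ge\bar{x}'$. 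Since $(\optMarginal_{k,1},\optMarginal_{k,0})_{k\in\setwZero}$ is optimal, we must be in the equality case, i.e.\ $\bar{x}\ge\bar{x}'$, contradicting the assumption $\bar{x}<\bar{x}'$ and proving the lemma.

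\emph{Expected main obstacle.} The delicate point is the bookkeeping when both special indices $k=1$ and $k=\bidderNum-1$ place mass at $\bar{x}$, and especially the borderline subcase $\bar{x}=\minsecmax_1$, where $\bar{x}$ is the threshold of the $k=1$ correlation rather than a strictly high value. There one must verify directly — using Prop.~\ref{prop:revenue for k=1} together with the monotonicity of Lemma~\ref{lem: monotonic correlation} — that collapsing $k=1$'s mass at $\bar{x}$ and transferring the $k\ge 2$ parking mass to $\bar{x}'$ leaves $\text{Rev}_1$ (hence the total revenue) weakly larger, and that all these perturbations can be performed simultaneously while preserving feasibility of \eqref{eq:bc for marginals} and the normalization of every marginal.
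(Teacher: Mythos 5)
Your overall strategy --- contradiction plus a mass-transfer argument closed by a Titu-type convexity estimate --- is the same one the paper uses, but the step that actually proves the lemma is missing. In Step 3 you assert that ``one checks, via the same convexity estimate used in \Cref{lem: merge}, that the modification never decreases revenue, and that the inequality is strict unless $\bar{x}\ge\bar{x}'$.'' That last clause \emph{is} the lemma, and nothing in \Cref{lem: merge} delivers it: that lemma merges two points of the \emph{same} class and yields only a weak inequality, valid regardless of which point is larger, so it cannot by itself distinguish $\bar{x}<\bar{x}'$ from $\bar{x}\ge\bar{x}'$. The feature that breaks the symmetry --- and which your write-up never identifies --- is that the $0$-outcome mass at the $\mathcal{C}_1$ point $\bar{x}$ (namely $(\bidderNum-1)\anonyDen_1\optMarginal_{1,0}(\bar{x})$, paired one-to-one with $\optMarginal_{1,1}(\bar{x})$) converts into second-highest bids with weight $1$, whereas the $0$-outcome mass at the $\mathcal{C}_2$ point $\bar{x}'$ converts with weight $\tfrac12$. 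The paper's proof exploits exactly this: it transfers an explicit amount $\quantile$ of $1$-outcome mass from $\bar{x}'$ down to $\bar{x}$ until the two calibrated values coincide, and writes the revenue change as $\frac{b^2}{a+b}+\frac{d^2/2}{c+d}-\frac{(b+d)(b+d/2)}{a+b+c+d}$, where $b$ is the weight-$1$ mass and $d$ the weight-$\tfrac12$ mass; the mismatch between the coefficients $1$ and $\tfrac12$ is what is supposed to make the gain strict (a symmetric Titu bound alone gives only $\ge 0$, i.e.\ exactly the non-strict conclusion of \Cref{lem: merge}). Without writing down this weighted expression and verifying its sign, no contradiction is obtained.

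Two further problems. First, your modification runs in the opposite direction from the paper's (you push the $k\ge2$ ``parking'' mass from $\bar{x}$ up to $\bar{x}'$ and simultaneously collapse the $k=1$ value to $\minsecmax_1$), and you specify neither the transfer amounts nor why calibration \eqref{eq:bc for marginals} and the unit-mass normalizations are preserved; note also that $\minsecmax_1$ is itself a function of $\optMarginal_{1,1},\optMarginal_{1,0}$ and moves under your perturbation, so ``collapsing to $\minsecmax_1$'' is not even a fixed target. Second, in Step 1 the claim that membership of $\bar{x}'$ in $\mathcal{C}_2$ plus multi-maximality forces $\optMarginal_{1,1}(\bar{x}')=\optMarginal_{1,0}(\bar{x}')=0$ is not justified: $\mathcal{C}_2$ only says the product vanishes, and $\optMarginal_{1,0}(\bar{x}')>0$ with $\optMarginal_{1,1}(\bar{x}')=0$ is excluded by none of the constraints you cite (the paper avoids this by keeping $\sum_{k}\optMarginal_{k,0}(\bar{x}')$ in its calibration identity rather than assuming only the $k=0$ term survives). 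Your own closing paragraph concedes that the bookkeeping is undone; as written this is a plan for a proof rather than a proof.
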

\begin{proof}
We prove this by contradiction. Given optimal marginals $(\optMarginal_{k,1}, \optMarginal_{k,0})_{k \in \setwZero}$ with $4$ supports, suppose $\bar{x}' > \bar{x}$. We prove that we can transfer the probability from \( \bar{x}' \) to \( \bar{x} \) such that \( x' = x \), thus increasing the revenue. Since $\bar{x}' > \bar{x}$, we have $\minsecmax_{\bidderNum-1} = 1$(otherwise, $\int_{0}^1 \optMarginal_{\bidderNum-1,1}(x) ~\mathrm{d}x < 1$). According to Bayes-consistency, we have
\begin{align*}
    \bar{x} = \frac{\sum_{k: \minsecmax_k \ge \bar{x}}k\anonyDen_k\optMarginal_{k,1}(\bar{x})}{\sum_{k: \minsecmax_k \ge \bar{x}}k\anonyDen_k\optMarginal_{k,1}(\bar{x})  + (\bidderNum-1)\anonyDen_1\optMarginal_{1,0}(\bar{x})}~,
    \bar{x}' = \frac{\sum_{k: \minsecmax_k \ge \bar{x}'}k\anonyDen_k\optMarginal_{k,1}(\bar{x}')}{\sum_{k: \minsecmax_k \ge \bar{x}'}k\anonyDen_k\optMarginal_{k,1}(\bar{x}') + \sum_{k \in \setwZero} \optMarginal_{k,0}(\bar{x}')}~.
\end{align*}
Since $\bar{x}' > \bar{x}$, we can transfer probability from $\bar{x}'$ to $\bar{x}$ such that $\bar{x}' = \bar{x}$, that is, there exists $\quantile > 0$, such that
\begin{align*}
    \frac{\sum_{k: \minsecmax_k \ge \bar{x}}k\anonyDen_k\optMarginal_{k,1}(\bar{x}) + \quantile}{\sum_{k: \minsecmax_k \ge \bar{x}}k\anonyDen_k\optMarginal_{k,1}(\bar{x})  + (\bidderNum-1)\anonyDen_1\optMarginal_{1,0}(\bar{x}) + \quantile} 
    = \frac{\sum_{k: \minsecmax_k \ge \bar{x}'}k\anonyDen_k\optMarginal_{k,1}(\bar{x}')-\quantile}{\sum_{k: \minsecmax_k \ge \bar{x}'}k\anonyDen_k\optMarginal_{k,1}(\bar{x}') + \sum_{k \in \setwZero} \optMarginal_{k,0}(\bar{x}')-\quantile}~.
\end{align*}
We let \( \text{Rev}_1 \) represent the revenue before transferring the probability, and \( \text{Rev}_2 \) represent the revenue after transferring the probability. For convenience, we let \( a  = \sum_{k: \minsecmax_k \ge \bar{x}}k\anonyDen_k\optMarginal_{k,1}(\bar{x})\), 
$b = (\bidderNum-1)\anonyDen_1\optMarginal_{1,0}(\bar{x})$, $c = \sum_{k: \minsecmax_k \ge \bar{x}'}k\anonyDen_k\optMarginal_{k,1}(\bar{x}')$, $d = \sum_{k \in \setwZero} \optMarginal_{k,0}(\bar{x}')$ we have  it follows that
\begin{align*}
    \text{Rev}_2 - \text{Rev}_1 = \frac{b^2}{a+b} + \frac{0.5d^2}{c+d} - \frac{(b+d)(b+0.5d)}{a+b+c+d}
    > \frac{b^2}{a+b} + \frac{0.5d^2}{c+d} - \frac{(\sqrt{0.5}d + b)^2}{a+b+c+d} \ge 0~.
\end{align*}
this contradicts $(\anonySymmMarginal_{k,1}, \anonySymmMarginal_{k,0})_{k \in \setwZero}$ is optimal. Thus, we have $\bar{x} \ge \bar{x}'$.
\end{proof}

\begin{lemma}
Given optimal marginals $(\optMarginal_{k,1}, \optMarginal_{k,0})_{k \in \setwZero}$ with $4$ supports, it follows that $\minsecmax_1 = \bar{x}$ and for $k \ge 2$, $\minsecmax_k = 1$.  
\end{lemma}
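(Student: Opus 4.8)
The plan is to work directly with the four‑support optimal marginals, whose support lies in $\{0,\bar x',\bar x,1\}$ with $0<\bar x'\le\bar x<1$ and $\minsecmax_0=\bar x'$ by the two preceding lemmas; moreover I may assume $\bar x'<\bar x$ and $\bar x'\notin\mathcal{C}_1$, since otherwise \Cref{lem: merge} collapses $\bar x'$ into $\bar x$ and the case $\bar x'=\bar x$ is identical. First I would record two facts. Since $\tfrac{k}{2}\optMarginal_{k,1}+\tfrac{\bidderNum-k}{2}\optMarginal_{k,0}$ is a discrete measure on $\{0,\bar x',\bar x,1\}$, the threshold $\minsecmax_k$ of \Cref{def:minimum secmax} is the largest atom $a$ of this measure whose $[a,1]$‑mass is at least $1$, and its $(\minsecmax_k,1]$‑mass is strictly below $1$. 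Second, evaluating the calibration identity \eqref{eq:bc for marginals} at $x=0$ and $x=1$ gives $\optMarginal_{k,1}(0)=0$ for all $k\ge1$ and $\optMarginal_{k,0}(1)=0$ for all $k\le\bidderNum-1$; combined with the multi‑maximality from \Cref{prop:equal secmax special k}, the latter forces $\optMarginal_{1,1}(1)=0$, since at level $1$ the lone outcome‑$1$ bidder could otherwise never be tied for the top bid. Hence $\tfrac12\optMarginal_{1,1}+\tfrac{\bidderNum-1}{2}\optMarginal_{1,0}$ has no mass above $\bar x$, so $\minsecmax_1\le\bar x$, and it remains to rule out $\minsecmax_1\in\{0,\bar x'\}$ and to establish $\minsecmax_k=1$ for $k\ge2$.

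For $k\ge2$ I would argue $\minsecmax_k=1$ by contradiction: if $\minsecmax_k<1$, then \Cref{lem: if x > tk then x = 1} (and its $\selffunction_{\bidderNum-1}$‑version when $k=\bidderNum-1$) shows that the only support point of $\optMarginal_{k,1}$ in $[\minsecmax_k,1]$ is $1$. If $\minsecmax_k\le\bar x'$, then (killing $0$ by calibration and $\bar x',\bar x$ because they lie in $[\minsecmax_k,1]$) $\optMarginal_{k,1}=\delta_{(1)}$, so the atom of $\tfrac{k}{2}\optMarginal_{k,1}$ at $1$ equals $k/2\ge1$ and $\minsecmax_k=1$, a contradiction; so $\minsecmax_k=\bar x<1$. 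Then \Cref{lem:no supp below tk} gives $\optMarginal_{k,0}(\bar x')=0$ and \Cref{lem: if x > tk then x = 1} gives $\optMarginal_{k,1}(\bar x)=0$, while the defining inequalities for $\minsecmax_k$ make the $\{\bar x\}$‑atom positive — i.e.\ $\optMarginal_{k,0}(\bar x)>0$ for $2\le k\le\bidderNum-2$, and $\optMarginal_{\bidderNum,1}(\bar x)>0$ for $k=\bidderNum$, the latter contradicting $\optMarginal_{\bidderNum,1}(\bar x)=0$ directly. For $2\le k\le\bidderNum-2$ I would use that $\bar x$ is the merge of $\mathcal{C}_1$ (\Cref{def: x class}): either $\optMarginal_{1,1}(\bar x)>0$ with $\bar x\ge\minsecmax_1$, or $\optMarginal_{\bidderNum-1,1}(\bar x)>0$ with $\bar x\ge\minsecmax_{\bidderNum-1}$ (this last inequality follows from $\optMarginal_{\bidderNum-1,0}(\bar x)>0$ and \Cref{lem:no supp below tk}), and in either case \Cref{lem: divide case} gives $\optMarginal_{k,0}(\bar x)=0$ for all $k\ne1,\bidderNum-1$, contradicting $\optMarginal_{k,0}(\bar x)>0$.

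The remaining and hardest case is $k=\bidderNum-1$, where \Cref{lem: divide case} no longer constrains $\optMarginal_{\bidderNum-1,0}$ itself; here I would invoke the special‑$k$ structure — the diagonal transition property (\Cref{lem: transition property x = y}), the $\selffunction_{\bidderNum-1}$‑version of \Cref{lem: if x > tk then x = 1}, and multi‑maximality (\Cref{prop:equal secmax special k}) — to show that if $\minsecmax_{\bidderNum-1}=\bar x<1$, then the lone outcome‑$0$ bidder must place positive mass at $\bar x$ and be one of the top two bidders whenever the second‑highest bid equals $\bar x$. A calibration‑preserving exchange that shifts a small amount of $\optMarginal_{\bidderNum-1,1}$‑mass from $\bar x$ (or $\bar x'$) up to $1$ and the matching $\optMarginal_{\bidderNum-1,0}$‑mass down to $0$, leaving all other levels untouched, then strictly increases the level‑$(\bidderNum-1)$ revenue (more mass on the maximal pair at $1$ at the expense of the lower pair at $\bar x$), contradicting optimality; hence $\minsecmax_{\bidderNum-1}=1$. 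Checking that this exchange stays Bayes‑feasible and genuinely improves revenue is the main technical obstacle.

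Finally, for $\minsecmax_1$ I would exclude the two leftover values. If $\minsecmax_1=\bar x'$, then $\optMarginal_{1,1}$ is supported on $\{\bar x',\bar x\}$ with total mass $1$; since $\bar x'\notin\mathcal{C}_1$ we have $\optMarginal_{1,1}(\bar x')=0$ or $\optMarginal_{1,0}(\bar x')=0$, and \Cref{lem: divide case} at $\bar x'$ (which is $\ge\minsecmax_1$) forces $\optMarginal_{1,1}(\bar x')=0$ in both cases, so $\optMarginal_{1,1}(\bar x)=1$; then \Cref{lem: divide case} at $\bar x$ gives $\optMarginal_{1,0}(\bar x)=\tfrac1{\bidderNum-1}$, making the $\{\bar x\}$‑atom of $\tfrac12\optMarginal_{1,1}+\tfrac{\bidderNum-1}{2}\optMarginal_{1,0}$ equal to $1$, which contradicts $\minsecmax_1=\bar x'<\bar x$. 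If $\minsecmax_1=0$, then applying \Cref{lem: divide case} at the points of $\{\bar x',\bar x\}$ where $\optMarginal_{1,1}$ is positive yields $\optMarginal_{1,0}(\bar x')+\optMarginal_{1,0}(\bar x)\ge\tfrac1{\bidderNum-1}$, whereas $\minsecmax_1=0$ forces this sum to be strictly below $\tfrac1{\bidderNum-1}$; contradiction. Therefore $\minsecmax_1=\bar x$, which together with the previous paragraphs completes the proof.
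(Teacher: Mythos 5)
Your treatment of $\minsecmax_1$ (ruling out $0$ and $\bar x'$ via \Cref{lem: divide case} and the calibration identity at $x=0,1$) and of $k\ge 2$, $k\neq\bidderNum-1$ (via \Cref{lem: if x > tk then x = 1}, \Cref{lem:no supp below tk}, and \Cref{lem: divide case}) is sound and runs essentially parallel to the paper's argument, just in a different order. The problem is the case $k=\bidderNum-1$, which you correctly identify as the hardest and then do not actually prove. This is a genuine gap, not a routine verification: \Cref{lem: divide case} deliberately exempts $\optMarginal_{\bidderNum-1,0}$, and the $\selffunction_{\bidderNum-1}$-version of \Cref{lem: if x > tk then x = 1} does \emph{not} forbid $\optMarginal_{\bidderNum-1,1}(\bar x)>0$ above the threshold (it only forbids two outcome-$1$ bidders pairing at $\bar x$), so your opening assertion that ``the only support point of $\optMarginal_{k,1}$ in $[\minsecmax_k,1]$ is $1$'' fails precisely for $k=\bidderNum-1$. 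Likewise, your premise that the lone outcome-$0$ bidder must carry positive mass at $\bar x$ is unjustified in the problematic configuration: the paper's Case~1 is exactly the regime $(\bidderNum-1)\optMarginal_{\bidderNum-1,1}(\bar x)>\optMarginal_{\bidderNum-1,0}(\bar x)$, where part of the outcome-$1$ mass at $\bar x$ is ``wasted'' as third-or-lower bids rather than matched against the outcome-$0$ bidder.

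Your proposed fix — a calibration-preserving exchange sending $\optMarginal_{\bidderNum-1,1}$-mass from $\bar x$ to $1$ and matched $\optMarginal_{\bidderNum-1,0}$-mass from $\bar x$ to $0$ — also differs from what the paper does and has an unaddressed complication: balancing the ratio at $\bar x$ forces $\delta_0=\tfrac{1-\bar x}{\bar x}(\bidderNum-1)\delta_1$, which strictly decreases the mass of $\tfrac{\bidderNum-1}{2}\optMarginal_{\bidderNum-1,1}+\tfrac12\optMarginal_{\bidderNum-1,0}$ on $[\bar x,1]$ and can push $\minsecmax_{\bidderNum-1}$ below $\bar x$, so the claimed revenue gain is not automatic and requires exactly the case analysis you defer. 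The paper instead perturbs non-locally: it moves $\optMarginal_{\bidderNum-1,0}$-mass from $\bar x$ down to $0$, reroutes the newly unpaired outcome-$1$ mass through $\selffunction_{\bidderNum-1}$, and lets the calibrated value at that level rise from $\bar x$ to some $x'>\bar x$, gaining $(x'-\bar x)\cdot\optMarginal_{\bidderNum-1,0}(\bar x)>0$; this exploits the wasted mass directly rather than trying to hold calibration fixed at $\bar x$. To complete your proof you would need either to carry out the paper's three-case perturbation or to verify your exchange case by case, including the regime where $\optMarginal_{\bidderNum-1,0}(\bar x)$ is too small (or zero) for your matched transfer to exist.
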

\begin{proof}
First, we prove $\minsecmax_1 = \bar{x}$. Suppose $\minsecmax_1 \neq \bar{x}$, it follows that $\minsecmax_1 = \bar{x}'$ that is $\optMarginal_{1,0}(\bar{x}') > 0$. This contradicts the definition of $\bar{x}'$. Thus, we have $\minsecmax_1 = \bar{x}$. Moreover, we have $\optMarginal_{1,1} = 1\cdot \delta_{\bar{x}}$ and $\optMarginal_{1,0} = \frac{1}{\bidderNum-1}\cdot\delta_{\bar{x}} + (1 - \frac{1}{\bidderNum-1})\cdot\delta_{0}$.

Next, we show that for $k \ge 2$ and $k \neq \bidderNum-1$, we have $\minsecmax_k = 1$. Next, we prove for $k \ge 2$ and $k \neq \bidderNum-1$, we have $\minsecmax_k = 1$. Given $k \ge 2$ and $k \neq \bidderNum-1$, suppose $\minsecmax_k < 1$. Since $\bar{x} \ge \bar{x}'$, we have $\minsecmax_k = \bar{x}'$. According to definition $\optMarginal_{k,1}(\bar{x}) = 0$, and $\optMarginal_{k,1}(1) < \frac{2}{k}$, that means $\optMarginal_{k,1}(\bar{x}') > 0$, which contradicts \Cref{lem: if x > tk then x = 1}. Thus, we have for $k \ge 2$ and $k \neq \bidderNum-1$, $\minsecmax_k = 1$.

Finally, we prove that $\minsecmax_{\bidderNum-1} = 1$. We prove this by contradiction. Suppose $\minsecmax_{\bidderNum-1} \neq 1$, that means, $\minsecmax_{\bidderNum-1} = \bar{x}$. We discuss in cases.

\textbf{Case 1. } $(\bidderNum-1)\optMarginal_{\bidderNum-1, 1}(\bar{x}) > \optMarginal_{\bidderNum-1, 0}(\bar{x})$. Since 
$\minsecmax_{\bidderNum-1} = \bar{x}$, according to \Cref{lem: if x > tk then x = 1}, we have $\selffunction_{\bidderNum-1}(\bar{x}) = 0$ and $\transfunction_{\bidderNum-1}(\bar{x}, \bar{x}) = \optMarginal_{\bidderNum-1, 0}(\bar{x})$, that means, there is a portion of the probability in $\optMarginal_{\bidderNum-1, 1}(\bar{x})$ that has not been accounted for in the revenue. We have 
\begin{align*}
    \bar{x} = \frac{\sum_{k \in [\bidderNum]}k\anonyDen_k\optMarginal_{k,1}(\bar{x})}{\sum_{k \in [\bidderNum]}k\anonyDen_k\optMarginal_{k,1}(\bar{x}) + (\bidderNum-1)\anonyDen_1\optMarginal_{1,0}(\bar{x}) + \anonyDen_{\bidderNum-1} \optMarginal_{\bidderNum-1, 0}(\bar{x})}~.
\end{align*}
Let $\quantile = \min(\optMarginal_{\bidderNum-1, 0}(\bar{x}), (\bidderNum-1)\optMarginal_{\bidderNum-1,1}(\bar{x}) - \optMarginal_{\bidderNum-1, 0}(\bar{x}))$, we define $x'$ as follows,
\begin{equation*}
    x' = \frac{\sum_{k \in [\bidderNum]}k\anonyDen_k\optMarginal_{k,1}(\bar{x})}{\sum_{k \in [\bidderNum]}k\anonyDen_k\optMarginal_{k,1}(\bar{x}) + (\bidderNum-1)\anonyDen_1\optMarginal_{1,0}(\bar{x}) + \anonyDen_{\bidderNum-1} \optMarginal_{\bidderNum-1, 0}(\bar{x}) - \quantile}~.
\end{equation*}
We transfer a probability of \( \quantile \) from \( \optMarginal_{1,0}(\bar{x}) \) to \( \optMarginal_{\bidderNum-1, 0}(0) \), and set \( \selffunction_{n-1}(\bar{x}, \bar{x}) = \quantile \) to keep \( \minsecmax_{\bidderNum-1} \) unchanged. At this point, \( \bar{x} \) changes to \( x' \), let \( \text{Rev}_1 \) be the revenue before the transfer, and \( \text{Rev}_2 \) be the revenue after the transfer. We have,
\begin{align*}
    \text{Rev}_2 - \text{Rev}_1 = (x' - \bar{x}) \cdot \optMarginal_{\bidderNum-1, 0}(\bar{x}) > 0~,
\end{align*}
which contradicts $(\optMarginal_{k,1}, \optMarginal_{k,0})_{k \in \setwZero}$ is optimal.

\textbf{Case 2}  $(\bidderNum-1)\optMarginal_{\bidderNum-1, 1}(\bar{x}) = \optMarginal_{\bidderNum-1, 0}(\bar{x})$ and there exists $k_1 \neq 1, \bidderNum-1$ such that $\optMarginal_{k_1, 1}(\bar{x}) > 0$  and \textbf{Case 3} $(\bidderNum-1)\optMarginal_{\bidderNum-1, 1}(\bar{x}) = \optMarginal_{\bidderNum-1, 0}(\bar{x})$ and for all $k \neq 1, \bidderNum-1$, $\optMarginal_{k_1, 1}(\bar{x}) = 0$ are similar to \textbf{Case 1}. \qedhere
\end{proof}

With all above lemmas, we are ready to prove \Cref{prop:opt marginals}.
\begin{proof}[Proof of \Cref{prop:opt marginals}]
\label{proof: optimal marginals}
Consider the following program,
\begin{align}
    \label{eq:opt marginal prog}
    \tag{$\mathcal{P}_{\textsc{marg}}^\ddagger$}
    \max_{x, y \ge 0,~ x+y = \sum\nolimits_{k \in [2, \bidderNum]} (k-2)\anonyDen_k}
    ~ 
    \frac{\displaystyle \anonyDen_1 + x}{\displaystyle  2\anonyDen_1 + x}\cdot\anonyDen_1 
    + & \frac{\displaystyle y}{\displaystyle2\anonyDen_0 +  y} \cdot \displaystyle \anonyDen_0
    + \displaystyle\sum\nolimits_{k\in[2, \bidderNum]} \anonyDen_k
\end{align}
The solution to \ref{eq:opt marginal prog} is $x^\star = \frac{\anonyDen_1\cdot \sumhelper + 2\anonyDen_1\anonyDen_0(1-\sqrt{2})}{\anonyDen_1 + \sqrt{2}\anonyDen_0} \vee 0$, $y^\star =  \sum\nolimits_{k \in [2, \bidderNum]} (k-2)\anonyDen_k - x^\star$. 
We define $(\probForth_k, \probFortl_k)_{k\in[2, \bidderNum]}$ as any feasible solution satisfying:
1. $\probForth_k, \probFortl_k \ge 0, \probForth_k + \probFortl_k = \frac{k-2}{k}, \forall k \in [2, \bidderNum]$; 
2. $\sum\nolimits_{k\in[2, \bidderNum]}\anonyDen_{k}\cdot k\probForth_k = x^\star$, $\sum\nolimits_{k\in[2, \bidderNum]}\anonyDen_{k}\cdot k\probFortl_k = y^\star$.
That is, \ref{eq:opt marginal prog} can be represent by $(\probForth_k, \probFortl_k)_{k \in [2, \bidderNum]}$.

\textbf{Step 1. }First, we prove that for any feasible solution $(\probForth_k, \probFortl_k)_{k \in [2, \bidderNum]}$, we can construct feasible marginals $(\anonySymmMarginal_{k,1}, \anonySymmMarginal_{k, 0})_{k \in \setwZero}$ based on $(\probForth_k, \probFortl_k)_{k \in [2, \bidderNum]}$ and seller's revenue is equal to the objective of \ref{eq:opt marginal prog} given $(\probForth_k, \probFortl_k)_{k\in[2, \bidderNum]}$. Given $(\probForth_k, \probFortl_k)_{k \in [2, \bidderNum]}$, we define $\bar{x} = \frac{\anonyDen_1 + \sum\nolimits_{k\in[2, \bidderNum]}k\anonyDen_{k}\cdot\probForth_k}{2\anonyDen_1 + \sum\nolimits_{k\in[2, \bidderNum]}k\anonyDen_{k}\cdot\probForth_k}$ and $\bar{x}' = \frac{\sum\nolimits_{k\in[2, \bidderNum]}k\anonyDen_{k}\cdot\probFortl_k}{2\anonyDen_0 + \sum\nolimits_{k\in[2, \bidderNum]}k\anonyDen_{k}\cdot\probFortl_k} $. We construct $(\anonySymmMarginal_{k,1}, \anonySymmMarginal_{k,0})_{k \in \setwZero}$ as follows. 
    For $0 \leq k \leq \bidderNum - 1$, we have
\begin{equation*}
    \anonySymmMarginal_{k,0} = 
    \begin{cases}
        \frac{2}{\bidderNum} \cdot \delta_{\bar{x}'} + (1-\frac{2}{\bidderNum})\cdot\delta_{0}~, & \text{if }k = 0~; \\
        \frac{1}{\bidderNum-1}\cdot\delta_{\bar{x}} + (1-\frac{1}{\bidderNum-1})\cdot\delta_{0}~, & \text{if }k = 1~;\\
        1\cdot\delta_{0}~, & \text{if }k\in[2, \bidderNum-1]~.
    \end{cases}
\end{equation*}
And for $1 \leq k \leq \bidderNum$,
\begin{equation*}
    \anonySymmMarginal_{k, 1} = 
    \begin{cases}
        1 \cdot \delta_{\bar{x}}~, & \text{if }k = 1~;\\
        \frac{2}{k} \cdot \delta_{1} + \probForth_k~, \cdot \delta_{\bar{x}} + \probFortl_k \cdot \delta_{\bar{x}'}~, & \text{if }k\in[2, \bidderNum]~.
    \end{cases}
\end{equation*}
It is easy to verify that $(\anonySymmMarginal_{k,1}, \anonySymmMarginal_{k,0})_{k\in \setwZero}$ satisfies the properties of the optimal marginals, we define seller's revenue as $\text{Rev}$, and we have
\begin{align*}
    \text{Rev} &= 1 \cdot \sum\nolimits_{k \in [\bidderNum]} \anonyDen_k\frac{k}{2} \anonySymmMarginal_{k,1}(1) + \bar{x} \cdot \frac{\anonyDen_{1}}{2}\cdot(\anonySymmMarginal_{1,1}(\bar{x}) + (\bidderNum-1)\anonySymmMarginal_{1,0}(\bar{x})) + \bar{x}'\cdot\frac{\bidderNum\anonyDen_0}{2}\cdot\anonySymmMarginal_{0,0}(\bar{x}')\\
    & = \sum\nolimits_{k\in[2, \bidderNum]} \anonyDen_k + \frac{\anonyDen_1 + \sum\nolimits_{k\in[2, \bidderNum]}k\anonyDen_{k}\cdot\probForth_k}{2\anonyDen_1 + \sum\nolimits_{k\in[2, \bidderNum]}k\anonyDen_{k}\cdot\probForth_k}\cdot\anonyDen_1 + \frac{\sum\nolimits_{k\in[2, \bidderNum]}k\anonyDen_{k}\cdot\probFortl_k}{2\anonyDen_0 + \sum\nolimits_{k\in[2, \bidderNum]}k\anonyDen_{k}\cdot\probFortl_k} \cdot \anonyDen_{0}~.
\end{align*}

\textbf{Step 2. } Next, we prove that for any feasible marginals $(\anonySymmMarginal_{k,1}, \anonySymmMarginal_{k,0})_{k \in \setwZero}$, we can construct a feasible solution $(\probForth_k, \probFortl_k)_{k\in[2, \bidderNum]}$, and the objective of \ref{eq:opt marginal prog} given $(\probForth_k, \probFortl_k)_{k\in[2, \bidderNum]}$  is equal to seller's revenue. Given feasible marginals $(\anonySymmMarginal_{k,1}, \anonySymmMarginal_{k,0})_{k \in \setwZero}$, for $k\in[2, \bidderNum]$, we construct $\probForth_k = \anonySymmMarginal_{k,1}(\bar{x})$ and $\probFortl_k = \anonySymmMarginal_{k,1}(\bar{x}')$. Since for $k\in[2, \bidderNum]$, $\minsecmax_k = 1$, thus, we have for any $k\in[2, \bidderNum]$, 
\begin{equation*}
    \probForth_k + \probFortl_k = \anonySymmMarginal_{k,1}(\bar{x}) + \anonySymmMarginal_{k,1}(\bar{x}') = 1 - \frac{2}{k}~. 
\end{equation*}
Thus, $(\probForth_k, \probFortl_k)_{k\in[2, \bidderNum]}$ satisfies the property of \ref{eq:opt marginal prog}. According to the Bayes-consistency, we have
\begin{align*}
    \bar{x} 
    = \frac{\anonyDen_1 \cdot\anonySymmMarginal_{1,1}(\bar{x}) + \sum_{k\in[2, \bidderNum]}k\anonyDen\cdot\anonySymmMarginal_{k,1}(\bar{x})}{2\anonyDen_1 \cdot\anonySymmMarginal_{1,1}(\bar{x}) + \sum_{k\in[2, \bidderNum]}k\anonyDen\cdot\anonySymmMarginal_{k,1}(\bar{x})}
    = \frac{\anonyDen_1 + \sum\nolimits_{k\in[2, \bidderNum]}k\anonyDen_{k}\cdot\probForth_k}{2\anonyDen_1 + \sum\nolimits_{k\in[2, \bidderNum]}k\anonyDen_{k}\cdot\probForth_k}~.
\end{align*}
Similarly, we have $\bar{x}' = \frac{\sum\nolimits_{k\in[2, \bidderNum]}k\anonyDen_{k}\cdot\probFortl_k}{2\anonyDen_0 + \sum\nolimits_{k\in[2, \bidderNum]}k\anonyDen_{k}\cdot\probFortl_k}$. Thus, we have
\begin{align*}
    \text{Rev} & = 1\cdot\frac{1}{2}\cdot\sum_{k \in [\bidderNum]} k \anonyDen_k \cdot \anonySymmMarginal_{k,1}(1) + \bar{x}\cdot\frac{\anonyDen_1}{2}(\anonySymmMarginal_{1,1}(\bar{x}) + (\bidderNum-1)\anonySymmMarginal_{1,0}(\bar{x})) + \bar{x}' \cdot \frac{n\anonyDen_0}{2}\cdot\anonySymmMarginal_{0,0}(\bar{x}')\\
    & = \frac{\anonyDen_1 + \sum\nolimits_{k\in[2, \bidderNum]}k\anonyDen_{k}\cdot\probForth_k}{2\anonyDen_1 + \sum\nolimits_{k\in[2, \bidderNum]}k\anonyDen_{k}\cdot\probForth_k}\cdot\anonyDen_1 + \frac{\sum\nolimits_{k\in[2, \bidderNum]}k\anonyDen_{k}\cdot\probFortl_k}{2\anonyDen_0 + \sum\nolimits_{k\in[2, \bidderNum]}k\anonyDen_{k}\cdot\probFortl_k} \cdot \anonyDen_0 + \sum_{2\leq k \leq \bidderNum} \anonyDen_k~,
\end{align*}
where $\text{Rev}$ is seller's revenue. Thus, we finish the proof.
\end{proof}

\section{Missing Proofs in \texorpdfstring{\Cref{sec:IR}}{}}
\label{apx: proof for approximate optimal marginal}

\begin{proposition}[Detailed version of \Cref{prop: optimal marginals new and uniform tie breaking}]
\label{prop: optimal marginals new and uniform tie breaking apx}
Given a sufficiently small non-negative $\eps$, let $\largeNum = \lceil\sfrac{1}{\eps}\rceil$, We can characterize the functions $(\anonySymmMarginal_{k,1,\UIR}, \anonySymmMarginal_{k,0,\UIR})_{k \in \setwZero}$ as follows.

\begin{align}
    & \anonySymmMarginal_{k, 1, \UIR} = 
    \begin{cases}
        \displaystyle
        \sum\limits_{l \in [-\largeNum+1: \largeNum-1]} \frac{1}{2\largeNum} \cdot \delta_{(t_{1, \UIR, l})} + \frac{1}{2\largeNum} \cdot \delta_{(1)}~, & k = 1~; \vspace{0.8em} \\
        \displaystyle
        \begin{aligned}
            \left(\frac{2}{k} + \optprobFortl_k \right) \cdot \delta_{(1)} 
            & + \sum\limits_{l \in [-\largeNum: \largeNum-1]} d_l \cdot \delta_{(t_{1, \UIR, l})} + \optprobFortl_k
            \cdot \delta_{(\optminsecmax_0)}~,
        \end{aligned}
        & k\in[2: \bidderNum-1]~; \vspace{0.8em}\\
        \displaystyle
        \begin{aligned}
        \left(\frac{2}{k} + \optprobFortl_k  - d  \right) \cdot \delta_{(1)}  
        & + \sum\limits_{l \in [-\largeNum: \largeNum-1]} d_l \cdot \delta_{(t_{1, \UIR, l})} 
        + \optprobFortl \cdot \delta_{(\optminsecmax_0)} + d \cdot \delta_{(\optminsecmaxUIRone, -\largeNum)}~,
        \end{aligned} & k = \bidderNum~,
    \end{cases}\\
    & \anonySymmMarginal_{k,0, \UIR} = 
    \begin{cases}
        \displaystyle
        \frac{2\cdot \delta_{(\optminsecmax_0)} + (\bidderNum-2)
        \cdot\delta_{(0)}}{\bidderNum}~,  & k = 0~; \vspace{0.8em} \\
        \displaystyle
        \frac{  \sum\limits_{l \in [-\largeNum: \largeNum-1]} \frac{1}{2\largeNum} \cdot \delta_{(t_{1, \UIR, l})} +  (\bidderNum-2)\cdot \delta_{(0)}}{\bidderNum-1}~, & k = 1~;\\
        \displaystyle
        1\cdot\delta_{(0)}~, & k\in[2: \bidderNum-1]~,
    \end{cases}
\end{align}

where 
$d_l \triangleq \left(\frac{\optprobForth_k}{2\largeNum} +\frac{l\cdot\eps^2}{(\bidderNum-1)k\anonyDen_k \cdot(2\largeNum)^2} \right)$,  $\minsecmax_{1, \UIR, l} 
    \triangleq \optminsecmax_1 + C_l \eps^2$; And $C_l \triangleq  \frac{l}{2\largeNum}\cdot \frac{\anonyDen_1}{(2\anonyDen_1 + c^\star + \frac{l}{2\largeNum}\eps^2)(2\anonyDen_1+c^\star)}$ for $l\in [-\largeNum:\largeNum-1]$, where $c^* \triangleq \sum\nolimits_{k\in[2: \bidderNum]}\anonyDen_{k}\cdot k\cdot\optprobForth_k$. 
$d \triangleq \frac{1}{2\bidderNum\anonyDen_\bidderNum}\cdot(\frac{1}{\largeNum}\anonyDen_1 - \frac{1}{\largeNum}\eps^2)$ and  $(\optprobForth_k, \optprobFortl_k)_{k \in [2: \bidderNum]}$ are defined in \Cref{def:Minimum second-highest bid}  
\end{proposition}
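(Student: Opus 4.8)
The plan is to verify that the explicit marginals $(\anonySymmMarginal_{k,1,\UIR},\anonySymmMarginal_{k,0,\UIR})_{k\in\setwZero}$ displayed above are \emph{feasible} in the sense of \Cref{lem:feasible marginals}, and then to read the three conclusions of \Cref{prop: optimal marginals new and uniform tie breaking} off from the optimal-correlation characterizations \Cref{prop:opt cor general k}, \Cref{prop:opt cor k=1} together with the revenue accounting of \Cref{prop:opt marginals}. First I would check each $\anonySymmMarginal_{k,1,\UIR}$ and $\anonySymmMarginal_{k,0,\UIR}$ is a genuine probability distribution: nonnegativity of every listed mass ($\tfrac1{2\largeNum}$, $d_l$, $\optprobForth_k$, $\optprobFortl_k$, $d$) holds for small $\eps$, and this is precisely where the hypotheses $\eps\le\sqrt{\anonyDen_1}$ and $\eps\le4\anonyDen_\bidderNum/\anonyDen_1$ are used — they force $d_l\ge0$ for all $l$ and $d\le\optprobFortl_\bidderNum$ — while the total mass equals one by the \eqref{eq:linear system} identities $\optprobForth_k+\optprobFortl_k=\tfrac{k-2}{k}$ together with $\sum_{l\in[-\largeNum:\largeNum-1]}\tfrac1{2\largeNum}=1$, after reconciling the $O(\eps^2)$ bookkeeping introduced by the linear-in-$l$ tilt of the $d_l$'s and by the $d$-atom. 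A direct count of the listed atoms then gives conclusion~(1), namely at most $2/\eps+2$ supports.

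The heart of the proof is the calibration check: verifying the Bayes-consistency identity \eqref{eq:bc for marginals} at every support point. At $x=0$ and $x=1$ it is immediate (only the $0$-atoms, resp.\ $1$-atoms, appear), and at $x=\optminsecmax_0$ it reduces to the identity already verified for the marginals of \Cref{prop:opt marginals}, since those $\optminsecmax_0$-atoms are inherited unchanged from $\optMarginal_{k,1}$. The delicate points are the serrated values $x=\minsecmax_{1,\UIR,l}$: there the numerator of \eqref{eq:bc for marginals} is $\anonyDen_1\tfrac1{2\largeNum}+\sum_{k\ge2}k\anonyDen_k\anonySymmMarginal_{k,1,\UIR}(x)$ and the denominator adds $\anonyDen_1(\bidderNum-1)\anonySymmMarginal_{1,0,\UIR}(x)=\tfrac{\anonyDen_1}{2\largeNum}$, with a boundary modification at $l=-\largeNum$ where $\anonySymmMarginal_{1,1,\UIR}$ places no mass and the $k=\bidderNum$ marginal supplies the extra $d$-atom. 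Because $d_l$ is chosen so that $k\anonyDen_k d_l=k\anonyDen_k\tfrac{\optprobForth_k}{2\largeNum}+\tfrac{l\eps^2}{(\bidderNum-1)(2\largeNum)^2}$ has a $k$-independent tail, the sum over $k$ collapses, and solving the resulting scalar equation for the abscissa that makes the ratio equal to $x$ returns exactly $x=\optminsecmax_1+C_l\eps^2$ with $C_l$ as defined; i.e.\ $C_l$ and $d_l$ were reverse-engineered so that calibration holds at each serrated atom. I would present this as a finite linear identity; carrying the $\Theta(\largeNum)$ coupled equations through to the closed forms and handling the boundary index $l=-\largeNum$ is the step I expect to be the main obstacle, though it is ultimately routine. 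By \Cref{lem:feasible marginals} the family is then feasible and induces a calibrated signaling.

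It remains to deduce (2) and (3). For $k=1$ I would instantiate the $k=1$ correlation of \Cref{prop:opt cor k=1}: pair $\minsecmax_{1,\UIR,l}\in\supp(\anonySymmMarginal_{1,0,\UIR})$ with $\minsecmax_{1,\UIR,l+1}\in\supp(\anonySymmMarginal_{1,1,\UIR})$ (the atom at $1$ pairing with $\minsecmax_{1,\UIR,\largeNum-1}$), matching the equal masses $\tfrac1{2\largeNum}$ and cycling the $0$-valued coordinates across the $\bidderNum-1$ outcome-$0$ bidders exactly as in the reverse direction of \Cref{prop:opt cor k=1} (one checks the cycled plan reproduces the marginals). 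Since $\minsecmax_{1,\UIR,l}<\minsecmax_{1,\UIR,l+1}$, the unique top bidder is always the one drawing from $\anonySymmMarginal_{1,1,\UIR}$ — the bidder with click outcome $1$ — which is conclusion~(2); the second-highest bid is $\minsecmax_{1,\UIR,l}$, so the $k=1$ revenue is $\anonyDen_1\tfrac1{2\largeNum}\sum_l\minsecmax_{1,\UIR,l}=\anonyDen_1\optminsecmax_1+O(\eps^2)$. For $2\le k\le\bidderNum-1$ one has $\anonySymmMarginal_{k,0,\UIR}=\delta_{(0)}$ and $\tfrac{k}{2}\anonySymmMarginal_{k,1,\UIR}(1)\ge1$, so $\minsecmax_k=1$ and \Cref{prop:opt cor general k} gives contribution $1$; for $k=\bidderNum$ the $d$-atom realizes the $\delta_{(1)}\!\to\!\delta_{(\minsecmax_{1,\UIR,-\largeNum})}$ shift of \eqref{eq:opt secmax under uir} at cost $O(\anonyDen_1/\largeNum)$, while the $\eps$-bound keeps $\tfrac{\bidderNum}{2}\anonySymmMarginal_{\bidderNum,1,\UIR}(1)\ge1$. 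Comparing with $\Rev{\optsignalProb}=\anonyDen_0\optminsecmax_0+\anonyDen_1\optminsecmax_1+\sum_{k\ge2}\anonyDen_k$ from \Cref{coro: no IR pi} and using $\largeNum=\lceil1/\eps\rceil\ge1/\eps$, every discrepancy is $O(\eps)$ with constants bounded by $1$, so the total loss is at most $\eps$, giving $\sum_k\anonyDen_k\RevCorr{\anonySymmMarginal_{k,1,\UIR},\anonySymmMarginal_{k,0,\UIR}}\ge\Rev{\optsignalProb}-\eps$, which is conclusion~(3); the case split in $\minsecmax_{0,\UIR}$ and \eqref{eq:opt secmax under uir} only relocates the $\optminsecmax_0$-atoms and, in the regime $\optminsecmax_0>\anonyDen_1(1-\optminsecmax_1)/\anonyDen_0$, makes the accounting lossless up to that relocation.
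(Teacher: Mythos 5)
Your proposal is correct and follows essentially the same route as the paper's own (quite terse) argument: a direct construction of the marginals from those of \Cref{prop:opt marginals}, verification of the Bayes-consistency identity \eqref{eq:bc for marginals} at the serrated atoms via the algebraic identity that makes $C_l$ and $d_l$ cancel to give $\minsecmax_{1,\UIR,l}=\optminsecmax_1+C_l\eps^2$, and an explicit accounting of the $O(\eps)$ revenue loss. In fact you are more careful than the paper, which checks calibration only at $l=0$ and asserts the other indices (including the boundary case $l=-\largeNum$ with the $d$-atom, which you rightly flag as the delicate step) "follow analogously."
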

\begin{proof}
We directly construct $\anonySymmMarginal_{k, 1, \UIR}$ and $\anonySymmMarginal_{k, 0, \UIR}$ from $\optMarginal_{k,1}$, $\optMarginal_{k,0}$ in \Cref{prop:opt marginals} to achieve the form described above. Specifically, to ensure that the bidder with realized outcome equal to 1 always wins when $k = 1$, we construct a series of points $\minsecmax_{1, \UIR, l}$. These points are constructed by proportionally redistributing the probability masses at $\optMarginal_{1,1}(\optminsecmax_1)$ and $\optMarginal_{1,0}(\optminsecmax_0)$ across $2\largeNum$ points within an $\epsilon^2$-neighborhood of $\optminsecmax_{1}$, carefully maintaining calibration. It is noteworthy that when constructing the point $\minsecmax_{1,\UIR, -\largeNum}$, we also transfer a certain probability mass (denoted by $d$) from $\optMarginal_{\bidderNum,1}(1)$. 
Next, we adjust the value of $\optminsecmax_0$ to satisfy the IR constraint, thereby fully specifying the construction of $\anonySymmMarginal_{k, 1, \UIR}$ and $\anonySymmMarginal_{k, 0, \UIR}$. Formally, we demonstrate that the point $\minsecmax_{1, \UIR, 0}$ satisfies the calibration condition (the analysis for other points follows analogously).
\begin{align*}
    \minsecmax_{1, \UIR, 0} = \frac{ \anonyDen_1/2\largeNum + \sum\nolimits_{k\in[2: \bidderNum]}\anonyDen_{k}\cdot k\optprobForth_k/2\largeNum}{ 2\anonyDen_1/2\largeNum + \sum\nolimits_{k\in[2: \bidderNum]}\anonyDen_{k}\cdot k\optprobForth_k/2\largeNum} = \optminsecmax_1~,
\end{align*}
which satisfy the calibration condition. We correlate $(\anonySymmMarginal_{k,1, \UIR}, \anonySymmMarginal_{k, 0, \UIR})_{k \in \setwZero}$ using \Cref{alg: FPTAS algorithm} and we have
\begin{align*}
        \Rev{\optsignalProb} - \sum\nolimits_{k\in\setwZero}\anonyDen_k \cdot \RevCorr{\anonySymmMarginal_{k, 1, \UIR}, \anonySymmMarginal_{k, 0, \UIR}} \frac{\largeNum\eps^2}{8} + \frac{\anonyDen_1-\eps^2}{4\largeNum} = \left(\frac{1}{8} + \frac{\anonyDen_1 - \eps^2}{4}\right) \cdot \eps < \eps ~.
\end{align*}
Thus, we have the expected second highest-bid after correlating $(\anonySymmMarginal_{k,1,\UIR}, \anonySymmMarginal_{k,0,\UIR})_{k \in \setwZero}$ optimally is $\eps$-approximate to $\Rev{\optsignalProb}$.
\qedhere

\end{proof}

\begin{proof}[Proof of \Cref{prop: conditional second-highest bid distribution under UIR}]
    \label{prof: marginal UIR}
    Combining two steps in \Cref{sec:IR}, we construct $(\anonySymmMarginal_{k, 1, \UIR}, \anonySymmMarginal_{k, 0, \UIR})_{k \in \setwZero}$ as follows.

    \begin{align}
    & \anonySymmMarginal_{k, 1, \UIR} = 
    \begin{cases}
        \displaystyle
        \sum\limits_{l \in [-\largeNum+1: \largeNum-1]} \frac{1}{2\largeNum} \cdot \delta_{(t_{1, \UIR, l})} + \frac{1}{2\largeNum} \cdot \delta_{(1)}~, & k = 1~; \vspace{0.8em} \\
        \displaystyle
        \begin{aligned}
            \left(\frac{2}{k} + \optprobFortl_k - \takeback_{k, \UIR} \right) \cdot \delta_{(1)} 
            & + \sum\limits_{l \in [-\largeNum: \largeNum-1]} d_l \cdot \delta_{(t_{1, \UIR, l})} \\[8pt]
            & + \takeback_{k, \UIR}
            \cdot \delta_{(\optminsecmaxUIRzero)}~,
        \end{aligned}
        & k\in[2: \bidderNum-1]~; \vspace{0.8em}\\
        \displaystyle
        \begin{aligned}
        \left(\frac{2}{k} + \optprobFortl_k - \takeback_{k, \UIR} - d  \right) \cdot \delta_{(1)}  
        & + \sum\limits_{l \in [-\largeNum: \largeNum-1]} d_l \cdot \delta_{(t_{1, \UIR, l})} \\[8pt]
        & + \takeback_{k, \UIR} \cdot \delta_{(\optminsecmaxUIRzero)} + d \cdot \delta_{(\optminsecmaxUIRone, -\largeNum)}~,
        \end{aligned} & k = \bidderNum~,
    \end{cases}\\
    & \anonySymmMarginal_{k,0, \UIR} = 
    \begin{cases}
        \displaystyle
        \frac{2\cdot \delta_{(\optminsecmaxUIRzero)} + (\bidderNum-2)
        \cdot\delta_{(0)}}{\bidderNum}~,  & k = 0~; \vspace{0.8em} \\
        \displaystyle
        \frac{  \sum\limits_{l \in [-\largeNum: \largeNum-1]} \frac{1}{2\largeNum} \cdot \delta_{(t_{1, \UIR, l})} +  (\bidderNum-2)\cdot \delta_{(0)}}{\bidderNum-1}~, & k = 1~;\\
        \displaystyle
        1\cdot\delta_{(0)}~, & k\in[2: \bidderNum-1]~,
    \end{cases}
\end{align}
where $\takeback_{k, \UIR} = \optprobFortl_k$ for all $k \in \setwZero$ 
if $\optminsecmax_0 \leq  \frac{\anonyDen_1}{\anonyDen_0} \cdot (1- \optminsecmax_1)$; 
otherwise, we let $(\takeback_{k, \UIR})_{k \in [2:\bidderNum]}$ be any feasible solution satisfying: 
\begin{alignat*}{3}
    \frac{\sum\nolimits_{k\in[2: \bidderNum]}\anonyDen_{k}\cdot k\cdot\takeback_{k, \UIR}}{2\anonyDen_0 + \sum\nolimits_{k\in[2: \bidderNum]}\anonyDen_{k}\cdot k \cdot\takeback_{k, \UIR}} 
    = \frac{\anonyDen_1}{\anonyDen_0}\cdot(1-\optminsecmax_1)~, \quad 
\takeback_{k, \UIR} \in [0: \optprobFortl_k]~, ~  k \in [2:\bidderNum)~; ~\takeback_{\bidderNum, \UIR} \in [0, \optprobFortl_\bidderNum) ~.
\end{alignat*}
All other variables are defined as in \Cref{prop: optimal marginals new and uniform tie breaking apx}. It is noteworthy that when constructing the point $\minsecmax_{1,\UIR, -\largeNum}$, we also transfer a certain probability mass (denoted by $d$) from $\optMarginal_{\bidderNum,1}(1)$. 
This is exactly why we require $b^\star_\bidderNum > 0$ and $b_{\bidderNum}^\star - b_{\bidderNum, \UIR} > 0$; when $\eps$ is sufficiently small so that $0 < d \leq b_{\bidderNum}^\star - b_{\bidderNum, \UIR}$, this transfer will not affect the revenue in the case $k = \bidderNum$. Thus, when $\optminsecmax_0 > \anonyDen_1(1-\optminsecmax_1)/ \anonyDen_0$, there exists an $\eps$ sufficiently small to ensure that the revenue of $\signalProb_{\UIR}$ achieves the maximal welfare.  By correlating $(\anonySymmMarginal_{k,1, \UIR}, \anonySymmMarginal_{k, 0, \UIR})_{k \in \setwZero}$ optimally according to \Cref{prop:opt cor general k} and \Cref{prop:opt cor k=1}, we directly obtain the conditional second-highest bid distribution \(\secmaxProb_{\UIR}\) stated in \Cref{prop: conditional second-highest bid distribution under UIR}.
\qedhere

\end{proof}
\end{document}